\def\d_Vphi{\text{d}_V\hspace{-0.06em}\phi}
\def\d_Vphibar{\text{d}_V\hspace{-0.06em}\bar\phi}
\def\d_Vxi{\text{d}_V\hspace{-0.06em}\xi}
\def\ndelta{\delta\hspace{-0.50em}\slash\hspace{-0.05em} }
\def\nG{G\hspace{-1.3em}-\hspace{+0.05em} }
\newcommand{\lc}{\boldsymbol{\varepsilon}}
\def\be{\begin{eqnarray}}
\def\ee{\end{eqnarray}}
\def\beann{\begin{eqnarray*}}
\def\eeann{\end{eqnarray*}}
\def\beq{\begin{equation}}
\def\eeq{\end{equation}}
\def\ba{\begin{array}}
\def\ea{\end{array}}
\def\ben{\begin{enumerate}}
\def\een{\end{enumerate}}
\def\bea{\begin{eqnarray}}
\def\eea{\end{eqnarray}}
\def\5{\bar }
\def\6{\partial }
\def\7{\hat }
\def\4{\tilde }
\renewcommand{\tilde}{\widetilde}
\renewcommand{\hat}{\widehat}
\newtheorem{prop}{Proposition}[section]
\newtheorem{definition}[prop]{Definition}
\newtheorem{theorem}[prop]{Theorem}
\renewcommand{\simeq}{\cong}
\newcommand{\T}{\mathrm{T}}
\newcommand{\bref}[1]{\textbf{\ref{#1}}}
\newcommand{\dd}{\partial}
\renewcommand{\d}{\partial}
\renewcommand{\geq}{\,{\geqslant}\,}
\newcommand{\binner}[2]{%
  {\langle}\kern-4.15pt{\langle}#1{,}\,#2{\rangle}\kern-4.15pt{\rangle}}
\newcommand{\half}{\mathchoice{%
    \ffrac{1}{2}}{\frac{1}{2}}{\frac{1}{2}}{\frac{1}{2}}}
\newcommand{\ffrac}[2]{\raisebox{.5pt}%
  {\footnotesize$\displaystyle\frac{#1}{#2}$}\kern1pt}
\newcommand{\dover}[2]{\ffrac{\dd #1}{\dd #2}}
\newcommand{\ddl}[2]{\ffrac{\dd #1}{\dd #2}}
\newcommand{\vddl}[2]{{\ffrac{\delta #1}{\delta #2}}}
\newcommand{\RR}{\mathbb{R}}
\newcommand{\ZZ}{\mathbb{Z}}
\newcommand{\NN}{\mathbb{N}}
\def\cA{\mathcal{A}}
\def\cD{\mathcal{D}}
\def\cE{\mathcal{E}}
\def\cF{\mathcal{F}}
\def\cG{\mathcal{G}}
\def\cH{\mathcal{H}}
\def\cI{\mathcal{I}}
\def\cK{\mathcal{K}}
\def\cL{\mathcal{L}}
\def\cN{\mathcal{N}}
\def\cO{\mathcal{O}}
\def\cP{\mathcal{P}}
\def\cQ{\mathcal{Q}}
\def\cR{\mathcal{R}}
\def\cW{\mathcal{W}}
\def\cX{\mathcal{X}}
\DeclareFontFamily{OT1}{rsfs}{} \DeclareFontShape{OT1}{rsfs}{m}{n}{
<-7> rsfs5 <7-10> rsfs7 <10-> rsfs10}{}
\DeclareMathAlphabet{\mycal}{OT1}{rsfs}{m}{n}
\numberwithin{equation}{section} \makeatletter
\begin{document}

\def\mytitle{Aspects of duality in gravitational theories.}

\pagestyle{myheadings} \markboth{\textsc{\small Troessaert}}{%
  \textsc{\small Aspects of duality in gravitational theories.}} \addtolength{\headsep}{4pt}

\begin{centering}
\mbox{}

  \vspace{5cm}

  \textbf{\Large{\mytitle}}



  \vspace{1.5cm}

  {\large C\'edric Troessaert$^{a}$}

\vspace{.5cm}

\begin{minipage}{.9\textwidth}\small \it \begin{center}
   Physique Th\'eorique et Math\'ematique, Universit\'e Libre de
   Bruxelles\\ and \\ International Solvay Institutes, \\ Campus
   Plaine C.P. 231, B-1050 Bruxelles, Belgium \end{center}
\end{minipage}

\end{centering}

\vspace{5 cm}

\begin{minipage}{.9\textwidth}\small \begin{center}
   Th\`ese r\'ealis\'ee en vue de l'obtention du titre de docteur en
   sciences.\\ Directeur Glenn Barnich \end{center}
\end{minipage}

\vspace{1 cm}

\begin{minipage}{.9\textwidth}\small \begin{center}
  Ann\'ee acad\'emique 2010-2011\end{center}
\end{minipage}
\vfill

\noindent
\mbox{}
\raisebox{-3\baselineskip}{%
  \parbox{\textwidth}{\mbox{}\hrulefill\\[-4pt]}}
{\scriptsize$^a$ Research Fellow of the Fund for
  Scientific Research-FNRS.}

\thispagestyle{empty}

\begin{small}
{\addtolength{\parskip}{-1.5pt}
 \tableofcontents}
\end{small}

\chapter*{Remerciements}

Tout d'abord, je tiens \`a exprimer toute ma gratitude \`a mon promoteur Glenn Barnich sans qui cette th\`ese n'aurait pu voir le jour. Je lui suis tr\`es reconnaissant pour sa disponibilit\'e, pour les projets de recherche qu'il m'a propos\'es, pour le partage de son exp\'erience et de sa passion pour la physique. Son \'eternel enthousiasme m'a de nombreuses fois permis de sortir de mes p\'eriodes de doute ou de d\'ecouragement.
 
Un grand merci \`a mon coll\`egue de bureau, Fran\c cois Dehouck pour la bonne ambiance de travail qu'il a contribu\'e \`a installer, sa motivation contagieuse et les innombrables apr\`es-midis de discussion au bureau.
 
Je suis tr\`es reconnaissant ˆ Fabienne De Neyn et Marie-France Rogge pour leur efficacit\'e et leur disponibilit\'e.
 
Mes amis, ma famille, de nombreux coll\`egues m'ont soutenu et aid\'e durant ces quatre ann\'ees de travail, merci \`a tous.

\chapter{Introduction}

One of the biggest challenge of modern physics is the quantization
of gravity. The theory introduced by Einstein in 1916 is still the
best we have nearly 100 years later. Unfortunately, it is a
classical theory and we know that physics is fundamentally quantum. In
some sense it is even worse, Einstein's gravity predicts its own
downfall: it predicts black-holes, objects so massive that spacetime
breaks down and singularities appear.

Multiple attempts have been made to build a consistent quantum theory of
gravity. String theory is maybe the most successful but we think it is
fair to say that there is no complete solution yet. New ideas are needed.

A different way to solve the problem is to study the properties of
General Relativity. The hope is that a better understanding of the
classical theory will impose constraints on the form of the quantum
version or even hint at a solution. Easier said than done. The equations of motion of General
relativity are non-linear partial differential equations and those are
prone to very peculiar and complicated solutions.

The usual tool to deal with those kind of problems or at least to
obtain some control is the use of symmetries. In themselves, they
simplify the analysis but due to the famous Noether theorem, they also
give
rise to conserved quantities. In the most favorable scenario, this can even
lead to a complete analytic solution of the problem.

The concept of symmetry is central in modern physics and you will
encounter it throughout this work but it is not the focus point of our
analysis. As the title says, we are interested in another kind of symmetries: the concept of
dualities. The symmetries described above are internal symmetries,
dualities can be considered as ``external'' symmetries: they are
symmetries between different theories. Two theories are said to be
dual if they describe the same problem through different means, the
duality providing the dictionary to go from one description to the
other.

Dualities can be very powerful. For instance, some questions can be
very difficult to handle in one description and trivial in the dual
one. Those dualities can first be studied on the classical level but one can also hope for a dual version
of gravity that would be quantizable.

In this work, we considered two dualities. The first one is known as
the electromagnetic duality. Discovered in Maxwell's theory, this
duality exchanges the role of electric and magnetic
fields. Recently, it has been extended to gravity at least at the
linearized level. The second one is the gauge/gravity
correspondance. It links a gravity theory in $n$ dimensions with a
gauge theory in $n-1$ dimensions. In the past 10 years, it received a
lot of interest and it constitutes a very active subject of research.

\section{Electromagnetic duality}

Without sources, Maxwell's equations in 4 dimensions are invariant under the
exchange of electric and magnetic fields. If we add the usual
electric sources, this symmetry is broken. To restore it,
Dirac introduced the notion of a magnetic monopole: a source for the
magnetic field playing the dual role of the electric sources. Even if
nobody has observed them experimentally, the theoretical study of
these magnetic monopoles has brought a lot of interesting results.

One of the most surprising comes from Dirac himself. He proved that
the existence of magnetic monopoles could be an explanation of the
quantization of electric charge. He showed the following quantization
condition: All electric and magnetic charge, $e_i$ and $g_j$, must
satisfy: 
\begin{equation}
e_i g_j = \frac{1}{2}n_{ij} \quad \text{where } n_{ij} \in \mathbb{Z}.
\end{equation}
In other terms, the presence of one monopole of charge $g$ anywhere in
the universe will force every electric charge to be of the form: 
\begin{equation}
\frac{n}{2g} \quad \text{where } n \in \mathbb{Z}.
\end{equation}

For a long time after the introduction by Dirac of magnetic monopoles
in 1931, these monopoles were a mere theoretical curiosity. We were able to build theories
with monopoles but we didn't have to use them to describe
nature. Everything changed in 1974 when 't Hooft and Polyakov showed
that some quantum field theories relevant in particle physics
inevitably contain magnetic monopoles. In fact, all reasonable grand
unified theories necessarily contain them \cite{Coleman:1982cx}.

How is this? A usual quantum field theory will contain a bunch of
fundamental fields: some electrically charged fermions coupled to
gauge fields like the electromagnetic field. Because we want to
describe all interactions at the same time (so the name: grand unified
theories), the gauge fields will be more complicated but in the limit
of low energy one should recover an electromagnetic field. The gauge
fields describe fundamental particles, and the only charged particles
are electrically charged. In these kinds of theories, the magnetic
monopoles cannot arise as point particles. In fact, 't Hoof and
Polyakov showed the existence of extended objects (solitons) that
behave exactly as magnetic monopoles. If you are near one of them, you
would see a complex internal structure composed of more elementary
particles but if you look at it from a distance you would only see a
magnetic monopole.

Three years later, Montonen and Olive conjectured that the theory
considered by 't Hooft and Polyakov contained electric magnetic
duality in the following sense. The duality would not be a symmetry of
the theory, it would be the link between two theories: on one hand we
would find the `electric' theory described above and on the other hand
we would find another theory (`magnetic' theory) where the role of the
components are exchanged. The magnetic monopoles would be the
elementary particles whereas the electric particles would become
extended objects. One could do computations in any of the two
theories, the results would be the same provided one uses an
appropriate dictionary between the two theories.

This duality conjectured by Montonen-Olive would be one example of
what is called a strong-weak coupling duality. For the `electric'
theory, the electric charge $e$ is the coupling constant of the
theory, the magnetic one $g$ being some constant characterizing the
soliton. For the `magnetic' theory, the situation is reversed, the
coupling constant is given by the magnetic charge of our
monopoles. Following Dirac's argument, we know that the product of
electric and magnetic charges is quantized. In particular, we have: 
\begin{equation}
eg = \frac{1}{2},
\end{equation}
where we chose $n=1$ for simplicity. If we have a weakly coupled
`electric' theory, $e$ being very small, the coupling constant $g$ of
the dual theory is very large: the theory is strongly coupled. 

Those kinds of dualities are very interesting because, usually,
strongly coupled theories are not tractable. If one is able to find a
dual theory with a small coupling constant, one would be able to use
perturbation theory to compute physical quantities and, after
computations have been done, to use the dictionary to get back the
quantities of the original theory. 

\vspace{5mm}

There are some hints that the same duality exits for General
Relativity in 4 dimensions. In particular, the Taub-NUT solution seems to be the
electromagnetic dual to the Schwarzschild black-hole\cite{Tamburino:1966ys}. Unfortunately,
any precise definition of the duality is difficult due to the
non-linear form of the equations of motion. On the other hand, the
linearized Einstein's equations describing a free spin 2 particule are
invariant under the duality and, recently, it has been shown that it
is a symmetry of the associated reduced phase space action\cite{Henneaux:2005qf}.

The goal of our work is a generalization to spin $2$ of the so-called double
potential formalism for spin $1$ fields \cite{Deser:1976ve,Schwarz:1994cr},
which has been extended so as to include couplings to dynamical dyons
by using Dirac strings \cite{Deser:1997oq,Deser:1998kx}.  In the
original double potential formalism, Gauss's constraint is solved in
terms of new transverse vector potentials for the electric field so
that electromagnetism is effectively formulated on a reduced phase
space with all gauge invariance eliminated. Alternatively, one may
choose \cite{Barnich:2008ve} to double the gauge redundancy of
standard electromagnetism by using a description with independent
vector and longitudinal potentials for the magnetic and electric
fields and $2$ scalar potentials that appear as Lagrange multipliers
for the electric and magnetic Gauss constraints. In this framework,
the string-singularity of the solution describing a static dyon is
resolved into a Coulomb-like solution. Furthermore, magnetic charge no
longer appears as a topological conservation law but as a surface
charge on a par with electric charge.

The aim of the present work is to apply the same strategy to the spin
$2$ case. Doubling the gauge invariance by keeping all degrees of
freedom of symmetric tensors now leads to a second copy of
linearized lapse and shifts as Lagrange multipliers for the new
magnetic constraints. As a consequence, the string singularity of the
gravitational dyon, the linearized Taub-NUT solution is resolved and
becomes Coulomb-like exactly as the purely electric linearized
Schwarzschild solution. Furthermore, as required by manifest duality,
magnetic mass, momentum and Lorentz charges also appear as surface
integrals. 

Our work thus presents a manifestly duality invariant alternative to
\cite{Bunster:2006fk} where the coupling of spin $2$ fields to
conserved electric and magnetic sources has been achieved in a
manifestly Poincar\'e invariant way through the introduction of Dirac
strings.

We start in chapter \ref{ch:Elecmag} with a summary of the results obtained in
the spin $1$ case. In chapter \ref{chap:spin2}, we then apply the same analysis
to the spin $2$ case. The idea is that both problems are
closely related and the reader can consider the electromagnetic
case as a physical toy model for linearized gravity. 

\vspace{5mm}

As a result of this work, a really interesting link
between electromagnetic duality and soliton theory appeared. We show in
chapter \ref{ch:Integ} that massless higher spin gauge fields are
bi-Hamiltonian and consequently integrable systems. Even if
for now this only holds at the free level, this insight relates two very
different domains of physics. Our hope is that it can allow the use
of the techniques of soliton theory in the study of fundamental
fields.

\section{Gauge/Gravity conjecture}

The idea of a duality between gauge theories and gravitational
theories was first proposed by Maldacena
in his famous paper \cite{Maldacena:1998cr}. At some level, this duality relates a
supergravity theory in anti-de Sitter background to a gauge theory
defined on its boundary. Since then, people believe that the duality
extend to any background.

One example where this idea was most successful is the $3$-dimensional
case. In 1985, Brown and Henneaux showed that the infinitesimal symmetries of
asymptotically $AdS_3$ spacetimes
\cite{Henneaux:1985kl,Brown:1986zr,Henneaux:1985tg} provide a
representation of the algebra of conformal Killing vectors of the flat 
boundary metric. In this case, the boundary being $2$-dimensional,
this algebra is infinite dimensional. This was surprising as people were expecting this
symmetry group to be just $SO(2,2)$ the exact symmetry group of
$AdS_3$. Using Hamiltonian methods, they also showed that the algebra
of surface charges form a centrally extended representation of the
$2$-dimensional algebra. Those results imply that any consistent
quantum theory of gravity in $3$ dimension should be a conformal
theory. 

The presence of this infinite-dimensional algebra allows the use of
the powerful techniques of 2-dimensional conformal field theory. Maybe
the most interesting result was then obtained by Strominger \cite{Strominger:1998fk}. Using the Cardy
formula for the growth of states in a 2D conformal theory, he was able
to reproduce the value of the Bekenstein-Hawking area formula for the
entropy of BTZ black-holes.

Historically, the first example where the asymptotic symmetry group is
enhanced with respect to the isometry group of the background metric
and becomes infinite-dimensional is the one of asymptotically flat
$4$-dimensional spacetimes at null infinity
\cite{Bondi:1962zr,Sachs:1962ly,Sachs:1962fk}.  The
asymptotic symmetry group of non singular transformations is the
well-known Bondi-Metzner-Sachs group. It consists of the semi-direct
product of the  Lorentz group, times the infinite-dimensional abelian
normal subgroup of so-called supertranslations.

The starting point of our work is the following observation. If one focuses on infinitesimal
transformations and does not require the associated finite
transformations to be globally well-defined, then there is a further enhancement. The symmetry algebra is
then the semi-direct sum of the infinitesimal local conformal
transformations of the $2$-sphere with the abelian ideal of
supertranslations, and now both factors are infinite-dimensional.

The aim of the present work is to reconsider from the point of view of
local conformal transformations the $4$-dimensional case which is, in
some sense at least, of direct physical relevance. In particular, we
provide a detailed derivation of the natural generalization of the
$\mathfrak{bms}_4$ algebra discussed above. No modification of well
studied boundary conditions is needed and the transformations are
carefully distinguished from conformal rescalings. 

A major motivation for our investigation comes from Strominger's
derivation described above. More recently, a similar analysis has been used to derive
the Bekenstein-Hawking entropy of an extreme 4-dimensional Kerr black
hole \cite{Guica:2009fk}. One of our hopes is to make progress along
these lines in the non extreme case, either directly from an analysis
at null infinity or by making a similar analysis at the horizon, as
discussed previously for instance in
\cite{Carlip:1994hc,Solodukhin:1999ij,Carlip:1999bs,Carlip:1999fv,%
  Park:1999dz,Park:2002fu,Sachs:2001kl,Koga:2001qa,Carlip:2002mi,%
Silva:2002pi,Kang:2004ff,Koga:2007lh,Koga:2008fu}.

Related work includes \cite{Ashtekar:1981ye,Ashtekar:1987tt} on
asymptotic quantization where for instance the implications of
supertranslations for the gravitational $S$-matrix have been
discussed. Asymptotically flat spacetimes at null infinity in higher
spacetime dimensions have been investigated for instance in
\cite{Hollands:2005qo,Hollands:2003tw,Hollands:2004il,Tanabe:2010zt},
while various aspects of holography in $4$ dimensions have been
studied in some details
in~\cite{Susskind:1998jl,Polchinski:1999gb,Boer:2003mb,%
  Arcioni:2003cq,Arcioni:2004kh,Solodukhin:2004rq,Gary:2009fc}. In
particular, a symmetry algebra of the kind that we derive and study
here has been conjectured in \cite{Banks:2003ss}.

We will start in chapter \ref{chap:ads3} with an analysis of the
$AdS_3$ case. We basically rederive the results of Brown-Henneaux using a
different setup. In chapter \ref{chap:bms3}, we will do the same analysis for
spacetimes that are asymptotically flat at null infinity in 3
dimensions. Finally, in chapter \ref{chap:bms4}, we present our results
on $\mathfrak{bms}_4$. The reader should view the first two cases as intermediate
problems because the $\mathfrak{bms}_4$ case has features in common with both the
$AdS_3$ and the $\mathfrak{bms}_3$ cases.

\section{Conventions }

We will work with natural units: $c=1$, $\epsilon_0=1$ and $\hbar =1$. In the following, we will use 
index notation for vectors.  Summation over repeated indices will be
understood. The 1-forms $dx^\mu$ are treated as fermionic variables
and the skew-symmetric epsilon tensor in $n$-dimension is defined as $\epsilon_{0....(n-1)}=1$.

\section{Summary of original results}

\begin{itemize}
\item Introduction of an extended double potential formalism for spin
  2 and definition of surface charges in a duality invariant way.
\item Description of massless integer spin gauge fields as
  bi-Hamiltonian systems
\item Exact representation by bulk vector fields of the symmetry
  algebra of asymptotically $AdS_3$ spacetimes through the
  introduction of a modified Lie bracket. 
\item Definition of an extended $\mathfrak{bms}_4$ algebra with both supertranslation and superrotations:
non trivial extension of the Poincar\'e algebra containing Virasoro
algebra due to the presence of gravity.
\item Derivation of the $\mathfrak{bms}_4$ charge algebra containing a field dependent central extension.
\end{itemize}

The publications containing those results are
\begin{enumerate}
\item
G.~Barnich and C.~Troessaert, ``{Manifest spin 2 duality with electric and
  magnetic sources},'' {\em JHEP} {\bf 01} (2009) 030,
\href{http://www.arXiv.org/abs/0812.0552}{{\tt 0812.0552}}.

\item
G.~Barnich and C.~Troessaert, ``{Duality and integrability: Electromagnetism,
  linearized gravity and massless higher spin gauge fields as bi- Hamiltonian
  systems},'' {\em J. Math. Phys.} {\bf 50} (2009) 042301,
\href{http://www.arXiv.org/abs/0812.4668}{{\tt 0812.4668}}.

\item
G.~Barnich and C.~Troessaert, ``{Symmetries of asymptotically flat 4
  dimensional spacetimes at null infinity revisited},'' {\em Phys. Rev. Lett.}
  {\bf 105} (2010) 111103,
\href{http://www.arXiv.org/abs/0909.2617}{{\tt 0909.2617}}.

\item
G.~Barnich and C.~Troessaert, ``{Aspects of the BMS/CFT correspondence},'' {\em
  JHEP} {\bf 05} (2010) 062,
\href{http://www.arXiv.org/abs/1001.1541}{{\tt 1001.1541}}.

\item
G.~Barnich and C.~Troessaert, ``{Supertranslations call for superrotations},''
  \href{http://www.arXiv.org/abs/1102.4632}{{\tt 1102.4632}}.

\item
G.~Barnich and C.~Troessaert, ``{BMS charge algebra},''
  to appear soon
\end{enumerate}

\chapter{Double-potential formalism for spin 1} \label{ch:Elecmag}

In this chapter, we will present the double potential formalism for
electromagnetism. As said in the introduction, the reader should view
this as a physical toy model for the
spin 2. Indeed, the ideas are the same in the two cases; everything is
just more complicated for linearized gravity.

We start by a quick introduction to electromagnetic duality for
spin 1. Then, we describe the double potential formalism used to write
a duality invariant action. Finally, we introduce the extended double
potential formalism developed in \cite{Barnich:2008ve}  in order to couple to both electric
and magnetic sources in a duality invariant way.

\section{Electro-magnetic duality}

Classically, electromagnetism is described by two fundamental vector
fields: the electric field and the magnetic field. The dynamics of
these fields and their interaction with charged particles is governed
by the well known Maxwell equations (see \cite{Felsager:1981fk},\cite{Song:1996ys}). In empty space, they take the following form:
\begin{eqnarray}
\vec{\nabla} \cdot \vec{E} & = & 0,\\
\vec{\nabla} \cdot \vec{B} & = & 0,\\
\vec{\nabla} \times \vec{E} + \frac{\partial \vec{B}}{\partial t} & = & 0,\\
\vec{\nabla} \times \vec{B} - \frac{\partial \vec{E}}{\partial t} & = & 0.
\end{eqnarray}

 These equations possess an unusual symmetry consisting of the exchange of the electric and magnetic field. More precisely, the following map leaves the above equations invariant:
\begin{equation}
\vec{E}' = -\vec{B}, \quad \vec{B}' = \vec{E}.
\end{equation}
In addition to leaving the equations invariant, this map also doesn't change physical quantities like the total energy or the total momentum of the electromagnetic field:
\begin{eqnarray}
H & = & \int_V \, d^3x \, \frac{1}{2}\left( \vert \vec{E}\vert^2 + \vert \vec{B}\vert^2\right),\\
\vec{P} & = &  \int_V \, d^3x \, \vec{E} \times \vec{B},
\end{eqnarray}
where both integrals are evaluated over the entire space $V$. 
This symmetry is called electromagnetic duality.

\vspace{5mm}

Nature seems to break this symmetry. Indeed, the only charged matter
we are aware of is electrically charged: it only sources the electric
field. In the presence of these sources, the Maxwell equations become 
\begin{eqnarray}
\vec{\nabla} \cdot \vec{E} & = & \rho_e,\\
\vec{\nabla} \cdot \vec{B} & = & 0,\\
\vec{\nabla} \times \vec{E} + \frac{\partial \vec{B}}{\partial t} & = & 0,\\
\vec{\nabla} \times \vec{B} - \frac{\partial \vec{E}}{\partial t} & = & \vec{j}_e,
\end{eqnarray}
where $\rho_e$ and $J_e$ are respectively the charge and the current
densities of electric particles. This consistency of the above
equations implies the conservation of the electric charge: $\d_0
\rho_e+ \d_i j_e^i=0$. The duality map given above does not leave these equations invariant any longer. To restore the symmetry, Dirac postulated the existence of magnetic monopoles: particles that source the magnetic field. If these magnetic monopoles are present, the electromagnetic equations will take the form:
\begin{eqnarray}
\vec{\nabla} \cdot \vec{E} & = & \rho_e,\\
\vec{\nabla} \cdot \vec{B} & = &  \rho_m,\\
\vec{\nabla} \times \vec{E} + \frac{\partial \vec{B}}{\partial t} & =
& - \vec{j}_m,\\
\vec{\nabla} \times \vec{B} - \frac{\partial \vec{E}}{\partial t} & = & \vec{j}_e,
\end{eqnarray}
with $\rho_m$ and $\vec j_m$ respectively the charge and the current densities of magnetic particles. In this setting, one can see that, provided we also exchange the sources, the equations are again invariant under the duality:
\begin{eqnarray}
\vec{E}' = -\vec{B}, & \quad & \vec{B}' = \vec{E},\\
\vec{j_e}' = -\vec{j_m}, & \quad & \vec{j_m}' = \vec{j_e},\\
\rho_e' = -\rho_m, & \quad & \rho_m' = \rho_e.
\end{eqnarray}

\vspace{5mm}

The simplest solution to these equations is called a dyon: a
point-particule carrying both electric and magnetic charges,
respectively $e$ and $g$. The electromagnetic field induced by this
source is given by twice the well known Coulomb solution:
\begin{eqnarray}
\vec{E} & = & \frac{e}{4 \pi} \frac{\vec{r}}{r^3},\\
\vec{B} & = & \frac{g}{4 \pi} \frac{\vec{r}}{r^3}.
\end{eqnarray}

The above results can easily be written in term of the usual 4 dimensional
quantities. The
electric and magnetic tensors are put together in a 2-form
$F=F_{\mu\nu} dx^\mu dx^\nu$:
\begin{equation}
F_{i0} = E_i, \qquad F_{ij} = \epsilon_{ijk} B^k.
\end{equation}
Using this notation, one can easily write the Maxwell equations in a
manifestly Lorentz-invariant way:
\begin{eqnarray}
\half \epsilon^{\mu\nu\rho\sigma} \partial_\nu F_{\rho\sigma} & =& j^\mu_m,\label{eq:electroEOMrel1}\\
\partial_\mu F^{\nu\mu} & = &  j^\nu_e,\label{eq:electroEOMrel2}
\end{eqnarray}
where we have defined the electric and the
magnetic currents respectively as $j^\mu_e=(\rho_e,j_e^i)$ and
$j^\mu_m = (\rho_m,j_m^i)$.
Remark that for consistency of the equations of motion (\ref{eq:electroEOMrel1}) and (\ref{eq:electroEOMrel2}), both currents must be conserved.
The duality transformation on the electromagnetic fields takes the following form:
\begin{equation}
F'_{\mu\nu} = \star F_{\mu\nu} =  \frac{1}{2} \epsilon_{\mu\nu\rho\sigma} F^{\rho\sigma}.
\end{equation}

\vspace{10mm}

At the level of the classical equations of motion, the duality is well
behaved and the magnetic monopoles can be added quite easily. The next
questions are: How can one describe this in term of an action? and
Is it possible to write an action for electromagnetism in presence of both
electric and magnetic sources? This was done by Dirac in his famous paper
\cite{Dirac:1948vn} by introducing string-like singularities
known as Dirac strings.

If the magnetic sources are absent, Maxwell's equations
become
\begin{eqnarray}
\label{eq:Bianchi}
d F &=&0\\
\label{eq:EOM}
\d_\nu F^{\mu\nu} & = & j^\mu_e
\end{eqnarray}
in term of the external differential $d=dx^\mu
\partial_\mu$. Using the Poincar\'e Lemma in $\RR^4$
this identity implies the existence of a 1-form $A=A_\mu dx^\mu$ such
that $F=dA$ or $F_{\mu\nu} = \partial_\mu A_\nu - \partial_\nu
A_\mu$. Note that there is a redundancy in the description, the
transformation $\delta A_\mu  =\d_\mu \epsilon (x^\mu)$ leaves the tensor
$F_{\mu\nu}$ invariant for any function $\epsilon$. Such a
transformation is called a gauge transformation. The usual action
for electromagnetism is build from this vector potential $A_\mu$:
\begin{equation}
  \label{eq:actionEM}
S[A_\mu] = \int d^4x \, \left( -\frac{1}{4} F^{\mu\nu} F_{\mu\nu}
  +A_\mu j^\mu_e\right).
\end{equation}
When using the description of electromagnetism in terms of $A_\mu$, we
have chosen a side; all Maxwell's equations are not on the same
footing. Half of them (\ref{eq:Bianchi}) are consistency equations
known as the bianchi identities, they are coming from the identity
$d^2=0$ and the definition $F=dA$. The other half (\ref{eq:EOM}) are
dynamical equations coming from the variation of the action (\ref{eq:actionEM}).

\vspace{5mm}

Adding magnetic sources is problematic. The above
parametrisation is based on the identity $dF=0$ but in presence of
magnetic sources, it is no longer valid. The solution to this problem
was found by Dirac in his famous paper \cite{Dirac:1948vn}. The idea
is to introduce a parametrisation that is valid everywhere except on
some points where it must be corrected. The
electromagnetic field is now defined as 
\begin{equation}
F = dA + G.
\end{equation}
The 2-form $G$ is a function of the magnetic source,
zero nearly everywhere and such that
\begin{equation}
\star dG = j_m
\end{equation}
If $j_m$ is produced by a set of point-like monopoles,
$G_{\mu\nu}$ can be chosen to be non-zero on a set of strings (called
Dirac's strings), each one of them
attached to one of the monopoles and going to infinity. Notice that
the potential $A_i$ must be singular along the strings to compensate
the infinity of $G$ and to produce a $F$ regular outside the source.

In the case of one monopole sitting at the origin, the potential 4-vector and the associated string term are given by:
\begin{eqnarray}
A_0& =& 0,\\
\label{eq:potentialmonopole}
A_i & = & \frac{g}{4\pi} \left( \frac{y}{r(r-z)},-\frac{x}{r(r-z)},0\right),\\
G_{ij} & = & \epsilon_{ijz}  g \delta(x)\delta(y)\Theta(z),\\
G_{0i} & = & 0,
\end{eqnarray}
where $\delta(x)$ and $\Theta(z)$ are the Dirac and the Heaviside functions. The magnetic field produced by the potential (\ref{eq:potentialmonopole}) alone would be the following:
\begin{equation}
\tilde{B}^i  = \epsilon^{ijk}\partial_jA_k = \frac{g}{4\pi} \frac{x^i}{r^3} - g \delta(x)\delta(y)\Theta(z)\delta^{iz}.
\end{equation}
Without the string term, this magnetic field $\tilde{B}^i$ is the one
created by a semi-infinite solenoid which is infinitely thin; it describes a string of infinitely concentrated magnetic flux that spreads out from the origin to infinity and creates the magnetic flux of the monopole.
\begin{center}
\setlength{\unitlength}{2pt}
\begin{picture}(80,50)

\put(40,30){\vector(0,-1){5}}
\put(40,35){\vector(0,-1){5}}
\put(40,40){\vector(0,-1){5}}
\put(40,45){\vector(0,-1){5}}
\put(40,50){\vector(0,-1){5}}

\put(40,25){\vector(0,-1){25}}
\put(40,25){\vector(1,0){25}}
\put(40,25){\vector(-1,0){25}}
\put(40,25){\vector(1,-1){18}}
\put(40,25){\vector(-1,-1){18}}
\put(40,25){\vector(1,1){18}}
\put(40,25){\vector(-1,1){18}}

\put(40,25){\vector(0,-1){12}}
\put(40,25){\vector(1,0){12}}
\put(40,25){\vector(-1,0){12}}
\put(40,25){\vector(1,-1){9}}
\put(40,25){\vector(-1,-1){9}}
\put(40,25){\vector(1,1){9}}
\put(40,25){\vector(-1,1){9}}

\end{picture}
\end{center}
The string term is then added to remove this semi-infinite solenoid and get the magnetic monopole alone. 
\begin{center}
\setlength{\unitlength}{2pt}
\begin{picture}(240,50)

\put(40,30){\vector(0,-1){5}}
\put(40,35){\vector(0,-1){5}}
\put(40,40){\vector(0,-1){5}}
\put(40,45){\vector(0,-1){5}}
\put(40,50){\vector(0,-1){5}}

\put(40,25){\vector(0,-1){25}}
\put(40,25){\vector(1,0){25}}
\put(40,25){\vector(-1,0){25}}
\put(40,25){\vector(1,-1){18}}
\put(40,25){\vector(-1,-1){18}}
\put(40,25){\vector(1,1){18}}
\put(40,25){\vector(-1,1){18}}

\put(40,25){\vector(0,-1){12}}
\put(40,25){\vector(1,0){12}}
\put(40,25){\vector(-1,0){12}}
\put(40,25){\vector(1,-1){9}}
\put(40,25){\vector(-1,-1){9}}
\put(40,25){\vector(1,1){9}}
\put(40,25){\vector(-1,1){9}}

\put(75,25){\makebox(0,0){$+$}}

\put(110,30){\vector(0,1){5}}
\put(110,35){\vector(0,1){5}}
\put(110,40){\vector(0,1){5}}
\put(110,45){\vector(0,1){5}}
\put(110,25){\vector(0,1){5}}

\put(145,25){\makebox(0,0){$=$}}

\put(180,25){\vector(0,1){25}}
\put(180,25){\vector(0,-1){25}}
\put(180,25){\vector(1,0){25}}
\put(180,25){\vector(-1,0){25}}
\put(180,25){\vector(1,-1){18}}
\put(180,25){\vector(-1,-1){18}}
\put(180,25){\vector(1,1){18}}
\put(180,25){\vector(-1,1){18}}

\put(180,25){\vector(0,1){12}}
\put(180,25){\vector(0,-1){12}}
\put(180,25){\vector(1,0){12}}
\put(180,25){\vector(-1,0){12}}
\put(180,25){\vector(1,-1){9}}
\put(180,25){\vector(-1,-1){9}}
\put(180,25){\vector(1,1){9}}
\put(180,25){\vector(-1,1){9}}
\end{picture}
\end{center}
In this example, we chose to place the string along the positive $z$
axis. This choice is arbitrary, we could have placed it anywhere. The
change in $A_\mu$ following the change of the position of the string
is given by a gauge transformation. In other words, the string is not
an observable physical object, it is an artefact of the
parametrisation used to describe electromagnetism. It is just a sign
of the fact that the usual
description in term of a vector potential $A_\mu$ is not adapted to
the duality. Usually, the vector potential for the
monopole is written in spherical coordinates:
\begin{equation}
\label{eq:monopspheric}
A_\mu dx^\mu = \frac{g}{4 \pi} (-1 - \cos \theta ) d\phi.
\end{equation}
In that case, the singularity is hidden in the shortcomings of the
parametrisation. Indeed, the spherical parametrisation of $\RR^3$ is
not well defined at the poles $\theta=0,\pi$ corresponding to the $z$
axis. 

Using this new definition for $F_{\mu\nu}$, one can use the same action as before to describe electromagnetism:
\begin{equation}
S\left[A_\mu\right] =\int_V \, d^4x\, \left(-\frac{1}{4}   F^{\mu\nu}F_{\mu\nu} + A_\mu j^\mu_e\right).
\end{equation}
 The two Maxwell equations concerning magnetic sources will follow
 from the definition of $F_{\mu\nu}$ and $G_{\mu\nu}$, the other two
 will be the equations of motion coming from the action. One can see
 the difference in the treatment of the magnetic and the electric
 part. The `magnetic' equations are imposed by construction, the
 `electric' equations are imposed dynamically. This description of
 electromagnetism is not duality invariant. 

The above action generates the correct Lorenz force for charged
particules provided Dirac's veto holds. This veto, stating that two
strings cannot cross each other, also implies the quantization of electric charge
described in the introduction.

\section{Double potential formalism}

At the outset, the electromagnetic duality is a symmetry of the equations of motion
but not of the action. As we saw, the formalism used to describe the
theory breaks the duality. More directly, one can write the action
without sources in term of the electric and magnetic fields:
\begin{equation}
S = \int d^4 x\, \half \left(E^2 - B^2 \right).
\end{equation}
This is clearly not invariant under the exchange $E'=-B$ and $B'=E$. 

An important question is whether one can construct an action which is
invariant under the duality. We will now describe the construction of such an
action. This construction is based on the Hamiltonian
formalism and, as such, is not manifestly invariant under the Lorentz
group.

\subsection{Canonical formulation of electromagnetism}
\label{sec:canonelec}

The usual Hamiltonian action without sources is given by
\begin{eqnarray}
\label{eq:ELECusualhamilaction}
S[E^i, A_i,A_0] &= &\int d^4x \, \left( - E^i \dot A_i - \cH_{EM} - A_0 \cG
\right),\\
\cH_{EM} & = & \half \left( E^2 + B^2 \right),\\
\cG & = & \d_i E^i,
\end{eqnarray}
where we have used the notation $B^i = \epsilon^{ijk} \d_j A_k$. The function $\cH_{EM}$
is the Hamiltonian density, $\cG$ is the Gauss constraint, the generator of
the gauge transformation, and associated to it we have $A_0$ which is
the Lagrange multiplier. The associated Poisson bracket can be read
easily from the kinetic term:
\begin{equation}
\left \{F,G \right\} = \int d^3x \left( \frac{\delta F}{\delta E^i(x)}
  \frac{\delta G}{\delta A_i(x)} - F \leftrightarrow G\right).
\end{equation}
One can check easily that this action generates the Maxwell equations.
Furthermore, it reduces to the usual Lagrangian action when the
auxiliary field $E^i$ is solved for.

\subsection{Vector decomposition}

In the rest
of the first part, we will assume that we have
boundary conditions on all our fields such that the Laplacian operator
$\Delta = \d_i \d^i$ is invertible; in other words, the Laplace
equation $\Delta \phi = \psi$ has one and only one solution for each
allowed value of the field $\psi$ (where $\phi$ and $\psi$ stands for any field under
consideration).  Using this property, any vector $V^i$ can be divided in two
parts: a longitudinal and a transverse part as follows:
\begin{eqnarray}
V^i &=& V^{Ti} + V^{Li},\\
V^{Li} & =& \d^i \Delta^{-1} \d_j V^j,
\end{eqnarray}
with the transverse part $V^T$ defined as the reminder $V^T=V-V^L$. We
have $\d_i V^{Ti} = 0$.

We can apply the Poincar\'e lemma to
introduce a potential for the transverse part of $V$: 
\begin{equation}
V^{TI} = \cO W^i = \epsilon^{ijk} \d_j W_k.
\end{equation}
 Remark that the curl operator $\cO$ kills
the longitudinal part of $W^i$. In that sense, all the interesting information
is stored in the transverse part of $W$ wich is unique and given by  $W^{Ti} =
-\cO \Delta^{-1} \cO V^{Ti}$. From the
three components of a general vector in 3D, two are stored in
the transverse part and one is in the longitudinal part.

The elements of the decomposition are orthogonal under 
integration if boundary terms can be neglected,
\begin{equation}
  \int d^3 x\, V^i W_i = \int d^3 x 
  \left( V^{Ti} W^T_i + V^{Li} W^L_i\right).
\end{equation} 
and the operator $\cO $ is
self-adjoint, e.g., 
\begin{eqnarray}
  \int d^3 x\, \left( \cO V\right)^iW_i=
\int d^3 x\, V^i\left( \cO W\right)_{i}.
\end{eqnarray}
\subsection{Reduced phase space and duality invariant action}

The Gauss constraint implies that $E^i$ must be transverse.
If we solve the constraint by putting $E^L$ to zero, we must also fix the gauge freedom
associated to it. This gauge freedom is $\delta A_i = \d_i \epsilon$
and it can be used to fix the longitudinal part of $A$ to
zero. The resulting action is
\begin{eqnarray}
S[E^{Ti}, A^T_i] &= &\int d^4x \, \left( - E^{Ti} \dot A^T_i - \cH
\right),\\
\cH & = & \half \left( E^{T2} + B^2 \right).
\end{eqnarray}
This is the reduced phase-space formulation of electromagnetism: all
the constraints and gauge freedom have been removed. The only degrees
of freedom left are the 2 physical degrees of freedom describing the
photon. They are stored in the canonical pairs $(A^T_i, -E^{Ti})$.

\vspace{5mm}

 As we saw
above,  we can then introduce a new potential $Z^{Tk}$ to describe the
transverse part of the electric field:
\begin{equation}
E^i = \cO Z^{Ti} = \epsilon^{ijk} \d_j Z^T_k.
\end{equation}
Putting this in the action gives
\begin{eqnarray}
S[A^T_i, Z^T_i] & = & \int d^4x\, \left\{-\cO Z^{Ti} \dot A^T_i
  - \half \left( \left(\cO Z^{T}\right)^2 + \left(\cO A^T\right)^2 \right) \right\}.
\end{eqnarray}
This action is already duality invariant \cite{Deser:1976ve} but the symmetry can be made
more manifest if we rename our fields as $A^{Ta}_i = (A^T_i, Z^T_i)$ and
$B^{ai} = (B^i, E^i)$ with
$a=1,2$ \cite{Schwarz:1994cr}. After some integrations by part, the action becomes
\begin{eqnarray}
\label{eq:EMfixedgauge}
S[A^{Ta}_i] & = & \int d^4x\, \left\{\epsilon_{ab}\half \cO
  A^{Tai} \dot A^{Tb}_i  - \cH \right\},\\
\cH & = & \half B^{ai} B_{ai},\\
B^{ai} & = & \cO A^{Tai},
\end{eqnarray}
where $\epsilon_{ab}$ is antisymmetric with $\epsilon_{12} = 1$
($a,b,...$ are moved up and down using the Kronecker symbol $\delta_{ab}$). The
electromagnetic duality is the $SO(2)$ rotation acting on the duality
index:
\begin{equation}
\delta_D A^{Ta}_i = \epsilon^{ab}A^T_{bi}.
\end{equation}
The fact that the action is invariant is
obvious because every term is a scalar under this
rotation. The duality generator is
\begin{equation}
D = -\half \int d^3x A^{Tai} \cO A^T_{ai}
\end{equation}
which is simply an $SO(2)$ Chern-Simons action \cite{Deser:1997fk}.

\subsection{Poincar\'e transformations}

To obtain a duality invariant action, we had to use the Hamiltonian
formalism and thus loose the manifest Lorentz invariant of the action. To be
precise, we still have manifest invariance under the spatial part of the group
but not for the boosts. This is a common fact in the Hamiltonian
formalism: the 3+1 split breaks manifest Lorentz invariance because time has become a
special coordinate. On the other hand, going to the Hamiltonian
formulation doesn't change the symmetries of the action (see e.g. \cite{Henneaux:1992fk}). The boosts
should be hidden somewhere. Tracing them back from the Lagrangian
formulation, the Poincar\'e generator associated to the usual
electromagnetic action (\ref{eq:ELECusualhamilaction}) are given by
\begin{eqnarray}
\label{eq:EMpoincaregen1}
\cQ_G(\omega, a) & = & \half \omega_{\mu\nu} J^{\mu\nu} - a_\mu P^\mu
\end{eqnarray}
where $P^0$ is the time translation given by the Hamiltonian, $P^i$
are the spatial translations, $J^{\mu\nu}$ the spatial rotations and
the boosts. They can be constructed from the symmetric
energy-momentum tensor as follows:
\begin{equation}
\label{eq:EMpoincaregen4}
T^{00}  =H_{EM}, \qquad T^{i0} =  \epsilon^{ijk} E_j B_k,
\end{equation}
\begin{equation}
\label{eq:EMpoincaregen5}
P^\mu = \int d^3x \, T^{\mu 0}, \qquad J^{\mu\nu} = -\int d^3x \,
\left( x^\mu T^{\nu 0} - x^\nu T^{\mu 0}\right).
\end{equation}

From these generators, one can deduce the generators of the reduced
phase space theory by evaluating them on the constraint surface. The
generators of the double potential formalism are then obtained by
introducing the new potential. The final result is given by
(\ref{eq:EMpoincaregen1}) with
\begin{equation}
\label{eq:EMpoincaregen2}
T^{00}  =H, \qquad T^{i0} = -\half \epsilon^{ijk} \epsilon_{ab} B^a_j
B^b_k = \epsilon^{ijk} E_j B_k,
\end{equation}
\begin{equation}
\label{eq:EMpoincaregen3}
P^\mu = \int d^3x \, T^{\mu 0}, \qquad J^{\mu\nu} = -\int d^3x \,
\left( x^\mu T^{\nu 0} - x^\nu T^{\mu 0}\right),
\end{equation}
and the associated transformations
\begin{eqnarray}
\label{eq:EMpoincarefixgauge}
\delta_\xi A^{Ta}_i & =& -\epsilon^{ab} B_{bi} \xi^0 - \epsilon_{ijk}
\xi^j B^{ak},\\
\xi^\mu &= & a^\mu + \omega^\mu_{\phantom \mu \nu} x^\nu \qquad
\text{with} \qquad \omega_{\mu\nu} = -\omega_{\nu\mu}.
\end{eqnarray}

One can also show by direct computation that these generators form a
representation of the Poincar\'e algebra under the Poisson bracket
associated with the action:
\begin{equation}
\left \{ F, G\right \} = \int d^3 x \, \left(\half \frac{\delta F}{\delta
    A^a_i(x)}\epsilon^{ab} \epsilon_{ijk} \d^j \Delta^{-1} \frac{\delta G}{A^b_k(x)} \right).
\end{equation}

\section{Double potential formalism with external sources}

The action presented in the previous section concerns only the reduced phase space in absence of
sources, only the dynamical degrees of freedom of electromagnetism. It
obviously doesn't say anything about the sources. The next logical step
would be to add them. To build
an action that is invariant under the duality in presence of sources,
one has two possibilities to describe the
interactions. The first one is to treat both electric and magnetic
sources as topological information by introducing strings for
both sides \cite{Deser:1998kx}.
What we will present here is the other natural possibility: treat
both sides as dynamical information. This was done in
\cite{Barnich:2008ve} and we will mainly review the part of their
results that will be interesting in the spin 2 case.

\vspace{5mm}

Let's first have a look at the Hamiltonian formulation in presence of
electric source $j_e^\mu$ (this source must be conserved $\d_\mu
j^\mu_e = 0$). We have
\begin{eqnarray}
\label{eq:EMusualhamil}
S[A_i, E^i, A_0] & = & \int d^4 x \, \left( -E^i \dot A_i - \cH - A_0 (
  \d_i E^i  - j_e^0)\right),\\
\cH & = & \half \left( E^2 + B^2 - A_i j^i_e\right).
\end{eqnarray}
The source term modifies the Gauss constraint. The longitudinal part of $E$
is no longer zero and part of the information about the sources is
stored in the gauge sector. To add magnetic sources, we need an
equivalent mechanism for the dual part. 

\vspace{5mm}

The solution proposed by the authors of
\cite{Barnich:2008ve} is to double the gauge freedom. To do so, they
introduce the following parametrisation for the magnetic and electric
fields $B^{ai} = (B^i,E^i)$:
\begin{equation}
\label{eq:EMnewB}
B^{ai} = \epsilon^{ijk} \d_j A^a_k + \d^i C^a.
\end{equation}
The fields $A^a_i$ and $C^a$ are respectively potentials for the
transverse part and the longitudinal parts of $B^a$. The
electromagnetic action they propose is the following:
\begin{equation}
\label{eq:EMdualsources}
S[A^a_i,C^a, A^a_0] = S_{EM}[A^a_i,C^a, A^a_0] +S_I[A^a_i, A^a_0;j^{a\mu}] 
\end{equation}
where
\begin{eqnarray}
 S_{EM}[A^a_i,C^a, A^a_0]  &=& \int d^4x \, \left(\half \epsilon_{ab}
   (B^{ai} + \d^iC^a)\dot A_i^b - A^a_0 \cG_a - 
  \cH\right),\\
\cH & =& \half B^{ai}B_{ai}\\
\cG_a & = & \epsilon_{ab}\d_i B^{bi}
\end{eqnarray}
is the substitute for the usual Maxwell action and
\begin{equation}
S_I[A^a_i, A^a_0;j^{a\mu}] = \int d^4x \, \epsilon_{ab} A^a_\mu j^{b\mu}
\end{equation}
is the ``interaction'' action. The external magnetic and electric
currents are labeled as $j^{a\mu} = (j_m^\mu,j_e^\mu)$ and are
conserved, $\d_\mu j^{a\mu}=0$. This action is manifestly invariant
under duality rotations that acts on the $a$ indices
\begin{equation}
\delta_D A^a_\mu = \epsilon^{ab}A_{b\mu}, \qquad \delta_D C^a =
\epsilon^{ab} C_b, \qquad \delta_D j^{a\mu} = \epsilon^{ab} j^\mu_b.
\end{equation}

We will start by studying the electromagnetic core of this action,
namely $S_{EM}$.

\subsection{Canonical structure}

After some integration by parts, one can write the kinetic term as
\begin{equation}
\int d^4x \, \half \epsilon_{ab}
   (B^{ai} + \d^iC^a)\dot A_i^b = \int d^4x \, \left(- \cO A^{2Ti}
     \dot A^{T1}_i + \d^i C^1 \dot A^{L2}_i -\d^i C^2 \dot A^{L1}_i\right)
\end{equation}
The canonically conjugate pairs are identified as
\begin{equation}
\left(A^{T1}_i(x), - \cO A^{2Ti} (y)\right), \, \left( A^{L2}_i (x),
  \d^i C^1 (y)\right), \, \left( A^{L1}_i (x),
  -\d^i C^2 (y)\right).
\end{equation}

The usual canonical pairs of electromagnetism can be chosen in term
of the new variables as
\begin{equation}
\left(A^{T1}_i(x), - \cO A^{2Ti} (y)\right), \, \left( A^{L2}_i (x),
  \d^i C^1 (y)\right).
\end{equation}
The new canonical pair 
\begin{equation}
\left( A^{L1}_i (x),
  -\d^i C^2 (y)\right)
\end{equation}
is just the magnetic dual of the pure gauge pair $\left( A^{L2}_i (x),
  \d^i C^1 (y)\right)$ of the usual formulation.

\vspace{5mm}

Those results imply
\begin{equation}
\left\{A^{ai}(x), B^{bj} (y) \right\} = -\epsilon^{ab} \delta^{ij}
\delta^3 (x,y) , \quad \left\{B^{ai}(x), B^{bj} (y) \right\} = -\epsilon^{ab} \epsilon^{ijk}
\d_k\delta^3 (x,y).
\end{equation}

\subsection{Gauge structure and degree of freedom count}
\label{sec:elecgaugestruct}

In this action, there are 2 first-class constraints:
\begin{equation}
\cG_a = \epsilon_{ab}\d_i B^{bi} = \epsilon_{ab}\Delta C^b \approx 0.
\end{equation}
For $a=1$ we have the usual Gauss constraint $\d_i E^i = 0$. On the
other hand, $a=2$ gives its dual: the magnetic Gauss constraint $\d_i
B^i = 0$. In the usual formulation, this Maxwell equation is imposed
by the formalism. We see that in this case, it has become a dynamical
equation.

The gauge transformations generated by $\int d^3x \,
\lambda^a \cG_a$ can be easily computed:
\begin{equation}
\delta_\lambda A^a_\mu = \d_\mu \lambda^a, \qquad \delta_\lambda C^a = 0.
\end{equation}
Exactly as expected, we have the usual gauge transformation
parametrized by $\lambda^1$ and a new dual one parametrized by $\lambda^2$.
The gauge sector of the theory has been doubled in a duality invariant
way.

The new constraint $\cG_2$ can be gauge fixed through the condition
$A^{L2}_i=0$. This partially gauge fixed theory corresponds to
the usual electromagnetic theory in Hamiltonian form as described in
section \bref{sec:canonelec}. More precisely, the observables of a Hamiltonian field theory with
constraints are defined as equivalence classes of functionals that
have weakly vanishing Dirac brackets with the constraints and where
two functionals are considered as equivalent if they agree on the
surface defined by the constraints (see e.g.~\cite{Henneaux:1992fk}).
The new constraint together with the gauge fixing condition form
second class constraints. The Dirac bracket algebra of observables of
this (partially) gauge fixed formulation is isomorphic to the Poisson
bracket algebra of observables of the extended formulation on the one
hand, and to the Poisson bracket algebra of observables of 
usual electromagnetism on the other hand.

\vspace{5mm}

We start with 4 degrees of freedom per spacetime point. The 2
constraints, being first class, remove 2 of those degrees of
freedom. In the end, the fully gauge fixed theory contains 2 physical
degrees of freedom per spacetime point described by the transverse
vector potential $A^{T}_i=A^{T1}_i$ and its canonically conjugate variable
$-E^{Ti}=- \cO A^{2Ti}$, as it should.

\subsection{Duality generator}

The duality generator is the $SO(2)$ Chern-Simons term suitably
extended to the longitudinal potentials:
\begin{equation}
D = -\half \int d^3x \left(B^{ai} + \d^i C^a \right) A_{ai}.
\end{equation}
This generator commutes with the Hamiltonian and the other Poincar\'e
generators given below but it is only weakly gauge invariant:
\begin{equation}
\left\{ \cG_a, D\right\} =\epsilon_{ab} \cG^b.
\end{equation}

\subsection{Poincar\'e transformations}
\label{sec:ELECpoincaretrans}

Consider now a symmetry generator of the usual Hamiltonian action of
electromagnetism (\ref{eq:ELECusualhamilaction}). It is defined
by an observable $K[A,E]=K[A^1,B^2]$ whose representative is weakly
conserved in time,
\begin{equation}
  \frac{\d}{\d t} K+\{K,H_{EM}\}\approx 0.
\end{equation}
Since the new Hamiltonian differs from the usual one by terms
proportional to the new constraint, 
\begin{equation}
H=H_{EM}+\int d^3x\, \cG_2 k, \qquad k=\half C^1,
\end{equation}
and since $K$, when expressed in terms of the new variables, does not
depend on $A^{L2}$, so that $\{K,\int d^3x\, \cG_2 k\}\approx 0$ in the extended theory, it follows that $K$ is
also weakly conserved and thus a symmetry generator of the extended
theory,
\begin{equation}
  \frac{\d}{\d t} K+\{K,H\}\approx 0.
\end{equation}

Consider then the Poincar\'e generators $Q_G(\omega,a)$ of
electromagnetism (\ref{eq:EMpoincaregen1}). When
expressed in terms of the new variables, they are representatives for
the Poincar\'e generators of the extended theory. Indeed, we just have
shown that they are symmetry generators, while we have argued in
Section~\bref{sec:elecgaugestruct} that their Poisson algebra is
isomorphic when restricted to their respective constraint surfaces.

As expected, the extended theory is invariant under
Poincar\'e transformations but the generators obtained above are not
invariant under the duality. Generators being defined up to terms
proportional to the constraints, one can try to change the representative of
the Poincar\'e generators to make them invariant under the
duality. It is possible: one solution is to keep expressions
(\ref{eq:EMpoincaregen1}-\ref{eq:EMpoincaregen3}) but with the
new definition of $B^{ai}$. The associated transformations are now:
\begin{eqnarray}
\delta_\xi C^a&=&0,\\
\delta_\xi A^{a}_i & =& \d_i \lambda^a_\xi-\epsilon^{ab} B_{bi} \xi^0 - \epsilon_{ijk}
\xi^j B^{ak},\\
\delta_\xi A^a_0&=& \d_0 \lambda^a_\xi + \epsilon^{ab} B_{bi} \xi^i,
\end{eqnarray}
where
\begin{equation}
\lambda^a_\xi = -\epsilon^{ab} \Delta^{-1} \d_i \left( B^i_b
  \xi^0\right) + \Delta^{-1} \d_i \left(  \epsilon^{ijk} B^a_j \xi_k\right).
\end{equation}

\subsection{Equations of motion and dyon solution}
The equations of motion can be easily computed and shown to be
equivalent to the usual Maxwell equations. The variation of
$A^a_0$ gives the two Gauss laws:
\begin{equation}
\d_i B^{ai} = \Delta C^a = j^{a0}.
\end{equation}
The variation of $C^a$ gives
\begin{equation}
\epsilon_{ab}\Delta A^b_0 = \epsilon_{ab} \d^i \dot A^{b}_i - \Delta C_a.
\end{equation}
We are working with boundary conditions such that $\Delta$ is
invertible. With this assumption, both $A^a_0$ and $C^a$ are auxiliary
fields in the sense that their equations of motion can be solved to
express them in term of the other fields without the need for initial conditions.
The variation of $A^a_i$ gives the remaining Maxwell's equations
\begin{equation}
\label{eq:ELECEOM189}
-\epsilon_{ab} \dot B^{bi} + \epsilon^{ijk} \d_j B_{ak} =
\epsilon_{ab} j^{bi}.
\end{equation}

\vspace{5mm}

The simplest non-trivial source is a dyon sitting at the origin with
charges $Q^a=(P,Q)$:
\begin{equation}
j^{a\mu} = 4 \pi Q^a \delta^\mu_0 \delta^3(x).
\end{equation}
The Maxwell's equation in the above form are now solved by
\begin{equation}
A^a_\mu = -\frac{\epsilon^{ab} Q_b}{r} \delta^0_\mu, \qquad C^a = -\frac{Q^a}{r}.
\end{equation}
In this set-up, the string-like singularity has been completely removed.

\subsection{Surface charges}

The purpose of this extended double potential formalism is to have both sectors of the theory on
the same footing and to be able to compute both the electric and the
magnetic charge as dynamical conserved quantities. Doubling of the
gauge freedom gives exactly that: two gauge freedoms imply two surface
charges. In the original paper, the authors used the
techniques developed in \cite{Barnich:2002fk,Barnich:2003fk,Barnich:2008uq} to compute those
charges and by coupling this theory
to gravitation and were able to derive the first law for R-N
black holes charged with both electric and magnetic 
sources. Unfortunately, in the spin 2 case, the theory will not be
local as a Hamiltonian gauge theory and those techniques will not be
available. We thus revert to the original Hamiltonian method of
\cite{Regge:1974kx,Benguria:1977fk}. We refer the reader to appendix 
\ref{app:regge-teit-revis} for a quick summary adapted to the problem
at hand where there is no need to discuss fall-off conditions.

The analysis of appendix 
\ref{app:regge-teit-revis} is not directly applicable to our case
since we do not have Darboux coordinates and the Poisson brackets of
the fundamental variables are non local. In the spin 1 case, everything is
under control and well behaved in the end. But, because it will not be the
case for the spin 2, we will spend some time here to present the
method used in the following chapter to deal with the surface charges
of linearized gravity in the double potential formalism. 

The
non-locality brings two problems: $\delta_{\epsilon_s} z^A=0$ may not
have non trivial solutions (see spin 2), and $\Delta^{-1}$ applied to
localized sources will spread them throughout space. In view of this,
the idea is to redo the analysis of appendix 
\ref{app:regge-teit-revis} while keeping the sources
explicitly throughout the argument.

In the presence of the sources, the constraints are
\begin{equation}
\cG_a = \epsilon_{ab} \d_i B^{bi} - \epsilon_{ab} j^{b0}.
\end{equation}
Instead of (\ref{eq:app29c}), we can write
\begin{equation}
\lambda^a \cG_a = -\d_i \lambda^a\epsilon_{ab} B^{bi} -
\lambda^a\epsilon_{ab} j^{b0} - \d_i \tilde k^i_\lambda[z^A]
\end{equation}
with
\begin{equation}
\tilde k^i_\lambda[z^A]=-\lambda^a\epsilon_{ab} B^{bi}.
\end{equation}
Consider now gauge parameters $\lambda^a(x)$ satisfying
\begin{equation}
\d_i \lambda_s^a = 0 \quad \Leftrightarrow \quad \lambda_s^a = cst.
\end{equation}
It follows that the surface charges
\begin{equation}
\label{eq:ELECsufracechargesspin1}
\cQ_{\lambda_s} = \oint_S d^3x_i \tilde k^i_{\lambda_s}[z_s],
\end{equation}
only depend on the homology class of $S$ outside of the sources. The
equation of motion (\ref{eq:ELECEOM189}) implies
\begin{equation}
\d_0 k^i_{\lambda_s}[z_s] = \lambda^a \epsilon_{ab}j^{bi} -
\epsilon^{ijk} \d_j \left(\lambda^a B_{ak}\right).
\end{equation}
The surface charges (\ref{eq:ELECsufracechargesspin1}) are thus
time-independent outside of the sources. Breaking them into their two
components, we obtain both the electric and the magnetic charges:
\begin{equation}
\cQ = \oint_S d^3x_i E^i, \quad \cP = \oint_S d^3x_i B^i.
\end{equation}

The result is what we expected and is what was obtained in
\cite{Barnich:2008ve} using the
covariant approach to surface charges.

\chapter{Double potential formalism for spin 2}
\label{chap:spin2}

In this chapter, we will introduce a double potential formalism for
linearized gravity in presence of both type of gravitational sources:
the ``electric'' source which is the mass or the energy and the
``magnetic'' source. The strategy is the same as the one presented
in the previous chapter for electromagnetism. The two questions are
closely related but the spin 2 case is more complicated.

We will start by a small review on gravitational electromagnetic
duality. After this introduction, the results of
Henneaux-Teitelboim \cite{Henneaux:2005qf} for the invariance of the action in absence of
sources will be presented. Finally, the last section will be devoted to
our construction of an extended double potential formalism for
linearized gravity \cite{Barnich:2009vn}.

\section{Electromagnetic duality for gravity}
\label{sec:EMspin2}

The first hint at the existence of the electromagnetic duality in
gravitation came with the discovery of the Taub-Nut solution to the
Einstein equations in 4 dimensions. The Taub-Nut metric \cite{Newman:1963vn,Bunster:2006fk} is given by
\begin{equation}
ds^2 = -V(r) \left[dt + 2N (1- \cos \theta) d\phi \right]^2 + V(r)^{-1}
dr^2 + (r^2 + N^2) (d\theta^2 + \sin^2 \theta d\phi^2),
\end{equation}
with
\begin{equation}
V(r) = 1-\frac{2 (N^2 + M r)}{(r^2 + N^2)} = \frac{r^2 -2Mr - N^2}{R^2+N^2},
\end{equation}
where $M$ is the usual mass and $N$ a second parameter. The study of
this solution showed that this $N$ plays the role of an
electromagnetic dual to $M$. A closer look at the $g_{t\phi}$ component of the metric
shows a string-like singularity quite similar to the one appearing in
the vector potential for the magnetic monopole in spherical
coordinates (\ref{eq:monopspheric}). This string-like singularity is
in this case known as the Misner string. Unfortunately, the attempt to
interpret it as a non-physical object, as an artefact coming from the
metric description of gravity, led to problems \cite{Misner:1963ys}.
The non-linear nature of gravity implies that for this singularity to
be spurious, we need the time coordinate to be periodic. In that case,
the Taub-NUT acquires some pathological behavior like time-like closed
curves. 

A lot of work has been done in an attempt to better understand the
duality in the full non-linear theory but it is still a difficult
question. On the other hand, the linear version, describing a spin 2
particule in a Minkowski background, is nicer. The theory is described
by the Pauli-Fiertz action:
\begin{equation}
S_{PF}[h_{\mu\nu}] = -\frac{1}{4} \int d^4x \left( \d^\rho h^{\mu\nu}
  \d_\rho h_{\mu\nu} - 2 \d_\mu h^{\mu\nu} \d_\rho h^\rho_\nu + 2
  \d^\mu h^\rho_\rho \d^\nu h_{\mu\nu} - \d^\mu h^\rho_\rho \d_\mu h^\sigma_\sigma\right)
\end{equation}
where $h_{\mu\nu}$ is symmetric. This action can be obtained from the
full Einstein action by linearizing around flat space, $h_{\mu\nu}$ being
the deviation of the full metric from the Minkowski metric. The
associated equations of motions can be written  as
\begin{equation}
\label{eq:GRAVEOM}
R_{\mu\nu} = R^\rho_{\phantom \rho \mu \rho \nu} = 0
\end{equation}
where $R_{\mu\nu\rho\sigma}=- R_{\nu\mu\rho\sigma} =
-R_{\mu\nu\sigma\rho}$ is the linearized Riemann tensor defined
as
\begin{equation}
R_{\mu\nu\rho\sigma} = \d_{[\mu} h_{\nu][\rho,\sigma]}.
\end{equation}
By construction, this tensor satisfies to the following identities
\begin{eqnarray}
\label{eq:GRAVcyclic}
R_{\mu[\nu\rho\sigma]} & = & 0\\
\label{eq:GRAVbianchi}
R_{\mu\nu[\rho\sigma,\alpha]} & =& 0.
\end{eqnarray}
It follows that $R_{\mu\nu\rho\sigma}$ is symmetric for the exchange
of the pairs $(\mu\nu)$ and $(\rho\sigma)$:
\begin{equation}
R_{\mu\nu\rho\sigma}=R_{\rho\sigma\mu\nu},
\end{equation}
and, using (\ref{eq:GRAVbianchi}) and (\ref{eq:GRAVEOM}), one can easily prove
\begin{equation}
\label{eq:GRAVdualbianchi}
\d^\mu R_{\mu\nu\rho\sigma} = 0.
\end{equation}
It turns out that equations (\ref{eq:GRAVcyclic}) and
(\ref{eq:GRAVbianchi}) are the conditions for the existence of a
tensor potential $h_{\mu\nu}$ for a general tensor
$R_{\mu\nu\rho\sigma} =  - R_{\nu\mu\rho\sigma} =
-R_{\mu\nu\sigma\rho}$.

\vspace{5mm}

Equations (\ref{eq:GRAVEOM}),
(\ref{eq:GRAVcyclic})-(\ref{eq:GRAVbianchi}) and (\ref{eq:GRAVdualbianchi}) are the equivalent of Maxwell's equation for
linearized gravity.
In term of the dual tensor $S_{\mu\nu\rho\sigma} =  - S_{\nu\mu\rho\sigma} =
-S_{\mu\nu\sigma\rho}$ defined as
\begin{equation}
S_{\mu\nu\rho\sigma} = -\half \epsilon_{\mu\nu\alpha\beta}
R^{\phantom{\rho\sigma} \alpha\beta}_{\rho\sigma}
\end{equation}
they will keep the same form. Indeed, the structural equations
(\ref{eq:GRAVcyclic}) and (\ref{eq:GRAVbianchi}) become
\begin{eqnarray}
S_{\mu\nu}&= &S^\rho_{\phantom \rho \mu \rho \nu} = 0\\
\d^\mu S_{\mu\nu\rho\sigma} &=& 0.
\end{eqnarray}
On the other hand, the equations of motion (\ref{eq:GRAVEOM}) and
(\ref{eq:GRAVdualbianchi}) become
\begin{eqnarray}
S_{\mu[\nu\rho\sigma]} & = & 0,\\
S_{\mu\nu[\rho\sigma,\alpha]} & =& 0.
\end{eqnarray}
The electromagnetic duality for the spin 2 is defined as the $SO(2)$ rotation
between $R_{\mu\nu\rho\sigma}$ and its dual $S_{\mu\nu\rho\sigma}$:
\begin{equation}
\delta_D R_{\mu\nu\rho\sigma} = S_{\mu\nu\rho\sigma}, \qquad \delta_D S_{\mu\nu\rho\sigma} = -R_{\mu\nu\rho\sigma}.
\end{equation}
It obviously leaves the set of equations invariant. 

\vspace{5mm}

A linearized Riemann tensor satisfying to 
(\ref{eq:GRAVcyclic}) and (\ref{eq:GRAVEOM}) can be completely parametrized by two 3
dimensional symmetric tensors $E_{ij}$ and $B_{ij}$ defined as
\begin{eqnarray}
E_{ij} = R_{0i0j}, \qquad B_{ij} = -\half \epsilon_{ikl} R_{0j}^{\phantom{0j}kl}.
\end{eqnarray}
The are called the electric and magnetic part of the Weyl tensor which
in this case is equal to the Riemann tensor. The duality
transformations are equivalent to
\begin{equation}
\label{eq:GRAVdualitypartriemann}
\delta_D E_{ij} = B_{ij}, \qquad \delta_D B_{ij} =-E_{ij}.
\end{equation}

\vspace{5mm}

The case we described until here is a free spin 2 field without any
external source. Adding the usual ``electric'' source can be done
easily by adding an interacting term to the Pauli-Fierz
action:
\begin{equation}
S[h_{\mu\nu};T^{\mu\nu}] = \frac{1}{16\pi G} S_{PF} + \half \int d^4x \, h_{\mu\nu}T^{\mu\nu}.
\end{equation}
This source is an energy-momentum tensor, it is symmetric and
conserved:
\begin{equation}
T^{\mu\nu} = T^{\nu\mu}, \qquad\d_\mu T^{\mu\nu} = 0.
\end{equation}
The new equations of motion are
\begin{equation}
\label{eq:GRAVEOMsources}
G^{\mu\nu} = 8 \pi G  T^{\mu\nu},
\end{equation}
where $G^{\mu\nu}=R^{\mu\nu} - \half \eta^{\mu\nu} R^\rho_\rho$ is the
linearized Einstein tensor. This also
implies a modification of the identity (\ref{eq:GRAVdualbianchi}):
\begin{equation}
\d_\mu R^{\mu \rho \gamma \delta} = 8 \pi G \left(\d^\gamma \bar
  T^{\rho \delta} - \d^\delta \bar T^{\rho \gamma} \right)
\end{equation}
where for a tensor $K^{\mu\nu}$, we have defined $\bar K^{\mu\nu} = K^{\mu\nu} - \half \eta^{\mu\nu}
K^\rho_\rho$. The structural equations (\ref{eq:GRAVcyclic}) and (\ref{eq:GRAVbianchi}) don't change in
presence of electric sources.
The equations are no longer invariant under the duality: for instance,
the duality will send
equation (\ref{eq:GRAVEOMsources}) to
\begin{equation}
S^{\mu\nu} -\half \eta^{\mu\nu} S^\rho_\rho = 8 \pi G  T^{\mu\nu} \qquad
\iff \qquad R_{\mu[\nu\rho\sigma]} = -\frac{8}{3} \pi G
\epsilon_{\nu\rho\sigma \delta} \bar T^\delta_ \mu.
\end{equation}

The idea to restore the duality is exactly the same in the spin 2
case as what was done by Dirac for electromagnetism: adding dual sources. The ``magnetic'' sources are represented by
a dual, magnetic, energy-momentum tensor $\Theta^{\mu\nu}$. As for
$T^{\mu\nu}$, it must be conserved: $\d_\mu \Theta^{\mu\nu}=0$. The
new equations are \cite{Bunster:2006fk}:
\begin{eqnarray}
\label{eq:GRAVEOMgene1}
G^{\mu\nu} &=& 8 \pi G  T^{\mu\nu},\\
\label{eq:GRAVEOMgene2}
R_{\mu[\nu\rho\sigma]}& =& -\frac{8}{3} \pi G
\epsilon_{\nu\rho\sigma \delta} \bar \Theta^\delta_ \mu,\\
\label{eq:GRAVEOMgene3}
R_{\mu\nu[\rho\sigma,\alpha]} & =& \frac{8}{3} \pi G
\epsilon_{\rho\sigma\alpha \beta} \left(\d_\nu \bar
  \Theta^\beta_\mu - \d_\mu \bar \Theta^\beta_\nu \right),\\
\d_\mu R^{\mu \rho \gamma \delta} &=& 8 \pi G\left(\d^\gamma \bar
  T^{\rho \delta} - \d^\delta \bar T^{\rho \gamma} \right).
\label{eq:GRAVEOMgene4}
\end{eqnarray}
The equations are symmetric under the generalized duality:
\begin{eqnarray}
\delta_D R_{\mu\nu\rho\sigma} &=& S_{\mu\nu\rho\sigma},\\
\delta_D S_{\mu\nu\rho\sigma} &=& -R_{\mu\nu\rho\sigma},\\
\delta_D T^{\mu\nu} & =&  \Theta^{\mu\nu},\\
\delta_D \Theta^{\mu\nu} & = & -T^{\mu\nu}.
\end{eqnarray}

Again, the same problem appears in this case. One can write duality
invariant equations but writing an action is more difficult. As for
electromagnetism, the necessary conditions to introduce the potential
$h_{\mu\nu}$, namely equations (\ref{eq:GRAVcyclic}) and
(\ref{eq:GRAVbianchi}), are no longer valid in presence of magnetic
sources. In \cite{Bunster:2006fk}, the authors proposed a solution by
introducing string-like terms. Following the ideas of Dirac, they managed to write an action
invariant under Poincar\'e containing both electric and
magnetic external sources. They also derived a quantification condition for the spin 2
conserved charges $P^\gamma$ and $Q^\gamma$ (the electric and magnetic
4-momentum of the spin 2 field):
\begin{equation}
\frac{4 G P_\gamma Q^\gamma}{\hbar} \in \ZZ.
\end{equation}

\section{Duality symmetric action without sources}

The strategy in this case is the same as the one used for
electromagnetism: write the Hamiltonian action, completely fix
the gauge to go to the reduced phase space and introduce new potentials. Doing this, the authors of
\cite{Henneaux:2005qf} managed to write a duality invariant action for
linearized gravity. In the following, we will present their results in
a different notation.

\subsection{Canonical formulation of Pauli-Fierz theory}
\label{sec:canon-form-pauli}

The Hamiltonian formulation of general relativity linearized around
flat spacetime is
\begin{eqnarray}
  S_{PF}[h_{mn},\pi^{mn},n_m,n]=\int dt\Big[\int
  d^3x\, \big(\pi^{mn}\dot h_{mn}-n^m\cH_m-n\cH \big)-H_{PF}\Big], 
\end{eqnarray}
with
\begin{multline}
H_{PF}[h_{mn},\pi^{mn}]=\int
d^3x\big(\pi^{mn}\pi_{mn}-\half\pi^2+\frac{1}{4}\d^rh^{mn}\d_rh_{mn}-\\
-\half\d_m h^{mn}\d^r h_{rn}+\half\d^m h\d^n h_{mn}-\frac{1}{4}\d^m
h\d_m h\big),\label{H}
\end{multline}
and
\begin{eqnarray}
\cH_m=-2\d^n\pi_{mn},\quad
\cH_\perp=\Delta h-\d^m\d^n h_{mn}.\label{eq:GRAVconstraintsPF}
\end{eqnarray}
Again, indices are lowered and raised with the flat space metric
$\delta_{mn}$ and its inverse, $h={h^m}_m$, $\pi={\pi^m}_m$ and
$\Delta =\d_m\d^m$ is the Laplacian in flat space. The linearized $4$
metric is reconstructed using $h_{00}=-2n$ and $h_{0i}= n_i$.

\subsection{Decomposition of symmetric rank two tensors}
\label{sec:decomp-symm-rank}

Symmetric rank two tensors $\phi_{mn}$ decompose as
\cite{deser:1967zr,R.-Arnowitt:1962uq}
\begin{eqnarray}
\phi_{mn} & = & \phi^{TT}_{mn} + \phi^T_{mn} + \phi^L_{mn},\\
\phi^L_{mn} & = & \partial_m \psi_n + \partial_n \psi_m,\\
\phi^T_{mn} & = & \half \left( \delta_{mn} \Delta-\partial_m \partial_n
\right) 
\psi^T.\label{eq:GRAVdecomp}
\end{eqnarray}
Here $\phi^{TT}_{mn}$ is the transverse-traceless part, containing two
independent components. The tensor $\phi^T_{mn}$ contains the trace
of the transverse part of $\phi_{mn}$ and only one independent
component. The last three components are the longitudinal part
contained in $\phi^L_{mn}$. In terms of the original tensor
$\phi_{mn}$ the potentials for the longitudinal part and the trace are
given by 
\begin{eqnarray}
\psi_{m} & = & \Delta^{-1} \left( \d^n \phi_{mn}
  - \frac{1}{2} \Delta^{-1} \d_m\d^k\d^l\phi_{kl} \right),\\
\psi^T & = & \Delta^{-1} \left( \phi - 
\Delta^{-1} \d^m\d^n\phi_{mn}\right),
\end{eqnarray}
while the transverse traceless part is then defined as the remainder, 
\begin{eqnarray}
\phi^{TT}_{mn} & = & \phi_{mn} - \phi^T_{mn} - \phi^L_{mn}.
\end{eqnarray}
This implies 
\begin{align}
  \label{eq:26}
\Delta^2
\phi^{TT}_{mn}&=\Delta^2\phi_{mn}-\Delta\d_m\d^k\phi_{kn}-\Delta\d_n\d^k\phi_{km}
\nonumber\\ & -\half\Delta(\delta_{mn}\Delta-\d_m\d_n)\phi +
\half(\delta_{mn}\Delta+\d_m\d_n)\d^k\d^l\phi_{kl},\\
\int d^3x\, \phi^{mn}\Delta^2 \phi^{TT}_{mn} & = \int d^3x\,
\Big(\Delta\phi^{mn}\Delta\phi_{mn}+2 \d_m\phi^{mn}\Delta \d^k\phi_{kn}
-\half (\Delta\phi)^2\nonumber\\ & +\d_m\d_n \phi^{mn}\Delta \phi+\half
\d_m\d_n\phi^{mn}\d_k\d_l\phi^{kl}\Big).
\end{align}

Alternatively, one can introduce the local operator $\mathcal{P}^{TT}$
\begin{equation}
  \left(\mathcal{P}^{TT} \phi \right)_{mn} =
  \half\big[\epsilon_{mpq} \partial^p (\Delta {\phi^{q}}_n
  -\partial_n \partial_r \phi^{qr}) + \epsilon_{npq} \partial^p (\Delta
  {\phi^{q}}_m  - \partial_m \partial_r 
  \phi^{qr})\big], \label{eq:GRAVdefPTT}
\end{equation}
which projects out the longitudinal and trace parts and onto a
transverse-traceless tensor,
\begin{eqnarray}
\left(\mathcal{P}^{TT} \phi \right)_{mn}  =  \mathcal{P}^{TT} \left(
  \phi^{TT} \right)_{mn}  =
\left( \mathcal{P}^{TT} \phi \right)^{TT}_{mn}. 
\end{eqnarray}
In addition, 
\begin{eqnarray}
  \label{eq:GRAVpropP}
  \left(\mathcal{P}^{TT} ( \mathcal{P}^{TT}\phi)
  \right)_{mn}=-\Delta^3 \phi^{TT}_{mn}.
\end{eqnarray}
As a consequence, the transverse-traceless tensor $\phi^{TT}_{mn}$ can
be written as $\mathcal{P}^{TT}$ acting on a suitable potential
$\psi^{TT}_{mn}$,
\begin{equation}
\phi^{TT}_{mn} =\left( \mathcal{P}^{TT} \psi^{TT}\right)_{mn},
\quad
\psi^{TT}_{mn} = - \Delta^{-3}\left(\mathcal{P}^{TT} \phi
\right)_{mn}.  
\end{equation}

When acting on a
transverse-traceless tensor, the last two terms of $\mathcal{P}^{TT}$ can be dropped. In
this case, it is related to the generalized curl 
\cite{Deser:2005ve,Deser:2005qf}, 
\begin{eqnarray}
  \left(\cO\phi\right)_{mn}=\half(\epsilon_{mpq} \partial^p {\phi^{q}}_n + 
\epsilon_{npq} \partial^p  {\phi^{q}}_m),\\
\left(\mathcal{P}^{TT} \phi^{TT} \right)_{mn}=\Delta
\left(\cO\phi^{TT}\right)_{mn}. 
\end{eqnarray}
Remark that the generalized curl acting on a transverse-traceless
tensor will produce a transverse-traceless tensor.
A second operator that projects out the longitudinal and trace
parts and onto a transverse-traceless tensor is $\mathcal{Q}^{TT}$,
\begin{eqnarray}
  \left(\mathcal{Q}^{TT} \phi \right)_{mn}  =  
\epsilon_{mpq}\epsilon_{nrs} \partial^p\partial^r \Delta\phi^{qs}
-\half(\delta_{mn}\Delta -\d_m\d_n)(\Delta\phi-\d^p\d^r\phi_{pr}). 
\end{eqnarray}
In this case, 
\begin{eqnarray}
  \label{eq:9}
  \left(\mathcal{Q}^{TT} ( \mathcal{Q}^{TT}\phi)
  \right)_{mn}=\Delta^4 \phi^{TT}_{mn},
\end{eqnarray}
so that the transverse-traceless tensor $\phi^{TT}_{mn}$ can
be written as $\mathcal{Q}^{TT}$ acting on another potential
$\chi^{TT}_{mn}$,
\begin{eqnarray}
  \phi^{TT}_{mn} =\left( \mathcal{Q}^{TT} \chi^{TT}\right)_{mn},
\quad
  \chi^{TT}_{mn} = \Delta^{-4}\left(\mathcal{Q}^{TT} \phi
\right)_{mn}. 
\end{eqnarray}
In turn this operator is related to the way the constraints $\cH_m=0$
are solved by expressing the momenta $\pi^{mn}$ in terms of
superpotentials in \cite{Henneaux:2005qf}.  When acting on a
transverse-traceless tensor, the last term can again be dropped and
it is related to the square of the generalized curl,
\begin{eqnarray}
  \left(\mathcal{Q}^{TT} \phi^{TT} \right)_{mn}=\Delta 
\left(\cO(\cO\phi^{TT})\right)_{mn}=-\Delta^2 \phi^{TT}_{mn}.
\end{eqnarray}

The elements of the decomposition are orthogonal under 
integration if boundary terms can be neglected,
\begin{equation}
  \int d^3 x\, \phi^{mn} \varphi_{mn} = \int d^3 x 
  \left( \phi^{TTmn} \varphi^{TT}_{mn} +\phi^{Lmn} \varphi^L_{mn} + 
\phi^{Tmn} \varphi^T_{mn} \right).
\end{equation} 
and the operators $\mathcal{P}^{TT},\mathcal{Q}^{TT},\cO $ are
self-adjoint, e.g., 
\begin{eqnarray}
  \int d^3 x\, \left( \mathcal{P}^{TT} \phi\right)^{mn}\varphi_{mn}=
\int d^3 x\, \phi^{mn}\left( \mathcal{P}^{TT} \varphi\right)_{mn}.
\end{eqnarray}

\subsection{Duality invariant action without sources}
\label{sec:pauli-fierz-terms}

 Because of
the orthogonality of the decomposition, the canonically conjugate
pairs can be directly read off from the kinetic term and are given by
\begin{eqnarray}
\big (h^{TT}_{mn}( x),\,\pi^{kl}_{TT}(\vec
  y)\big),\quad \big(h^L_{mn}(\vec
  x),\, \pi^{kl}_L( y)\big),\quad 
\big(h^{T}_{mn}( x),\,\pi_{T}^{kl}( y)\big).
\end{eqnarray}
The first class constraints $\cH_m=0=\cH$ are equivalent to
$\pi^{kl}_L=0=h^T_{mn}$. They can be gauge fixed through the
conditions $h^L_{mn}=0=\pi_{T}^{kl}$. The reduced theory only depends
on $2$ degrees of freedom (per spacetime point), the
transverse-traceless components $(h^{TT}_{mn}(\vec
x),\,\pi^{kl}_{TT}( y))$ and the reduced Hamiltonian simplifies
to 
\begin{eqnarray}
  H^R=\int d^3x\, \Big(
  \pi^{mn}_{TT}\pi_{mn}^{TT}+\frac{1}{4}\d_rh^{TT}_{mn}\d^rh_{TT}^{mn}\Big). 
\end{eqnarray}

Using the same strategy than in the electromagnetic case, the next
step is to introduce new potentials. 
In this case, the authors of \cite{Henneaux:2005qf} introduced two new potentials:
\begin{equation}
\pi^{mn}_{TT} = - \Delta H^{Dmn}_{TT} \quad \text{and} \quad
h^{TT}_{mn} = 2 \left(\cO H^{TT}\right)_{mn}.
\end{equation}
Plugging this into the reduced action and doing some
integrations by parts brings it to
\begin{eqnarray}
  S[H_{TT},H^D_{TT}]&=&\int dt\Big[\int
  d^3x\, \left[-2\Delta \left( \cO H^D_{TT}\right)^{mn}\dot
  H^{TT}_{mn}\right]-H\Big], \\
H&=&\int d^3x\, \Big(
  \Delta^2 {H^D_{TT}}^{mn} H^{DTT}_{mn}+\Delta^2 H_{TT}^{mn} H^{TT}_{mn}\Big).
\end{eqnarray}
As before, one can introduce a notation better suited to the duality
$H^a_{TT}=(H_{TT},H^D_{TT})$. Using this, the action takes a
duality invariant form:
\begin{eqnarray}
\label{eq:GRAVactdualinv}
  S[H^a_{TT}]&=&\int dt\Big[\int
  d^3x\, \epsilon_{ab}\Delta \left( \cO H^a_{TT}\right)^{mn}\dot
  H^{bTT}_{mn}-H\Big], \\
H&=&\int d^3x\, \Delta^2 H_{aTT}^{mn} H^{aTT}_{mn}.
\end{eqnarray}
The duality is the $SO(2)$ rotation generated by
\begin{equation}
\label{eq:GRAVdualpot}
\delta_D H^{TTa}_{mn} = \epsilon^{ab}H^{TT}_{bmn}.
\end{equation}
This transformation obviously leaves the action
(\ref{eq:GRAVactdualinv}) invariant. The associated duality generator
is
\begin{equation}
D = - \int d^3x \cP^{TT} (H^a)_{mn} H^{mn}_a.
\end{equation}
In \cite{Henneaux:2005qf}, it was cast in the form of a Chern-Simons term.

The fact that this transformation is indeed the electromagnetic
duality is less clear for the spin 2. In the electromagnetism case,
the two potentials where easily related to the electric and the
magnetic fields. It was then easy to interpret the $SO(2)$ rotation on
the $a$ index as the duality introduced at the level of the equations
of motion. Following \cite{Henneaux:2005qf}, we see that the electric
and magnetic part of the Weyl tensor are given in term of the new
potentials as:
\begin{eqnarray}
E_{mn} & =& 2\epsilon_{mpq} \d^p \Delta H^q_{\phantom{q}n},\\
B_{mn} & =& 2\epsilon_{mpq} \d^p \Delta {H^D}^q_{\phantom{q}n}.
\end{eqnarray}
We see that the duality defined in (\ref{eq:GRAVdualpot}) induce the
duality (\ref{eq:GRAVdualitypartriemann}) at the level of the Riemann tensor.

\subsection{Poincar\'e generators}

As for electromagnetism, the invariance of the action under the
Poincar\'e transformations is no longer manifest but is of course
still present. The strategy here is the same: introduce the new potentials in the reduced phase space
generators which are given by the generators of Pauli-Fierz evaluated on the
constraint surface.

To build the Poincar\'e generators of Pauli-Fierz, we will use the
fact that it is the linearization of general relativity. In this section, we assume that the canonical variables vanish
sufficiently fast at the boundary so that integrations by parts can be
used even if the gauge parameters do not vanish at the
boundary.

In the Hamiltonian formulation of general relativity
\cite{R.-Arnowitt:1962uq}, the canonically conjugate variables are the
spatial $3$ metric $g_{ij}$ and the extrinsic curvature
$\pi^{ij}$. The constraints are explicitly given by
\begin{equation}
  \mathcal{H}_\perp  =  
\frac{1}{\sqrt{g}}\left( \pi^{mn}\pi_{mn} - \frac{1}{2}\pi^2\right)-
\sqrt{g} R\,,\qquad 
  \mathcal{H}_i  =  -2 \nabla_j \pi^{j}_{\phantom{j}i}. 
\end{equation}
The associated generators of gauge transformation $H[\xi]=\int d^3x \,
\left( \mathcal{H}_\perp \xi^\perp + \mathcal{H}_i \xi^i\right)$
satisfy the so-called surface deformation algebra
\cite{Teitelboim:1973kx,Hojman:1976vn},
\begin{eqnarray}
\label{eq:GRAVHalgebra}
\left\{ H[\xi] , H[\eta]\right\} & = & H[[\xi,\eta]_{SD}],\\
{[\xi,\eta]}^\perp_{SD} & = & \xi^i \partial_i \eta^\perp 
- \eta^i \partial_i \xi^\perp,\\
{[\xi, \eta ]}^i_{SD} & = & g^{ij}\left( \xi^\perp \partial_j
  \eta^\perp - \eta^\perp \partial_j \xi^\perp\right) 
+ \xi^j \partial_j \eta^i - \eta^j \partial_j \xi^i.
\end{eqnarray}
When the parameters $f,g$ of gauge transformations depend on the
canonical variables, \eqref{eq:GRAVHalgebra} is replaced by
\cite{Brown:1986zr}
\begin{align}
  \label{eq:GRAVPoincarealgebra}
  \left\{ H[f] , H[g]\right\} & =  H[k],\\
k& =[f,g]_{SD}+\delta_g f-\delta_f g -m, \\
m^\perp & =\int d^3x^\prime\, \Big[\{f^\perp, g^\perp(x^\prime)\}
\cH_{\perp}(x^\prime) + \{f^\perp, g^j(x^\prime)\}
\cH_{j}(x^\prime)\Big],\\
m^i & =\int d^3x^\prime\, \Big[\{f^i, g^\perp(x^\prime)\}
\cH_{\perp}(x^\prime) + \{f^i, g^j(x^\prime)\}
\cH_{j}(x^\prime)\Big],
\end{align}
where 
\begin{align}
  \label{eq:GRAVgaugefull1}
  \delta_\xi g_{ij} & = \nabla_i \xi_j+\nabla_j\xi_i+2
  D_{ijkl}\pi^{kl}\xi^\perp,\\
D_{ijkl}&=\frac{1}{2\sqrt g}(g_{ik}g_{jl}+g_{jk}g_{il}-g_{ij}g_{kl}),\\
\delta_\xi \pi^{ij} & = -\xi^\perp\sqrt g(R^{ij}-\half
g^{ij}R)+\frac{\xi^\perp}{2\sqrt
  g}g^{ij}(\pi^{kl}\pi_{kl}-\half\pi^2)\nonumber\\ & -2\frac{\xi^\perp}{\sqrt g}
(\pi^{im}\pi_m^j -\half \pi^{ij}\pi)+\sqrt
g(\nabla^j\nabla^i\xi^\perp-g^{ij}\nabla_m \nabla^m \xi^\perp)\nonumber\\ &
+\nabla_m (\pi^{ij} \xi^m)-\nabla_m
\xi^i\pi^{mj}-\nabla_m\xi^j\pi^{mi}. \label{eq:GRAVgaugefull2}
\end{align}

Let $g_{ij}=\delta_{ij}+h_{ij}$ and consider the canonical change of
variables from $g_{ij},\pi^{kl}$ to $z^A=(h_{ij},\pi^{kl})$. We will
expand in terms of the homogeneity in the new variables and use the
flat metric $\delta_{ij}$ to raise and lower indices in the remainder
of this appendix. Furthermore, Greek indices take values from $0$ to
$3$ with $\mu=(\perp,i)$. Indices are lowered and raised with
$\eta_{\mu\nu}={\rm diag} (-1,1,1,1)$ and its inverse. Let
$\tilde\omega_{\mu\nu}=-\tilde\omega_{\nu\mu}$.

To lowest order, i.e., when $g_{ij}=\delta_{ij} $, the vector fields 
\begin{equation}
\label{eq:GRAVvect}
\xi_P(\tilde\omega,\tilde a)^\mu=
-\tilde\omega^\mu_{\phantom{\mu}i}x^i+\tilde a^\mu,
\end{equation}
with bracket the surface deformation bracket form a representation of
the Poincar\'e algebra \cite{Regge:1974kx},
\begin{equation}
\label{eq:GRAVvectalgebra}
[\xi_P(\tilde \omega_1,\tilde a_1), \xi_P(\tilde \omega_2,\tilde 
a_2)]^{(0)}_{SD}=
\xi_P([\tilde \omega_1,\tilde\omega_2], \tilde\omega_1 \tilde a_2 -
\tilde\omega_2 \tilde a_1).
\end{equation}

For the gauge generators, we find $H[\xi]=H^{(1)}[\xi] + H^{(2)}[\xi]
+H^{(3)}[\xi] + \cdots$, where 
\begin{align}
  H^{(1)}[\xi] & = \int d^3x \, \left( -2 \partial^j \pi_{ij}\xi^i +
    ( \partial^i\partial^j h_{ij} - \Delta
    h) \xi^\perp \right)\\
  & = \int d^3x \, \left( \cH_m^{(1)}\xi^m+\cH^{(1)}_\perp
    \xi^\perp\right)
\end{align}
are the gauge generators associated to the constraints \eqref{eq:GRAVconstraintsPF}
of the Pauli-Fierz theory. 
Because 
\begin{equation}
H[[\xi,\eta]_{SD}]=H^{(1)}[[\xi,\eta]_{SD}^{(0)}]+H^{(2)}[[\xi,\eta]_{SD}^{(0)}]
+H^{(1)}[[\xi,\eta]_{SD}^{(1)}]+O(z^3),\label{eq:GRAVpoincareexprhs}
\end{equation}
we have to lowest non trivial order
\begin{equation}
  \label{eq:GRAVlowestnontriv}
 \left\{H^{(1)}[\xi],H^{(2)}[\eta]\right\}=H^{(1)}[[\xi,\eta]_{SD}^{(0)}].
\end{equation}
This means that $H^{(2)}[\eta]$ are observables, i.e., weakly gauge
invariant functionals.

One can use integrations by parts to show that
$H^{(1)}[\xi_P]=0$. It then follows that 
\begin{equation}
\left\{ H[\xi_P] , H[\eta_P]\right\} = \left\{ H^{(2)} [\xi_P], 
H^{(2)}[\eta_P]\right\} + O(z^3).
\end{equation}
For vectors $\xi_P(\tilde\omega,\tilde a),\eta_P(\tilde \theta,\tilde
b)$ of the form \eqref{eq:GRAVvect}, the first term on the RHS of
\eqref{eq:GRAVpoincareexprhs} vanishes on account of \eqref{eq:GRAVvectalgebra}. To
lowest non trivial order, \eqref{eq:GRAVHalgebra} then implies
\begin{equation}
  \label{eq:GRAVHalgexp}
  \left\{ H^{(2)} [\xi_P], 
H^{(2)}[\eta_P]\right\}=H^{(2)}[[\xi_P,\eta_P]_{SD}^{(0)}]
+H^{(1)}[[\xi_P,\eta_P]_{SD}^{(1)}]. 
\end{equation}
The generators $H^{(2)} [\xi_P]$ equipped with the Poisson bracket
thus form a representation of the Poincar\'e algebra when the
constraints of the Pauli-Fierz theory are satisfied. Explicitly, the
term proportional to the constraints is
\begin{equation}
  H^{(1)}[[\xi,\eta]_{SD}^{(1)}]=-2\int d^3x\, \d^j\pi_{ji} h^{ik}(
  \xi^\perp_P\theta^\perp_{\phantom{\perp}k} -\eta^\perp_P
\omega^\perp_{\phantom{\perp}k}),
\end{equation}
while
\begin{align}
\mathcal{H}^{(2)}_i & = - 
2 \partial_j \left(\pi^{jk} h_{ik} \right)+ \pi^{jk}\partial_i
h_{jk}\\
\mathcal{H}^{(2)}_\perp & =  \pi^{ij} \pi_{ij} - 
\frac{1}{2} \pi^2\nonumber\\
&+\frac{1}{4} \partial_k h_{ij} \partial^k h^{ij} -
\frac{1}{2} \partial_k h^{ki} \partial^j h_{ij} +
\frac{1}{2} \partial_i h \partial_j h^{ij} 
- \frac{1}{4} \partial_i h \partial^i h\nonumber\\
&+\partial_l\left( \frac{1}{2} h \partial^l h 
- h^{ij}\partial^l h_{ij} - \frac{1}{2} h \partial_i h^{il} 
- h^{il}\partial_i h + \frac{3}{2} h^{lj}\partial^i h_{ij}
+\frac{1}{2} h_{ij} \partial^i h^{jl}\right).
\end{align}

Isolating terms proportional to the constraints, we find
\begin{align}
  H^{(2)}[\xi]&=\int d^3x\, \left(\cH_m h^{mi}\xi_i+\frac{1}{2} \cH h
    \xi^\perp\right)  + \bar H^{(2)}[\xi],\\
\bar \cH^{(2)}_i&=-\pi^{jk}(\d_jh_{ki}+\d_k h_{ji}-\d_ih_{jk}),\\
\bar \cH^{(2)}_\perp&=\pi^{ij} \pi_{ij} - 
\frac{1}{2} \pi^2\nonumber\\
&+\frac{1}{4} \partial_k h_{ij} \partial^k h^{ij} -
\frac{1}{2} \partial_k h^{ki} \partial^j h_{ij} +  \frac{1}{4}\partial_i
h \partial^i h 
\nonumber\\
&+\partial_l\left(
- h^{ij}\partial^l h_{ij} 
- h^{il}\partial_i h + \frac{3}{2} h^{lj}\partial^i h_{ij}
+\frac{1}{2} h_{ij} \partial^i h^{jl}\right),
\end{align}
with $\bar H^{(2)}[\xi]=\int d^3x\,\left( \bar \cH^{(2)}_i\xi^i+\bar
  \cH^{(2)}_\perp \xi^\perp\right)$.  On account of \eqref{eq:GRAVlowestnontriv} and
the analog of \eqref{eq:GRAVPoincarealgebra} for $H^{(1)}[f]$, it follows that
\begin{equation}
  \left\{ \bar H^{(2)} [\xi_P], 
\bar H^{(2)}[\eta_P]\right\}\approx \bar H^{(2)}[[\xi_P,\eta_P]_{SD}^{(0)}],
\end{equation}
where $\approx$ means an equality up to terms proportional to the
constraints $\cH_m,\cH_\perp$ of Pauli-Fierz theory. Note that the
functionals $H^{(2)}[\xi_P]$ and $\bar H^{(2)}[\xi_P]$ generate
transformations of the canonical variables that are equivalent because
they differ at most by a gauge transformations of the Pauli-Fierz
theory when restricted to the constraint surface.

The generators for global Poincar\'e transformations of Pauli-Fierz
theory can then be identified as 
\begin{eqnarray}
  \label{eq:GRAVidentpoincare}
&&  Q_G(\omega,a)=\half\omega_{\mu\nu}J^{\mu\nu}_G-a_\mu P^\mu_G
=\bar H^{(2)}[\xi_P(\tilde \omega,\tilde
  a)]\nonumber\\
&& \tilde\omega_{\mu\nu}=\omega_{\mu\nu},\quad \tilde a_\perp=
a_\perp, \quad \tilde a_i= a_i+\omega_{\perp i}x^0.
\end{eqnarray}
Indeed, differentiating \eqref{eq:GRAVHalgexp} with respect to $b_\perp$ gives
\begin{eqnarray}
  \label{eq:103}
  \{H,Q_G(\omega,a)\}=\ddl{}{t}Q_G(\omega,a)+2\int d^3x\, \d^j\pi_{ji} h^{ik}
\omega_{\perp k}.
\end{eqnarray}
When combined with \eqref{eq:GRAVHalgexp} and \eqref{eq:GRAVidentpoincare}, this shows that,
on the constraint surface, the generators $Q_G(\omega,a)$ are
conserved and satisfy the Poincar\'e algebra.

Finally, we can further simplify the explicit expression for $\bar
H^{(2)}[\xi_P]$ by using linearity of $\xi_P$ in $x^i$ and
integrations by parts to show that
\begin{align}
\int d^3x\,\bar
 \cH^{(2)}_\perp \xi^\perp_P=\int d^3x\, \Big[&\pi^{ij} \pi_{ij} - 
\frac{1}{2} \pi^2+\frac{1}{4} \partial_k h_{ij} \partial^k h^{ij} -
\frac{1}{2} \partial_k h^{ki} \partial^j h_{ij}   \nonumber\\
&+\frac{1}{4}\partial_i
h \partial^i h +\partial_l\big(
h \partial_i h^{il} +  h^{lj}\partial^i h_{ij}\big)\Big]\xi^\perp_P.
\end{align}

The expansion of the gauge transformations \eqref{eq:GRAVgaugefull1},
\eqref{eq:GRAVgaugefull2} gives to first order:
\begin{gather}
  \delta^{(0)}_\xi h_{ij}= \d_i\xi_j+\d_j\xi_j,\qquad \delta^{(0)}_\xi
  \pi^{ij}=(\d^i\d^j-\delta^{ij} \Delta)\xi^\perp,\\
\delta^{(1)}_\xi h_{ij}=\xi^k\d_k h_{ij}+\d_i \xi^k h_{kj}+\d_j\xi^k
h_{ik} +2 \pi_{ij}\xi^\perp-\delta_{ij}\pi\xi^\perp,\label{eq:49a}\\
\delta^{(1)}_\xi \pi^{ij}=\half h(\d^i\d^j
-\delta^{ij}\Delta)\xi^\perp
-h^{im}\d_m\d^j\xi^\perp-h^{jm}\d_m\d^i\xi^\perp+h^{ij}
\Delta\xi^\perp\nonumber\\+\delta^{ij}h^{mn}
\d_m\d_n\xi^\perp+\d_m(\pi^{ij}\xi^m)-\pi^{mj}\d_m\xi^i
-\pi^{mi}\d_m\xi^j
\nonumber\\ +\half \d_k\xi^\perp\Big[
-\d^j h^{ki}-\d^i h^{kj} +\d^k h^{ij}+\delta^{ij}(2\d_l h^{kl}-\d^k
h)\Big]
\nonumber\\+\half \xi^\perp\Big[\d^i\d^j h
+\Delta h^{ij} -\d_k \d^i h^{jk} -\d_k \d^j
h^{ik} -\delta^{ij} (\Delta h-\d_k\d_l h^{kl})\Big]. \label{eq:49b}
\end{gather}

\vspace{5mm}

From these generators, one can deduce the generators of the reduced
phase space theory by evaluating them on the constraints
surface. This gives
\begin{eqnarray}
 Q_G(\omega,a) & = & \half\omega_{\mu\nu}J^{\mu\nu}_G-a_\mu P^\mu_G = \int d^3x\,\left( \bar \cH^{(2)}_i\xi^i+\bar
  \cH^{(2)}_\perp \xi^\perp\right)\\
\bar \cH^{(2)}_i&=&-\pi^{TTjk}(\d_jh^{TT}_{ki}+\d_k h^{TT}_{ji}-\d_ih^{TT}_{jk}),\\
\bar \cH^{(2)}_\perp&=&\pi^{TTij} \pi^{TT}_{ij} +\frac{1}{4} \partial_k h^{TT}_{ij} \partial^k h^{TTij}.
\end{eqnarray}
Finally, the Poincar\'e generators for the double potential formalism
are obtained by introducing the new potentials in the above expression for $Q_G$.

\section{Duality symmetric action with external sources}

Having briefly recalled the double formalism potential for gravity, we
will now introduce our work on the extension of this formalism to
include external sources, as was done in the spin 1 case. The main
difference is that the spin 2 action was built piece by piece due to
its complexity.

The analysis starts with a
degree of freedom count that shows that the phase space of duality
invariant spin $2$ fields with doubled gauge invariance can be taken
to consist of $2$ symmetric tensors, $2$ vectors and $2$ scalars in
$3$ dimensions. We then define the metric, extrinsic curvature and
their duals in terms of the phase space variables and propose our
duality invariant action principle with enhanced gauge invariance. We
proceed by identifying the canonically conjugate pairs and discuss the
gauge structure, Hamiltonian and duality generators of the theory.  In
the absence of sources, we then show how the generators for global
Poincar\'e transformations can be extended to the
duality invariant theory.

The next step is the introduction of the external sources. The equations of motion are
then solved in the simplest case of a point-particle dyon sitting at
the origin. They are Coulomb-like without string singularities. By
identifying the Riemann tensor in terms of the canonical variables and
computing it for this case, we then show that this solution indeed
describes the linearized Taub-NUT solution. 

Finally, we discuss the surface charges
of the theory and show that they include electric and magnetic
energy-momentum and angular momentum. Because of the non-locality of
the Poisson structure, we proceed indirectly and show that the
expressions obtained by generalizing the surface charges of
Pauli-Fierz theory in a duality invariant way fulfill the standard
properties.  Finally, we investigate how the surface charges transform
under a global Poincar\'e transformation of the sources.

\subsection{Degree of freedom count}
\label{sec:degree-freedom-count}

In order to be able to couple to sources of both electric and magnetic
type in a duality invariant way, we want to keep all components and
double the gauge invariance of the theory. With 2 degrees of freedom,
$\#\, {\rm dof}=2$, and 8 first class constraints, $\#\, {\rm fcc}=8$,
we thus need $10$ canonical pairs, $\#\, {\rm cp}=10$, according to
the degree of freedom count \cite{Henneaux:1990bh}
\begin{eqnarray}
  2*( \#\, {\rm cp})=2*(\#\, {\rm dof})+2*(\#\, {\rm fcc}). 
\end{eqnarray}
This can be done by taking 2 symmetric tensors, 2 vectors and 2
scalars as fundamental canonical variables, 
\begin{eqnarray}
  \label{eq:18}
  z^A=(H^a_{mn},A^a_m,C^a).
\end{eqnarray} 

\subsection{Change of variables and duality rotations}
\label{sec:change-vari-dula}

For $a=1,2$, consider $h^{a}_{mn}=(h_{mn},h^D_{mn})$ and
$\pi^{mn}_a=(\pi^{mn}_D,\pi^{mn})$ and the definitions
\begin{align}
  \label{eq:16}
  h^a_{mn} &=  \epsilon_{mpq}\d^p {H^{aq}}_{n}+
 \epsilon_{npq}\d^p
 {H^{aq}}_{m}+\d_m A^a_n +\d_n A^a_m +\half
(\delta_{mn}\Delta-\d_m\d_n)C^a\nonumber\\
&=2\Delta^{-1}\left(\cP^{TT} H^a\right)_{mn}+ 
\partial_m \big(\Delta^{-1}\epsilon_{npq} \partial^p  \partial_r
H^{aqr}+A^a_n\big) \nonumber\\ &\hspace*{2cm} + \partial_n\big(
\Delta^{-1} \epsilon_{mpq}    \partial^p \partial_r H^{aqr} +A^a_m\big)
+ \half (\delta_{mn}\Delta-\d_m\d_n)C^a,
\\
\pi^{a}_{mn} &=
\epsilon_{mpq}\epsilon_{nrs} \partial^p\partial^r 
H^{aqs} -\d_m \d^r H^a_{rn} - \d_n \d^r H^a_{rm} -(\delta_{mn}
\Delta-\d_m\d_n) H^a +\delta_{mn} \d^k\d^l H_{kl}^a\nonumber\\ &=
\Delta^{-1}\left(\cQ^{TT} H^a\right)_{mn}
-\d_m \d^r H^a_{rn} - \d_n \d^r H^a_{rm}  
-\half(\delta_{mn}\Delta -\d_m\d_n) H^a\nonumber\\ &\hspace*{2cm}
+\half \Delta^{-1}(\delta_{mn}\Delta +
\d_m\d_n )\d^p\d^rH^a_{pr}
\nonumber\\
 &= -\Delta H^a_{mn}.\label{eq:16b}
\end{align}
The relations for $h_{mn}[H^1,A^1,C^1]$ and $\pi^{mn}[H^2]$ are the
local change of coordinates from the standard canonical variables of
linearized gravity to the new variables. They are invertible and, as
usual, the inverse is not local. The relations for
$h^{2}_{mn}=h^D_{mn}$, $\pi^{mn}_1=\pi^{mn}_D$ serve to denote
convenient combinations of the new variables in terms of which
expressions below will simplify. As indicated by the notation, the
infinitesimal duality rotations among the fundamental variables are
\begin{equation}
  \label{eq:20}
  \delta_D H^a_{mn}=\epsilon^{ab} H_{bmn},\ 
\delta_D A^a_{m}=\epsilon^{ab} A_{bm},\ \delta_D C^a=\epsilon^{ab}
C_{b}.
\end{equation}
Since $h^a_{mn},\pi^{mn}_a$ are
linear combinations of the fundamental variables, they are rotated in
exactly the same way. We can thus consider $h^2_{mn}=h^D_{mn}$,
$\pi^{mn}_1=\pi^{mn}_D$ as the dual spatial metric and the dual
extrinsic curvature in the linearized theory.

\subsection{Action principle and locality}
\label{sec:action-principle}

The duality invariant local action principe that we propose is of the
form
\begin{eqnarray}
  \label{eq:17}
  S_G[z^A,u^\alpha]=\int d^4x\, (a_A[z]\dot z^A  - u^\alpha
  \gamma_\alpha[z])  -\int dt\, H[z],  
\end{eqnarray}
where $u^\alpha$ denote the 8 Lagrange multiplies and $\gamma_\alpha$
the constraints. 

Let us stress here that we use the assumption that the flat space
Laplacian $\Delta$ is invertible in order to show equivalence with the
usual Hamiltonian or covariant formulation of Pauli-Fierz theory and
also to disentangle the canonical structure. The action principle
\eqref{eq:17} itself and the associated equations of motion will be
local both in space and in time independently of this assumption. The
theory itself is not local as a Hamiltonian gauge theory (see
e.g.~\cite{Henneaux:1992fk}, chapter 12) because the Poisson brackets
among the canonical variables will not be local.

\subsection{Canonical structure}
\label{sec:canonical-structure-1}

The explicit expression for the kinetic term is
\begin{multline}
  a_A\dot z^A =
 \epsilon_{ab} H^{amn}\Big(
  \big(\cP^{TT}\dot{H}^b\big)_{mn}+  \d_m\Delta \dot A^b_n
+\d_n\Delta \dot A^b_m+\\+
\half  (\delta^{mn} \Delta-\partial^m \partial^n) \Delta \dot
C^b\Big).
\label{eq:19a}
\end{multline}
The canonically conjugate pairs are identified by writing the
integrated kinetic term as
\begin{multline}
  \int d^4x\, a_A\dot z^A= \int d^4 x \, \Big( -2\Delta\left( \cO
    H^{2}_{TT}\right)^{mn} \dot
  H^{1TT}_{mn}+2\Delta \d_m H^{2mn}_L\dot A^1_n
\\-2\Delta \d_m \dot H^{1 mn}_L A^2_n
 -\half\Delta(\Delta H^{2}_T-\d_{p}\d_q H^{2pq}_{T})\dot
  C^1+\half\Delta (\Delta H^{1}_T-\d_{p}\d_q H^{1pq}_{T})\dot C^2 \Big).
 \end{multline}
This means that the usual canonical pairs of linearized gravity
can be chosen in terms of the new variables as
\begin{eqnarray}
  \Big( H^{1 TT}_{mn}( x),\,-2
\Delta\left(\cO H^{2}_{TT}\right)^{kl}(
  y)\Big),\,
\Big(C^1( x),\,-\half\Delta(\Delta
 H^{2}_T-\d_{p}\d_q H^{2pq}_{T})
  \,  (y)\Big)\,,
\nonumber \\
\Big(A^1_m ( x),\,
2 \Delta \d_r H^{2rn}_L ( y)\Big),
\end{eqnarray}
The $4$ additional canonical pairs are 
\begin{equation}
 \Big(A^{2}_m(x),\,
 -2 \Delta \d_r H^{1rn}_L  ( y)\Big),
 \Big(  C^2( x),
  \,\half \Delta(\Delta H^1_T-\d_{p}\d_q H^{1pq}_{T}) ( y)\Big).
\end{equation}
In particular, it follows that
\begin{equation}
\boxed{
  \{h^a_{mn}(x),\pi^{bkl}(y)\}=\epsilon^{ab}\half(\delta^k_m\delta^l_n+
\delta^k_m\delta^l_n)\delta^3(x,y).
}
\end{equation}

\subsection{Gauge structure}
\label{sec:gauge-structure}

The constraints $\gamma_\alpha\equiv( \cH_{am},\cH_{a\perp})$ are chosen as
\begin{empheq}[box=\fbox]{align}
  \cH_{am}&=-2\epsilon_{ab}\d^n
  \pi^b_{mn}=2\epsilon_{ab}\Delta\d^nH^b_{mn},\\
\cH_{a\perp}&=\Delta h_a-\d_m\d_n h_a^{mn}= \Delta^2 C_a. \label{eq:24a}
\end{empheq}
The constraints $\cH_{1m},\cH_{1\perp}$ are those of the standard
Hamiltonian formulation of Pauli-Fierz theory expressed in terms of
the new variables. The constraints are first class and abelian
\begin{eqnarray}
  \label{eq:12c}
  \{\gamma_\alpha,\gamma_\beta\}=0.
\end{eqnarray}

The new constraints $\gamma_\Delta^N=0$ are
$\cH_{2m}=0=\cH_{2\perp}$. They are equivalent to $\d^rH^1_{rm}=0=C^2$
and are gauge fixed through the conditions $A^2_m=0=H^{1T}_{mn}$.
This does not affect $\pi^{2kl}$, while $h^1_{mn}$ is changed by a
gauge transformation. The partially gauge fixed theory corresponds to
the usual Pauli-Fierz theory in Hamiltonian form as described in
section \bref{sec:canon-form-pauli}. 

In the same way, the original constraints $\cH_{1m}=0=\cH_{1\perp}$
are equivalent to $\d^rH^2_{rm}=0=C^{1}$ and are gauge fixed through
$A^2_m=0=H^{2T}_{mn}$, leading to the completely reduced theory in
terms of the $2$ transverse-traceless physical degrees of freedom.

If $\varepsilon^\alpha=(\xi^{am},\xi^{a\perp})$
collectively denote the gauge parameters, the gauge symmetries are
canonically generated by the smeared constraints,
\begin{eqnarray}
  \label{eq:27}
 \delta_\varepsilon z^A=\{z^A,\Gamma[\varepsilon]\},\quad
  \Gamma[\varepsilon]=\int d^3x\, \gamma_\alpha\epsilon^\alpha,
\end{eqnarray}
so that 
\begin{equation}
  \label{eq:GRAVgaugetransf}
\delta_\varepsilon
H^a_{mn}=-\Delta^{-1}\epsilon^{ab}(\delta_{mn}\Delta-\d_m\d_n)\xi^\perp_b,\quad 
\delta_\varepsilon A^a_m= \xi^a_m,\quad \delta_\xi C^a=0,
\end{equation}
which implies in particular 
\begin{equation}
\delta_\varepsilon h^{a}_{mn}=\partial_m\xi^a_n+\partial_n\xi^a_m,\qquad 
  \delta_\varepsilon
  \pi^{a}_{mn}=\epsilon^{ab}(\delta^{mn}\Delta-\d^m\d^n)\xi^\perp_b. 
\end{equation}

Note that a way to get local gauge transformations for the fundamental
variables is to multiply the constraints by $\Delta$, which is allowed
when the flat space Laplacian is invertible. This amounts to
introducing suitable potentials for the gauge parameters and Lagrange
multipliers.

\subsection{Duality generator}
\label{sec:duality-generator}

The canonical generator for the infinitesimal duality rotations
\eqref{eq:20} is 
\begin{align}
  \label{eq:1}
  D&=\int d^3x\, \Big( -(\cP^{TT} H^a)_{mn} 
H_a^{mn}+2\Delta \d_r H^{rm}_{a} A^a_{m}\nonumber\\ & \hspace{2cm} -
\half \Delta(\Delta H^a-\d^m\d^n H_{mn}^a) C_a\Big).
\end{align}
On the constraint surface, it reduces to
\begin{empheq}[box=\fbox]{align}
D&\approx -\int d^3x\,\cP^{TT}(H^a)_{mn} 
H_a^{mn},
\end{empheq}
which is also the duality generator of the
non-extended double potential formalism (see section \ref{sec:pauli-fierz-terms}).

This generator is only weakly gauge invariant,
\begin{equation}
  \{\cH_{am},D\}=\epsilon_{ab} \cH^b_m\qquad \{\cH_{a\perp},D\}=
\epsilon_{ab} \cH^b_\perp.
\end{equation}

\subsection{Hamiltonian}
\label{sec:hamiltonian}

In terms of the new variables \eqref{eq:16}-\eqref{eq:16b}, the
Pauli-Fierz Hamiltonian reads
\begin{multline}
H_{PF}=\int d^3x\, \Big(H^{amn}\Delta^2
H_{amn}^{TT}-2\Delta\d^r H^2_{rn}\d_s H^{2sn}-\\-\d^r\d^s H_{rs}^2\Delta
H^2-\half (\d^r\d^s H^2_{rs})^2+\frac{1}{8} \Delta C^1 \Delta^2 C^1\Big),
\end{multline}
where one can use \eqref{eq:26} to expand the first term as a local
functional of $H^a_{mn}$. 

The local Hamiltonian $H=\int d^3x\, h$ of the manifestly duality
invariant action principle \eqref{eq:17} is
\begin{align}
  H &=\int d^3x\, \Big(H^{amn}\Delta^2 
H_{amn}^{TT}-2\Delta\d^r H^a_{rn}\d_s H^{sn}_a-\nonumber\\ 
& \hspace{2cm}-\d^r\d^s H_{rs}^a\Delta
H_a-\half \d^r\d^s H^a_{rs}\d_k\d_l H^{kl}_a 
+\frac{1}{8}\Delta C^a \Delta^2 C_a\Big),
\end{align}
which simplifies to
\begin{empheq}[box=\fbox]{align}
H&=\int d^3x\, \Big(\Delta H^{a}_{mn}\Delta H^{mn}_a-\half \Delta
H^a\Delta H_a +\frac{1}{8}\Delta C^a \Delta^2 C_a\Big).
\end{empheq}
It is equivalent to the Pauli-Fierz Hamiltonian since it reduces to
the latter when the additional constraints $\d^r H^1_{rm}=0=C^2$
hold. Note that the terms proportional to $\d^r H^a_{rm}$ and $C^a$
may be dropped since they vanish on the constraint surface, $H \approx
\int d^3x\, H^{amn}\Delta^2 H_{amn}^{TT}$.

The Hamiltonian is gauge invariant on the constraint surface,
\begin{equation}
  \label{eq:GRAVep19}
  \{H,\Gamma[\xi]\}=\int d^3x\, \cH^a_m \d^m \xi^\perp_a.
\end{equation}

In order for the action \eqref{eq:17} to be gauge invariant,
it follows from \eqref{eq:GRAVep19} that the Lagrange multipliers
$u^\alpha$ need to transform as
\begin{equation}
  \label{eq:21}
  \delta_\xi u^{am}=\dot \xi^{am}-\d^m
   \xi^{a\perp}, \qquad \delta_\xi
  u^{a\perp}=\dot \xi^{a\perp}. 
\end{equation}

\subsection{Poincar\'e generators}
\label{sec:poincare-generators-1}

The same argumentation we used in section (\ref{sec:ELECpoincaretrans}) to
show that the symmetry generators of the usual Hamiltonian action of
electromagnetism are also symmetry generators of the extended double
potential formalism is still valid. The symmetry generators of
Pauli-Fierz are symmetry generators of the extended theory. It follows that the Poincar\'e generators $Q_G(\omega,a)$ of Pauli-Fierz
theory as described in section, when
expressed in terms of the new variables, are representatives for
the Poincar\'e generators of the extended theory.

As before, the generators obtained that way are not invariant under
the duality. We now want to show that one
can find representatives for the Poincar\'e generators that are
duality invariant, 
\begin{equation}
\{Q^D_G(\omega,a),D\}=0\label{eq:60},
\end{equation}
by adding terms proportional to the new constraints.

The first step in the proof consists in showing that the reduced phase
space generators, i.e., the generators $Q_G(\omega,a)$ for which all
variables except for the physical $H^a_{TT}$ have been set to zero,
are duality invariant. All other contributions to $Q_G(\omega,a)$ are
then shown to be proportional to the constraints of Pauli-Fierz
theory. Both these steps follow from straightforward but slightly
tedious computations. For the generators of rotations and boosts for
instance the computation is more involved because the explicit $x^i$
dependence has to be taken into account when performing integrations
by parts.

In terms of the new variables, the terms proportional to the
constraints are bilinear in $(h^1,A^2)$, $(\pi^2,A^2)$, $(h^1,C^1)$
and $(\pi^2,C^1)$. The duality invariant generators $Q^D_G(\omega,a)$
are then obtained by adding the same terms with the substitution
$h^1\to h^2$, $A^2\to -A^1$, $\pi^2\to -\pi^1$ and $C^1\to C^2$, while
keeping unchanged the terms involving only the physical variables
$H^a_{TT}$.

As a consequence, the duality invariant Poincar\'e transformations of
$h^1,\pi^2$ are unchanged on the extended constraint surface. They are
given by \eqref{eq:49a}-\eqref{eq:49b} where $\xi^\perp=-{\omega^0}_\nu
x^\nu+a^0$ and $\xi^i=-{\omega^i}_\nu x^\nu +a^i$. Because of
\eqref{eq:60}, those for of $h^2,-\pi^1$ are obtained, on the
contraint surface, by applying a duality rotation to the right
hand-sides of \eqref{eq:49a}-\eqref{eq:49b}.

\subsection{Interacting variational principle}
\label{sec:coupl-cons-electr}

We define
\begin{equation}
  h_{0m}^a=n^a_m=h_{m0}^a,\qquad h_{00}^a=-2n^a,
\end{equation}
and consider the action  
\begin{empheq}[box=\fbox]{align}
S_T[z^A,u^\alpha;T^{a\mu\nu}]=\frac{1}{16\pi
  G}S_G+S^J\label{eq:GRAVinteractaction},
\end{empheq}
with $S_G$ given in \eqref{eq:17} and the gauge invariant
interaction term
\begin{equation}
  \label{eq:35}
  S^J=\int d^4x\, \half h^a_{\mu\nu} T^{\mu\nu}_a,\qquad
  \d_\mu T^{\mu\nu}_a=0,
\end{equation}
where $T^{\mu\nu}_a\equiv (T^{\mu\nu},\Theta^{\mu\nu})$ are external,
conserved electric and magnetic energy-momentum tensors.

\subsection{Equations of motion}
\label{sec:GRAVEOMextdualpot}

Our goal in this section is to show that our interacting variational
principle (\ref{eq:GRAVinteractaction}) generates the duality
invariant equations of motion
(\ref{eq:GRAVEOMgene1})-(\ref{eq:GRAVEOMgene2}) introduced in section
\ref{sec:EMspin2}. To do so, we need to give the expression of the
full Riemann tensor and its dual in terms of our fields $(H^a_{mn},
A^a_m,C^a)$. We will start by introducing a new decomposition of the
Riemann tensor and its dual in term of electric and magnetic 
part. This decomposition is a generalization of the decomposition of a
Weyl tensor in a electric and magnetic part. After that, we will derive
the expression of this new parametrisation in term of our canonical variables.

\footnotesize
We will use a duality invariant notation: $R^a_{\mu\nu\rho\sigma} =
(R_{\mu\nu\rho\sigma},S_{\mu\nu\rho\sigma})$. The Ricci tensors and
Einstein tensors are defined as
\begin{equation}
R^a_{\mu\nu}={R^{a\alpha}}_{\mu\alpha\nu},\quad
G^a_{\mu\nu}={\nG^{a\alpha}}_{\mu\alpha\nu}=R^a_{\nu\mu}-\half
\eta_{\mu\nu}R^a.
\end{equation}
A general Riemann tensor $ R_{\mu\nu\rho\sigma}  =  - R_{\nu\mu\rho\sigma} =
  -R_{\mu\nu\sigma\rho}$ has 36 independent components.

The equations of motion
(\ref{eq:GRAVEOMgene1}) and (\ref{eq:GRAVEOMgene2}) becomes
\begin{equation}
  \label{eq:GRAVgeneEOM}
  G^{\mu\nu}_a=8\pi G\, T^{\mu\nu}_a\iff R^a_{\mu\nu\rho\sigma} + 
R^a_{\mu\sigma\nu\rho} +R^a_{\mu\rho\sigma\nu} = 8\pi G\, \epsilon^{ab}
  \lc_{\delta\nu\rho\sigma}\, 
  \overline{T}^\delta_{b\,\mu}. 
\end{equation}
They imply in particular that, on-shell, the tensors $R^a_{\mu\nu}, G^a_{\mu\nu}$ are symmetric \cite{Bunster:2006fk}.
Furthermore, the Bianchi ``identities'' (\ref{eq:GRAVEOMgene3}) and (\ref{eq:GRAVEOMgene4}) read
\begin{multline}
\partial_\epsilon R^a_{\gamma\delta\alpha\beta} +\partial_\beta 
R^a_{\gamma\delta\epsilon\alpha} +\partial_\alpha 
R^a_{\gamma\delta\beta\epsilon} =  8\pi G\, \epsilon^{ab}
\lc_{\epsilon\alpha\beta\rho} 
\left( \partial_\gamma \overline{T}^\rho_{b\,\delta} - 
\partial_\delta \overline{T}^{\rho}_{b\,\gamma}\right)\\ \iff
  \partial_\mu R_a^{\gamma\delta\rho\mu}  =  8\pi G\, \left(\partial^\delta
\overline{T}_a^{\rho\gamma}-
\partial^\gamma\overline{T}_a^{\rho\delta} \right),\label{eq:GRAVgenebianch}
\end{multline}
while the contracted Bianchi identities are
\begin{equation}
  \d_\nu G_a^{\mu\nu}=0. 
\end{equation}

Let 
\begin{multline}
  \label{eq:GRAVdefK}
  {K^{\lambda\tau}}_{\mu\nu\rho\sigma}
  [R^a_{\lambda\tau}]=\frac{1}{2} \left[ \eta_{\mu\rho} R^a_{\nu\sigma} +
    \eta_{\nu\sigma} R^a_{\mu\rho} - \eta_{\mu \sigma} R^a_{\nu\rho} -
    \eta_{\nu\rho} R^a_{\mu\sigma}\right] -\\- \frac{R^a}{6} \left[
    \eta_{\mu\rho} \eta_{\nu\sigma} -  \eta_{\mu \sigma}
    \eta_{\nu\rho}\right]. 
\end{multline}
Defining
\begin{equation}
  \label{eq:84}
  \tilde R^a_{\mu\nu\rho\sigma}=R^a_{\mu\nu\rho\sigma}-\half
\epsilon^{ab}\lc_{\rho\sigma\alpha\beta}
{{K^{\lambda\tau}}_{\mu\nu}}^{\alpha\beta}[R_{b\lambda\tau}],
\end{equation}
the tensor $\tilde R^a_{\mu\nu\rho\sigma}$ is skew in the first and
last pairs of indices, satisfies the cyclic identity because
$\lc^{\gamma\nu\rho\sigma}
R^a_{\mu\nu\rho\sigma}=\lc^{\gamma\nu\rho\sigma}\half
\epsilon^{ab}\lc_{\rho\sigma\alpha\beta}
{{K^{\lambda\tau}}_{\mu\nu}}^{\alpha\beta}[R_{b\lambda\tau}]$ and, as
a consequence, is also symmetric in the exchange of the first and last
pair of indices, $\tilde R^a_{\mu\nu\rho\sigma}= \tilde
R^a_{\rho\sigma\mu\nu}$. The associated Ricci tensors $\tilde
R^a_{\nu\sigma}=
R^a_{\nu\sigma}-\half\epsilon^{ab}\lc_{\nu\sigma\mu\alpha}R^{\mu\alpha}_b$
is then symmetric, $\tilde R^a_{\nu\sigma}= \tilde R^a_{\sigma\nu}$.
It follows that $\tilde R^a_{\nu\sigma}=R^a_{(\nu\sigma)}$ and
$R^a_{[\nu\sigma]}=\half\epsilon^{ab}\lc_{\nu\sigma\mu\alpha}R^{\mu\alpha}_b$.
The Weyl tensors are then defined as usual in terms of $\tilde
R^a_{\mu\nu\rho\sigma}$,
\begin{equation}
C^a_{\mu\nu\rho\sigma}=\tilde R^a_{\mu\nu\rho\sigma}-
{K^{\lambda\tau}}_{\mu\nu\rho\sigma}[\tilde R^a_{\lambda\tau}],
\end{equation}
and satisfy all standard symmetry properties: skew-symmetry in the first
and last pairs of indices, tracelessness (because
$\tilde R^a_{\nu\sigma}={K^{\lambda\tau\,
    \mu}}_{\nu\mu\sigma}[\tilde R^a_{\lambda\tau}]$), the cyclic identity
(because
$\epsilon^{\gamma\nu\rho\sigma}{K^{\lambda\tau}}_{\mu\nu\rho\sigma}
[\tilde R^a_{\lambda\tau}]=0$), which implies also symmetry in the
exchange of the first and last pair of indices,
\begin{gather}
  \label{eq:80}
  C^a_{\mu\nu\rho\sigma}  =  - C^a_{\nu\mu\rho\sigma} =
  -C^a_{\mu\nu\sigma\rho},\\
C^{\mu a}_{\nu\mu\sigma}=0,\quad
\epsilon^{\gamma\nu\rho\sigma}C^a_{\mu\nu\rho\sigma}=0, \quad 
C^a_{\mu\nu\rho\sigma}  =   C^a_{\rho\sigma\mu\nu}. 
\end{gather}
As before, the 10 independent components of the Weyl tensor can be
parametrized by the electric and magnetic components
$E^a_{mn}\equiv(E_{mn},B_{mn})$, symmetric and traceless tensors
defined by 
\begin{equation}
  \label{eq:GRAVdefelecWeyl}
  E^a_{mn}=C^a_{0m0n}=\half \epsilon_{njk}\epsilon^{ab}
  C_{b\,0m}^{\phantom{\mu\nu}jk}.
\end{equation}

Putting all definitions together, the relation between the Riemann and
Weyl tensors is
\begin{align}
  \label{eq:GRAVgeneWeyl}
R^a_{\mu\nu\rho\sigma}&=C^a_{\mu\nu\rho\sigma}+
{K^{\lambda\tau}}_{\mu\nu\rho\sigma}[R^a_{\lambda\tau}]+\half
\epsilon^{ab}\lc_{\rho\sigma\alpha\beta}
{{K^{\lambda\tau}}_{\mu\nu}}^{\alpha\beta}[R_{b(\lambda\tau)}]\\
&=C^a_{\mu\nu\rho\sigma}+
{K^{\lambda\tau}}_{\mu\nu\rho\sigma}[R^a_{(\lambda\tau)}]+\half
\epsilon^{ab}\lc_{\rho\sigma\alpha\beta}
{{K^{\lambda\tau}}_{\mu\nu}}^{\alpha\beta}[R_{b\lambda\tau}].
\end{align}
In particular, it follows that the $36$ independent components of the
Riemman tensor $R^1_{\mu\nu\rho\sigma}$ can be parameterized by the
$10$ independent components of the Weyl tensor
$C^1_{\mu\nu\rho\sigma}$, the $16$ components of the Ricci
tensor $R^1_{\lambda\tau}$, and the $10$ components of
  $R^2_{(\lambda\tau)}$. 

If we define
\begin{equation}
  \label{eq:GRAVdefnewvar}
  \cE^a_{mn}=R^a_{0(m|0|n)}, \quad \cF^{am}=\half \epsilon^{mjk} 
R^a_{0[j|0|k]},\quad \cR^a_{mn}=R^a_{(mn)}+\cE^a_{mn}
\end{equation}
the parameterization consisting in choosing the symmetric tensors
$\cE^a_{mn},\cR^a_{mn}$ (24 components), $\cF^a_m$, (6 components),
and $R^1_{[\mu\nu]}(={}^*R^2_{[\mu\nu]})$ (6 components) is more
useful for our purpose.  That all tensors can be reconstructed from
these variables follows from the fact that
\begin{gather}
  R^a_{0m}=-2\epsilon^{ab}\cF_{bm},\quad
  R^a_{00}=\cE^{a},\quad R^a_{(mn)}=\cR^a_{mn}-\cE^a_{mn}. 
\end{gather}
This means that the symmetric part of the Ricci tensors can be
reconstructed from the variables. Since the antisymmetric parts belong
to the variables, so can the complete Ricci tensors $R^a_{\mu\nu}$.
Using now \eqref{eq:GRAVgeneWeyl} and definitions
\eqref{eq:GRAVdefelecWeyl}, \eqref{eq:GRAVdefnewvar}, \eqref{eq:GRAVdefK}, we find
\begin{equation}
  E^a_{mn}=\half(\cE^a_{mn}+\cR^a_{mn})-\frac{\delta_{mn}}{6}(\cE^{a}+ 
\cR^{a}).
\end{equation}
It follows that the Weyl tensors and then, using again \eqref{eq:GRAVgeneWeyl},
the Riemann tensors can  be reconstructed. 

In terms of the new parameterization, the equations of motion
\eqref{eq:GRAVgeneEOM} read $R^a_{[\mu\nu]}=0$ and
\begin{gather}
  \label{eq:GRAVnewEOM1}
  -2\epsilon^{ab}\cF_{bm}=8\pi G T^{a}_{0m}, \\
  \half \cR^{a}=8\pi G T^a_{00},\\
  \cR^a_{mn}-\cE^a_{mn}+\delta_{mn}(\cE^{a}-\half 
  \cR^{a})=8\pi G T^a_{mn}. \label{eq:GRAVnewEOM3}
\end{gather}
Using these equations of motion, the Bianchi identities
\eqref{eq:GRAVgenebianch} are equivalent to 
\begin{equation}
  \label{eq:GRAVnewEOM4}
  \d^k(\epsilon_{ikm}\cF^{am}+\cR^a_{ik})=\half\d_i\cR^a,
\end{equation}
\begin{equation}
  \label{eq:GRAVnewEOM5}
  2\epsilon^{ab}\d_0\cF_{bm}=\d^n
  (\cE^a_{mn}+\epsilon_{mnk} \cF^{ak})-\d_m\cE^a, 
\end{equation}
\begin{multline}
  \label{eq:GRAVnewEOM6}
  \d_0\cR^a_{ik}=\half\epsilon^{ab}\big[
   \epsilon_{kjl}\d^j{\cE_{b i}}^l+
\epsilon_{ijl}\d^j{\cE_{b k}}^l-2\delta_{ik}\d_j\cF^j_b-\d_i
\cF_{bk}-\d_k\cF_{bi}
\big]\iff\\
\epsilon^{ab}\d_0(\cR^{ik}_{b}-\half\delta^{ik}
\cR_b)=-\half
\big[\epsilon^{klm}\d_l{\cE^a_{m}}^i+\epsilon^{ilm}
\d_l{\cE^a_{m}}^k+2\delta^{ik}\d^j
\cF^a_j-\d^i\cF^{ak}-\d^k\cF^{ai}\big]. 
\end{multline}

\vspace{5mm}

We will now express the Riemann tensor in terms of the canonical
variables in such a way that the covariant equations
\eqref{eq:GRAVnewEOM1}-\eqref{eq:GRAVnewEOM6} coincide with the Hamiltonian equations
deriving from \eqref{eq:GRAVinteractaction}. 

From the constraints with sources, we find
\begin{gather}
  \label{eq:70}
  \cR^a=\d^m\d^n h^a_{mn}-\Delta h^a=-\Delta^2 C^a,\\
\cF^a_{m}=\half\Delta\d^n H^a_{mn}.  \label{eq:70a}
\end{gather}
Assuming $\Delta$ to be invertible, which we do in the rest of this
section, $\cR^a$ and $C^a$, respectively $\cF^a_m$ and $\d^n
H^a_{mn}$ determine each other. By taking the divergence, the Bianchi
identity \eqref{eq:GRAVnewEOM4} implies that
\begin{equation*}
\d^m\d^n\cR^a_{mn}=-\half\Delta^3 C^a.
\end{equation*}
Similarly, the Bianchi identity \eqref{eq:GRAVnewEOM5} implies in particular
that $\Delta\cE^a-\d^m\d^n\cE^a_{mn}=\epsilon^{ab}\d_0\Delta\d^m\d^n
H_{bmn}$. When combined with \eqref{eq:GRAVnewEOM3}, the equations of motion
following from variation with respect to $C^a$ read
\begin{equation*}
  \frac{1}{2}\Delta^3 C^a+ \epsilon^{ab} \Delta\d_0
  (\Delta H_b-\d^m\d^n
  H_{bmn})+ 2\Delta^2 n^a=
\Delta\cE^a-\d^m\d^n(\cR^a_{mn}-\cE^a_{mn}).
\end{equation*}
When combined with the previous relations, they imply that 
\begin{gather*}
  \label{eq:85}
  \cE^a= -\half \epsilon^{ab}\d_0\Delta H_b+ \Delta n^a,\\
  \d^m\d^n\cE^a_{mn}= -\half\epsilon^{ab}\d_0\Delta(\Delta
  H_b-2\d^m\d^n H_{bmn})+ \Delta^2 n^a.
\end{gather*}
The rest of the Bianchi identities \eqref{eq:GRAVnewEOM4}, \eqref{eq:GRAVnewEOM5} are
taken into account by applying a curl. This gives
$\epsilon^{rsi}\d_s \d^k\cR^a_{ik}=\half \Delta (\Delta\d^k
H^{ar}_k-\d^r \d^m\d^n H^a_{mn})$ and $\epsilon^{rsi}\d_s
\d^k\cE^a_{ik}=\epsilon^{rsi}2\epsilon^{ab}\d_0\d_s\cF_{bi}-
\d^r\d^k \cF^a_k+\Delta \cF^{ar}$. Yet another curl gives
$\d_k\d^m\d^n \cR^a_{mn}-\Delta \d^n\cR^a_{kn}=\half
\epsilon_{klr}\d^l \Delta^2 \d^n H^{ar}_n$ and $\d_k\d^m\d^n
\cE^a_{mn}-\Delta \d^n\cE^a_{kn}=2\epsilon^{ab}\d_0(\d_k \d^n \cF_{bn}
-\Delta \cF_{b k})+ \epsilon_{klr}\d^l \Delta
\cF^{ar}$. Using the previous relations we then get
\begin{gather*}
  \d^n\cR^a_{kn}=-\half \d_k \Delta^2 C^a-\half
  \epsilon_{klr}\d^l \Delta \d^n H^{ar}_n,\\
  \d^n\cE^a_{kn}= \epsilon^{ab}\d_0\Delta(-\frac{1}{2}\d_k
  H_b+\d^n H_{bkn})+ \d_k \Delta n^a-\half \epsilon_{klr}\d^l
  \Delta \d^n H^{ar}_n.
\end{gather*}

The equations of motion following from variation with respect to
$A^a_m$ are then identically satisfied. 

Defining $\cD^a_{mn}=\cR^a_{mn}-\cE^a_{mn}$ and using definition
\eqref{eq:GRAVdefPTT} of $\cP^{TT}$ combined with \eqref{eq:GRAVnewEOM3}, the equations
of motion following from variation with respect to $H^a_{mn}$ read
\begin{multline}
\epsilon_{ab}\d_0\Big[2 \left( \cP^{TT} H^b\right)_{mn} + \d_m \Delta
A_n^b +\d_n \Delta A_m^b +\half(\delta_{mn}\Delta-\d_m\d_n)C^b
\Big]-\epsilon_{ab} \Delta (\d_m n^b_n+\d_n n^b_m)-\\-2\Delta^2 H^a_{mn}
+\delta_{mn} \Delta^2 H^a
=-{\epsilon_{mpq}}\d^p\cD^q_{an}-{\epsilon_{npq}}\d^p\cD^q_{am}. \label{eq:GRAVeqH}
\end{multline}
Taking into account definition \eqref{eq:GRAVdefPTT} and previous relations, we
can extract
\begin{multline}
  -\Delta^{-1}\left(\cP^{TT} \cD^a\right)_{mn}=
\half \epsilon_{ab}\d_0\Big[2\left( \cP^{TT} H^b\right)_{mn}
-\epsilon_{mpq}\d_n \d^p \d^r H^{bq}_r-\epsilon_{npq}\d_m \d^p \d^r
H^{bq}_r +\\ + \d_m \Delta
A_n^b + \d_n \Delta A_m^b +\frac{1}{2}(\delta_{mn}\Delta-\d_m\d_n)C^b
\Big]-\epsilon_{ab} \Delta (\d_m n^b_n+\d_n n^b_m)-\\-\Delta^2 H^a_{mn}
+\half \delta_{mn} \Delta^2 H^a.
\end{multline}
In order to extract the remaining information from \eqref{eq:GRAVeqH}, 
we first apply $\delta^{mn}\Delta-\d^m\d^n$ to get 
\begin{equation}
  \label{eq:98}
  \epsilon_{ab}\d_0 \Delta C^b+2 \d^m\d^n H_{amn}=0,
\end{equation}
and then a divergence $\d^m$ giving
\begin{equation}
  \label{eq:114}
  \epsilon_{ab}\d_0 (\Delta A^b_n-\epsilon_{npq}\d^p\d^k H^{b
    q}_k)=\epsilon_{ab} \Delta n_n^b+2 \Delta \d^k H^k_{an}-\half \d_n
  \Delta H^a -\d_n \d^k\d^l H^a_{kl}. 
\end{equation}
We can now inject the latter relations into \eqref{eq:GRAVeqH} and use
\eqref{eq:GRAVpropP}, \eqref{eq:GRAVdecomp} to get
\begin{align}
  \label{eq:93}
  \cD_{mn}^{aTT}&=-\epsilon^{ab}  \d_0 \Delta
  H^{TT}_{bmn}-\left(\cP^{TT} H^a\right)_{mn}, \\
\cD_{mn}^{a}&=-\epsilon^{ab}\d_0\Delta 
\Big[ H_{bmn}-\half \delta_{mn} H_b\Big]  -\left(\cP^{TT}
  H^a\right)_{mn}-\d_m\d_n n^a-\nonumber \\ & \hspace{4cm}
-\frac{1}{4}(\delta_{mn}\Delta+\d_m\d_n)\Delta C^a.  \label{eq:GRAVEOMintrem2}
\end{align}

Injecting into the second form of the last Bianchi identity
\eqref{eq:GRAVnewEOM6} and using previous relations gives
\begin{multline}
  \label{eq:GRAVEOMintrem1}
  \epsilon_{ab}\d_0 \cR^b_{ij}=-\left( \cO \cR_a\right)_{ij}+\Delta^2 H^{TT}_{aij}
 +\frac{1}{4} \Delta \d_i \d^k H_{akj} +\frac{1}{4} \Delta \d_j \d^k
  H_{aki} -\half \d_i\d_j\d^k\d^l H_{akl}\\-\half
  \epsilon_{ab}\d_0\Big[\epsilon_{iqn} \d^q\Delta H^{bn}_j +
  \epsilon_{jqn} \d^q \Delta
  H^{bn}_i+\half(\delta_{ij}\Delta+\d_i\d_j)\Delta C^b
  \Big].
\end{multline}
Identifying the terms with time derivatives gives
\begin{gather}
  \label{eq:96}
  \cR^a_{ij}=-\half\Big[\epsilon_{iqn} \d^q\Delta
  H^{an}_j + \epsilon_{jqn} \d^q
  \Delta H^{an}_i+\half
  (\delta_{ij} \Delta+\d_i\d_j)\Delta C^a \Big]\nonumber\\
  =\half\Big[\d_i\d^k h^a_{kj}+\d_j\d^k h^a_{ki}-\d_i\d_j h^a-\Delta
  h^a_{ij}- \epsilon_{ikl}\d^k\d^p\d_j H^{al}_p- \epsilon_{jkl}
  \d^k\d^p\d_i H^{al}_p\Big].
\end{gather}
The terms without time derivatives in \eqref{eq:GRAVEOMintrem1} then cancel
identically.  Together with \eqref{eq:GRAVEOMintrem2} this then finally gives
\begin{gather}
  \label{eq:97}
  \cE_a^{ij}=\epsilon_{ab}\d_0\Delta \Big[ H^{bij}-\half\delta^{ij}
  H^b\Big]+\d^i\d^j n_a- \half \epsilon^{ikl}\d_k\d^j
  \d^p H_{alp}-\half \epsilon^{jkl}\d_k\d^i \d^p H_{alp}\nonumber\\
  =-\epsilon_{ab}\d_0(\pi^{bij}-\half \delta^{ij}\pi^b)
+\d^i\d^j n_a- \half \epsilon^{ikl}\d_k\d^j \d^p H_{alp}
-\half \epsilon^{jkl}\d_k\d^i \d^p H_{alp}.
\end{gather}

\normalsize

\subsection{Linearized Taub-NUT solution}
\label{sec:point-part-grav}

We start by considering the sources corresponding to a point-particle
gravitational dyon with electric mass $M$ and magnetic mass $N$ at
rest at the origin of the coordinate system, for which
\begin{equation}
  \label{eq:GRAVTaubNutsource}
  T^{\mu\nu}_a(x)=\delta^\mu_0\delta^\nu_0 M_a\delta^{(3)}(x^i), \qquad
  M_a=(M,N). 
\end{equation}
In this case, only the constraints \eqref{eq:24a} are affected by the
interaction and become
\begin{equation}
  \label{eq:63}
  \cH_{a\perp}=-16\pi GM_a\delta^{(3)}(x). 
\end{equation}
They are solved by 
\begin{equation}
  \label{eq:64}
  \Delta C^a= G M^a(\frac{4}{r}), 
\end{equation}
where $r=\sqrt{x^i x_i}$. 
It is then straightforward to check that all equations of motions are
solved by
\begin{gather}
  \label{eq:66}
C^a=GM^a( 2 r),\quad  
n^a=G M^a(-\frac{1}{r}), \quad
A^a_m=n^{am}=H^a_{mn}=0,\nonumber \\
  h^a_{mn}=GM^a(\delta_{mn}+\frac{x_mx_n}{r^3}),\quad
  \pi_a^{mn}=0.
\end{gather}

The usual Schwarzschild form is obtained by adding a pure gauge
solution with parameter $\xi^{am}=GM^a(-\half \frac{x^m}{r})$,
$\xi^{a\perp}=0$. The solution then reads
\begin{gather}
  C^a=GM^a(2r),\
  n^a=G M^a(-\frac{1}{r}),\
  A^a_m=GM^a(-\half \frac{x_m}{r}),\ n^{am}=H^a_{mn}=0,\nonumber
  \\
  h^a_{mn}=GM^a(\frac{2 x_mx_n}{r^3}),\quad \pi_a^{mn}=0.
\end{gather}

To show that this solution describes the linearized Taub-NUT solution,
we need to compute its Riemann tensor using the
relations given in the previous section. Following for instance \cite{Cohen-Tannoudji:1989dq} section $A_1.2$ and
using a regularization in Fourier space, we find 
\begin{align}
  \label{eq:99}
  \cR^a_{ij}&=GM^a\big[\frac{16\pi}{3}\delta_{ij}
  \delta^3(x)+\frac{\eta(r)}{r^3}(\delta_{ij}-\frac{3x_ix_j}{r^2})\big],\\
  \cE^a_{ij}&= GM^a\big[\frac{4\pi}{3}\delta_{ij}
  \delta^3(x)+\frac{\eta(r)}{r^3}(\delta_{ij}-\frac{3x_ix_j}{r^2})\big],
\end{align}
where $\eta(r)$ is a regularizing function that suppresses the
divergence at the origin and is $1$ away from the origin. We then find
\begin{gather}
  \label{eq:101}
  R^a_{00}=GM^a 4\pi \delta^3(x),\quad R^a_{ij}=GM^a 4\pi
  \delta_{ij} \delta^3(x),\\
E^a_{ij}=GM^a \frac{\eta(x)}{r^3}(\delta_{ij}-\frac{3x_ix_j}{r^2}),
\end{gather}
and all other components of $R^a_{\mu\nu}$ vanishing. For the Riemann
tensor, this implies
\begin{gather}
  \label{eq:102}
  R^a_{0i0j}=GM^a\big[\frac{4\pi}{3}\delta_{ij}
  \delta^3(x)+\frac{\eta(x)}{r^3}(\delta_{ij}-\frac{3x_ix_j}{r^2})\big],\\
R^a_{0ijk}=-\epsilon^{ab}{\epsilon_{jk}}^lGM^a\big[\frac{4\pi}{3}\delta_{il}
  \delta^3(x)+\frac{\eta(x)}{r^3}(\delta_{il}-\frac{3x_ix_l}{r^2})\big],
\end{gather}
with all other components obtained through the on-shell symmetries of
the Riemann tensor. This is the usual Riemann tensor for the
linearized Taub-NUT solution.

As in the electromagnetism case, this formalism resolves the string
singularity of the linearized Taub-NUT solution present in the standard
Pauli-Fierz formulation. In spherical coordinates, the latter can for
instance be described by
\begin{equation}
  h_{rr}=\frac{2GM}{r}=h_{00},\qquad h_{0\varphi}=-2N(1-\cos\theta),
\label{eq:74}
\end{equation}
and all other components vanishing, with a string-singularity along
the negative $z$-axis.

\subsection{Electric and magnetic energy-momentum and angular momentum surface charges }
\label{sec:electr-magn-energy}

As for the spin 1 case, the analysis of appendix  is not directly
applicable since we do not have Darboux coordinates and the Poisson
brackets of the fundamental variables are non-local. Another problem
is that the gauge transformations \eqref{eq:GRAVgaugetransf} do not allow for non trivial solutions
to $\delta_{\varepsilon_s }z^A=0$. As before, we will use the idea of
the appendix to derive expressions for the surface charges. We still
have to keep the sources explicitly throughout the argument because
of the presence of $\Delta^{-1}$.

In the presence of the sources, the constraints
$\gamma_\alpha^J=(\cH_{am}^J,\cH_{a\perp}^J)$ are determined
\begin{equation}
\cH_{am}^J=\cH_{am}-(16\pi G) T_{am}^0,\quad
\cH_{a\perp}^J=\cH_{a\perp} -(16\pi G) T_{a0}^0.\label{eq:38}
\end{equation}
Instead of \eqref{eq:app29c}, we can write
\begin{multline}
  \label{eq:GRAVPARTDERIVCONSTRAINTS}
  \gamma^J_\alpha\varepsilon^\alpha 
  =(\d^m\xi^{an}+\d^n\xi^{am})\epsilon_{ab} \pi^b_{mn}+
  (\delta^{mn}\Delta-\d^m\d^n)\xi^{a\perp} h_{amn} -\d_i
 \tilde k^{i}_{\varepsilon}[z]\\ -(16\pi G)
  (T_{am}^0\xi^{am}+T_{a0}^0\xi^{a\perp}) ,
\end{multline}
where
\begin{equation}
  \tilde k^{i}_{\varepsilon}[z] =2\xi^a_m\epsilon_{ab}\pi^{bmi}-
  \xi^{a\perp}(\delta^{mn}\d^i-\delta^{mi}\d^n) h_{amn}+
  h_{amn}(\delta^{mn}\d^i-\delta^{ni}\d^m)\xi^{a\perp}.
\end{equation}
Consider now gauge parameters $\epsilon^\alpha_s(x)$ 
satisfying the conditions
\begin{equation}
  \label{eq:GRAVgaugetransgene}
  \left\{\begin{array}{c}\d^m\xi^{an}_s+\d^n\xi^{am}_s=0=
      \d_0\xi^{am}_s-\d^m
      \xi^{a\perp}_s,  \\
      (\delta^{mn}\Delta-\d^m\d^n)\xi^{a\perp}_s=0=
 \d_0 \xi^{a\perp}_s, \end{array}\right.
\end{equation}
The general solution to conditions \eqref{eq:GRAVgaugetransgene} can be written as
\begin{equation}
  \label{eq:GRAVreducibilityparam}
  \xi^{a}_{\mu s}=-\omega^{a}_{[\mu\nu]}x^\nu+a^{a}_\mu,
\end{equation}
for some constants $a^{a}_\mu$,
$\omega^{a}_{[\mu\nu]}=-\omega^{a}_{[\nu\mu]}$.
It follows in particular that the surface charges 
\begin{equation}
\cQ_{\varepsilon_s}[z_s]=\frac{1}{16\pi
  G}\oint_{S} d^3x_i\,
 \tilde k^{i}_{\varepsilon_s}[z_s],\label{eq:GRAVsurfchargescomplete}
\end{equation}
do not depend on the homology class of $S$ outside of sources.

Assuming $\Delta$ invertible, the equations of motion associated to
$\cL_T=\frac{1}{16\pi G}\cL_H+\cL^J$ imply in particular that
\begin{multline}
  \label{eq:94}
  \d_0 h^a_{mn}=\d_m n_n^a+\d_nn_m^a-2\epsilon^{ab}\Delta
  H_{bmn}+\epsilon^{ab}\delta_{mn}\Delta H_b \\ 
  +(16\pi G) \epsilon^{ab}\Big(\Delta^{-1} \left(\cO
    T_b\right)_{mn}+\half \Delta^{-2}\d_m\epsilon_{npq}\d^p\d_k
  T^{kq}_b+\half \Delta^{-2}\d_n\epsilon_{mpq}\d^p\d_k T^{kq}_b\Big),
\end{multline}
\begin{multline}
 \epsilon_{ab} \d_0 \pi^b_{mn} =\left(\cP^{TT} H_a\right)_{mn}+(8\pi
  G) T_{amn}-\\-\half (\delta^{mn}\Delta-\d^m\d^n) (2n_a+\half \Delta C_a).
\end{multline}
By direct computation using the equations of motion, one then finds
\begin{equation}
  \label{eq:GRAVconservationCharge}
  \d_0 \tilde k^i_{\varepsilon_s}[z_s]=(16\pi G)(\xi^a_{\mu s} T_a^{\mu
    i})-\d_j k^{[ij]}_{\varepsilon_s}[z_s,u_s],
\end{equation}
with 
\begin{multline}
  \label{eq:107}
  k^{[ij]}_{\varepsilon_s}[z,u]=\Big(2n_a^i\d^j\xi^{a\perp}_s+\xi^{a\perp}_s\d^i
  n_a^j+\xi^{ai}_s\d^j(2n_a+\half \Delta C_a)+\xi^{a}_{s
    m}\epsilon^{mpq}\d_p\d^i H^j_{aq}\\+ \omega^{aj} \d^k
  H^i_{ak}+ \omega^{ai}\d^jH_a+2\omega^{ak}\d^i H^j_{ak}+16\pi
  G\epsilon^{ab}\epsilon^{imq}\Delta^{-1} T_{bq}^j\d_m\xi_{as}^\perp
  \\+8\pi G\epsilon^{ab}\epsilon^{mpq}\d_p\Delta^{-2} \d^i
  T^j_{bq}\d_m\xi_{as}^\perp -(i\longleftrightarrow j)\Big)
  \\+\epsilon^{ijk}\Big[\omega^a_k(2n_a+\half \Delta C_a) -
  \xi^{am}_s(\Delta H_{amk}-\d_m\d^rH_{ark})\\-16\pi
  G\epsilon^{ab}\Delta^{-1} \d^r T_{brk}\xi_{as}^\perp+8\pi
  G\epsilon^{ab}(\Delta^{-1}T^m_{bk}+\Delta^{-2}\d^m\d^rT_{brk})\d_m\xi^\perp_{as}
  \Big],
\end{multline}
where $\omega^a_{mn}=\omega^{ak}\epsilon_{kmn}$. The surfaces charges
\eqref{eq:GRAVsurfchargescomplete} are thus also time-independent outside of sources. 

Finally, the surface charges are gauge
invariant,
\begin{gather}
  \tilde k^i_{\varepsilon_s}[\delta_{\eta}z]=\d_j
  r^{[ij]}_{\varepsilon_s,\eta}, \\
r^{[ij]}_{\varepsilon_s,\eta}=\Big(2\xi^{aj}_{s}\d^i\eta^\perp_a+
2\eta_a^j\d^i\xi^{a\perp}_s+\xi^{a\perp}_s\d^j
  \eta_a^i-(i\longleftrightarrow j)\Big)
-2\epsilon^{ijk}\omega^a_k\eta^\perp_a. 
\label{eq:112}
\end{gather}

Defining 
\begin{equation}
Q_{\epsilon_s}[z]=\half\omega^a_{\mu\nu}J^{\mu\nu}_a-a^a_\mu
P^\mu_a,
\end{equation}
we get for the individual generators
\begin{empheq}[box=\fbox]{align}
(16\pi G)  P_a^\perp & =  -\oint_{S^\infty} d^3x_m \, \partial^m \Delta C_a = 
\oint_{S^\infty} d^3x_m \,\left( \partial_nh^{mn}_a - \partial^mh_a\right),\\
(16\pi G)  P_a^n & = 2\oint_{S^\infty} d^3x_m \, \epsilon_{ab} 
\Delta H^{bnm} =- 2\oint_{S^\infty} d^3x_m \, \epsilon_{ab} \pi^{bmn},\\
(16\pi G)  J_a^{kl} & =  2\oint_{S^\infty} d^3x_m \, 
\epsilon_{ab} \left( \Delta H^{bmk}x^l-\Delta H^{bml}x^k\right) \\ &
\quad = -2\oint_{S^\infty} d^3x_m \, \epsilon_{ab} 
\left( \pi^{bmk}x^l-\pi^{bml}x^k\right),\\
(16\pi G)  J^{\perp k}_a &=  \oint_{S^\infty} d^3x_m \, \left( \Delta C^a
    \delta^{mk} - \partial^m \Delta C_a x^k\right)\\ & \quad =
  \oint_{S^\infty} d^3x_m \, \left[ \left(\partial_n h^{mn}_a
      - \partial^m h_a \right) x^k -h^{mk}_a + h_a 
    \delta^{mk}\right].
\end{empheq}

The only non-vanishing surface charges of the dyon sitting at the
origin are  
\begin{equation}
 P^\perp_a= M_a.
\end{equation}
As expected, they measure the electric and magnetic mass of the dyon.

For later use, we combine $\tilde
k^i_\varepsilon,k^{[ij]}_{\varepsilon}$ into the $n-2$ forms
$k_\varepsilon[z,u]$ through the following expressions in Cartesian
coordinates, 
\begin{gather}
  k_{\varepsilon}=k^{[\mu\nu]}_\varepsilon d^{2}x_{\mu\nu},\qquad 
  k^{[0i]}_\varepsilon =\tilde k^i_\varepsilon,\\
  d^{n-k}x_{\mu_1\dots\mu_k}=\frac{1}{k!(n-k)!}
  \epsilon_{\mu_1\dots\mu_k\nu_{k+1}\dots
    \nu_n}dx^{\nu_{k+1}}\dots dx^{\nu_{n}}.\label{eq:135} \end{gather}
Equations \eqref{eq:GRAVconservationCharge} can then be summarized by 
\begin{equation}
  \label{eq:136}
  d k_{\varepsilon_s}\approx -(16\pi G) T_{\varepsilon_s},\quad
T_{\varepsilon_s}=T^{\mu}_{a\nu}\xi^{a\nu}_sd^{3}x_\mu,\quad
dT_{\varepsilon_s}=0,
\end{equation}
where closure of the $n-1$-forms $T_{\varepsilon_s}$ follows from the
conservation of the sources, the symmetry of the energy-momentum
tensor and \eqref{eq:GRAVreducibilityparam}.

\subsection{Poincar\'e transformations of surface charges}
\label{sec:poinc-transf-pres}

Suppose now that $z^A_s,u^\alpha_s$ solve the equations of motions for
the conserved sources $T^{\mu\nu}_a(x)$. Let $z^{\prime A}_s,u^{\prime
  \alpha}_s$ be the solution associated to new sources $T^{\prime
  \mu\nu}_a(x)$ related to $T^{\mu\nu}_a(x)$ through a (proper)
Poincar\'e transformation, $x^{\prime\mu}={\Lambda^\mu}_\nu
x^\nu+b^\mu$ with $|\Lambda|=1$, \begin{equation}
  \label{eq:130}
  T^{\prime\mu\nu}_a(x^\prime)={\Lambda^\mu}_\alpha{\Lambda^\nu}_\beta
  T^{\alpha\beta}_a(x).  
\end{equation}
For instance, starting from the conserved energy-momentum tensors
\eqref{eq:GRAVTaubNutsource} of a dyon sitting at the origin with world-line
$z^\mu=\delta^\mu_0 s$, one can obtain in this way the conserved
energy-momentum tensors of a dyon moving along a straight line,
$z^{\prime\mu}=u^\mu s+ a^\mu$ with $u^\mu,a^\mu$ constant, $u^\mu
u_\mu=-1$ and $s$ the proper time,
\begin{equation}
  \label{eq:113}
  T^{\prime\,\mu\nu}_a(x^\prime)=M_a u^\nu\int d\lambda \delta^{(4)}(x^\prime-
z^\prime(\lambda))  \frac{dz^{\prime\mu}}{d\lambda}=M_a\frac{u^\mu
      u^\nu}{u^0}\delta^{(3)}(x^{\prime i}-z^{\prime i}(x^0)).    
\end{equation}
when ${\Lambda^\mu}_0=u^\mu$.

Assume then that the $\xi^\mu_{a s}(x)$ transform like vectors
\begin{equation}
\xi^{\prime\nu}_{a
  s}(x^\prime)={\Lambda^\mu}_\alpha\xi^{\alpha}_{as}(x)
=-(\Lambda \omega_a\Lambda^{-1}
x^\prime)^\nu+(\Lambda \omega_a\Lambda^{-1}b+\Lambda a_a)^\nu,
\label{eq:137}
\end{equation}
which implies that the $T_{\varepsilon_s}$ are closed Poincar\'e invariant
$n-1$ forms, 
\begin{equation}
  \label{eq:133}
  T^{\prime}_{\varepsilon^\prime_s}(x^\prime,dx^\prime)=
  T_{\varepsilon_s}(x,dx). 
\end{equation}
We can then use the following variant of the tube lemma. If at fixed
time $t$, $T^{0\nu}_a(x)\xi^{a\nu}_s(x)$ has compact support and there
exists a tube, i.e., a space-time volume $\cW$ connecting the hypersurfaces
$\Omega: x^0=t$ and $\Omega^\prime: t=x^{\prime 0}={\Lambda^0}_\nu
x^\nu+ b^0$ such
that $T_{\varepsilon_s}$ is entirely contained in $\cW$, it follows
from Stokes' theorem that 
\begin{equation}
  \label{eq:138}
  \int_\Omega T_{\varepsilon_s}=\int_{\Omega^\prime}
  T^\prime_{\varepsilon^\prime_s}. 
\end{equation}
If we now compute the surface charges for a large enough sphere $S$ at
fixed $t$ containing both $T^{0\nu}_a(x)\xi^{a\nu}_s(x)$ and
$T^{\prime 0\nu}_a(x)\xi^{a\nu}_s(x)$, it finally follows from
\eqref{eq:GRAVPARTDERIVCONSTRAINTS} that the surface charges evaluated for the new
solutions $z^{\prime A}$ are obtained from those of the old solutions
$z^A$ through 
\begin{equation}
  \label{eq:132} \cQ_{\varepsilon^\prime_s}[z^\prime_s]=\cQ_{
    \varepsilon}[z_s]. 
\end{equation}

\section{Conclusion}
In this chapter, we have developed an extended double potential
formalism for spin 2. This allowed us to write a manifestly duality
invariant action in presence of both electric and magnetic external sources. We
derived the expression of the surface charges in term of the
fundamental canonical fields obtaining in a duality invariant way both
the mass and the NUT charge. Those charges
can also be constructed using Lagrangian methods but, as such, are not duality
invariant (see e.g. \cite{Argurio:2009kx,Argurio:2009uq,Argurio:2010fk}).

In fact we have shown here that the standard
expressions for surface charges in Pauli-Fierz theory, when extended
in a duality invariant way, have all the expected properties. More
interesting would be to develop the theory of surface charges from
scratch in theories of the current type where the Poisson brackets
among the fundamental variables are not local to see if the ones we
have found exhaust all possibilities. From the preceding discussion we
see that pseudo-differential operators will play a crucial role for a
discussion of these generalized conservation laws, as they do in the
discussion of ordinary conservation laws for evolution equations of
the Korteweg-de Vries type for instance.

This association with the soliton theory is genuine. As we will
see in the next chapter, there is a close relation between
electromagnetic duality and integrable systems.

\chapter{Electromagnetic duality and integrability}
\label{ch:Integ}
A cornerstone of soliton theory is the discovery that the evolution
equations are Hamiltonian systems
\cite{Gardner:1971vn,Zakharov:1971ys}. In this context, the occurrence of
hierarchies of evolution equations sharing the same infinite set of
conservation laws can be understood as a consequence of the existence
of a second compatible Hamiltonian structure giving rise to the same
evolution equations \cite{Magri:1978zr,Gelfand:1979ly}.

The equations of motion associated to the theories for the known fundamental
forces of nature, electromagnetism, Yang-Mills theories and
gravitation, are variational and thus Hamiltonian. This is no
coincidence, since these theories are fundamentally quantum, at least
the first three of them, and only for variational theories
quantization is sufficiently well understood. 

In order to have Poincar\'e invariance, respectively diffeomorphism
invariance, manifestly realized, most modern investigations of these
equations are carried out in the Lagrangian framework. This could be
the reason why the bi-Hamiltonian structure underlying these equations
and discussed below has hitherto remained unnoticed.

At the heart of our analysis is an important exception to this
paradigm, namely the question whether the duality invariance of the
four dimensional Maxwell or linearized gravity equations
admits a canonical generator. This question has been answered to the
affirmative in the reduced phase space of these theories and
generalized to massless higher spin gauge fields
\cite{Deser:1976ve,Henneaux:2005qf,Deser:2005ve}.

We will show in this chapter that the reduced phase space formulation
of massless higher spin gauge fields is bi-Hamiltonian. We will start by a quick review of the theory of bi-Hamiltonian
systems. After that, we will first study the electromagnetic case and then go to
linearized gravity and  massless gauge fields of
spin higher than $2$ where the analysis of spin 1 can be carried over easily by
taking care of additional spatial indices.

\section{Bi-Hamiltonian systems}

We will now present the basic results of the
bi-Hamiltonian theory using the famous Korteweg-de Vries equation as an
exemple. We refer the reader to the book of Olver \cite{Olver:1986nx} for a more complete
presentation.

A bi-Hamiltonian system is remarkable in the sense that its evolution
equations can be written in Hamiltonian form in not just one but two
different ways. We are then interested in systems of the form
\begin{equation}
\label{eq:INTEGdefEOM}
\frac{\d z^A}{ \d t} = K^A_1 [z] = \left\{z^A, \cH_1 \right\}_1 = \left\{z^A, \cH_0 \right\}_0
\end{equation}
where $\{ , \}_1$ and $\{ , \}_0$ are two different Poisson brackets
associated to two different Hamiltonians $\cH_1$ and $\cH_2$. For
instance, the KdV
equation is given by
\begin{equation}
\frac{\d u (x)}{\d t} = \d_x^3 u + u \d_x u.
\end{equation}
It can be written as a Hamiltonian equation in two different ways:
\begin{eqnarray}
\left\{F, G \right\}_1 & = & \int dx \frac{\delta F}{\delta u(x)} \d_x
\frac{\delta G}{\delta u(x)} ,\\
\cH_1 & = & \int dx \left(-\half (\d_x u)^2 + \frac{1}{6} u^3\right),
\end{eqnarray}
and
\begin{eqnarray}
\left\{F, G \right\}_0 & = & \int dx \frac{\delta F}{\delta u(x)}
\left( \d_x^3 + \frac{2}{3} u \d_x u + \frac{1}{3} \d_x u\right)
\frac{\delta G}{\delta u(x)} ,\\
\cH_0 & = & \int dx \,\half u^2.
\end{eqnarray}

Being Hamiltonians, both $\cH_1$ and $\cH_2$ are conserved quantities
and, as such, generate symmetries through both Poisson brackets. The
following three transformations are then symmetries of the equations
of motion (\ref{eq:INTEGdefEOM}):
\begin{eqnarray}
\delta_0 z^A & = & K^A_0 [z] = \left\{z^A, \cH_0 \right\}_1,\\
\delta_1 z^A & = & K^A_1 [z] = \left\{z^A, \cH_1 \right\}_1 = \left\{z^A, \cH_0 \right\}_0,\\
\delta_2 z^A & = & K^A_2 [z] = \left\{z^A, \cH_1 \right\}_0.
\end{eqnarray}
Let's assume that $\delta_2 z^A$ is a Hamiltonian vector field for $\{
, \}_1$, i.e. there exists $\cH_2$ such that
\begin{equation}
\delta_2 z^A = \left\{z^A, \cH_2 \right\}_1.
\end{equation}
In that case, $\cH_2$ is a new conserved quantity and it generates a
new symmetry through the other canonical structure:
\begin{equation}
\delta_3 z^A  =  K^A_3 [z] = \left\{z^A, \cH_2 \right\}_0.
\end{equation}
If this continues, one can build an infinite tower of conserved quantities
$\cH_0, \cH_1, \cH_2, ...$ associated to an infinite tower of
symmetries whose characteristics are $K_0, K_1, K_2, ...$

A key point in this argument is the assumption that $\delta_2 z^A$ is
a Hamiltonian vector field for $\{ , \}_1$. In order to control this
property, we need to introduce the notion of
compatible Hamiltonian structures:
\begin{definition}
\quad  

\begin{itemize}
\item Two Hamiltonian structures  $\{ , \}_1$ and $\{ , \}_0$  are
  said to be compatible if for all $a,b \in \RR$, $a \{ , \}_0 + b \{
  , \}_1$ is a Hamiltonian structure. They form a Hamiltonian pair.
\item A system of evolution equations is a bi-Hamiltonian system if it
  can be written in the form (\ref{eq:INTEGdefEOM}) where $\{ , \}_1$
  and $\{ , \}_0$ form a Hamiltonian pair. 
\end{itemize}
\end{definition}

To any poisson bracket $\{ , \}$, we can associate a skew-adjoint
linear differential operator $J^{AB}$ such that:
\begin{equation}
\left\{F,G \right\} = \int d^nx \frac{\delta F}{\delta z^A(x)} J^{AB}
\left( \frac{\delta G}{\delta z^B(x)}\right) = - \int d^nx J^{BA}
\left( \frac{\delta F}{\delta z^A(x)}\right)\frac{\delta G}{\delta z^B(x)}.
\end{equation}
This operator is called a Hamiltonian operator. We will use the
notation $\cD^{AB}$ for the Hamiltonian operator associated to $\{ , \}_1$  and $\cE^{AB}$ the
one associated to $\{ , \}_0$. The two Hamiltonian operators
associated to the KdV equation are given by
\begin{equation}
\label{eq:INTEGkdvop}
\cD = \d_x,\quad  \cE = \d_x^3 + \frac{2}{3} u \d_x + \frac{1}{3} \d_x u.
\end{equation}

With those structures, we can state the main theorem on bi-Hamiltonian
systems.
\begin{theorem}
Let 
\begin{eqnarray}
\label{eq:INTEGevoleq}
\frac{\d z^A}{\d t} = K^A_1 [z] & = & \left\{z^A, \cH_1 \right\}_1 =
\cD^{AB} \frac{\delta \cH_1}{ \delta z^B} \\
 & = & \left\{z^A, \cH_0 \right\}_0 =
\cE^{AB} \frac{\delta \cH_0}{ \delta z^B} \nonumber
\end{eqnarray}
be a bi-Hamiltonian system of evolution equations. Assume that the
operator $\cD^{AB}$ is non-degenerate and define $\cR^A_B = \cE^{AC}
\cD^{-1}_{CB}$. Let $K^A_0[z] =\left\{z^A, \cH_1 \right\}_0$ and
assume that for all $n \in \NN$ we can recursively define
\begin{equation}
K^A_n[z] = \cR^A_B K^B_{n-1}[z] 
\end{equation} 
meaning that for each $n$, $K_{n-1}$ lies in the image of $\cD$.
Then, there exists a sequence of functionals $\cH_0, \cH_1, \cH_2,
...$ such that
\begin{itemize}
\item for all $n \ge 1$, the evolution equation
\begin{equation}
\label{eq:INTEGhierarchy}
\frac{\d z^A}{\d t} = K^A_n[z] = \left\{z^A, \cH_n \right\}_1 = \left\{z^A, \cH_{n-1} \right\}_0
\end{equation}
is a bi-Hamiltonian system;
\item the Hamiltonian functionals $\cH_n$ are all in involution with
  respect to either Poisson bracket:
\begin{equation}
\left\{\cH_m, \cH_n \right\}_1 = 0 = \left\{\cH_m, \cH_n \right\}_0
\quad n,m \ge 0 
\end{equation}
and hence provide an infinite collection of conservation laws for each
of the bi-Hamiltonian systems (\ref{eq:INTEGhierarchy}).
\end{itemize}
\end{theorem}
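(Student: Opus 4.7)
The plan is to establish the hierarchy by induction on $n$, using the compatibility of the two Hamiltonian structures to propagate bi-Hamiltonicity one step at a time. Concretely, I would build the functionals $\cH_n$ recursively so that at every step
\begin{equation*}
K_n^A \;=\; \cD^{AB}\,\frac{\delta \cH_n}{\delta z^B} \;=\; \cE^{AB}\,\frac{\delta \cH_{n-1}}{\delta z^B}.
\end{equation*}
The case $n=1$ is precisely the bi-Hamiltonian assumption built into \eqref{eq:INTEGevoleq}. In the inductive step one has $\cH_n$ in hand; then $K_{n+1}=\cR K_n=\cE\,\delta\cH_n/\delta z$ is manifestly Hamiltonian for $\{\,,\,\}_0$ with generator $\cH_n$, and the hypothesis that $K_{n+1}$ lies in the image of $\cD$ produces a 1-form $\alpha_{n+1}$ with $\cD\alpha_{n+1}=K_{n+1}$. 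The remaining task is to show that this $\alpha_{n+1}$ is variationally exact, i.e.\ equals $\delta\cH_{n+1}/\delta z$ for some functional $\cH_{n+1}$.

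This exactness step is the genuine obstacle, and I would attack it through the standard Magri--Lenard closure lemma: if $\cD$ and $\cE$ form a Hamiltonian pair, then for any closed 1-form $\theta$ every pre-image $\beta$ of $\cE\theta$ under $\cD$ is again closed. The proof expands the Jacobi identity for the pencil $a\cD+b\cE$, which by compatibility is a Hamiltonian operator for all $a,b\in\RR$; separating powers of the parameters yields the usual Jacobi identities for $\cD$ and $\cE$ together with a mixed Schouten-type identity. Specialising this mixed identity to $\theta=\delta\cH_n/\delta z$ and using the exactness of $\theta$ then forces $\alpha_{n+1}$ to be closed. Under mild regularity (working on a star-shaped domain of field space so that the variational Poincar\'e lemma applies) closedness implies exactness and produces the desired $\cH_{n+1}$.

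Once the sequence $\cH_n$ is in place, involution with respect to either bracket is obtained from a short telescoping argument. From the recursion and the skew-adjointness of $\cD$ and $\cE$ one obtains
\begin{equation*}
\{\cH_m,\cH_n\}_1 \;=\; \int dx\, \frac{\delta\cH_m}{\delta z^A}\,\cD^{AB}\,\frac{\delta\cH_n}{\delta z^B} \;=\; \int dx\, \frac{\delta\cH_m}{\delta z^A}\,\cE^{AB}\,\frac{\delta\cH_{n-1}}{\delta z^B} \;=\; \{\cH_m,\cH_{n-1}\}_0,
\end{equation*}
and the same identification applied at the $m$-th slot gives $\{\cH_m,\cH_{n-1}\}_0=\{\cH_{m+1},\cH_{n-1}\}_1$. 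Iterating, one has $\{\cH_m,\cH_n\}_1=\{\cH_{m+1},\cH_{n-1}\}_1=\cdots=\{\cH_n,\cH_m\}_1=-\{\cH_m,\cH_n\}_1$, so every such bracket must vanish; vanishing of $\{\cH_m,\cH_n\}_0$ follows from the same chain of equalities, and conservation of each $\cH_n$ along every flow in the hierarchy is then automatic. The non-trivial input is the closure lemma; everything else is bookkeeping.
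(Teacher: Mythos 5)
The paper itself does not prove this theorem: it states it and defers to Olver's book, so there is no internal proof to compare against. Your architecture — build the $\cH_n$ inductively so that $\cD\,\delta\cH_n/\delta z=\cE\,\delta\cH_{n-1}/\delta z$, reduce the inductive step to a closure lemma for the Hamiltonian pair, then get involution by telescoping with skew-adjointness — is exactly the standard Magri--Olver route, and the telescoping argument and the appeal to the variational Poincar\'e lemma on a star-shaped domain are fine.

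The one substantive problem is your statement of the closure lemma. You assert that for \emph{any} closed one-form $\theta$, every $\beta$ with $\cD\beta=\cE\theta$ is closed, and you justify the inductive step using only the exactness of $\theta=\delta\cH_n/\delta z$. That version of the lemma is false: if it held, you could start the Lenard recursion from an arbitrary single Hamiltonian of an arbitrary pair and always generate a hierarchy, with no need for the bi-Hamiltonian hypothesis on the seed equation at all — equivalently, the transpose recursion operator $\cD^{-1}\cE$ would map closed forms to closed forms, which compatibility alone does not guarantee. The correct lemma (Olver's Lemma 7.25) needs a two-step hypothesis: $\omega_{n-1}$ and $\omega_n$ both closed \emph{and} linked by $\cD\omega_n=\cE\omega_{n-1}$; only then is any solution of $\cD\omega_{n+1}=\cE\omega_n$ closed. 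The mixed Schouten identity you invoke is trilinear, and you need closedness of two consecutive forms to kill enough terms to conclude anything about the third. Your induction is repairable without restructuring, because at stage $n$ you do have both $\delta\cH_{n-1}/\delta z$ and $\delta\cH_n/\delta z$ exact and linked (for $n=1$ this is precisely the bi-Hamiltonian hypothesis on the seed), so the correctly stated lemma applies; but as written your key step invokes a false statement and discards the hypothesis that actually makes the argument work. Note also that the theorem's definition $K_0=\{z^A,\cH_1\}_0$ is inconsistent with $K_1=\cR K_0$ and with the KdV example (it should read $K_0=\{z^A,\cH_0\}_1$); your reconstruction implicitly uses the corrected convention.
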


In the KdV case, the operator $\cR$ is given by
\begin{equation}
\cR = \d_x^2 + \frac{2}{3} u + \frac{1}{3} \d_x u \d_x^{-1},
\end{equation}
where $\d_x^{-1}$ is a formal operator acting only on functions that
are total derivative (if $Q=\d_x P$, then we set $P = \d^{-1}_x Q$.
We can remove the ambiguity of the additive constant by normalizing
$P\vert_{u=0,x=0}=0$). The hierarchy of evolution equations generated by the
theorem is the usual KdV hierarchy:
\begin{eqnarray}
K_0 & =& \d_x u, \\
K_1 & =& \d_x^3 u + u \d_x u,\\
K_2 & = & \d^5_x u + \frac{5}{3} u \d^3_x u + \frac{10}{3} \d_x u
\d_x^2 u + \frac{5}{6}u^2 \d_x u,\\
K_3 & = & ...
\end{eqnarray}
associated to an infinite amount of constants of motion:
\begin{eqnarray}
\cH_0 & =& \int dx\, \half u^2,\\
\cH_1 & =& \int dx \left(-\half (\d_x u)^2 + \frac{1}{6} u^3
\right),\\
\cH_2 & =& \int dx \left( \half (\d_x^2 u)^2 + \frac{5}{72} u^4 +
  \frac{5}{16} u^2 \d^2_xu\right),\\
\cH_3 & =&  ...
\end{eqnarray}

The operator $\cR^A_B$ defined in the theorem is a recursion operator
for the system in the sense that if $\delta_Q z^A = Q^A[z]$ is a
symmetry of the evolution equation (\ref{eq:INTEGevoleq}) and $Q^A$ is
in the image of $\cD$, then 
\begin{equation}
\delta_{\cR Q}z^A = \cR^A_B Q^B[z]
\end{equation}
is also a symmetry of the system. A repeated use of this operator
allows the creation of an infinite tower of symmetries from any given
symmetry. Let us point out that they are not necessarily
Hamiltonian even if the starting one is. In the case of Korteweg-de
Vries, this operator is known as the Lenard recursion
operator.

\section{Electromagnetism}
\label{sec:INTEGelectromagnetism}

As we saw in chapter \ref{ch:Elecmag}, the
reduced phase space action of electromagnetism can be written as 
\begin{gather}
  S^R[A^{Ta}_i]= \int dt \Big[\int d^3x\ \half \epsilon_{ab} (\cO A^{Ta})^i
 \d_0  A^{Tb}_i-  H_1\Big],\\
H_1=\half \int d^3x\ (\cO A^{Ta})_i(\cO A^T_{a})^i=-\half \int d^3x\
A^{Tai}\Delta A^T_{ai},
\end{gather}
where in the second expression for the Hamiltonian, we have used that
$\cO$ is ``self-adjoint''. The standard
Poisson bracket determined by the kinetic term is
\begin{equation}
  \label{eq:3}
  \{A^{Ta}_i(x),A^{Tbj}(y)\}_1=\epsilon^{ab}\Delta^{-1}
\epsilon^{jkl}\d^y_k\delta^{T(3)}_{il}(x-y)=\epsilon^{ab}\Delta^{-1}\big(
  \cO^y \delta^{T(3)}_{i}(x-y)\big)^j,
\end{equation}
where $\delta^{T(3)}_{ij}(x-y)$ is the transverse delta function, see
e.g.~\cite{Cohen-Tannoudji:1989dq} section $A_1.2$. In vacuum, Maxwell's
equations for the physical degrees of freedom read
\begin{equation}
  \label{eq:INTEG4}
  \d_0 A^{Tai}(x)=\{A^{Tai}(x),H_1\}_1=-\epsilon^{ab}(\cO A^T_{b})^i(x),
\end{equation}
while the generator for duality rotations is 
\begin{equation}
  \label{eq:INTEG5}
  H_0=-\half \int d^3x\  A^{ai}_T(\cO A_{a}^T)_i, \qquad
  \{H_0,H_1\}_1=0.  
\end{equation}

When presented in this way, the second Hamiltonian structure is
obvious and a lot simpler than the one induced from the covariant
action principle. Indeed, a natural Poisson bracket on reduced
phase space is simply
\begin{equation}
  \{A^{Ta}_i(x),A^{Tb}_j(y)\}_0=\epsilon^{ab}\delta^{T(3)}_{ij}(x-y),
\end{equation}
in terms of which the duality generator is the Hamiltonian for
Maxwell's equations,
\begin{equation}
  \{A^{Tai}(x),H_1\}_1=\{A^{Tai}(x),H_0\}_0.
\end{equation}
This is the main result of this chapter.

At this stage, one can pause and ask whether electromagnetism and its
quantization should not be based on this new Hamiltonian structure. A good
reason to favor the old, more complicated structure is that, by
construction, the Poincar\'e and conformal symmetries admit canonical
generators for the old structure, while not all of them do for the new
one. We plan to return to this question in detail elsewhere. 

The rest of the analysis is standard.  Following the previous section, the
associated recursion operator is defined by
\begin{equation}
  {\cR^{ai}}_{bj}=-\delta^a_b(\cO)^i_j.
\end{equation}
Consider, for $p\geq 1$, $K^{Tai}_{p}=(-)^p\epsilon^{ab}(\cO^{p}
A_b^T)^i$, or equivalently,
\begin{equation}
  K^{Tai}_{2n+1}(x)=(-)^{n+1}\epsilon^{ab}\Delta^n(\cO A_b^T)^i(x),\quad
  K^{Tai}_{2n+2}(x)=(-)^{n+1}\epsilon^{ab}\Delta^{n+1}A_b^{Ti}(x),
\end{equation}
for $n\geq 0$. 
The evolution equations of the hierarchy 
\begin{equation}
  \d_0 A^{Tai}(x)=K^{Tai}_p(x),\quad \forall p\geq 1,
\end{equation}
are also bi-Hamiltonian, 
\begin{equation}
  K^{Tai}_p(x)=\{A^{Tai}(x),H_p\}_1=\{A^{Tai}(x),H_{p-1}\}_0,
\end{equation}
where $H_{p-1}=\frac{(-)^p}{2}\int\ d^3x A^{Ta}_i(\cO^p A^T)_i$, 
\begin{equation}
  H_{2n}=\frac{(-)^{n+1}}{2} \int d^3x\  A^{Tai}\Delta^n (\cO
  A_{a}^T)_i,\
  H_{2n+1}=\frac{(-)^{n+1}}{2} \int d^3x\ A^{Ta}_i\Delta^{n+1} A^{T}_{ai}.
\end{equation}
with Hamiltonians that are in involution, 
\begin{equation}
  \{H_n,H_m\}_1=0=\{H_n,H_m\}_0,\quad \forall n,m\geq 0. 
\end{equation}

\section{Linearized gravity}
\label{sec:linearized-gravity}

Following the result presented in chapter \ref{chap:spin2}, 
the reduced phase space action of linearized gravity can be written as 
\begin{gather}
  \label{eq:19}
S^R[H^{TT a}_{mn}]=  \int dt\Big[ \int d^3x\  \epsilon_{ab} \Delta\left( \cO
  H^{TTa}\right)^{mn} 
\d_0 H^{TTb}_{mn}-H_1\Big],\\
H_1= \int d^3x\ \Big( H^{TTamn} \Delta^2  H^{TT}_{amn}).
\end{gather}
The standard Poisson bracket determined by the kinetic term is 
\begin{equation}
  \{H^{TT a}_{mn}(x),H^{TT b
    kl}(y)\}_1=\half\epsilon^{ab}\Delta^{-2}\left(\cO^y \delta^{(3)
      TT}_{mn}(x-y)\right)^{kl},
\end{equation}
where $\delta^{(3) TTkl}_{mn}(x-y)$ denotes the projector on the
transverse-traceless part of a symmetric rank two tensor. The duality
generator is 
\begin{equation}
  D=- \int d^3x\ H^{TT amn}\Delta\left(\cO H^{TT}_a\right)_{mn},\quad
  \{D,H_1\}_1=0. 
\end{equation}

The analogy with the spin $1$ case can be made perfect by the change
of variables,
\begin{equation}
  H^{1TT}_{mn}=\frac{1}{\sqrt{2}} \Delta^{-1}\left(\cO
    A^{2TT}\right)_{mn},\qquad H^{2TT}_{mn}=\frac{1}{\sqrt{2}} 
\Delta^{-1}\left(\cO  A^{1TT}\right)_{mn}. 
\end{equation}
in terms of which 
\begin{gather}
  H_1= \half \int d^3x\ \Big( \left(\cO A^{TTa}\right)^{mn}  \left(
   \cO A^{TT}_a\right)_{mn}\Big)=-\half \int d^3x\ \Big(
 A^{TTamn}\Delta  A^{TT}_{amn}\Big),\\
\{A^{TT a}_{mn}(x),A^{TT b
    kl}(y)\}_1=\epsilon^{ab}\Delta^{-1}\left(\cO^y \delta^{(3)
      TT}_{mn}(x-y)\right)^{kl},\\
H_0=-D= -\half \int d^3x\ A^{TT amn}\left(\cO A^{TT}_a\right)_{mn}. 
\end{gather}

All formulae of section~\bref{sec:INTEGelectromagnetism} below
equation~\eqref{eq:INTEG5} now generalize in a straightforward way to
massless spin $2$ fields by replacing $T$ (transverse) by $TT$
(transverse-traceless) and contracting over the additional spatial
index.

\section{Massless higher spin gauge fields}
\label{sec:higher-spin-gauge}

The extension of these results to massless higher spin gauge fields
\cite{Fronsdal:1978oq} (see also \cite{Wit:1980fv}) follows directly
from the observation that the Hamiltonian reduced phase space
formulation of these theories merely involves additional spatial
indices \cite{Deser:2005ve}, so that all above results generalize in a
straightforward way.

This can be seen for instance by starting from the approach inspired
from string field theory, where the Lagrangian action for massless
higher spin gauge fields is written as the mean value of the BRST
charge for a suitable first quantized particle model
\cite{Ouvry:1986kl,Bengtsson:1986tg,Henneaux:1987hc} (see also
\cite{Sagnotti:2004ij,Barnich:2005bs} for further developments). In
this framework the reduction of the action to the light-cone gauge
corresponds to the elimination of BRST quartets composed of ghost and
light-cone oscillators
(see~e.g.\cite{Siegel:1999dz,Barnich:2005fu}). In exactly the same
way, the ghost, temporal and longitudinal oscillators form quartets
that can be eliminated to yield the Lagrangian gauge fixed action for
a massless field of spin $s\geq 1$,
\begin{equation}
  S_L[\phi^{TT}_{i_1\dots i_s}]=-\half \int d^4x\ \d_\mu
\phi^{TT}_{i_1\dots i_s}\d^\mu \phi^{TTi_1\dots i_s}
\end{equation}
where the field $\phi^{TT}_{i_1\dots i_s}$ is real, completely symmetric,
traceless and transverse,
\begin{equation}
  \phi^{TT}_{i_1\dots i_s}=\phi^{TT}_{(i_1\dots i_s)},\  \phi^{TT i}_{i
    i_3\dots i_s}=0,\ \d^i  \phi^{TT}_{i i_2\dots i_s}=0. 
\end{equation}
The Hamiltonian formulation is direct, the momenta being
$\pi^{TT}_{i_1\dots i_s}=\d_0 \phi^{TT}_{i_1\dots i_s}$. 

Consider then the Fock space defined by $[a^i,
a^{\dagger}_j]=\delta^i_j$, $a_i|0\rangle=0$, the number operator
$N=a^\dagger_i a^i$, and the ``string field''
$\phi^{TT}_s(x)=\frac{1}{\sqrt{s!}}a^\dagger_{i_1}\dots
  a^\dagger_{i_s}|0\rangle \phi^{TT}_{i_1\dots i_s}(x)$ and the inner product
\begin{equation}
    \label{eq:29}
    \langle \phi^{TT}_s,\psi^{TT}_s\rangle=\int d^4x\ \langle
    \phi^{TT}_s,\psi^{TT}_s\rangle_F=\int d^4x\,   \phi^{TT}_{i_1\dots
      i_s} \psi^{TT i_1\dots i_s}. 
\end{equation}
With this inner product, the generalized curl \cite{Deser:2005ve} 
\begin{equation}
\cO =\frac{1}{N}\epsilon^{ijk}
a^\dagger_i\d_j a_k
\end{equation}
is again self-adjoint. Furthermore, it squares to $-\Delta$
inside the inner product involving transverse-traceless fields, 
\begin{multline}
  \cO^2=\frac{1}{N^2}\big[-\Delta N^2+(\partial\cdot
  a^\dagger)(\partial\cdot a) +(a^\dagger\cdot a^\dagger)\Delta(a\cdot
  a)+ 2(\partial\cdot a^\dagger) N (\partial\cdot a)\\-(\partial \cdot
  a^\dagger)^2 (a\cdot a) -(a^\dagger\cdot a^\dagger)(\partial\cdot
  a)\big],
\Longrightarrow  \langle \phi^{TT}_s,\cO^2\psi^{TT}_s\rangle=-
 \langle \phi^{TT}_s,\Delta\psi^{TT}_s\rangle.
\end{multline}

The change of variables making duality invariance transparent is
\begin{equation}
\phi^{TT}_{i_1\dots i_s}=A^{TT 1}_{i_1\dots i_s},\ \pi^{TT}_{i_1\dots
  i_s}=\left(\cO A^{TT2}\right)_{i_1\dots i_s}. 
\end{equation}
The first order reduced phase space variational principle becomes
\begin{gather}
  S^R[A^{TT a}]=\int dt\Big[\int d^3x\ \half \epsilon_{ab} \left(\cO
    A^{TTa}\right)^{i_1\dots i_s} \d_0A^{TT b}_{i_1\dots
    i_s}-H_1\Big],\\
H_1=-\half \int d^3x\ A^{TT a}_{i_1\dots i_s}\Delta A^{TTi_1\dots i_s}_a. 
\end{gather}
Again, all formulae of section~\bref{sec:INTEGelectromagnetism} below
equation~\eqref{eq:INTEG4}, including the one for the duality generator,
suitably generalize by contracting over the additional spatial
indices.

\section{Conclusion}

We have shown in this chapter that the reduced phase space formulation
of massless higher spin gauge fields is bi-Hamiltonian. The second
Poisson bracket on reduced phase space turned out to be more natural
than the one induced from the covariant variational principle, while
the generator for duality rotations played the role of the second
Hamiltonian. This result trivially generalizes to
Yang-Mills theory with an invariant, non degenerate metric, linearized
around a zero potential by decorating the expressions obtained in the
electromagnetic case with an additional Lie algebra index.

Several generalizations and extensions are suggested by this result. A
first exercise consists in studying the consequences 
for symmetries and conservation laws of both the Maxwell and the
higher spin equations and compare them to known results (see
e.g.~\cite{Fushchich:1992fk,Anco:2003uq,Pohjanpelto:2008kx} and references
therein). Another obvious question is to investigate more general
backgrounds. For instance, the generalization to massless spins
propagating on (anti-) de Sitter spaces instead of Minkowski spacetime
and the inclusion of fermionic gauge fields should be
straightforward. In Yang-Mills theories (anti) self-dual backgrounds could
be promising in view of their close connection to integrable systems.

The most important problem is however the inclusion of
interactions. When comparing to the Korteweg-de Vries equation for
instance, the present work corresponds to the bi-Hamiltonian structure
for the linearized equation. The question is then to find interactions
that preserve this structure.

\chapter{Gravitational features of the ${\rm \bf AdS_3/CFT_2}$ correspondence}
\label{chap:ads3}

In this chapter, we will work
with asymptotically $AdS_3$ space-times in Fefferman-Graham form. In
that case, the gauge is completely fixed as all subleading orders of
the asymptotic Killing vectors are uniquely determined. Equipped 
with a new modified Dirac-type Lie bracket taking into
account their metric dependence due to
the gauge fixing, those asymptotic Killing vectors form a
representation of the local conformal algebra to all orders in $r$. We
give a short description of the possible central extensions of
the 2 dimensions conformal algebra.

We then solve the Einstein's equations for metrics in the Fefferman-Graham
form and compute how the local conformal algebra in 2 dimensions is
realized on solution space. The last step is the covariant computation of the
surface charge algebra and the value of the central extension as
reviewed in appendix \ref{sec:covariant-approach}.

This chapter is based on results originally derived in \cite{Brown:1986zr} and developed
further in \cite{Fefferman:1985uq,Graham:1991kx,Henningson:2000vn,Banados:1998ys,Skenderis:2000zr,Graham:1999ly,Imbimbo:2000ve,Rooman:2000qf,Bautier:2000bh,Papadimitriou:2005dq}. The only original result is the introduction of the
modified Lie Bracket in section \ref{sec:asympt-symma}.

\section{Asymptotically ${\rm \bf AdS_3}$ spacetimes in
  Fefferman-Graham form}
\label{sec:poinc-gener-pauli}

The Fefferman-Graham form for the line element of a $3$
dimensional asymptotically anti-de Sitter spacetime is
\begin{equation}
ds^2 = \frac{l^2}{r^2} dr^2 + g_{AB}(r,x^C)\, dx^Adx^B,\label{eq:GF}
\end{equation}
with $g_{AB} = r^2\bar\gamma_{AB}(x^C)+ O(1)$, where $\bar\gamma_{AB}$
is a conformally flat $2$-dimensional metric.  For explicit
computations we will sometimes choose the parametrization
$\bar\gamma_{AB}=e^{2\varphi} \eta_{AB}$ with $\varphi(x^C)$ and
$\eta_{AB}$ the flat metric on the cylinder, $\eta_{AB}
dx^Adx^B=-d\tau^2+d\phi^2$, $\tau=\frac{1}{l}t$.

\section{Asymptotic symmetries}
\label{sec:asympt-symma}

The transformations leaving this form of the metric invariant are
generated by vector fields satisfying
\begin{equation}
\cL_\xi g_{rr}=0=\cL_\xi g_{rA},\label{eq:ads31}\quad
\cL_\xi g_{AB}=O(1),
\end{equation}
which implies
\begin{eqnarray}
\left\{\begin{array}{l}\xi^r =-\half\psi r, \\ \xi^A  =  Y^A +
    I^A,\quad 
 I^A=-\frac{l^2}{2}\d_B
  \psi\int_r^\infty \frac{dr^\prime}{r^\prime}
  g^{AB}=-\frac{l^2}{4r^2}
\bar\gamma^{AB}\d_B
  \psi+O(r^{-4}),
\end{array}\right.\label{eq:FGvect}
\end{eqnarray}
where $Y^A$ is a conformal Killing vector of $\bar\gamma_{AB}$, and
thus of $\eta_{AB}$, while 
$\psi=\bar D_A Y^A$ is the conformal factor. 

Indeed, the inverse to metric \eqref{eq:GF} is
\begin{equation*}
g^{\mu\nu} = \left(\begin{array}{cc}
 \frac{r^2}{l^2} & 0 \\
0 & g^{AB}
\end{array}\right)
\end{equation*}
where $g^{AB}g_{BC}=\delta^A_C$. From $\cL_\xi g_{rr}=0$, we find
$\xi^r= A r$ for some $A(x^C)$. From $\cL_\xi g_{rA} = 0$ we find
$\partial_r \xi^A=-g^{AB}\frac{l^2}{r}\partial_B A$ so that $\xi^A =
Y^A + I^A$ for some $Y^A(x^C)$ and where $I^A = {l^2}\partial_B A
\,\int_r^\infty dr' \, g^{AB}r'^{-1}$. Finally, $\cL_\xi
g_{AB}=O(1)$ requires $Y^A$ to be a conformal Killing
vector of $\bar\gamma_{AB}$ and $A=-\half \psi$.

Let $\hat Y^A=[Y_1,Y_2]^A$, $\hat
\psi=\bar D_A \hat Y^A$, denote by
$\delta^g_{\xi_1}\xi^\mu_2$ the change induced in $\xi^\mu_2(g)$ due
to the variation $\delta^g_{\xi_1}g_{\mu\nu}=\cL_{\xi_1}g_{\mu\nu}$
and define
\begin{equation}
  \label{eq:44a}
 [\xi_1,\xi_2]^\mu_M =[\xi_1,\xi_2]^\mu-\delta^g_{\xi_1}\xi^\mu_2+
\delta^g_{\xi_2}\xi^\mu_1. 
\end{equation}
For vectors $\xi_1,\xi_2$ given in \eqref{eq:FGvect}, we have
\begin{equation*}
  [\xi_1,\xi_2]^r_M=-\half \hat\psi r,\quad [\xi_1,\xi_2]^A_M=\hat
  Y^A+\hat I^A,
\end{equation*}
where $\hat I^A$ denotes $I^A$ with $\psi$ replaced by $\hat
\psi$. 

Indeed, for the $r$ component, we have $\delta^g_{\xi_1}\xi^r_2=0$ and
the result follows by direct computation of the Lie
bracket. Similarly, $\lim_{r\to\infty}[\xi_1,\xi_2]^A_M=\hat Y^A$.
Finally, using $\d_r\xi^r=\frac{1}{r}\xi^r$ and
$\d_r\xi^A=-\frac{l^2}{r^2}\d_B\xi^r g^{BA}$ a straightforward
computation shows that
$\d_r([\xi_1,\xi_2]^A_M)=\d_B([\xi_1,\xi_2]^r_M) g^{BA}$, which gives
the result. It thus follows that on an asymptotically anti-de Sitter
spacetime in the sense of Fefferman-Graham (solving or not Einstein's
equations with cosmological constant):

{\em The spacetime vectors \eqref{eq:FGvect}
  equipped with the bracket $[\cdot,\cdot]_M$ form a faithful
  representation of the conformal algebra.}

By conformal algebra, we
mean here the direct sum of 2 copies of the Witt algebra.
Furthermore, since $\delta^g_{\xi_1}\xi^r_2=0$,
$\delta^g_{\xi_1}\xi^A_2=O(r^{-4})$, it follows that these vectors
form a representation of the conformal algebra only up to terms of
order $O(r^{-4})$ when equipped with the standard Lie bracket.

\vspace{5mm}

More generally, one can also consider the transformations that leave
the Fefferman-Graham form of the metric invariant up to a Weyl
rescaling of the boundary metric $\bar\gamma_{AB}$. They are generated
by spacetime vectors such that
\begin{equation}
\cL_\xi g_{rr}=0=\cL_\xi g_{rA},\label{eq:ads33}\quad 
\cL_\xi g_{AB}=2\omega g_{AB}+O(1). 
\end{equation}
It is then straightforward to see that the general solution is given
by the vectors \eqref{eq:FGvect}, where $\psi$ is replaced by
$\tilde\psi=\psi-2\omega$. When equipped with the modified Lie bracket
$[\cdot,\cdot]_M$ these vectors now form a faithful representation of
the extension of the two dimensional conformal algebra defined by
elements $(Y,\omega)$ and the Lie bracket $(\hat
Y,\hat\omega)=[(Y_1,\omega_1),(Y_2,\omega_2)]$,
\begin{equation}
  \hat Y^A=Y^B_1\d_B Y_2^A-Y^B_2\d_B Y_1^A,\quad \hat\omega=0.
\end{equation}
with $\omega(x^C)$ arbitrary and $Y^A$ conformal Killing vectors of
$\bar\gamma_{AB}$ and thus also of $\eta_{AB}$. The asymptotic
symmetry algebra is then the direct sum of the abelian ideal of
elements of the form $(0,\omega)$ and of 2 copies of the Witt algebra.

Indeed, we have $\lim_{r\to\infty}(\frac{1}{r}[\xi_1,\xi_2]^r_M)=
-\half Y^A_1\d_A\tilde\psi_2+\d_C\omega_1 Y^C_2+(1\leftrightarrow
2)=-\half \hat \psi$ and $\d_r(\frac{1}{r}[\xi_1,\xi_2]^r_M)=0$, while
the proof for the $A$-component is unchanged. 

\section{Conformal algebra and central extension}
\label{sec:adsalgebra}

In terms of light-cone coordinates, 
$x^\pm=\tau\pm\phi$,
$2\d_\pm=\dover{}{\tau}\pm\dover{}{\phi}$, we have 
$\bar\gamma_{AB}dx^Adx^B=-e^{2\varphi}dx^+dx^-$, and if,
\begin{eqnarray}
Y^\pm(x^\pm)\d_\pm=\sum_{n\in \mathbf Z} c^n_{\pm} l^\pm_n,\quad
l^\pm_n=e^{i n x^\pm}\d_{\pm} , 
\end{eqnarray}
the algebra in terms of the basis vectors $l^\pm_n$ reads 
\begin{equation}
  \label{eq:ads3algebraabstract}
i[l^\pm_m,l^\pm_n]=(m-n)l^\pm_{m},\quad  i[l^\pm_m,l^\mp_n]=0.
\end{equation}
The definition of the basic vectors is different from
\cite{Barnich:2010fk}. This new definition is better suited to the
periodicity of $\phi$ and will allow us to do the flat limit
($l\rightarrow \infty$)  in the next chapter.

Up to equivalence, the most general central extension of
the conformal algebra in 2 dimensions is given by
\begin{eqnarray} 
\left\{\begin{array}{l}
i[l^\pm_m,l^\pm_n]=(m-n)l^\pm_{m}+\frac{c^\pm}{12}m(m+1)(m-1)\delta^0_{m+n}, \cr i[l^\pm_m,l^\mp_n]=0\,.
\label{eq:ads3algabstract}
\end{array}\right.
\end{eqnarray}
The proof follows from doing twice the one for the Witt algebra
$\mathfrak w$, see
e.g~\cite{D.Fuks:1986uq,L.Brink:1988kx,Azcarraga:1995vn}.

\section{Solution space}
\label{sec:solution-spacea}

Let us now start with an arbitrary metric of the form \eqref{eq:GF},
without any assumptions on the behavior in $r$ and let $k^A_B=\half
g^{AC}g_{CB,r}$. One can then define $K^A_B$ through the relation
$k^A_B=\frac{1}{r}\delta^A_B +\frac{1}{r^3} K^A_B$.
We have
\begin{equation*}
\begin{gathered}
\Gamma^r_{rr}   =   -\frac{1}{r},\quad 
\Gamma^r_{rA}  =  0,\quad \Gamma^A_{rr}  =  0,\\
\Gamma^r_{AB}  = -\frac{r^2}{l^2} k_{AB},\quad
\Gamma^A_{rB} =k^A_B,\quad 
\Gamma^A_{BC}  = {}^{(2)}\Gamma^A_{BC}, 
\end{gathered}
\end{equation*}
where $ {}^{(2)}\Gamma^A_{BC}$ denotes the Christoffel symbol
associated to the $2$-dimensional metric $g_{AB}$, which is used to
lower indices on $k^A_B$. If ${K^{TA}}_B$ denotes the traceless part of
$K^{A}_B$, the equations of motion are organized as follows
\begin{gather}
  \label{eq:80a}
  g^{AB}G_{AB}-\frac{2}{l^2} =0\Longleftrightarrow
  \d_r K=-r^{-3}(\half K^2+{K^{T}}^A_B{K^{T}}^B_A), \\\label{eq:80b}
G_{AB}-\half g_{AB}g^{CD}G_{CD}=0\Longleftrightarrow
\d_r {K^{T}}^A_B=-r^{-3} K {K^{T}}^A_B, \\\label{eq:80c}
G_{rA}\equiv r^{-3}( {}^{(2)}D_B K^B_A-\d_A K)=0,\\\label{eq:80d}
G_{rr}-\frac{1}{l^2}g_{rr}\equiv \half\big[r^{-6} (\half K^2
-{K^{T}}^A_B{K^{T}}^B_A )+2 r^{-4} K -\frac{l^2}{r^2}\, {}^{(2)}R\big]=0.
\end{gather}
Combining the Bianchi identities $2(\sqrt{-g}
G^\beta_\alpha)_{,\beta}+\sqrt{-g}
G_{\beta\gamma}{g^{\beta\gamma}}_{,\alpha}\equiv 0$ with the covariant
constancy of the metric, we get the identities 
\begin{multline}
  \label{eq:81}
  2(\frac{r}{l}\sqrt{|{}^{(2)}g|}G_{rA})_{,r}+2(\frac{l}{r}\sqrt{|{}^{(2)}g|}g^{BC}
  [G_{CA}-\frac{1}{l^2} g_{CA}])_{,B}
  +\\+\frac{l}{r}\sqrt{|{}^{(2)}g|}(G_{BC}-\frac{1}{l^2}
  g_{BC}){g^{BC}}_{,A}\equiv 0,
\end{multline}
\begin{multline}
  \label{eq:82}
\big(\frac{r}{l}\sqrt{|{}^{(2)}g|}
[G_{rr}-\frac{1}{r^2}]\big)_{,r} 
+(\frac{l}{r}\sqrt{|{}^{(2)}g|}g^{BA}
G_{Ar})_{,B}+\\+ \frac{1}{l}\sqrt{|{}^{(2)}g|}
[G_{rr}-\frac{1}{r^2}]-\frac{l}{r}\sqrt{|{}^{(2)}g|}(G_{AB}-\frac{1}{l^2}
g_{AB} )k^{AB}\equiv 0.
\end{multline}

To solve the equations of motion, we first contract \eqref{eq:80b}
with ${K^{T}}^B_A$, which gives
\begin{equation*}
    \d_r ({K^{T}}^A_B{K^{T}}^B_A)=-2r^{-3} K {K^{T}}^A_B{K^{T}}^B_A.
\end{equation*}
If we assume ${K^{T}}^A_B{K^{T}}^B_A=\half \cK^2$, we can take the sum and
difference with \eqref{eq:80a} to get 
\begin{equation*}
  \d_r(K+\cK)=-\frac{1}{2}r^{-3}(K+\cK)^2,\quad 
\d_r(K-\cK)=-\frac{1}{2}r^{-3}(K-\cK)^2, 
\end{equation*}
which can be solved in terms of $2$ integration ``constants'' $C(x^B),D(x^B)$
\begin{equation*}
  K=-\frac{1}{C+\half r^{-2}}-\frac{1}{D+\half r^{-2}},\quad 
{K^{T}}^A_B{K^{T}}^B_A=\frac{(D-C)^2}{2(C+\half r^{-2})^2(D+\half r^{-2})^2}.
\end{equation*}
When used in \eqref{eq:80b}, we find 
\begin{equation*}
  {K^{T}}^A_B={A^{T}}^A_B(\frac{1}{C+\half r^{-2}}-\frac{1}{D+\half r^{-2}}),\quad
  {A^{T}}^A_B{A^{T}}^B_A=\half,
\end{equation*}
and can now reconstruct the metric from the equation $\d_r g_{AB}=2
g_{AC} k^C_B$. Defining  $\Theta=\frac{1}{D}+\frac{1}{C}$,
$\Omega=\frac{1}{D}-\frac{1}{C}$, we get 
\begin{equation}
  \label{eq:88}
  g_{AB}=
r^2\bar\gamma_{AB}\big[1+\frac{1}{2r^2}
  \Theta
  +\frac{1}{16r^4}(\Theta^2+\Omega^2)\big]
  +A^T_{AB}\big[\Omega
  +\frac{1}{4r^2}\Theta\Omega\big],
\end{equation}
where $\bar\gamma_{AB}$ are additional integration constants,
restricted by the condition that $\bar\gamma_{AB}$ is symmetric, of
signature $-1$. The index on $A^{TA}_B$ is lowered with
$\bar\gamma_{AB}$, with $A^T_{AB}$ requested to be symmetric. It
follows that ${A^{T}}^A_B$ contains only $1$ additional independent
integration constant.  Writing
$g_{AB}=r^2\bar\gamma_{AB}+\gamma_{AB}$, with
$\gamma_{AB}=\hat\gamma_{AB}+ o(r^0)$, we have $K^A_B=-\hat
\gamma^A_B+o(r^0)$ where the index on $\hat\gamma^A_B$ has been lifted
with $\bar\gamma^{AB}$, the inverse of $\bar\gamma_{AB}$.

When \eqref{eq:80a} and \eqref{eq:80b} are satisfied, the Bianchi
identity \eqref{eq:81} implies that
$r\sqrt{|{}^{(2)}g|}G_{rA}$ does not depend on $r$. The equation of
motion \eqref{eq:80c} then reduces to the condition 
\begin{equation}
  \label{eq:87}
  \bar D_B \hat \gamma^B_A-\d_A \hat \gamma=0,
\end{equation}
where $\hat \gamma=\hat\gamma^A_A$.  When this condition holds in
addition to \eqref{eq:80a} and \eqref{eq:80b}, the remaining Bianchi
identity \eqref{eq:82} implies that $r^2\sqrt{|{}^{(2)}g|}
[G_{rr}-\frac{1}{r^2}]$ does not depend on $r$. The equation of motion
\eqref{eq:80d} then reduces to the condition
\begin{equation}
\hat\gamma=-\frac{l^2}{2}\bar R, 
\end{equation}
and also from the leading contribution to $K$ that
$\Theta=-\frac{l^2}{2}\bar R$.  The constraint \eqref{eq:87} then
becomes
\begin{equation}
\bar D_B \hat \gamma^{TB}_{\ A}=-\frac{l^2}{4}\d_A \bar
R.\label{eq:42}
\end{equation}

To solve this equation, one uses light-cone coordinates,
$x^\pm=\tau\pm\phi$, $2\d_\pm=\dover{}{\tau}\pm\dover{}{\phi}$ and the
explicit parameterization
$\bar\gamma_{AB}dx^Adx^B=-e^{2\varphi}dx^+dx^-$. This gives
\begin{equation}
  \label{eq:41}
  \hat \gamma=-4{l^2}e^{-2\varphi}\d_+\d_-\varphi \iff
\hat \gamma_{+-}=l^2\d_+\d_-\varphi,
\end{equation}
while the general solution to \eqref{eq:42} is 
\begin{equation}
  \label{eq:52}
  \hat\gamma_{\pm\pm}=l^2\big[\Xi_{\pm\pm}(x^\pm)
+\d^2_\pm\varphi-(\d_\pm\varphi)^2\big],
\end{equation}
with $\Xi_{\pm\pm}(x^\pm)$ 2 arbitrary functions of their
arguments. Using \eqref{eq:88},
one then gets
\begin{equation*}
  A^T_{\pm\pm}\Omega=\hat\gamma_{\pm\pm},\quad A^T_{+-}=0,\quad 
  \Omega^2=16e^{-4\varphi}\hat\gamma_{++}\hat\gamma_{--}.
\end{equation*}
In other words, one can choose $\varphi(x^+,x^-),\Xi_{\pm\pm}(x^\pm)$
as coordinates on solution space and, by expressing \eqref{eq:88} in
terms of these coordinates, we have shown that 

{\em The general solution to Einstein's equations with metrics in
  Fefferman-Graham form is given by
\begin{multline}
  g_{AB}dx^Adx^B = \Big(-e^{2\varphi} r^2 +2\hat \gamma_{+-}-
  r^{-2}e^{-2\varphi}(\hat \gamma^2_{+-}+\hat \gamma_{++}\hat
  \gamma_{--} )\Big) dx^+dx^- +\\+ \hat
  \gamma_{++}(1-r^{-2}e^{-2\varphi}\hat\gamma_{+-}) (dx^+)^2 +\hat
  \gamma_{--}(1-r^{-2}e^{-2\varphi}\hat\gamma_{+-}) (dx^-)^2,
\end{multline}
with $\hat \gamma_{AB}$ defined in equations \eqref{eq:41} and
\eqref{eq:52}. }

For instance, in these coordinates, the BTZ
black hole\cite{Banados:1992fk,Banados:1993uq} is determined by
$\varphi=0$ and
\begin{equation}
\Xi_{\pm\pm}=2G(M\pm \frac{J}{l}). 
\end{equation}

\section{Conformal properties of solution space}
\label{sec:conf-transf-solut}

By construction, the finite transformations generated by the spacetime
vectors \eqref{eq:FGvect} leave the Fefferman-Graham form invariant,
and furthermore transform solutions to solutions.

Using light-cone coordinates and the parametrization
$\bar\gamma_{AB}dx^Adx^B=-e^{2\varphi}dx^+dx^-$, we have 
\begin{gather*}
\left\{\begin{array}{l}
  \xi^r=-\half\psi r, 
\quad \psi=\d_+Y^++\d_-Y^-+2\d_+\varphi Y^++2\d_-\varphi Y^-, \\
  \xi^\pm=Y^\pm+\frac{l^2e^{-2\varphi}}{2r^2}\d_\mp\psi+O(r^{-4}),
\end{array}\right.
\end{gather*}
and get 
\begin{equation}
\begin{split}
  \cL_\xi g_{\pm\pm}  &\approx l^2\big[Y^\pm\d_\pm
  \Xi_{\pm\pm}+2\d_\pm Y^\pm \Xi_{\pm\pm}-\half \d^3_\pm
  Y^\pm\big]+O(r^{-2}),  \\
\cL_\xi g_{+-}   &\approx  O(r^{-2}).\label{eq:A12}
\end{split}
\end{equation}
It follows that the local conformal algebra acts on solution
space as
\begin{equation}  -\delta\Xi_{\pm\pm}=Y^\pm\d_\pm
  \Xi_{\pm\pm}+2\d_\pm Y^\pm \Xi_{\pm\pm}-\half \d^3_\pm
  Y^\pm, \label{eq:A15}
\end{equation} 
and with $\delta \varphi=0$. Note that the overall minus sign is
conventional and chosen so that $\delta \Xi_{\pm\pm}\equiv
\delta_Y\Xi_{\pm\pm}$ satisfies 
$[\delta_{Y_1},\delta_{Y_2}]\Xi_{\pm\pm}=\delta_{[Y_1,Y_2]}\Xi_{\pm\pm}$. 

More generally, when considering the extension of the
conformal algebra discussed at the end of section
\bref{sec:asympt-symma}, we find that
\begin{multline*}
  \cL_\xi g_{\pm\pm}  \approx l^2\big[Y^\pm\d_\pm
  \Xi_{\pm\pm}+2\d_\pm Y^\pm \Xi_{\pm\pm}-\half \d^3_\pm
  Y^\pm+\\+\d_\pm^2\omega-2\d_\pm\varphi\d_\pm
  \omega\big]+O(r^{-2}),  
\end{multline*}
\begin{equation*}
\cL_\xi g_{+-}   \approx 2\omega (-\frac{r^2}{2}
e^{2\varphi})+l^2\d_+\d_- \omega+ O(r^{-2}), 
\end{equation*}
and thus, that the extended algebra acts on solution space as in 
 \eqref{eq:A15} with in addition $-\delta\varphi=\omega$.

\section{Centrally extended surface charge algebra}
\label{sec:centr-extend-surf}

Let us take 
\begin{equation}
\varphi=0\label{eq:39}.
\end{equation}
in this section. In fact, starting from a Fefferman-Graham metric
\eqref{eq:GF} with $\bar\gamma_{AB}=e^{2\varphi}\eta_{AB}$ one
can obtain such a metric with vanishing $\varphi(x^C)$ through the
finite coordinate transformation generated by $\xi^r=-\varphi r$ and
$\xi^A=-l^2\d_B\varphi\int^\infty_r\frac{dr^\prime}{r^\prime}
g^{AB}(x,r^\prime)$ since $\cL_\xi g_{rr}=0=\cL_\xi g_{rA}$ and
$\cL_\xi g_{AB}=-2\varphi g_{AB}$.

The background metric is then 
\begin{equation}
d\bar s^2 = -r^2d\tau^2 + \frac{l^2}{r^2} dr^2 + r^2
d\phi^2.
\end{equation}
Furthermore,
\begin{equation*}
Y^+=Y^\tau+Y^\phi,\ Y^-=Y^\tau-Y^\phi,\ \Lambda=\hat
\gamma_{++}+\hat \gamma_{--},\ \Sigma=\hat
\gamma_{++}-\hat \gamma_{--},
\end{equation*}
and
$g_{AB}dx^Adx^B=-r^2d\tau^2+r^2d\phi^2+h_{AB}dx^Adx^B$
with 
\begin{equation}
\begin{gathered}
  h_{\tau\tau}\approx \Lambda(x) +O(r^{-2})\approx h_{\phi\phi},\quad
  h_{\tau\phi}\approx \Sigma(x)+O(r^{-2}),\\
  \d_\tau \Lambda=\d_\phi\Sigma,\quad
  \d_\tau\Sigma=\d_\phi\Lambda.\label{eq:sol}
\end{gathered}
\end{equation}

For the surface charges, we will use the covariant method of
\cite{Barnich:2002fk} whose results are summarized in appendix
\ref{app:surfacecharges}. One can prove the linearity of the charges 
(\ref{eq:applinearity}). We
will then just use equations (\ref{eq:applinearcharges}) and (\ref{eq:gravsurfacecharge})
with $n=3$ and the surface of integration
$\partial\Sigma$ is taken to be the circle at infinity. 
This gives
\begin{multline}
  \cQ_\xi[g-\bar g,\bar g]= \frac{l}{16 \pi G}\lim_{r \to \infty} \int_0^{2\pi}
  rd\phi \Big[\xi^r(\bar D^\tau h-\bar D_\sigma
  h^{\tau\sigma} +\bar D^r h^\tau_r-\bar D^\tau h^r_r)\\-\xi^\tau(\bar D^r h-\bar
  D_\sigma h^{r\sigma}-\bar D^r h^\tau_\tau+\bar D^\tau h^r_\tau)+\xi^\phi(\bar D^r
  h^{\tau}_\phi-\bar D^\tau h^{r}_\phi) +\frac{1}{2}h(\bar
  D^r\xi^\tau-\bar D^\tau\xi^r)\\+\half h^{r\sigma}(\bar D^\tau\xi_\sigma-\bar
  D_\sigma\xi^\tau)-\half  h^{\tau\sigma}(\bar D^r\xi_\sigma-\bar
  D_\sigma\xi^r)\Big].\label{eq:intch1}
\end{multline}
Using
\begin{equation}
\bar D^\tau h-\bar D_\sigma
  h^{\tau\sigma} +\bar D^r h^\tau_r-\bar D^\tau h^r_r
 =r^{-2}\bar \gamma^{AB}(\bar D_A h_{\tau B}-\bar D_\tau h_{AB})=r^{-4}(\d_\phi
 h_{\tau\phi}-\d_\tau h_{\phi\phi}),\nonumber
\end{equation} 
\begin{equation}
  \bar D^r h-\bar D_\sigma h^{r\sigma}-\bar D^r h^\tau_\tau+\bar D^\tau h^r_\tau=
 \frac{1}{l^2}(\d_r h_{\phi\phi}-\frac{1}{r} h_{\tau\tau}) ,\nonumber
\end{equation}
\begin{equation}
  \bar D^r h^\tau_\phi-\bar D^\tau h^r_\phi=
-\frac{1}{l^2}(\d_r h_{\tau \phi}-\frac{1}{r} h_{\tau \phi})\nonumber
\end{equation}
\begin{equation}
  \bar D^r\xi^\tau-\bar D^\tau\xi^r=\frac{2r}{l^2}
  Y^\tau-\frac{1}{r}\d_\tau\psi+O(r^{-3}),\nonumber
\end{equation}
\begin{equation}
\half h^{r\sigma}(\bar D^\tau\xi_\sigma-\bar
  D_\sigma\xi^\tau)-\half  h^{\tau\sigma}(\bar D^r\xi_\sigma-\bar
  D_\sigma\xi^r)=\frac{1}{rl^2}h_{\tau A} Y^A+\frac{1}{4r}
  h_{\tau}^A\d_A\psi+O(r^{-3}),\nonumber
\end{equation}
we find explicitly
\begin{multline}
\cQ_{\xi}[g-\bar g,\bar g]=\frac{1}{16\pi G l} \lim_{r \rightarrow
  \infty}\int_0^{2\pi}d \phi \, (2 Y^\tau h_{\phi\phi}+2
Y^{\phi}h_{\tau\phi})\\\approx 
\frac{1}{8\pi G l}\int_0^{2\pi}d \phi \, (
Y^\tau\Lambda+Y^\phi\Sigma)=
\frac{l}{8\pi G}\int_0^{2\pi}d \phi \, (Y^+\Xi_{++}+Y^-\Xi_{--}).
\end{multline}

The appendix
\ref{app:surfacecharges} suggests that these charges
form a representation of the conformal algebra, or more precisely,
that 
\begin{gather}
  \cQ_{\xi_1}[\cL_{\xi_2}
  g,\bar g] \approx \cQ_{[\xi_1,\xi_2]_M}[g-\bar g,\bar g]+
K_{\xi_1 , \xi_2},\label{bracket1}\\
  K_{\xi_1 , \xi_2} = \cQ_{\xi_1}[\cL_{\xi_2} \bar g,\bar g],\quad \left[
    \xi_1,\xi_2 \right ]_M = [\xi_1,\xi_2]+\delta^g_{\xi_1}\xi_2
  -\delta^g_{\xi_2}\xi_1.\label{bracket2}
\end{gather}
An asymptotic Killing vector of the form \eqref{eq:FGvect} depends on
the metric, $\xi=\xi[x,g]$ and
$\delta^g_{\xi_1}\xi_2=\xi_2[x,\cL_{\xi_1}g]$. {}From
$\delta^g_{\xi_1}\xi_2^\tau=O(r^{-4})$ and
$\delta^g_{\xi_1}\xi_2^\phi=O(r^{-4})$, it follows that only the Lie
bracket $[\xi_1,\xi_2]$ contributes on the right hand side,
$\cQ_{[\xi_1,\xi_2]_M}[g-\bar g,\bar g]=\cQ_{[\xi_1,\xi_2]}[g-\bar
g,\bar g]$. Using \eqref{eq:A12}, \eqref{eq:sol} and
integrations by parts in $\d_\phi$ and the conformal Killing equation
for $Y^A_1,Y^A_2$ to evaluate the left hand side, one indeed finds
\begin{equation}
\begin{gathered}
  \cQ_{\xi_1}[\cL_{\xi_2} g,\bar g]  \approx
  \cQ_{[\xi_1,\xi_2]}[g-\bar g,\bar g] + K_{\xi_1 , \xi_2} ,\\
  K_{\xi_1 , \xi_2} = \frac{l}{8\pi G } \int_0^{2\pi}d \phi \,
  (\d_\phi Y^\tau_1 \partial_\phi^2 Y^\phi_2 -
  \d_\phi Y^\tau_2 \partial_\phi^2 Y^\phi_1),
\end{gathered}
\end{equation}
where $K_{\xi_1 , \xi_2}$ is a form of the well-known Brown-Henneaux
central charge.

In addition, the covariant expression for the surface charges used
above coincides on-shell with those of the Hamiltonian formalism
\cite{Barnich:2002fk,Barnich:2008uq}. In this context, it follows from
the analysis of \cite{Brown:1986fk,Brown:1986zr,Regge:1974kx} that the
surface charge is, after the Fefferman-Graham gauge fixation, the
canonical generator of the conformal transformations in the Dirac
bracket.

\vspace{5mm}

We obtain the numerical value of the central charges $c^\pm$ by
evaluating $K_{\xi_1 , \xi_2}$ on the generators $l^\pm_m$:
\begin{equation}
K_{l^+_m , l^+_n} = i\frac{l}{8G} m^3, \quad K_{l^-_m , l^-_n} = i\frac{l}{8G} m^3, \quad K_{l^+_m , l^-_n} =0,
\end{equation}
which gives $c^+=c^-=\frac{3l}{2G}$. Remark that this extension is not
the one we gave in section \ref{sec:adsalgebra} but an
equivalent one. To transform it into the form (\ref{eq:ads3algabstract}), one has to
redefine $\cQ_{l^\pm_0}$ as $\cQ_{l^\pm_0} + \frac{c^\pm}{24}$. This change
corresponds to a change of background from the BTZ black hole with $M=J=0$ to $AdS_3$.

\chapter{${\rm \bf BMS_3/CFT_1}$ correspondence}
\label{chap:bms3}

In 3 dimensions, the asymptotic symmetry algebra of asymptotically
flat spacetimes at null infinity has been derived in \cite{Ashtekar:1997fk,Ashtekar:1997vn}. This algebra,
known as $\mathfrak{bms}_3$ is the semi-direct sum of the conformal
transformation of the circle with the abelian algebra of the function
on the circle. The algebra of the associated charges has been shown to
provide a centrally extended representation of $\mathfrak{bms}_3$ which has been
related by a contraction, similar to that from $\mathfrak{so}(2,2)$ to
$\mathfrak{iso}(2,1)$, to the centrally extended Poisson bracket
algebra of surface charges of asymptotically anti-de Sitter spacetimes
in $3$ dimensions \cite{Barnich:2007nx}.

In this chapter, we will apply the same technique we used on
asymptotically $AdS_3$ spacetimes to the case of asymptotically flat
spacetimes at null infinity. We start with asymptotically flat metrics
at null infinity in a form suggested from the analysis of
the 4 dimensional case by Sachs. This form is the analog of the
Fefferman-Graham form. In particular, the gauge is completely fixed in
the sense that all subleading orders of the asymptotic Killing vectors
are again completely determined. Equipped with thee modified Lie bracket
those vectors form a representation of the $\mathfrak{bms}_3$ algebra to all
orders in $r$. We then study the possible central extensions of the
$\mathfrak{bms}_3$. 

After that, we solve the flat equations of motion and compute the
representation of the $\mathfrak{bms}_3$ algebra on solution space. The last
section contains the derivation of the surface charge algebra using
the same techniques we used for the $AdS_3$ case.

\section{Asymptotically ${\rm \bf BMS_3}$ spacetimes}
We consider metrics of the form
\begin{equation}
  \label{eq:43d}
   ds^2=e^{2\beta}\frac{V}{r} du^2-2e^{2\beta}
   dudr+r^2e^{2\varphi}(d\phi-Udu)^2,
\end{equation}
or, equivalently, 
\begin{equation*}
g_{\mu\nu}=  \begin{pmatrix} e^{2\beta} V r^{-1}+r^2 e^{2\varphi} U^2 & 
  -e^{2\beta} & -r^2 e^{2\varphi} U \\
  -e^{2\beta} & 0 & 0 \\
-r^2 e^{2\varphi} U & 0 & r^2 e^{2\varphi}
\end{pmatrix}
\end{equation*}
with inverse given by 
\begin{equation*}
g^{\mu\nu}=  \begin{pmatrix} 0 & 
  -e^{-2\beta} & 0 \\
  -e^{-2\beta} & 
-\frac{V}{r} e^{-2\beta} & -U e^{-2\beta} \\
0 & -U e^{-2\beta} &  r^{-2} e^{-2\varphi}
\end{pmatrix}.
\end{equation*}
Here, $\varphi=\varphi(u,\phi)$. Three dimensional Minkowski space is
described by $\varphi=0=\beta=U$ and $V=-r$. The fall-off conditions
are taken as $\beta=O(r^{-1})$, $U=O(r^{-2})$ and $V=-2r^2\d_u
\varphi+O(r)$. In particular, $g_{uu}=-2r\d_u\varphi+O(1)$. 

\section{Asymptotic symmetries}
\label{sec:asympt-symm}

The transformations leaving this form of the metric invariant are
generated by vector fields such that
\begin{gather}
\cL_\xi g_{rr}=0=\cL_\xi g_{r\phi},\quad  \cL_{\xi}
g_{\phi\phi}=0,\label{eq:bms1}\\
\cL_\xi g_{ur}=O(r^{-1}),\quad \cL_\xi g_{u\phi}=O(1), \quad 
\cL_\xi g_{uu}=O(1).\label{eq:bms2}
\end{gather}
Equations \eqref{eq:bms1} imply that 
\begin{eqnarray}
  \left\{\begin{array}{l}\xi^u=f,\\
      \xi^\phi  =  Y+ I,\quad I = -e^{-2\varphi}
      \d_\phi f \,\int_r^\infty dr' \,
      {r'}^{-2} e^{2\beta}=-\frac{1}{r}e^{-2\varphi}\d_\phi f+O(r^{-2}),\\
      \xi^r =- r\big[\d_\phi \xi^\phi-\d_\phi f  U+
\xi^\phi\d_\phi \varphi +f \d_u
      \varphi \big], 
\end{array}\right.\label{eq:bms3vect}
\end{eqnarray}
with $\d_r f=0=\d_r Y$.  The first equation of \eqref{eq:bms2} then
implies that
\begin{equation}
\d_u f =f\d_u\varphi+ Y\d_\phi\varphi +\d_\phi Y\iff 
f=e^{\varphi}\big[T+\int_0^udu^\prime
e^{-\varphi}(\d_\phi Y+Y\d_\phi \varphi) \big],\label{eq:44}
\end{equation}
with $T=T(\phi)$, while the second requires $\d_u Y=0$ and thus
$Y=Y(\phi)$, which implies in turn that the last one is identically
satisfied. 

The Lie algebra $\mathfrak{bms}_3$ is determined by two arbitrary
functions $(Y,T)$ on the circle with bracket
$[(Y_1,T_1),(Y_2,T_2)]=(\hat Y,\hat T)$ determined by $\hat
Y=Y_1\d_\phi Y_2-(1\leftrightarrow 2)$ and $\hat T= Y_1\d_\phi T_2
+T_1\d_\phi Y_2-(1\leftrightarrow 2)$. Let $\cI=S^1\times {\mathbb R}$
with coordinates $u,\phi$ and consider the vector fields $\bar\xi=f
\dover{}{u}+ Y\dover{}{\phi}$ with $f$ as in \eqref{eq:44} and
$Y=Y(\phi)$. By direct computation, it follows that these vector
fields equipped with the commutator bracket provide a faithful
representation of $\mathfrak{bms}_3$. Furthermore : 

{\em The spacetime vectors \eqref{eq:bms3vect}, with $f$ given in 
\eqref{eq:44} and $Y=Y(\phi)$ form a faithful
representation of the $\mathfrak{bms}_3$ Lie algebra on an
asymptotically flat spacetime of the form \eqref{eq:43d} when equipped
with the modified bracket $[\cdot,\cdot]_M$. }

Indeed, for the $u$ component, there is no modification due to the
change in the metric and the result follows by direct computation.  As
a consequence, $\hat f=[\xi_1,\xi_2]^u_{(M)}$ corresponds to $f$ in
\eqref{eq:44} with $T$ replaced by $\hat T$ and $Y$ by $\hat Y$. By
evaluating $\cL_\xi g^{\mu\nu}$, we find
\begin{gather}
\left\{\begin{array}{l}
\delta_\xi\varphi=0,\\
  \delta_\xi\beta =\xi^\alpha\d_\alpha\beta+\half
  \big[\d_u f+\d_r\xi^r+\d_\phi f U],\\
\delta_\xi U =\xi^\alpha \partial_\alpha U + U \big[\d_u f+ \partial_\phi f
U-\d_\phi \xi^\phi\big]
- \partial_u \xi^\phi-\d_r \xi^\phi\frac{V}{r}
+\partial_\phi \xi^r\frac{e^{2(\beta-\varphi)}}{r^2}.
\end{array}
\right.
\end{gather}
It follows that 
\begin{gather}
\left\{\begin{array}{l}
  \delta^g_{\xi_1}\xi^\phi_2=-e^{-2\varphi}\d_\phi
  f_2\int^\infty_r\frac{dr^\prime}{{r^{\prime}}^2}
    e^{2\beta}2\delta_{\xi_1}\beta,\\
\delta^g_{\xi_1}\xi^r_2=-r\big[\d_\phi(\delta^g_{\xi_1}\xi^\phi_2)
+(\delta^g_{\xi_1}\xi^\phi_2)\d_\phi
\varphi-\d_\phi f_2 \delta_{\xi_1} U\big].
\end{array}
\right.
\end{gather}
We also have $\lim_{r\to\infty}[\xi_1,\xi_2]^\phi_M=\hat Y$. Using
$\d_r\xi^\phi=\frac{e^{2(\beta-\varphi)}}{r^2}\d_\phi f$,
\eqref{eq:44} and the expression of $\xi^r$ in \eqref{eq:bms3vect}, it
follows by a straightforward computation that
$\d_r([\xi_1,\xi_2]^\phi_M)=\frac{e^{2(\beta-\varphi)}}{r^2}\d_\phi\hat
f$, which gives the result for the $\phi$ component.  Finally, for the
$r$ component, we need the relation
\[\d_r(\frac{\xi^r}{r})=-\d_r\big(\d_\phi\xi^\phi
+\xi^\phi\d_\phi f\d_\phi\varphi-\d_\phi
fU\big).\] We then have
$\lim_{r\to\infty}\frac{[\xi_1,\xi_2]^r_M}{r}=-\d_\phi\hat Y-\hat Y
\d_\phi\varphi-\hat f\d_u\varphi$, while direct computation shows that
$\d_r(\frac{[\xi_1,\xi_2]^r_M}{r})=-\d_r\big(
\d_\phi([\xi_1,\xi_2]^\phi_M)
-\d_\phi([\xi_1,\xi_2]^u_M)U+[\xi_1,\xi_2]^\phi_M\d_\phi\varphi\big)$, 
which gives the result for the
$r$ component.

More generally, one can also consider the transformations that leave
the form of the metric \eqref{eq:43d} invariant up to a rescaling of
$\varphi$ by $\omega(u,\phi)$. They are generated by spacetime vectors
satisfying
\begin{gather}
\cL_\xi g_{rr}=0=\cL_\xi g_{r\phi},\quad  \cL_{\xi}
g_{\phi\phi}=2\omega g_{\phi\phi},\label{eq:bms1a}\\
\cL_\xi g_{ur}=O(r^{-1}),\quad \cL_\xi g_{u\phi}=O(1), \quad 
\cL_\xi g_{uu}=-2r\d_u\omega +O(1).\label{eq:bms2a}
\end{gather}
Equations \eqref{eq:bms1a}, \eqref{eq:bms2a} then imply that the
vectors are given by \eqref{eq:bms3vect}, \eqref{eq:44} with the
replacement $\d_\phi Y\to \d_\phi Y-\omega$.

With this replacement, the vector fields $\bar\xi=f \dover{}{u}+
Y\dover{}{\phi}$ on $\cI=S^1\times {\mathbb R}$ equipped with the
modified bracket provide a faithful representation of the extension of
$\mathfrak{bms}_3$ defined by elements $(Y,T,\omega)$ and bracket
$[(Y_1,T_1,\omega_1),(Y_2,T_2,\omega_2)]=(\hat Y,\hat T,\hat\omega)$,
with $\hat Y,\hat T$ as before and $\hat \omega=0$.

Indeed, the result is obvious for the $\phi$ component. Furthermore,
\[\delta^g_{\bar \xi_1} f_2= \omega_1 f_2+e^{\varphi}\int_0^udu^\prime
e^{-\varphi}[-\omega_1(\d_\phi Y_2-\omega_2+Y_2\d_\phi\varphi)+
Y_2\d_\phi \omega_1].\] At $u=0$, we get
$[\bar\xi_1,\bar\xi_2]^u_M|_{u=0}=e^\varphi|_{u=0} \hat T$, while
direct computation shows that $\d_u ([\bar\xi_1,\bar\xi_2]^u_M)=\hat
f\d_u\varphi +\hat Y\d_\phi\varphi+\d_\phi\hat Y$, as it
should. 

Following the same reasoning as before, one can then also show that
the spacetime vectors \eqref{eq:bms3vect} with the replacement
discussed above and equipped with the modified Lie bracket provide a
faithful representation of the extended $\mathfrak{bms}_3$ algebra. 

Indeed, we have $\xi=\bar\xi +I\dover{}{\phi}+\xi^r\dover{}{r}$.  Furthermore,
$[\xi_1,\xi_2]_M^u=[\bar\xi_1,\bar\xi_2]^u_M=\hat f$ as it should.  In
the extended case, the variations of $\beta,U$ are still given by
\eqref{eq:10}.  We then have
$\lim_{r\to\infty}[\xi_1,\xi_2]_M^\phi=\hat Y$ and find, after some
computations, $\d_r
([\xi_1,\xi_2]_M^\phi)=\frac{e^{2(\beta-\varphi)}}{r^2} \d_\phi\hat
f$, giving the result for the $\phi$ component. Finally, we have
$\lim_{r\to\infty}\frac{[\xi_1,\xi_2]^r_M}{r}=-\d_\phi\hat Y-\hat Y
\d_\phi\varphi-\hat f\d_u\varphi$, while direct computation shows that
$\d_r(\frac{[\xi_1,\xi_2]^r_M}{r})=-\d_r\big(
\d_\phi([\xi_1,\xi_2]^\phi_M)
-\d_\phi([\xi_1,\xi_2]^u_M)U+[\xi_1,\xi_2]^\phi_M\d_\phi\varphi
\big)$, which gives the result for the $r$ component.

\section{$\mathfrak{bms}_3$ algebra and central extensions}
\label{sec:bms3algebra}

The $\mathfrak{bms}_3$ algebra can also be viewed as the algebra of vector
fields on the circle acting on the functions of the circle and has
been originally derived in the context of a symmetry reduction of
four dimensional gravitational waves
\cite{Ashtekar:1997vn,Ashtekar:1997fk}.

More precisely, let $y=Y\dover{}{\phi} \in {\rm Vect}(S^1)$
be the vector fields on the circle and $T(d\phi)^{-\lambda}\in \cF_{\lambda}(S^1)$ tensor
densities of degree $\lambda$, which form a  module of the Lie
algebra ${\rm Vect} (S^1)$ for the  action 
\begin{equation}
\rho(y) t=
(YT^\prime-\lambda Y^\prime T)d\phi^{-\lambda}\,.
\end{equation}
The algebra $\mathfrak{bms}_3$ is the semi-direct sum of ${\rm Vect}
(S^1)$ with the abelian ideal $\cF_{1}(S^1)$, the bracket between
elements of ${\rm Vect}(S^1)$ and elements
$t=Td\phi^{-1}\in\cF_{1}(S^1)$ being induced by the module action,
$[y,t]=\rho(y)t$.

Consider the associated complexified Lie algebra and let
$z=e^{i\phi}$, $m,n,k ...\in \mathbb Z$. Expanding into modes,  
$y=a^n l_n $, $t=b^n t_n$,  where
\[
l_n=e^{in\phi}\dover{}{\phi}=iz^{n+1}\dover{}{z}
,\quad t_n= e^{in\phi}(d\phi)^{-1}=i z^{n+1}(dz)^{-1}\,,\] 
the commutation relations read explicitly
\begin{equation}
i[l_m,l_n]=(m-n) l_{m+n},\quad i[l_m, t_n]=(m-n) t_{m+n},\quad i[t_m, t_n]=0\,.
\label{eq:bms3abstractdef}
\end{equation}
The non-vanishing structure constants of $\mathfrak{bms}_3$ are thus
entirely determined by the structure constants $[l_m,l_n]=-i f^k_{mn}
l_k$, $f^k_{mn}=\delta^k_{m+n}(m-n)$ of the Witt subalgebra
$\mathfrak w$ defined by the linear span of the $l_n$.

Up to equivalence, the most general central extension of
$\mathfrak{bms}_3$ is given by
\begin{eqnarray} 
\left\{\begin{array}{l}
i[l_m,l_n]=(m-n)
  l_{m+n}+\frac{c_1}{12}m(m+1)(m-1)\delta^0_{m+n}, \cr i[l_m,
t_n]=(m-n) t_{m+n}+\frac{c_2}{12}m(m+1)(m-1)\delta^0_{m+n},\cr i [t_m,
t_n]=0\,.
\label{eq:bms3algabstract}
\end{array}\right.
\end{eqnarray}
The proof follows again by generalizing the one for the Witt algebra
$\mathfrak w$, see
e.g~\cite{D.Fuks:1986uq,L.Brink:1988kx,Azcarraga:1995vn}:
\begin{proof}
In order to get rid of the overall $i$ in (\ref{eq:bms3abstractdef}), we redefine
the generators as $l^\prime_m=il_m$. Inequivalent central extensions
of $\mathfrak{bms}_3$ are classified by the cohomology space
$H^2(\mathfrak{bms}_3)$. More explicitly, the Chevally-Eilenberg
differential is given by
\begin{equation}
  \gamma=-\half C^mC^{k-m}(2m-k)\dover{}{C^k}-C^m\xi^{k-m}(2m-k) 
\dover{}{\xi^k}\,,
\end{equation}
in the space $\Lambda(C,\xi)$ of polynomials in the anticommuting
``ghost'' variables $C^m,\xi^m$. The grading is given by the
eigenvalues of the ghost number operator,
$N_{C,\xi}=C^m\dover{}{C^m}+\xi^m\dover{}{\xi^m}$, the differential
$\gamma$ being homogeneous of degree $1$ and
$H^2(\mathfrak{bms}_3)\simeq H^2(\gamma,\Lambda(C,\xi))$.
Furthermore, when counting only the ghosts $\xi^m$ associated with
supertranslations, $N_\xi=\xi^m\dover{}{\xi^m}$, the differential
$\gamma$ is homogeneous of degree $0$, so that the cohomology
decomposes into components of definite $N_\xi$ degree. The cocycle
condition then becomes
\begin{eqnarray}
\gamma(\omega^0_{m,n}C^mC^n)=0,\quad 
\gamma(\omega^1_{m,n}C^m\xi^n)=0,\quad 
\gamma(\omega^2_{m,n}\xi^m\xi^n)=0, 
\end{eqnarray}
with $\omega^0_{m,n}=-\omega^0_{n,m}$ and
$\omega^2_{m,n}=-\omega^2_{n,m}$. The coboundary condition reads
\begin{eqnarray}
  \omega^0_{m,n}C^mC^n=\gamma(\eta^0_mC^m),\quad
  \omega^1_{m,n}C^m\xi^n=\gamma(\eta^1_m\xi^m). 
\end{eqnarray}

We have $\{\dover{}{C^0},\gamma\}=\cN_{C,\xi}$ with
$\cN_{C,\xi}=m(C^m\dover{}{C^m}+\xi^m\dover{}{\xi^m})$. It follows
that all cocycles of $\cN_{C,\xi}$ degree different from $0$ are
coboundaries, $\gamma \omega_N=0$, $\cN_{C,\xi}\omega_N=N\omega_N$,
$N\neq 0$ implies that
$\omega_N=\gamma(\frac{1}{N}\dover{}{C^0}\omega_N)$. Without loss of
generality we can thus assume that
$\omega^0_{m,n}C^mC^n=\omega^0_mC^mC^{-m}$ with
$\omega^0_m=-\omega^0_{-m}$ and in particular $\omega^0_0=0$;
$\omega^1_{m,n}C^m\xi^n=\omega^1_mC^m\xi^{-m}$;
$\omega^2_{m,n}\xi^m\xi^n=\omega^2_m\xi^m\xi^{-m}$ with
$\omega^2_m=-\omega^2_{-m}$ and in particular $\omega^2_0=0$. By
applying $\dover{}{C^0}$ to the coboundary condition
$\omega^0_mC^mC^{-m}=\gamma(\eta^0_mC^m)$ we find that $0=m\eta^0_m
C^m$. The coboundary condition then gives
$\omega^0_mC^mC^{-m}=\gamma(\eta^0_0 C^0)=-m \eta^0_0 C^m C^{-m}$. By
adjusting $\eta^0_0$, we can thus assume without loss of generality
that $\omega^0_1=0$ and that the coboundary condition has been
entirely used. In the same way
$\omega^1_mC^m\xi^{-m}=\gamma(\eta^1_m\xi^m)$ implies first that
$\eta^1_m=0$ for $m\neq 0$ and then that one can assume that
$\omega^1_1=0$, with no coboundary condition left.

Taking into account the anticommuting nature of the ghosts, the
cocycle conditions become explicitly,
$\omega^0_{m}(2n+m)-\omega^0_{n}(2m+n)+\omega^0_{m+n}(n-m)=0$,
$\omega^1_{m}(2n-m)+\omega^1_{n}(n-2m)+\omega^1_{m-n}(n+m)=0$,
$\omega^2_{m}(2n+m)+\omega^2_{m+n}(n-m)=0$. Putting $m=0$ in the last
relation gives $\omega^2_m=0$, for $m\neq 0$ and thus for all $m$,
putting $m=1=n$ in the second relation gives $\omega^1_0=0$, while
$m=0$ gives $\omega^1_n n=-\omega^1_{-n}n$ and thus that
$\omega^1_n=-\omega^1_{-n}$ for all $n$. Changing $m$ to $-m$ and
using this symmetry property, the cocycle conditions for $\omega^0_m$
and $\omega^1_m$ give the same constraints. Putting $m=1$, one finds
the recurrence relation
$\omega^{0,1}_{n+1}=\frac{n+2}{n-1}\omega^{0,1}_n$, which gives a
unique solution in terms of $\omega^{0,1}_2$. The result follows by
setting $c_{1,2}=\half \omega^{0,1}_2$ and checking that the
constructed solution does indeed satisfy the cocycle condition.

\end{proof}

\vspace{5mm}

The above result can also be obtained by a ``flat'' limit ($l\to
\infty$) of the $AdS_3$ result of the previous chapter. First, we
write the the 2D conformal charge algebra
(\ref{eq:ads3algabstract}) in term of the new
generators $\tilde l_m=l_m^+-l_{-m}^-$, $\tilde t_m=\frac{1}{l}(l_m^++l_{-m}^-)$:
\begin{eqnarray} 
\left\{\begin{array}{l}
i[\tilde l_m,\tilde l_n]=(m-n)
  \tilde l_{m+n}+\frac{c^+-c^-}{12}m(m+1)(m-1)\delta^0_{m+n}, \cr i[\tilde l_m,
\tilde t_n]=(m-n)\tilde  t_{m+n}+\frac{c^++c^-}{12l}m(m+1)(m-1)\delta^0_{m+n},\cr i [\tilde t_m,
\tilde t_n]=\frac{1}{l^2} \left( (m-n) \tilde l_{m+n} + \frac{c^+-c^-}{12} m
    \left( m^2-1\right)\right)\,.
\end{array}\right.
\end{eqnarray}
Second, we take the limit $l \to \infty$: the new generators reduces
to the generators of $\mathfrak{bms}_3$ if $u= l
\tau$ and the algebra goes to (\ref{eq:bms3algabstract}) with $c_1=c^+-c^-$ and $c_2=\frac{c^++c^-}{l}$.

\section{Solution space}
\label{sec:solution-space}

Following \cite{Tamburino:1966ys}, the equations of motion are organized
in terms of the Einstein tensor $G_{\alpha\beta}=R_{\alpha\beta}-\half
g_{\alpha\beta} R$ as
\begin{gather}
  G_{r\alpha}=0,  \qquad
G_{AB}-\half g_{AB} g^{CD}G_{CD}=0,  \label{eq:571b}\\
G_{uu}=0=G_{uA},  \label{eq:571c}\\
g^{CD}G_{CD}=0,  \label{eq:571d} 
\end{gather}
and the Bianchi identities are written as 
\begin{equation}
\label{eq:561}
 0= 2\sqrt{-g}{G_\alpha^\beta}_{;\beta}=2(\sqrt{-g}G_\alpha^\beta)_{,\beta}+
\sqrt{-g}G_{\beta\gamma}{g^{\beta\gamma}}_{,\alpha}. 
\end{equation}

For a metric of the form \eqref{eq:43d}, we have 
\begin{equation*}
\begin{gathered}
\Gamma^\lambda_{rr}=\delta^\lambda_r2\beta_{,r},\quad
\Gamma^{u}_{\lambda r}=0,\quad
\Gamma^r_{\phi r}=\beta_{,\phi }+ n ,\quad \Gamma^\phi _{\phi r}=\frac{1}{r} ,\\
\Gamma^u_{\phi \phi }= e^{-2\beta+2 \varphi} r,\quad
\Gamma^\phi _{\phi \phi}= e^{-2\beta+2 \varphi} U r+\d_\phi\varphi, \\ 
\Gamma^\phi _{ur}=
-\frac{1}{r} U +r^{-2}e^{2\beta- 2 \varphi}(\d_\phi \beta-n ),\quad 
\Gamma^u_{u\phi }=\beta_{,\phi }-n - e^{-2\beta+2 \varphi} rU ,\\
\Gamma^r_{ur}=-\half (\d_r+2\beta_{,r})\frac{V}{r}-(\beta_{,\phi }+n )
U ,\quad 
\Gamma^\phi _{\phi u}=\d_u \varphi   +U (\beta_{,\phi }-n )- e^{-2\beta+2\varphi}rU^2,\\
\Gamma^{u}_{uu}=2\beta_{,u}+\half(\d_r+2\beta_{,r})
\frac{V}{r}+2U  n +e^{-2\beta+2\varphi}r U^2 ,\\
\Gamma^r_{\phi \phi }= e^{-2\beta+2 \varphi}(r^2\d_\phi  U +r^2
\d_\phi \varphi  U 
+r^2\d_u \varphi+V ),\\
\Gamma^r_{u\phi }=-\frac{V_{,\phi }}{2r}-n \frac{V}{r}-
e^{-2\beta+2\varphi}U [r^2\d_\phi  U +r^2 \d_\phi \varphi  U -r^2\d_u \varphi+V ],\\
\Gamma^\phi _{uu}=2U \beta_{,u}+\half
U (\d_r+2\beta_{,r})\frac{V}{r}+2U^2  n +re^{-2\beta+2 \varphi}U^3
-U_{,u}-2\d_u \varphi  U\\-\half
e^{2\beta-2 \varphi}r^{-2}(\d_\phi +2\d_\phi \beta)\frac{V}{r}-U (\d_\phi +\d_\phi \varphi) U,\\
\Gamma^r _{uu}=- \half (\d_u - 2 \d_u \beta)\frac{V}{r} + \half
\frac{V}{r}(\d_r + 2 \d_r \beta) \frac{V}{r} + V r e^{2 \varphi - 2
  \beta} U (\d_r + \frac{1}{r})U + r^2 e^{-2\beta + 2 \varphi} U^2
\d_u \varphi \\
+ \half U (\d_\phi + 2 \d_\phi \beta)\frac{V}{r} 
+ \half r^2 e^{-2 \beta + 2 \varphi} U (\d_\phi + 2 \d_\phi \varphi) U,
\end{gathered}
\end{equation*}
where the notation $n=\half r^2 e^{2\varphi-2\beta} \d_r U$ has been
used.

We start with $G_{rr}=0$. From
\begin{equation*}
G_{rr}=R_{rr}=\frac{2}{r} \d_r \beta,
\end{equation*}
we find $\beta=0$ by taking the fall-off conditions into
account. From
\begin{equation*}
  G_{r\phi}=R_{r\phi}=(\d_r +\frac{1}{r}) n + \frac{1}{r}\d_\phi \beta,
\end{equation*}
we then obtain, by using the previous result, that $n=\frac{N}{r}$
where the integration constant $N=N(u,\phi)$.  Using the definition of
$n$, we get $U=-r^{-2} e^{-2\varphi} N$.  From
$G_{ru}=-g^{\phi\phi}R_{\phi\phi}$ and
\begin{multline*}
R_{\phi\phi}= e^{-2\beta+2\varphi} 
\left( (\d_r - \frac{1}{r})V + 2r \d_u \varphi + 2r(\d_\phi+\d_\phi \varphi) U \right)\\
-2 \d_\phi^2 \beta + 2 \d_\phi \beta \d_\phi \varphi 
- 2 (\d_\phi \beta - n)^2 -2 \d_\phi\varphi n + 2 \d_\phi n\\
  = e^{2\varphi} \left( (\d_r - \frac{1}{r})V + 2r \d_u \varphi\right)- 2 r^{-2 } N^2,
\end{multline*}
we get $\d_r (\frac{V}{r}) = - 2 \d_u \varphi + 2 r^{-3} e^{-2\varphi} N^2$
and then 
\begin{equation*}
V=-2r^2 \d_u \varphi + r M - r^{-1} e^{-2\varphi}N^2,
\end{equation*}
for an additional integration constant $M=M(u,\phi)$. 

When $G_{rr}=G_{r\phi}=G_{ru}=0$, the Bianchi identity \eqref{eq:561}
for $\alpha=r$ implies that $G_{\phi\phi}=0$. This implies in turn
that $R=0$. The Bianchi identity for $\alpha=\phi$ then gives $\d_r (r
G_{u\phi})=0$. When $G_{u\phi}=0$, the Bianchi identity for $\alpha=u$
gives $\d_r (r G_{uu})=0$. To solve the remaining equations of motion,
there thus remain only the constraints
\begin{equation*}
  \lim_{r \rightarrow \infty} r R_{u\phi} =0,\qquad \qquad 
\lim_{r \rightarrow \infty} r R_{uu} =0.
\end{equation*}
to be fulfilled. From 
\begin{equation*}
R_{u\phi} = \frac{1}{r} \left( -(\d_u + \d_u \varphi) N + 
\half \d_\phi M\right) + O(r^{-2}), 
\end{equation*}
we get 
\begin{equation*}
N=e^{-\varphi}\, \Xi(\phi)+ e^{-\varphi} \int^u_{u_0} d\tilde u \, e^\varphi \half \d_\phi M.
\end{equation*}
while 
\begin{equation*}
R_{uu} = \frac{1}{r} \left( -\half(\d_u + 2\d_u \varphi) M + 
e^{-2\varphi}\d_u(\d_\phi^2 \varphi-\half (\d_\phi \varphi)^2)\right) + O(r^{-2})
\end{equation*}
implies
\begin{equation*}
M = e^{-2\varphi}[\Theta (\phi) -(\d_\phi
  \varphi)^2+2\d_\phi^2 \varphi]. 
\end{equation*}
We thus have shown:

{\em For metrics of the form \eqref{eq:43d} with $\lim_{r \rightarrow
    \infty}\beta=0$, the general solution to the equations of motions
  is given by
\begin{equation}
\begin{gathered}
  \label{eq:44b}
  ds^2=s_{uu}du^2-2dudr+2s_{u\phi}dud\phi+r^2e^{2\varphi}d\phi^2,\\
s_{uu}=
e^{-2\varphi}\big[\Theta-(\d_\phi\varphi)^2+2\d^2_\phi\varphi\big]-2r\d_u
\varphi,\\
s_{u\phi}=e^{-\varphi}\Big[\Xi+\int^u_{u_0} d\tilde u
e^{-\varphi}\big[\half\d_\phi\Theta-\d_\phi\varphi[\Theta-
(\d_\phi\varphi)^2+3\d^2_\phi\varphi]+\d^3_\phi
\varphi\big]\Big],
\end{gathered}
\end{equation}
where $\Theta=\Theta(\phi)$ and $\Xi=\Xi(\phi)$ are arbitrary
functions. 
}

\section{Conformal properties of solution space}
\label{sec:conf-transf-solutbis}

By computing $\cL_\xi s_{\mu\nu}$, we find that the asymptotic
symmetry algebra $\mathfrak{bms}_3$ acts on solution space according
to 
\begin{equation}
\begin{split}
  -\delta\, \Theta & = Y\d_\phi \Theta+2 \d_\phi Y \Theta -
  2 \partial^3_\phi Y, \label{eq:B11}\\
  -\delta\, \Xi & = Y \partial_\phi\Xi+ 2 \d_\phi Y \Xi  +\half
  T\d_\phi\Theta+ \d_\phi T \Theta-\d^3_\phi
  T, \\
 -\delta\,\varphi &=0. 
\end{split}
\end{equation}
For the extended algebra, the first two relations are unchanged, while
$-\delta\varphi=\omega$. 

\section{Centrally extended surface charge algebra}
\label{sec:centr-extend-charge}

Let us again take $ \varphi=0$ in this section.  For the surface
charges computed at the circle at infinity $u=cte, r=cte\to\infty$,
one starts again from \eqref{eq:appasymptcharge}. In this case, on can
again prove linearity of the charges (\ref{eq:applinearity}) and simplify the expression
for the charges to (\ref{eq:applinearcharges}). The background line element,
which is used to raise and lower indices, is
\begin{equation}
  \label{eq:20a}
  d\bar s^2=-du^2-2dudr+r^2 d\phi^2, 
\end{equation}
This gives
\begin{multline}
  \cQ_\xi[g-\bar g,\bar g]= \frac{1}{16 \pi G}\lim_{r \to \infty} \int_0^{2\pi}
  rd\phi \Big[\xi^r(\bar D^u h-\bar D_\sigma
  h^{u\sigma} +\bar D^r h^u_r-\bar D^u h^r_r)\\-\xi^u(\bar D^r h-\bar
  D_\sigma h^{r\sigma}-\bar D^r h^u_u+\bar D^u h^r_u)+\xi^\phi(\bar D^r
  h^{u}_\phi-\bar D^u h^{r}_\phi) +\frac{1}{2}h(\bar
  D^r\xi^u-\bar D^u\xi^r)\\+\half h^{r\sigma}(\bar D^u\xi_\sigma-\bar
  D_\sigma\xi^u)-\half  h^{u\sigma}(\bar D^r\xi_\sigma-\bar
  D_\sigma\xi^r)\Big].\label{eq:intch2}
\end{multline}
Using
\begin{equation}
\bar D^u h-\bar D_\sigma
  h^{u\sigma} +\bar D^r h^u_r-\bar D^u h^r_r
=-\frac{1}{r} h_{ur},\nonumber
\end{equation} 
\begin{equation}
  \bar D^r h-\bar D_\sigma h^{r\sigma}-\bar D^r h^u_u+\bar D^u h^r_u
 =-\frac{1}{r}h_{uu} +\frac{2}{r} h_{ur}+ \frac{1}{r^{2}}\d_\phi h_{u\phi},\nonumber
\end{equation}
\begin{equation}
  \bar D^r h^u_\phi-\bar D^u h^r_\phi=(\frac{1}{r}-\d_r) h_{u\phi},\nonumber
\end{equation}
\begin{equation}
  \bar D^r\xi^u-\bar D^u\xi^r=-2\d_\phi Y+O(1),\nonumber
\end{equation}
\begin{equation}
\half h^{r\sigma}(\bar D^u\xi_\sigma-\bar
  D_\sigma\xi^u)-\half  h^{u\sigma}(\bar D^r\xi_\sigma-\bar
  D_\sigma\xi^r)=-2\d_\phi Y h_{ur}+\frac{1}{r}h_{u\phi}Y+O(r^{-2}), \nonumber
\end{equation}
we get
\begin{multline}
   \cQ_\xi[g-\bar g,\bar g]=\frac{1}{16 \pi G}\lim_{r \to \infty}
   \int_0^{2\pi}d\phi\, 
\Big[ (r h_{ur}+u h_{uu})\d_\phi Y+ h_{uu}T +2 h_{u\phi}
Y\Big]\\
 \approx\frac{1}{16 \pi G}
   \int_0^{2\pi}d\phi\, \left((\Theta+1) T +2\Xi Y\right).  \label{eq:36b}
\end{multline}
The surface charges thus provide the inner product that turns the
linear spaces of solutions and asymptotic symmetries into dual
spaces. It follows that the solutions that we have constructed are all
non-trivial as different solutions carry different charges. 

{}From $\delta^g_{\xi_1}\xi_2^u=0$,
$\delta^g_{\xi_1}\xi_2^u=O(r^{-2})$ and
$\delta^g_{\xi_1}\xi_2^r=O(r^{-1})$, it follows that only the Lie
bracket $[\xi_1,\xi_2]$ contributes on the right hand side of
\eqref{bracket1}-\eqref{bracket2}, $\cQ_{[\xi_1,\xi_2]_M}[g-\bar
g,\bar g]=\cQ_{[\xi_1,\xi_2]}[g-\bar g,\bar g]$. Using \eqref{eq:36b},
\eqref{eq:B11} and integrations by parts in $\d_\phi$ to evaluate the
left hand side, one indeed finds
\begin{align}
  \cQ_{\xi_1}[\cL_{\xi_2} g,\bar g] & \approx
  \cQ_{[\xi_1,\xi_2]}[g-\bar g,\bar g] + K_{\xi_1 , \xi_2} ,\\
  K_{\xi_1 , \xi_2} &= \frac{1}{8\pi G } \int_0^{2\pi}d \phi \,
  \Big[\d_\phi Y_1(T_2+\d^2_\phi T_2)-\d_\phi Y_2(T_1+\d^2_\phi T_1)\Big].
\end{align}
where $K_{\xi_1 , \xi_2}$ is the central charge.

The associated numerical value for the classical central charge is obtained by
evaluating $K_{\xi_1 , \xi_2}$ for the generators $l_n$ and $t_n$. We obtain
\begin{equation}
c_1=0,\quad c_2=\frac{3}{G}\,.
\end{equation}
which is the result originally derived in
\cite{Barnich:2007nx}. We saw at the end of section
\ref{sec:bms3algebra} that the general form of the $\mathfrak{bms}_3$
algebra can be obtained by a ``flat'' limit coming from the 2D
conformal algebra. This flat limit applied to the anti-de-Sitter case
studied in the previous chapter ($c^\pm=\frac{3l}{2G}$) leads to the same value for the
$\mathfrak{bms}_3$ central charges.

An important question is a complete understanding of the physically
relevant representations of $\mathfrak{bms}_3$. Note that in the
present gravitational context, the Hamiltonian is associated with
$t_0$, so that one is especially interested in representations with a
lowest eigenvalue of $t_0$. This question should be tractable, given
all that is known on both the Poincar\'e and Virasoro subalgebras of
$\mathfrak{bms}_3$.

It turns out that $\mathfrak{bms}_3$ is isomorphic to the Galilean
conformal algebra in $2$ dimensions $\mathfrak{gca}_2$
\cite{Bagchi:2010zr}. In a different context, a class of
non-unitary representations of $\mathfrak{gca}_2$ have been studied in
some details\cite{Bagchi:2010ly}.

\chapter{${\rm \bf BMS_4/CFT_2}$ correspondence} 
\label{chap:bms4}

This last chapter is devoted to the study of asymptotically flat
spacetimes at null infinity  in 4 dimensions. As we said in the
introduction, this is the first example where the asymptotic symmetry
group is 
enhanced with respect to the isometry group of the background metric
and becomes infinite-dimensional
\cite{Bondi:1962zr,Sachs:1962ly,Sachs:1962fk}.  In this case, the induced
metric is $2$-dimensional because the boundary is a null surface. The
asymptotic symmetry group of non singular transformations is the
well-known Bondi-Metzner-Sachs group. It consists of the semi-direct
product of the group of globally defined conformal transformations of
the unit $2$-sphere, which is isomorphic to the orthochronous
homogeneous Lorentz group, times the infinite-dimensional abelian
normal subgroup of so-called supertranslations.

We start by an analysis of the symmetry group of asymptotically
flat spacetimes at null infinity in the gauge fixed form introduced by
Sachs. The asymptotic Killing vectors form a representation of the
group mentioned above through the modified Lie bracket introduced in
chapter \ref{chap:ads3}. This is when one focus on globally defined
transformations. There is a further enhancement when one focuses on infinitesimal
transformations and does not require the associated finite
transformations to be globally well-defined. The symmetry algebra is
then the semi-direct sum of the infinitesimal local conformal
transformations of the $2$-sphere with the abelian ideal of
supertranslations, and now both factors are infinite-dimensional.

The rest of this chapter is an attempt to apply the analysis of the
two previous cases ($AdS_3$ and $BMS_3$) to this new $BMS_4$. We first
show that there is no possible central extension. We then solve the
equations of motion and show how the group is represented on solution
space. The surface charges are non-integrable and we use the proposition
of Wald and Zoupas to deal with the non-integrable
term. Unfortunately, this spoils the usual covariant definition of the
algebra.

The last section is a proposition for a new bracket for the charges. It
leads to an algebra of charges that form a representation of the
asymptotic symmetries up to a general field-dependent extension.

\section{Asymptotically flat $4$-d spacetimes at null infinity}
\label{sec:asympt-flat-4}

Let $x^0=u,x^1=r,x^2=\theta,x^3=\phi$ and $A,B,\dots=2,3$. Following
mostly Sachs \cite{Sachs:1962fk} up to notation, the metric $g_{\mu\nu}$ of an
asymptotically flat spacetime can be written in the form
\begin{equation}
  \label{eq:bms4sachsmetric}
  ds^2=e^{2\beta}\frac{V}{r} du^2-2e^{2\beta}dudr+
g_{AB}(dx^A-U^Adu)(dx^B-U^Bdu)
\end{equation}
where $\beta,V,U^A,g_{AB} (\text{det}\, g_{AB})^{-1/2}$ are $6$
functions of the coordinates, with $\text{det}\, g_{AB}=r^4
b(u,\theta,\phi)$ for some fixed function $b(u,\theta,\phi)$.  The
inverse to the metric
\begin{equation*}
g_{\mu\nu}=  \begin{pmatrix} e^{2\beta} \frac{V}{r}+g_{CD}U^CU^D & 
  -e^{2\beta} & -g_{BC} U^C \\
  -e^{2\beta} & 0 & 0 \\
-g_{AC} U^C & 0 & g_{AB} 
\end{pmatrix}
\end{equation*}
is given by 
\begin{equation*}
g^{\mu\nu}=  \begin{pmatrix} 0 & 
  -e^{-2\beta} & 0 \\
  -e^{-2\beta} & 
-\frac{V}{r} e^{-2\beta} & -U^B e^{-2\beta} \\
0 & -U^A e^{-2\beta} & g^{AB}
\end{pmatrix}.
\end{equation*}
The fall-off conditions are 
\begin{equation}
  \label{eq:3a}
g_{AB}dx^Adx^B=r^2\bar \gamma_{AB}dx^Adx^B+O(r),
\end{equation}
Sachs chooses $\bar\gamma_{AB}={}_0\gamma_{AB}$ to be the metric on
the unit $2$ sphere, ${}_0\gamma_{AB}dx^Adx^B=d\theta^2+\sin^2\theta
d\phi^2$ and $b=\sin^2\theta$, but the geometrical analysis by Penrose
\cite{Penrose:1963uq} suggests to be somewhat more general and use
a metric that is conformal to the latter, for instance,
$\bar\gamma_{AB}dx^Adx^B=e^{2\varphi}(d\theta^2+\sin^2\theta
d\phi^2)$, with $\varphi=\varphi(u,x^A)$. We will choose the
determinant condition more generally to be
$b(u,x^A)=\text{det}\bar\gamma_{AB}$. In particular, in the above
example, on which we now focus, $b=e^{4\varphi}\sin^2\theta$.

The rest of the fall-off conditions are given by
\begin{equation}
\label{eq:BMS4falloff}
\beta=O(r^{-2}),\quad V/r=-2r\dot\varphi-e^{-2\varphi}+
  \bar\Delta\varphi+O(r^{-1}), 
\quad U^A=O(r^{-2}).
\end{equation}
Here, a dot denotes the derivative with respect to $u$, $\bar D_A$
denotes the covariant derivative with respect to $\bar
\gamma_{AB}$. We denote by $\bar \Gamma^A_{BC}$ the associated
Christoffel symbols and by $\bar\Delta$ the associated
Laplacian. Similarly, ${}_0 D_A,{}_0\Gamma^A_{BC},{}_0\Delta$
correspond to ${}_0\gamma_{AB}$.  Note that $g^{AB}g_{BC}=\delta^A_C$
and that the condition on the determinant implies
\begin{gather}
  \label{eq:BMS4detcond}
\left\{\begin{array}{l}
  g^{AB}\d_rg_{AB}=4 r^{-1},\\
g^{AB}\d_u g_{AB}=\bar\gamma^{AB}\d_u\bar\gamma_{AB}=4\dot\varphi, \\
g^{AB}\d_C g_{AB}=\bar
\gamma ^{AB}{}\d_C\bar
\gamma_{AB}={}_0\gamma^{AB}\d_C\,{}_0\gamma_{AB}+4\d_C\varphi,
\end{array}\right.
\end{gather}
where $\bar\gamma^{AB}\bar\gamma_{BC}=\delta^A_C={}_0\gamma^{AB}{}_0
\gamma_{BC}$.  In terms of the metric and its inverse, the fall-off
conditions read
\begin{gather*}
\left\{\begin{array}{l}
g_{uu}=-2r\dot\varphi-e^{-2\varphi}+\bar\Delta\varphi+O(r^{-1}),\ 
g_{ur}=-1+O(r^{-2}),\
g_{uA}=O(1), \\
g_{rr}=0=g_{rA},\quad 
g_{AB}=r^2\bar\gamma_{AB}+O(r),\\
g^{ur}=-1+O(r^{-2}),\quad g^{uu}=0=g^{uA},\\
g^{rr}=2r\dot\varphi+e^{-2\varphi}-\bar\Delta\varphi+O(r^{-1}),\ g^{rA}=O(r^{-2}),\
g^{AB}=r^{-2}\bar\gamma^{AB}+O(r^{-3}).
\end{array}\right.
\end{gather*}

\section{Asymptotic symmetries}

With the choice $\varphi=0$, Sachs studies the vector fields that
leave invariant this form of the metric with these fall-off
conditions. More precisely, he finds the general
solution to the equations
\begin{gather}
  \label{eq:bms4symasympt1}
  \cL_\xi g_{rr}=0,\quad \cL_\xi g_{rA}=0,\quad \cL_\xi g_{AB}
  g^{AB}=0,\\
\cL_\xi
g_{ur}=O(r^{-2}),\quad \cL_\xi g_{uA}=O(1),\quad  \cL_\xi
g_{AB}=O(r),\quad 
\cL_\xi g_{uu}=O(r^{-1}). \label{eq:bms4symasympt2}
\end{gather}
For arbitrary $\varphi$, the general solution to \eqref{eq:bms4symasympt1} is given by 
\begin{gather}
  \label{eq:bms4vectasymp}
\left\{\begin{array}{l}
  \xi^u=f,\\
\xi^A=Y^A+I^A, \quad  I^A=- f_{,B} \int_r^\infty dr^\prime(
e^{2\beta} g^{AB}),\\
\xi^r=-\half r (\psi+\chi-f_{,B}U^B+2f\d_u\varphi),
\end{array}\right.
\end{gather}
with $\d_r f=0=\d_r Y^A$ and where $ \psi= \bar D_A Y^A$, $\chi= \bar
D_A I^A$.  This gives the expansions
\begin{gather}
  \label{eq:bms4expvectasymp}
  \left\{\begin{array}{l} \xi^u=f,\quad
      \xi^A=Y^A-r^{-1} f_{,B}\bar\gamma^{BA}+O(r^{-2}),\\
      \xi^r=-r(f\dot\varphi+\frac{1}{2}\psi)+\half \bar\Delta f +O(r^{-1}).
\end{array}\right.
\end{gather}
The first equation of \eqref{eq:bms4symasympt2} then implies that
\begin{equation}
\dot f =f\dot\varphi+\half \psi\iff f=e^{\varphi}\big[T+
 \half\int_0^udu^\prime
e^{-\varphi}\psi\big],\label{eq:bms4eqf}
\end{equation}
with $T=T(\theta,\phi)$, while the second requires $\d_u Y^A=0$ and
thus $Y^A=Y^A(\theta,\phi)$. The third one implies that $Y^A$ is a
conformal Killing vector of $\bar\gamma_{AB}$ and thus also of
${}_0\gamma_{AB}$. The last equation of \eqref{eq:bms4symasympt2} is then
satisfied without additional conditions. For the computation, one uses
that $\bar\Delta =e^{-2\varphi}{}_0\Delta $ and $\psi={}_0 \psi +2 Y^A\d_A
\varphi$, with ${}_0\psi={}_0D_A Y^A$ and the following properties
of conformal Killing vectors $Y^A$ on the unit $2$-sphere,
\begin{equation}
  \label{eq:BMS4confvect}
  2{}_0 D_B{}_0 D_C Y_A={}_0 \gamma_{CA}{}_0 D_B{}_0\psi+{}_0 \gamma_{AB}
{}_0 D_C{}_0\psi-{}_0 \gamma_{BC}{}_0 D_A{}_0 \psi+2
  Y_C {}_0 \gamma_{BA}-2 Y_A {}_0 \gamma_{BC},  
\end{equation}
where the indices on $Y^A$ are lowered with $\bar\gamma_{AB}$.  This
implies in particular ${}_0 \Delta Y^A=-Y^A$ and also that ${}_0
\Delta {}_0\psi =-2{}_0\psi$.

\vspace{5mm}

By definition, the algebra $\mathfrak{bms}_4$ is the semi-direct sum
of the Lie algebra of conformal Killing vectors $Y^A\dover{}{x^A}$ of the unit
$2$-sphere with the abelian ideal consisting of functions
$T(x^A)$ on the $2$-sphere. The bracket is defined through 
\begin{equation}
\label{eq:32}
\begin{split}
(\hat Y,\hat T)&=[(Y_1,T_1),(Y_2,T_2)], \\
\hat Y^A&= Y^B_1\d_B
Y^A_2-Y^B_1\d_B Y^A_2,\\
\hat T&=Y^A_1\d_A
  T_2-Y^A_2\d_A T_1 +\half (T_1\,{}_0\psi_2-T_2\,{}_0\psi_1).
\end{split}
\end{equation}

Let $\cI={\mathbb R}\times S^2$ with coordinates $u,\theta,\phi$. On
$\cI$, consider the scalar field $\varphi$ and the vectors fields
$\bar\xi(\varphi,T,Y)=f\dover{}{u} +Y^A\dover{}{x^A}$, with $f$ given
in \eqref{eq:bms4eqf} and $Y^A$ an $u$-independent conformal Killing
vector of $S^2$. It is straightforward to check that these vector
fields form a faithful representation of $\mathfrak{bms}_4$ for the
standard Lie bracket.

Consider then the modified Lie bracket
\begin{equation}
  \label{eq:43}
  [\xi_1,\xi_2]_M=[\xi_1,\xi_2]-\delta^g_{
    \xi_1}\xi_2+\delta^g_{ \xi_2}\xi_1,
\end{equation}
where $\delta^g_{\xi_1}\xi_2$ denotes the variation in $\xi_2$
under the variation of the metric induced by $\xi_1$, $\delta^g_{
  \xi_1}g_{\mu\nu}=\cL_{\xi_1}g_{\mu\nu}$, 

{\em Spacetime vectors $\xi$ of the form \eqref{eq:bms4vectasymp}, with
  $Y^A(x^B)$ a conformal Killing vectors of the $2$-sphere and
  $f(u,x^B)$ satisfying \eqref{eq:bms4eqf} provide a faithful
  representation of $\mathfrak{bms}_4$ when equipped with the modified
  Lie bracket $[\cdot,\cdot]_M$.}

Indeed, for the $u$ component, there is no modification due to the
change in the metric and the result follows by direct computation:
$[\xi_1,\xi_2]^u_{(M)}=\hat f$, where $\hat f$ corresponds to $f$ in
\eqref{eq:bms4eqf} with $T$ replaced by $\hat T$ and $Y$ by $\hat Y$. By
evaluating $\cL_\xi g^{\mu\nu}$, we find
\begin{gather}
\left\{\begin{array}{l}
  \label{eq:bms4varcompmetric}
 \delta_\xi\varphi=0,\\
  \delta_\xi\beta=\xi^\alpha\d_\alpha\beta+\half \big[\d_u f
  +\d_r\xi^r+\d_A f U^A\big],\\
  \delta_\xi U^A = \xi^\alpha \partial_\alpha U^A + U^A (\partial_u f
  + \partial_B f U^B) -\d_B\xi^A U^B\\\hspace{1cm} - \partial_u \xi^A
  - \partial_r \xi^A \frac{V}{r} + \partial_B \xi^r
  g^{AB}e^{2\beta}.
\end{array}
\right.
\end{gather}
It follows that 
\begin{gather*}
\left\{\begin{array}{l}
  \delta^g_{\xi_1}\xi^A_2=-\d_B
  f_2\int^\infty_r dr^\prime
    e^{2\beta}(2\delta_{\xi_1}\beta g^{AB}+\cL_\xi g^{AB}),\\
\delta^g_{\xi_1}\xi^r_2=-\half r\big[\bar D_A (\delta^g_{\xi_1}\xi^A_2)
-\d_A f_2 \delta_{\xi_1} U^A\big].
\end{array}
\right.
\end{gather*}
We have $\lim_{r\to\infty}[\xi_1,\xi_2]^A_M=\hat Y^A$ and, using
$\d_r\xi^A=g^{AB}e^{2\beta}\d_B f$, \eqref{eq:bms4eqf} together with the expression
of $\xi^r$ in \eqref{eq:bms4expvectasymp}, it follows by a straightforward
computation that $\d_r([\xi_1,\xi_2]^A_M)=g^{AB} e^{2\beta}\d_B\hat
f$, which gives the result for the $A$ components.  Finally, for the
$r$ component, we need
\[\d_r(\frac{\xi^r}{r})=-\half\big(\d_r\chi-\d_B f \d_r U\big).\]
We then find $\lim_{r\to\infty}\frac{[\xi_1,\xi_2]^r_M}{r}=-\half
(\hat \psi+2\hat f \d_u\varphi)$, where $\hat \psi$ corresponds to
$\psi$ with $Y^A$ replaced by $\hat Y^A$, while
$\d_r(\frac{[\xi_1,\xi_2]^r_M}{r})=-\half (\d_r \hat \chi-\d_B\hat
f\d_r U^B)$, where $\hat \chi$ corresponds to $\chi$ with $f$ replaced
by $\hat f$. This gives the result for the $r$ component and concludes
the proof.

More generally, one can also consider the transformations that leave
the form of the metric \eqref{eq:bms4sachsmetric} invariant up to a conformal
rescaling of $g_{AB}$, i.e., up to a rescaling of $\varphi$ by
$\omega(u,x^A)$. They are generated by spacetime vectors satisfying
\begin{gather}
  \label{eq:bms4symasptphi1}
  \cL_\xi g_{rr}=0,\quad \cL_\xi g_{rA}=0,\quad \cL_\xi g_{AB}
  g^{AB}=4\omega,\\
\cL_\xi
g_{ur}=O(r^{-2}),\quad \cL_\xi g_{uA}=O(1),\quad  \cL_\xi
g_{AB}=2\omega g_{AB}+O(r),\nonumber\\
\cL_\xi g_{uu}=-2r\dot\omega-2\omega e^{-2\varphi}+2\omega\bar\Delta
\varphi + O(r^{-1}). \label{eq:bms4symasptphi2}
\end{gather}

Equations \eqref{eq:bms4symasptphi1}, \eqref{eq:bms4symasptphi2} then imply that the
vectors are given by \eqref{eq:bms4vectasymp} and \eqref{eq:bms4eqf} with the
replacement $\psi\to \psi-2\omega$.

With this replacement, the vector fields $\bar\xi=f \dover{}{u}+
Y^A\dover{}{x^A}$ on $\cI={\mathbb R}\times S^2$ equipped with the
modified bracket provide a faithful representation of the extension of
$\mathfrak{bms}_4$ defined by elements $(Y,T,\omega)$ and bracket
$[(Y_1,T_1,\omega_1),(Y_2,T_2,\omega_2)]=(\hat Y,\hat T,\hat\omega)$,
with $\hat Y,\hat T$ as before and $\hat \omega=0$.

Indeed, the result is obvious for the $A$ components. Furthermore,
\[\delta^g_{\bar \xi_1} f_2= \omega_1 f_2+\half e^{\varphi}\int_0^udu^\prime
e^{-\varphi}[-\omega_1(\psi_2-2\omega_2)+
2Y^A_2\d_A \omega_1].\] At $u=0$, we get
$[\bar\xi_1,\bar\xi_2]^u_M|_{u=0}=e^\varphi|_{u=0} \hat T$, while
direct computation shows that $\d_u ([\bar\xi_1,\bar\xi_2]^u_M)=\hat
f\dot\varphi +\half \bar D_B\hat Y^B$, as it should. 

Following the same reasoning as before, one can then also show that
the spacetime vectors \eqref{eq:bms4vectasymp} with the replacement
discussed above and equipped with the modified Lie bracket provide a
faithful representation of the extended $\mathfrak{bms}_4$ algebra. 

Indeed, we have $\xi=\bar\xi +I^A\dover{}{x^A}+\xi^r\dover{}{r}$.
Furthermore, $[\xi_1,\xi_2]_M^u=[\bar\xi_1,\bar\xi_2]^u_M=\hat f$ as
it should.  In the extended case, the variations of $\beta,U^A$ are
still given by \eqref{eq:bms4varcompmetric}.  We then have
$\lim_{r\to\infty}[\xi_1,\xi_2]_M^A=\hat Y^A$ and find, after some
computations, $\d_r ([\xi_1,\xi_2]_M^A)=g^{AB}e^{2\beta} \d_B\hat f$,
giving the result for the $A$ components.  Finally, for the $r$
component, we find
$\lim_{r\to\infty}\frac{[\xi_1,\xi_2]^r_M}{r}=-\half (\hat \psi+2\hat
f \dot\varphi)$, while $\d_r(\frac{[\xi_1,\xi_2]^r_M}{r})=-\half (\d_r
\hat \chi-\d_B\hat f\d_r U^B)$, which concludes the proof.

\section{Extended $\mathfrak{bms}_4$ Lie algebra}
\label{sec:mathfr-algebra}

As we showed in the previous section, the algebra $\mathfrak{bms}_4$ is the semi-direct sum
of the Lie algebra of conformal Killing vectors $Y^A\dover{}{x^A}$ of the unit
$2$-sphere with the abelian ideal consisting of functions
$T(x^A)$ on the $2$-sphere, the bracket being defined as
\begin{equation}
\label{eq:bms4algebrasphere}
\begin{split}
(\hat Y,\hat T)&=[(Y_1,T_1),(Y_2,T_2)], \\
\hat Y^A&= Y^B_1\d_B
Y^A_2-Y^B_1\d_B Y^A_2,\\
\hat T&=Y^A_1\d_A
  T_2-Y^A_2\d_A T_1 +\half (T_1\,{}_0\psi_2-T_2\,{}_0\psi_1).
\end{split}
\end{equation}

In terms of the standard complex coordinates $\zeta=
e^{i\phi}\cot{\frac{\theta}{2}}$, the metric on the sphere is
conformally flat, 
  \begin{equation}
    \label{eq:6}
    d\theta^2+\sin^2\theta
    d\phi^2=P^{-2}d\zeta
    d\bar\zeta,\quad P(\zeta,\bar\zeta)=\half(1+\zeta\bar\zeta), 
  \end{equation}
and, since conformal Killing vectors are invariant under conformal
rescalings of the metric, the conformal Killing vectors of the unit
sphere are the same as the conformal Killing vectors of the
Riemann sphere. 

Depending on the space of functions under consideration, there are
then basically two options which define what is actually meant by
$\mathfrak{bms}_4$.

The first choice consists in restricting oneself to globally
well-defined transformations on the unit or, equivalently, the Riemann
sphere. This singles out the global conformal transformations, also
called projective transformations, and the associated group is
isomorphic to $SL(2,\mathbb C)/\mathbb Z_2$, which is itself
isomorphic to the proper, orthochronous Lorentz group. Associated with
this choice, the functions $T(\theta,\phi)$, which are the generators
of the so-called supertranslations, have been expanded into spherical
harmonics. This choice has been adopted in the original work by Bondi,
van der Burg, Metzner and Sachs and followed ever since, most notably
in the work of Penrose and Newman-Penrose
\cite{Penrose:1963uq,Newman:1966kx}, where spin-weighted spherical
harmonics and the associated ``edth'' operator have made their
appearance. After attempts to cut this group down to the standard
Poincar\'e group, it has been taken seriously as an invariance group
of asymptotically flat spacetimes. Its consequences have been
investigated, but we believe that it is fair to say that this version
of the BMS group has had only a limited amount of success.

\vspace{5mm}

The second choice is motivated by
exactly the same considerations that are at the origin of the
breakthrough in two dimensional conformal quantum field theories
\cite{Belavin:1984vn}. It consists in focusing on local properties and
allowing the set of all, not necessarily invertible holomorphic
mappings. In this case, the conformal Killing
vectors are given by two copies of the Witt algebra, so that besides
supertranslations, there now also are superrotations. 

The general solution to the conformal Killing equations is
$Y^\zeta=Y(\zeta)$, $Y^{\bar\zeta}=\bar Y(\bar\zeta)$, with $Y$ and
$\bar Y$ independent functions of their arguments. The standard basis
vectors are chosen as
\begin{equation}
l_n=-\zeta^{n+1}\frac{\d}{\d\zeta},\quad \bar l_n=-\bar
\zeta^{n+1}\frac{\d}{\d\bar \zeta},\quad n\in \mathbb Z\label{eq:55}
\end{equation}
At the same time, let us choose to expand the generators of the
supertranslations in terms of
\begin{equation}
  T_{m,n}=P^{-1}\zeta^m\bar\zeta^n, 
\quad m,n\in\mathbb Z. 
\end{equation}
In terms of the basis vector $l_l\equiv (l_l,0)$ and
$T_{mn}=(0,T_{mn})$, the commutation relations for the complexified
$\mathfrak{bms}_4$ algebra read
\begin{equation}
\begin{gathered}
  \label{eq:bms4algebra1}
  [l_m,l_n]=(m-n)l_{m+n},\quad [\bar l_m,\bar l_n]=(m-n)\bar
  l_{m+n},\\
[l_l,T_{m,n}]=(\frac{l+1}{2}-m)T_{m+l,n},
\quad [\bar l_l,T_{m,n}]= (\frac{l+1}{2}-n)T_{m,n+l}, \\
[l_m,\bar l_n]=0, \quad [T_{m,n},T_{o,p}]=0.
\end{gathered}
\end{equation}

The $\mathfrak{bms}_4$ algebra contains as subalgebra the Poincar\'e
algebra, which we identify with the algebra of exact Killing vectors
of the Minkowski metric equipped with the standard Lie bracket.

Indeed, these vectors form the subspace of spacetime vectors
\eqref{eq:bms4vectasymp} for which (i) $\beta=0=U^A=\varphi$ while $V=-r$ and
$g_{AB}={}_0\gamma_{AB}$ and (ii) the relations in \eqref{eq:bms4symasympt2} hold
with $0$ on the right hand sides.  The former implies in particular
that $I^A=-\frac{1}{r}{}_0\gamma^{AB}\d_B f$, while a first
consequence of the latter is that the modified Lie bracket reduces the
standard one.

Besides the previous conditions that $Y^A$ is an $u$-independent
conformal Killing vector of the $2$-sphere,
$\cL_{Y}\,{}_0\gamma_{AB}={}_0D_C Y^C\,{}_0\gamma_{AB}$ and $f=T+\half
u\,{}_0\psi$ with $T_{,u}=0=T_{,r}$, we find the additional
constraints
\begin{gather}
  \label{eq:78}
  {}_0D_A \d_B\,{}_0\psi+{}_0D_B\d_A\,{}_0\psi={}_0
\gamma_{AB}\,{}_0\Delta\,{}_0\psi,\\
  {}_0D_A \d_B T+{}_0D_B\d_A T={}_0\gamma_{AB}\,{}_0\Delta T,\quad
  \d_A T=-\half \d_A({}_0\Delta T).
\end{gather}
In the coordinates $\zeta,\bar\zeta$, these constraints are equivalent
to $\d^3 Y=0=\bar\d^3 \bar Y$ and $\d^2\tilde T=0=\bar\d^2\tilde T$,
where $T=P^{-1}\tilde T$ and $\d=\frac{\d}{\d\zeta}$,
$\bar\d=\frac{\d}{\d\bar\zeta}$, so that the complexified Poincar\'e algebra is
spanned by the generators
\begin{equation}
  l_{-1},\,l_0,\,l_1,\quad \bar l_{-1},\, \bar l_0,\, \bar l_1,\quad
T_{0,0},\,T_{1,0},\T_{0,1},\,T_{1,1},\label{eq:61}
\end{equation}
and the non vanishing commutation relations read
\begin{equation}
\begin{gathered}
  \label{eq:79}
[l_{-1},l_0]=-l_{-1},\ [l_{-1},l_1]=-2 l_0,\
[l_0,l_1]=-l_1,\\
[l_{-1},T_{1,0}]=-T_{0,0},\ [l_{-1},T_{1,1}]=-T_{0,1},\ [\bar
l_{-1},T_{0,1}]=-T_{0,0},\ [\bar l_{-1},T_{1,1}]=-T_{1,0},\\
[l_{0},T_{0,0}]=\half T_{0,0},\ [l_{0},T_{0,1}]=\half T_{0,1}, \
[l_{0},T_{1,0}]=-\half T_{1,0},\ [l_{0},T_{1,1}]=-\half T_{1,1},\\
[\bar l_{0},T_{0,0}]=\half T_{0,0},\ [\bar l_{0},T_{0,1}]=-\half T_{0,1}, \
[\bar l_{0},T_{1,0}]=\half T_{1,0},\ [\bar l_{0},T_{1,1}]=-\half T_{1,1},\\
[l_{1},T_{0,0}]= T_{1,0},\ [l_{1},T_{0,1}]= T_{1,1},\ [\bar
l_{1},T_{0,0}]=T_{0,1},\ [\bar l_{1},T_{1,0}]=T_{1,1}. 
\end{gathered}  
\end{equation} 
In particular for instance, the generators for translations are
associated to $\half(T_{1,1}+T_{0,0})=1$,
$\half(T_{1,1}-T_{0,0})=\cos\theta$, $\half ( T_{1,0}+
T_{0,1})=\sin\theta\cos\phi$, $\frac{1}{2i}( T_{1,0}-
T_{0,1})=\sin\theta\sin\phi$. Note that in order for the asymptotic
symmetry algebra to contain the Poincar\'e algebra as a subalgebra, it
is essential not to restrict the generators of supertranslations to
the sum of holomorphic and antiholomorphic functions on the Riemann
sphere.

The quotient algebra of $\mathfrak{bms}_4$ by the abelian ideal of
infinitesimal supertranslations is no longer given by the Lorentz
algebra but by two copies of the Witt algebra. It follows that the
problem with angular momentum in general relativity
\cite{Winicour:1980ys}, at least in its group theoretical formulation,
disappears as now the choice of an infinite number of conditions is
needed to fix an infinite number of rotations.

The considerations above apply for all choices of $\varphi$ which is
freely at our disposal. In the original work of Bondi, van der Burg,
Metzner and Sachs, and in much of the subsequent work, the choice
$\varphi=0$ was preferred. From the conformal point of view, the
choice
\begin{equation}
\varphi=\ln{[\half(1+\zeta\bar\zeta)]}
\end{equation}
is interesting as it turns $\bar\gamma_{AB}$ into the flat metric on
the Riemann sphere with vanishing Christoffel symbols,
\begin{equation}
  \bar\gamma_{AB}dx^Adx^B=d\zeta d\bar\zeta. 
\end{equation}
In this case, $\psi =\d_A Y^A $,
\begin{equation}
f=\tilde T+\half u \psi,\label{eq:BMS4fsimple}
\end{equation}
 with $\tilde
T=PT$. In terms of $\tilde T$, we get instead of the last of
\eqref{eq:bms4algebrasphere}
\begin{equation}
\hat{\tilde T}=Y^A_1\tilde T_2+\half \tilde T_1 \d_A
Y_2^A-(1\leftrightarrow 2).
\end{equation}
In terms of generators, the algebra \eqref{eq:bms4algebra1} is unchanged if one
now expands the supertranslations $\tilde T$ directly in terms of
$\tilde T_{m,n}=\zeta^m\bar\zeta^n$. 


\section{Central extensions of $\mathfrak{bms}_4$}

The $\mathfrak{bms}_4$ algebra can also be viewed as an abstract
structure defined on the 2-sphere with the help of tensor densities
introduced in section \ref{sec:bms3algebra}. 

In stereographic coordinates
$\zeta=e^{i\phi}\cot{\frac{\theta}{2}}$ and $\bar \zeta$ for the $2$ sphere,
the algebra may be realized through the vector fields $y=Y(\zeta)\d $,
$\bar y=\bar Y(\bar \zeta)\bar \d$, with $\d= \dover{}{\zeta}$, $\bar
\d=\dover{}{\bar \zeta}$. They act on tensor densities
$\cF_{\half,\half}$ of degree $(\half,\half)$,
$t=T(\zeta,\bar\zeta)e^{-\varphi_0}(d\zeta)^{-\half}(d\bar\zeta)^{-\half}$,
where $\varphi_0=\ln{\half(1+\zeta\bar\zeta)}$ through
\begin{eqnarray}
\rho(y) t &=& (Y\d T-\half \d Y
T)e^{-\varphi_0}(d\zeta)^{-\half}(d\bar\zeta)^{-\half}\,,\\
\rho(\bar y) t&=&(\bar
Y\bar \d T-\half \bar \d \bar Y
T)e^{-\varphi_0}(d\zeta)^{-\half}(d\bar\zeta)^{-\half}\,. 
\end{eqnarray}
The algebra $\mathfrak{bms}_4$ is then the semi-direct sum of the
algebra of vector fields $y,\bar y$ with the abelian ideal
$\cF_{\half,\half}$, the bracket being induced by the module action,
$[y,t]=\rho(y)t$, $[\bar y,t]=\rho(\bar y) t$.

When expanding $y=a^nl_n$, $\bar y=\bar a^n\bar l_n$, $t=b^{m,n} T_{m,n}$, 
with
\begin{equation}
  \label{eq:10}
  l_n=-\zeta^{n+1}\d,\quad \bar l_n=-\bar \zeta^{n+1}\bar\d,\quad 
T_{m,n}=\zeta^m\bar\zeta^n
e^{-\varphi_0}(d\zeta)^{-\half}(d\bar\zeta)^{-\half}\,,
\end{equation}
the enhanced symmetry algebra reads
\begin{eqnarray}
  [l_m,l_n]=(m-n)l_{m+n},&\quad& [\bar l_m,\bar l_n]=(m-n)\bar
  l_{m+n},\cr
 [l_l,T_{m,n}]=(\frac{l+1}{2}-m)T_{m+l,n},
&\quad& [\bar l_l,T_{m,n}]= (\frac{l+1}{2}-n)T_{m,n+l}, \cr
[l_m,\bar l_n]=0, &\quad& [T_{m,n},T_{o,p}]=0,
\end{eqnarray}
where $m,n\dots\in \mathbb Z$, which is exactly the algebra we saw in
the previous section.

\vspace{5mm}

The only non trivial central
extensions of $\mathfrak{bms}_4$ are the usual central extensions of
the 2 copies of the Witt algebra, i.e., they appear in the commutators
$[l_m,l_{-m}]$ and $[\bar l_m,\bar l_{-m}]$.  Contrary to three
dimensions, there are no central extensions involving the generators
for supertranslations. In other words, up to equivalence, the most general central extension of
$\mathfrak{bms}_4$ is
\begin{eqnarray}
  \label{eq:37}
&  [l_m,l_n]=(m-n)l_{m+n} + \frac{c}{12} m (m-1)(m+1)
\delta^0_{m+n},\cr & [\bar l_m,\bar l_n]=(m-n)\bar
  l_{m+n}+ \frac{\bar c}{12} m (m-1)(m+1)
\delta^0_{m+n},\cr
& [l_l,T_{m,n}]=(\frac{l+1}{2}-m)T_{m+l,n},
\quad [\bar l_l,T_{m,n}]= (\frac{l+1}{2}-n)T_{m,n+l}\cr & [l_m,\bar l_n]=0 ,\quad [T_{m,n},T_{o,p}]=0\,.
\end{eqnarray}

\begin{proof}
For $\mathfrak{bms}_4$, the Chevally-Eilenberg differential is given
by
\begin{eqnarray}
  \label{eq:4bis}
  \gamma &=&-\half C^mC^{k-m}(2m-k)\dover{}{C^k}-\half \bar C^m\bar
  C^{k-m}(2m-k)\dover{}{\bar C^k}\nonumber
 \\ &&-C^m\xi^{k-m,n}(\frac{3m+1}{2}-k)\dover{}{\xi^{k,n}}
- \bar C^n\xi^{m,k-n}(\frac{3n+1}{2}-k)\dover{}{\xi^{m,k}}
\,,
\end{eqnarray}
in the space $\Lambda(C,\bar C,\xi)$ of polynomials in the
anticommuting ``ghost'' variables $C^m,\bar C^n,\xi^{m,n}$. The
grading is given by the eigenvalues of the ghost number operator,
$N_{C,\xi}=C^m\dover{}{C^m}+\bar C^m\dover{}{\bar
  C^m}+\xi^{m,n}\dover{}{\xi^{m,n}}$, the differential $\gamma$ being
homogeneous of degree $1$ and $H^2(\mathfrak{bms}_4)\simeq
H^2(\gamma,\Lambda(C,\bar C,\xi))$.  Furthermore, when counting only the
ghosts $\xi^m$ associated with supertranslations,
$N_\xi=\xi^{m,n}\dover{}{\xi^{m,n}}$, the differential $\gamma$ is homogeneous
of degree $0$, so that the cohomology decomposes into components of
definite $N_\xi$ degree. The cocycle condition then becomes
\begin{eqnarray}
\gamma(\omega^0_{m,n}C^mC^n+\bar\omega^0_{m,n}\bar C^m\bar
C^n+\omega^{-1}_{m,n}C^m\bar C^n)&=&0,\nonumber\\
\gamma(\omega^1_{k,mn}C^k\xi^{m,n}+\bar \omega^1_{k,mn}\bar
C^k\xi^{m,n})&=&0,\nonumber \\
\gamma(\omega^2_{mn,kl}\xi^{m,n}\xi^{k,l})&=&0, 
\label{eq:5bis}
\end{eqnarray}
with $\omega^0_{m,n}=-\omega^0_{n,m}, \bar \omega^0_{m,n}=-\bar \omega^0_{n,m}$ and
$\omega^2_{mn,kl}=-\omega^2_{kl,mn}$. 
The coboundary condition reads
\begin{eqnarray}
  \label{eq:6bis}
 \omega^0_{m,n}C^mC^n+\bar\omega^0_{m,n}\bar C^m\bar
C^n+\omega^{-1}_{m,n}C^m\bar C^n&=&\gamma(\eta^0_mC^m+\bar\eta^0_m\bar
C^m),\nonumber\\
\omega^1_{k,mn}C^k\xi^{m,n}+\bar \omega^1_{k,mn}\bar
C^k\xi^{m,n}&=&\gamma(\eta^1_{mn}\xi^{m,n}). 
\end{eqnarray}

We have $\{\dover{}{C^0},\gamma\}=\cN_{C,\xi}$ with
$\cN_{C,\xi}=mC^m\dover{}{C^m}+(m-\half)\xi^{m,n}\dover{}{\xi^{m,n}}$ and
also $\{\dover{}{\bar C^0},\gamma\}=\bar \cN_{\bar C,\xi}$ with
$\bar \cN_{\bar C,\xi}=n\bar C^n\dover{}{\bar
  C^n}+(n-\half)\xi^{m,n}\dover{}{\xi^{m,n}}$. 
It follows again that
all cocycles of either $\cN_{C,\xi}$ or $\bar \cN_{\bar C,\xi}$ degree different from $0$ are
coboundaries. Without loss of generality we can thus assume that
$\omega^0_{m,n}C^mC^n+\bar\omega^0_{m,n}\bar C^m\bar
C^n+\omega^{-1}_{m,n}C^m\bar C^n=\omega^0_{m}C^mC^{-m}+\bar\omega^0_{m}\bar C^m\bar
C^{-m}+\omega^{-1}_{0,0}C^0\bar C^0$ with 
$\omega^0_m=-\omega^0_{-m}, \bar \omega^0_m=-\bar \omega^0_{-m}$ and
in particular $\omega^0_0=0=\bar \omega^0_0$; none of monomials with
one $\xi^{m,n}$ and either one $C^k$ or one $\bar C^k$ can be of degree $0$, so 
$\omega^1_{k,mn}=0=\bar \omega^1_{k,mn}$;
$\omega^2_{mn,kl}\xi^{m,n}\xi^{k,l}=\omega^2_{m,n}\xi^{m,n}\xi^{-m+1,-n+1}$ with 
$\omega^2_{m,n}=-\omega^2_{-m+1,-n+1}$. Both the cocycle and the
coboundary condition for $\omega^0_{m}C^mC^{-m}+\bar\omega^0_{m}\bar C^m\bar
C^{-m}+\omega^{-1}_{0,0}C^0\bar C^0$ split. For
$\omega^{-1}_{0,0}C^0\bar C^0$ there is no coboundary condition, while 
the cocycle condition implies $\omega^{-1}_{0,0}=0$. The rest of the
analysis proceeds as in the previous subsection, separately for
$\omega^0_{m}C^mC^{-m}$ and $\bar\omega^0_{m}\bar C^m\bar
C^{-m}$, with the standard central extension for $[l_m,l_{-m}]$ and
$[\bar l_m,\bar l_{-m}]$.

We still have to analyze
$\gamma(\omega^2_{m,n}\xi^{m,n}\xi^{-m+1,-n+1})=0$. This condition
gives
$\omega^2_{m,n}(\frac{3l-1}{2}+m)+\omega^2_{l+m,n}(\frac{l+1}{2}-m)=0$
and also
$\omega^2_{m,n}(\frac{3l-1}{2}+n)+\omega^2_{m,l+n}(\frac{l+1}{2}-n)=0$. 
Putting $m=0$ in the first relation gives
$\omega^2_{0,n}(\frac{3l-1}{2})+\omega^2_{l,n}(\frac{l+1}{2})=0$. Putting
$l=-1$ then implies $\omega^2_{0,n}=0$ and then also
$\omega^2_{l,n}=0$ for $l\neq -1$. But
$\omega^2_{-1,n}=-\omega^2_{2,-n+1}=0$ which shows that
$\omega^2_{m,n}=0$ for all $m,n$ and concludes the proof.
\end{proof}

\section{Solution space}
\label{sec:BMS4solutionspace}

We start by assuming only that we have a metric of the form
\eqref{eq:bms4sachsmetric} and that the determinant condition holds. 
Following again \cite{Tamburino:1966ys}, the equations of motion are
organized in terms of the Einstein tensor
$G_{\alpha\beta}=R_{\alpha\beta}-\half g_{\alpha\beta} R$ as
\begin{gather}
  G_{r\alpha}=0,  \qquad
G_{AB}-\half g_{AB} g^{CD}G_{CD}=0,  \label{eq:BMS4EOM1}\\
G_{uu}=0=G_{uA},  \label{eq:BMS4EOM2}\\
g^{CD}G_{CD}=0.  \label{eq:BMS4EOM3} 
\end{gather}
Due to the form of the metric and the determinant condition, equation
\eqref{eq:BMS4EOM3} is a consequence of \eqref{eq:BMS4EOM1} on
account of the Bianchi identities. Indeed, the latter can be written
as
\begin{equation}
\label{eq:56}
 0= 2\sqrt{-g}{G_\alpha^\beta}_{;\beta}=2(\sqrt{-g}G_\alpha^\beta)_{,\beta}+
\sqrt{-g}G_{\beta\gamma}{g^{\beta\gamma}}_{,\alpha}. 
\end{equation}
When \eqref{eq:BMS4EOM1} hold and $\alpha=1$, we get
$G_{AB}{g^{AB}}_{,r}=0=\half g_{AB}{g^{AB}}_{,r} g^{CD}G_{CD}$. This
implies that \eqref{eq:BMS4EOM3} holds by using \eqref{eq:BMS4detcond}. 

The remaining Bianchi identities then reduce to
$2(\sqrt{-g}G_A^\beta)_{,\beta}=0=2(\sqrt{-g}G_u^\beta)_{,\beta}$. The
first gives $(r^2 G_{uA})_{,r}=0$. This means that if $r^2G_{uA}=0$
for some constant $r$, it vanishes for all $r$. When $G_{uA}=0$, the
last Bianchi identity reduces to $(r^2G_{uu})_{,r}=0$, so that again,
$r^2G_{uu}=0$ everywhere if it vanishes for some fixed $r$.

Let $k_{AB}=\half g_{AB,r}$, $l_{AB}=\half g_{AB,u}$, $n_A=
\frac{1}{2} e^{-2\beta}g_{AB} U^B_{,r}$ with indices on these
variables and on $U^A$ lowered and raised with the 2 dimensional
metric $g_{AB}$ and its inverse.  Define $K^A_B$ through the relation
$k^A_B=\frac{1}{r}\delta^A_B+\frac{1}{r^2}K^A_B$.  In particular, the
determinant condition implies that $k=\frac{2}{r}$ and thus that
$K^D_D=0$.  Similarly, if
$l^D_B=\half\bar\gamma^{DA}\bar\gamma_{AB,u}+\frac{1}{r}L^D_B$, the
determinant condition implies in particular that $L^D_B$ is traceless,
$L^D_D=0$. Note that for a traceless $2\times 2$ matrix ${M^T}^A_B$,
we have ${M^T}^A_C{M^T}^C_B=\half {M^T}^C_D{M^T}^D_C \delta^A_B$.

For a metric of the form \eqref{eq:bms4sachsmetric}, we have 
\begin{equation*}
\begin{gathered}
\Gamma^\lambda_{rr}=\delta^\lambda_r2\beta_{,r},\quad
\Gamma^{u}_{\lambda r}=0,\quad
\Gamma^r_{Ar}=\beta_{,A}+ n_A,\quad \Gamma^A_{Br}=k^A_B,\\
\Gamma^u_{AB}= e^{-2\beta} k_{AB},\quad
\Gamma^A_{BC}= e^{-2\beta} U^A k_{BC}+{}^{(2)}\Gamma^A_{BC}, \\ \Gamma^A_{ur}=
-k^A_BU^B+e^{2\beta}(\d^A\beta-n^A),\quad 
\Gamma^u_{uA}=\beta_{,A}-n_A- e^{-2\beta} k_{AB}U^B,\\
\Gamma^r_{ur}=-\half (\d_r+2\beta_{,r})\frac{V}{r}-(\beta_{,A}+n_A) U^A,\\
\Gamma^A_{Bu}=l^A_B +\half {}^{(2)}D^A U_B-\half {}^{(2)}D_B U^A+U^A(\beta_{,B}-n_B)- e^{-2\beta}k_{BC}U^A
U^C,\\
\Gamma^{u}_{uu}=2\beta_{,u}+\half(\d_r+2\beta_{,r})
\frac{V}{r}+2U^A n_A+e^{-2\beta}k_{AB} U^AU^B,\\
\Gamma^r_{AB}= e^{-2\beta}(\half{}^{(2)}D_A U_B
+\half{}^{(2)}D_B U_A+l_{AB}+k_{AB}\frac{V}{r} ),\\
\Gamma^r_{uA}=-\big(\frac{V_{,A}}{2r}+\frac{V}{r}n_A+
e^{-2\beta}U^B[\frac{1}{2}{}^{(2)}D_A U_B+\frac{1}{2}{}^{(2)}D_B
U_A+l_{AB}+\frac{V}{r}k_{AB}]\big),\\
\Gamma^A_{uu}=2U^A\beta_{,u}+\half
U^A(\d_r+2\beta_{,r})\frac{V}{r}+2U^A n_B U^B+U^A k_{BC}e^{-2\beta}U^BU^C
\\-U^A_{,u}-2l^A_B U^B-\half
e^{2\beta}(\d^A+2\d^A\beta)\frac{V}{r}-\half {}^{(2)}D^A(U^CU_C),\\
\Gamma^r_{uu}=-\half(\d_u-2\beta_{,u})\frac{V}{r}+\half
\frac{V}{r}(\d_r+2\beta_{,r})\frac{V}{r}+\half
U^A(\d_A+2\beta_{,A})\frac{V}{r}+ 2\frac{V}{r} U^A n_A\\+\frac{V}{r}
e^{-2\beta}k_{AB} U^AU^B+e^{-2\beta}l_{AB}U^AU^B +e^{-2\beta} U^AU^B
{}^{(2)}D_A U_B. 
\end{gathered}
\end{equation*}

To write the equations of motion, we use that $|{}^{(4)} g|=e^{4\beta}
|{}^{(2)}g|$ and
\begin{equation*}
  R_{\mu\nu}=\big[\d_\alpha+(2\beta+\half\ln
  |{}^{(2)}g|)_{,\alpha}\big]\Gamma^\alpha_{\mu\nu}-\d_\mu\d_\nu(2\beta+\half\ln
  |{}^{(2)}g|) -\Gamma^\alpha_{\nu\beta}\Gamma^\beta_{\mu\alpha}. 
\end{equation*}
The equation $G_{rr}\equiv R_{rr}=0$ then becomes
\begin{equation}
  \d_r\beta=-\frac{1}{2r}+\frac{r}{4} k^A_B k^B_A=\frac{1}{4r^3} K^A_B K^B_A\iff
  \beta=-\int^\infty_r dr^\prime\frac{1}{4 {r^\prime}^3} K^A_B K^B_A\,.
\end{equation}
This equation determines $\beta$ uniquely in terms of $g_{AB}$ because
the fall-off condition \eqref{eq:BMS4falloff} excludes the arbitrary function of
$u,x^A$ allowed by the general solution to this equation. 

The equations $G_{rA}\equiv R_{rA}=0$ read
\begin{equation}
\begin{gathered}
  \label{eq:54}
  \d_r(r^2n_A)=J_A, \\
  J_A=r^2(\d_r -\frac{2}{r})\beta_{,A}-{}^{(2)}D_B K^B_A
  =\d_A(-2r\beta+\frac{1}{4r}K^B_CK^C_B)-{}^{(2)}D_B K^B_A. 
\end{gathered}
\end{equation}
In the original approach \cite{Bondi:1962zr,Sachs:1962ly}, it was
assumed in particular that the metric $g_{AB}$ admits an expansion in
terms powers of $r^{-1}$ starting at order $r^2$. We will assume 
\begin{equation}
  \label{eq:BMS4expendg}
  g_{AB}=r^2\bar\gamma_{AB}+rC_{AB}+D_{AB}+\frac{1}{4} \bar\gamma_{AB}
C^C_DC^D_C+o(r^{-\epsilon}), 
\end{equation}
where indices on $C_{AB},D_{AB}$ are raised with the inverse of
$\bar\gamma_{AB}$.  In \cite{Tamburino:1966ys}, it was then shown
explicitly how \eqref{eq:BMS4expendg} is related to the conformal approach
\cite{Penrose:1963uq,Penrose:1965ve} and imposed through
differentiability conditions at null infinity.

Under the assumption \eqref{eq:BMS4expendg}, $C^D_D=0=D^C_C$ and
\begin{equation}
\begin{split}
  \label{eq:BMS4solution2}
  K^A_B&=-\half C^A_B- r^{-1}D^A_B+o(r^{-1-\epsilon}),\\
  \beta&=-\frac{1}{32}r^{-2}C^A_BC^B_A-\frac{1}{12}
    r^{-3} C^A_BD^B_A+o(r^{-3-\epsilon}),\\
  J_A&=\half \bar D_BC^B_A+ r^{-1} \bar
  D_BD^B_A +o(r^{-1-\epsilon}).
\end{split}
\end{equation}
These equations then imply
$n_A=\frac{1}{2}r^{-1}\bar D_BC^B_A+r^{-2} (\ln r \bar
D_BD^B_A+N_A)+ o(r^{-2-\epsilon})$ and involve the
arbitrary functions $N_A(u,x^B)$ as integration ``constants''.
Because $U^A$ has to vanish for $r\to\infty$, we get from the
definition of $n_A$
\begin{equation}
  \label{eq:BMS4solution3}
  U^A=-\frac{1}{2}r^{-2}\bar D_BC^{BA}-
\frac{2}{3}r^{-3}\Big[(\ln r+
  \frac{1}{3})\bar
D_BD^{BA}-\half C^{A}_{B} \bar D_CC^{CB}
  +N^A\Big]
  +o(r^{-3-\varepsilon}),
\end{equation}
where the index on $N_A$ has been raised with $\bar\gamma^{AB}$. 

It is straightforward to verify that if one trades the coordinate $r$
for $s=r^{-1}$, the only non vanishing components of the
``unphysical'' Weyl tensor at the boundary are given by 
\begin{equation}
\lim_{s\to
  0}(s^2W_{sAsB})=-D_{AB},\label{eq:BMS4Weyl}
\end{equation}
(see e.g.~\cite{Winicour:1985qf} for a
detailed discussion). In \cite{Sachs:1962ly}, the condition $D_{AB}=0$
was imposed in order to avoid a logarithmic $r$-dependence in the
solution to the equations of motion and to avoid singularities on the
unit sphere. When one dispenses with this latter restriction, absence
of a logarithmic $r$-dependence is guaranteed through the requirement
$\bar D_BD^{BA}=0$. In the coordinates $\zeta,\bar\zeta$ and with the
parametrization $\bar\gamma_{AB}dx^Adx^B=e^{2\tilde\varphi}d\zeta
d\bar\zeta$, this is equivalent to
\begin{equation}
  \label{eq:BMS4introd}
  D_{\zeta\zeta}=d(u,\zeta), \quad D_{\bar
  \zeta\bar \zeta}=\bar d(u,\bar \zeta), \quad 
D_{\zeta\bar \zeta}=0. 
\end{equation}
A more complete analysis of the field equations when allowing for a
logarithmic or, more precisely, a ``polyhomogeneous'' dependence in $r$
can be found in \cite{Chrusciel:1992bh}.

Starting from 
\begin{multline*}
  R_{AB}= (\d_r+2\beta_{,r}+\frac{2}{r})\Gamma^r_{AB}-
k^C_A\Gamma^r_{BC}-k^C_B\Gamma^r_{AC}
+{}^{(2)}R_{AB}-2{}^{(2)}D_B\beta_{,A}
 \\+(\d_u+2\beta_{,u}+l )\Gamma^u_{AB}-\Gamma^u_{uA}\Gamma^u_{uB}
-\Gamma^r_{rA}
  \Gamma^r_{rB}\\
-\Gamma^C_{uA}\Gamma^u_{BC}
  -\Gamma^C_{uB}\Gamma^u_{AC}+ {}^{(2)}D_C( e^{-2\beta}U^C
  k_{AB})\\-e^{-4\beta}U^C k_{BD}U^D
  k_{AC}+2
   e^{-2\beta}\beta_{,C}U^C k_{AB},
\end{multline*}
we find 
\begin{multline}
  \label{eq:BMS4solutionspace1}
  g^{DA}R_{AB}=e^{-2\beta}
\Big[(\d_r+\frac{2}{r})(l^D_B+k^D_B\frac{V}{r}+\half{}^{(2)}D_B
  U^D+\half {}^{(2)}D^D U_B)\\+k^D_A{}^{(2)}D_B
  U^A-k^A_B{}^{(2)}D_A U^D+(\d_u+l)k^D_B+{}^{(2)}D_C (U^C k^D_B)\Big]\\+{}^{(2)}R^D_{B}
  -2({}^{(2)}D_B\d^D\beta+\d^D\beta\d_B\beta +n^Dn_B). 
\end{multline}

When taking into account the previous equations, $G_{ur}\equiv
R_{ur}+\half e^{2\beta} R=0$ reduces to $g^{AB}R_{AB}=0$. Explicitly,
we find from the trace of \eqref{eq:BMS4solutionspace1}
\begin{equation}
\begin{gathered}
  \d_r V=J,\\
  J=e^{2\beta}r^2({}^{(2)}\Delta
  \beta+\d^D\beta\d_D\beta+n^D n_D-\half{}^{(2)}R)
-2rl -\frac{r^2}{2}(\d_r+\frac{4}{r}) {}^{(2)}
  D_B U^B\\
  =-2rl+
e^{2\beta}r^2\big[{}^{(2)}\Delta
  \beta+(n^A-\d^A\beta)(n_A-\d_A\beta)\\-{}^{(2)}D_A
  n^A-\half{}^{(2)}R\big] -2r {}^{(2)} D_B U^B
\\=-2rl-\frac{1}{2}\bar R+o(r^{-1-\epsilon}), 
\end{gathered}
\end{equation}
where we have used the previous equation to get the second line. 
This equation implies
\begin{equation}
  \label{eq:BMS4solution4}
  \frac{V}{r}  =-r l-\frac{1}{2} \bar
  R + r^{-1}2M + o(r^{-1-\epsilon}),
\end{equation}
and implies a third arbitrary function of $M(u,x^B)$ as integration
constant.

We have $G_{AB}-\half g_{AB} g^{CD} G_{CD}=R_{AB}-\half g_{AB} g^{CD}
R_{CD}$. Taking into account the previous equations, it thus reduces
to the condition that the traceless part of \eqref{eq:BMS4solutionspace1} vanishes.
Using that $\d_u k^D_B=\d_r l^D_B-2(l^D_Ak^A_B-k^D_A
l^A_B)$, we get 
\begin{multline*}
  (\d_r+\frac{1}{r}) l^D_B-
  (l^D_Ak^A_B-k^D_Al^A_B)+\frac{1}{2}k^D_B l =\\
-\half \Big[ (\d_r+\frac{2}{r})(k^D_B\frac{V}{r}+\half{}^{(2)}D_B
  U^D+\half {}^{(2)}D^D U_B)\\+k^D_C{}^{(2)}D_B
  U^C-k^C_B{}^{(2)}D_C U^D+{}^{(2)}D_C (U^C k^D_B)\Big]\\+e^{2\beta} \Big[ n^Dn_B+
  {}^{(2)}D_B\d^D\beta+\d^D\beta\d_B\beta -\half {}^{(2)}R^D_{B}\Big]. 
\end{multline*}
The various definitions then give 
\begin{equation}
  \label{eq:BMS4quantummech}
  \d_r L^D_B-\frac{1}{r^2}
  (L^D_AK^A_B-K^D_AL^A_B)=J^D_B,
\end{equation}
where
\begin{multline}
J^D_B=-\frac{r}{2} \Big[ (\d_r+\frac{2}{r})(k^D_B\frac{V}{r}+\half{}^{(2)}D_B
  U^D+\half {}^{(2)}D^D U_B)\\+k^D_C{}^{(2)}D_B
  U^C-k^C_B{}^{(2)}D_C U^D+{}^{(2)}D_C (U^C k^D_B)\Big]+\\+r e^{2\beta} \Big[ n^Dn_B+
  {}^{(2)}D_B\d^D\beta+\d^D\beta\d_B\beta-\half {}^{(2)}R^D_{B}\Big]-\frac{1}{2}
e  \bar\gamma^{DA}
  \bar\gamma_{AB,u}-\frac{r}{2}k^D_B l
  \\-\frac{1}{2r}
  (K^D_A\bar\gamma^{AC}\bar\gamma_{CB,u}-
\bar\gamma^{DC}\bar\gamma_{CA,u}
  K^A_B).
\end{multline}
The previous equations imply
\begin{multline*}
  J^D_B=-\half (\d_r k^D_B+\frac{1}{r}k^D_B)V-\frac{r^2}{2}k^D_Be^{2\beta}\big[{}^{(2)}\Delta
  \beta+(n^A-\d^A\beta)(n_A-\d_A\beta)-{}^{(2)}D_A n^A-\half{}^{(2)}R\big]\\
-\half({}^{(2)}D_B
  U^D+ {}^{(2)}D^D U_B)-r  U^C {}^{(2)}D_C k^D_B+\frac{r}{2} k^D_C({}^{(2)}D^C
  U_B- {}^{(2)}D_B U^C)\\+\frac{r}{2}k^C_B({}^{(2)}D_C U^D-{}^{(2)}D^D
  U_C)+\frac{r}{2} {}^{(2)}D_C U^Ck^D_B
\\+r e^{2\beta} \Big[ (n^D-\d^D\beta)(n_B-\d_B\beta)+
  {}^{(2)}D_B\d^D\beta-\half ({}^{(2)}R^D_{B}+{}^{(2)}D_B
  n^D+ {}^{(2)}D^D n_B)\Big]\\-\frac{1}{2}
  \bar\gamma^{DA}
  \bar\gamma_{AB,u}+\frac{r}{2}k^D_Bl
 -\frac{1}{2r}
  (K^D_A\bar\gamma^{AC}\bar\gamma_{CB,u}-
\bar\gamma^{DC}\bar\gamma_{CA,u}
  K^A_B).
\end{multline*}

Let $\cO^{DA}_{BC}=-\frac{1}{r^2} (K^D_A\delta^C_B-\delta^D_AK
^C_B)$ and
$\cA\cR$ denote anti-radial ordering. 
Equation \eqref{eq:BMS4quantummech} without right-hand side has the same form as
the Schr\"odinger equation with time dependent Hamiltonian. If we define
\begin{equation}
U^{DA}_{BC}(r_<,r_>)=\cA\cR\exp{[-\int^{r_>}_{r_{<}}
  dr^\prime \cO^{DA}_{BC}(r^\prime)]}\label{eq:76},
\end{equation}
the solution to the inhomogeneous equation \eqref{eq:BMS4quantummech} with
non-vanishing $J^B_D$ can then be obtained by variation of constants
and reads
\begin{equation}
  \label{eq:BMS4solL}
  L^D_B(r)=U^{DA}_{BC}(r,\infty)[\half N^C_A+\int dr^\prime
  U^{CE}_{AF}(\infty,r^\prime) J^F_E(r^\prime)], 
\end{equation}
and involves two more integration constants encoded in $N^D_B(u,x^B)$.

In other words, the $r$-dependence of $g_{AB,u}$ is completely
determined up to two integration constants. It follows that the only
variables left in the theory whose $r$-dependence is undetermined are
the two functions contained in $R_{AB}(u_0,r,x^C)=g_{AB}(u_0,r,x^C)
-r^2\bar\gamma_{AB}(u_0,x^C)-rC_{AB}(u_0,x^C)
-D_{AB}(u_0,x^C)-\frac{1}{4}\bar\gamma_{AB} C^C_DC^D_C$ at some
  initial fixed $u_0$.

When expanding into orders in $r$, one finds in particular
\begin{equation*}
\begin{split}
  L^D_B&=\half(\bar\gamma^{DA}C_{AB,u}-
  C^{DA}\bar\gamma_{AB,u} )+\half
  r^{-1}\Big[\bar\gamma^{DA}\d_u (D_{AB}+\frac{1}{4}\bar\gamma_{AB}C^C_DC^D_C)
\\&  -C^{DA}C_{AB,u}
  -D^{DA}\bar\gamma_{AB,u}
  +\frac{1}{4}C^E_FC^F_E
  \bar\gamma^{DA}\bar\gamma_{AB,u}\Big]+o(r^{-1-\epsilon}),
  \\
  J^D_B&= \half \delta^D_Bl-\half\bar\gamma^{DA}\bar\gamma_{AB,u} +
  \frac{1}{4}r^{-1}[C^{DA}\bar\gamma_{AB,u}
  -\bar\gamma^{DC}\bar\gamma_{CA,u} C^A_B]\\&+\half
  r^{-2}[ lD^D_B+
  D^{DA}\bar\gamma_{AB,u}-\bar\gamma^{DC}
  \bar\gamma_{CA,u}D^A_B] +o(r^{-2+\epsilon}).
\end{split}
\end{equation*}
When injecting into the equation of motion \eqref{eq:BMS4quantummech}, the leading
order requires that 
\begin{equation}
  \label{eq:BMS4dug}
  \bar\gamma_{AB,u}=l\bar\gamma_{AB}, 
\end{equation}
or, in other words, that the only $u$ dependence in $\bar\gamma_{AB}$
is contained in the conformal factor. This agrees with the assumption
of section~\bref{sec:asympt-flat-4}, where the $u$-dependence of
$\bar\gamma_{AB}$ was contained in $\exp{2\varphi}$ and 
$l=2\d_u\varphi$, and also with the
discussion at the end of the previous subsection, where it was
contained in $\exp{2\tilde\varphi}$ and $l=2\d_u\tilde\varphi$. In
the following we always assume that 
\eqref{eq:BMS4dug} holds. In particular, this implies
\begin{equation*}
\begin{split}
  L^D_B&=\half(\bar\gamma^{DA}C_{AB,u}- lC^D_B
  )+\half r^{-1}[\bar\gamma^{DA}D_{AB,u}
  -C^{DA}C_{AB,u} \\ &-lD^D_B
  +\frac{1}{2}C^{EF}\d_uC_{EF}
  \delta^D_B]+o(r^{-1-\epsilon}),
  \\
  J^D_B&= \half r^{-2} lD^D_B+o(r^{-2+\epsilon}).
\end{split}
\end{equation*}
When taking into account the next order of \eqref{eq:BMS4quantummech} and comparing
to the general solution \eqref{eq:BMS4solL}, we get 
\begin{equation}
  \label{eq:BMS4defNAB}
  \d_u D_{AB}=0, \qquad 
N_{AB}=\d_u C_{AB}-C_{AB} l,
\end{equation}
where the index on $N^A_B$ has been lowered with
$\bar\gamma_{AC}$. This implies in turn that 
\begin{equation*}
  l^A_B=\half l\delta^A_B+\half r^{-1} N^A_B-\frac{1}{4} r^{-2}[
 C^{A}_{C}N^{C}_{B}- N^A_CC^C_B
+2lD^A_B] +o(r^{-2-\epsilon}). 
\end{equation*}

At this stage, equations \eqref{eq:BMS4EOM1} have been solved, and then
\eqref{eq:BMS4EOM3} holds automatically on account of the Bianchi
identities. Furthermore $g^{CD}G_{CD}=0$ reduces to $R_{ur}=0$ and we
also have $R=0$.  Under these assumptions, we only need to discuss the
$r$ independent part of $r^2G_{uA}=0$ and then of $r^2G_{uu}=0$, which
reduce to $r^2 R_{uA}=0$ and $r^2 R_{uu}=0$, respectively. The
$r$-independent part fixes the $u$ dependence of $N_A$ and $M$ in
terms of the other fields. Explicitly,
\begin{multline*}
  R_{uA}=(-\d_u+l)\beta_{,A} -\d_A l-(\d_u+l)n_A  
 \\ +n_B {}^{(2)}D^B U_A- \beta_{,B}  {}^{(2)}D_A
  U^B  +2U^B(\beta_{,B}\beta_{,A}+n_Bn_A)
\\+ {}^{(2)}D_B
  \Big[l^B_A+\half  {}^{(2)}D^B U_A
-\half  {}^{(2)}D_A
  U^B+U^B(\beta_{,A}-n_A)\Big]+2n_Bl^B_A
\\  -(\d_r+2\beta_{,r}+\frac{2}{r}
  )(\frac{V_{,A}}{2r})-\frac{V}{r}(\d_r+\frac{2}{r}
  )  n_A+k_A^B(\frac{V_{,B}}{r}+2\frac{V}{r} n_B )
\\-e^{-2\beta} (\d_r+\frac{2}{r})\big[U^B(\half
   {}^{(2)}D_A U_B+\half  {}^{(2)}D_B U_A+l_{AB}+\frac{V}{r}k_{AB})\big]
\\
-e^{-2\beta}U^B\Big[(\d_u+l)k_{AB}-2l_A^Ck_{CB}-2k_A^Cl_{CB}
-2k_A^Ck_{CB}\frac{V}{r}\\+ {}^{(2)}D_C(k_{AB}
  U^C)-k_{AC} {}^{(2)}D^C U_B-k_{BC} {}^{(2)}D^C U_A\Big],
\end{multline*}
and the term proportional to $r^{-2}$ yields
\begin{multline}
  \label{eq:BMS4duNA}
(\d_u+l)N_A
 =\d_AM+\frac{1}{4}C_A^B\d_B\bar
  R +\frac{1}{16}\d_A\big[N^B_C C^C_B\big]
-\frac{1}{4} \bar D_AC^C_BN^B_C\\
-\frac{1}{4}\bar D_B\big[C^{B}_{C}N^C_{A}-N^B_CC^C_A\big]-\frac{1}{4}
  \bar D_B \big[ \bar D^B \bar D_CC^C_A -\bar D_A \bar
  D_CC^{BC}\big]\\-\frac{1}{32}
l\d_A\big[C^B_CC^C_B\big]
 +\frac{1}{16}\d_A lC^B_CC^C_B 
+\half \bar D_B\big[lD^B_A\big].
\end{multline}
Similarly, 
\begin{multline*}
   R_{uu}=(\d_u+2\beta_{,u}+l)\Gamma^u_{uu}+(\d_r+2\beta_{,r}+
\frac{2}{r})\Gamma^r_{uu} +
(\d_A+2\beta_{,A}+{}^{(2)}\Gamma^{B}_{BA})\Gamma^A_{uu} \\-2\beta_{,uu}-\d_u l 
-(\Gamma^{u}_{uu})^2-2\Gamma^{u}_{uA}\Gamma^A_{uu}-(\Gamma^r_{ur})^2 
-2\Gamma^r_{uA}\Gamma^A_{ur}-\Gamma^A_{uB}\Gamma^B_{uA},
\end{multline*}
and the term proportional to $r^{-2}$ yields
\begin{multline}
  \label{eq:BMS4duM}
  (\d_u+\frac{3}{2}l) M=-\frac{1}{8} N^A_BN^B_A
  -\frac{1}{8}lC^A_BN^B_A-\frac{1}{32} l^2 C^A_BC^B_A +\frac{1}{8}
  \bar \Delta \bar R \\
  +\frac{1}{4}\bar D_A\bar D_C N^{CA} +\frac{1}{8} l \bar D_A \bar D_C
  C^{CA}+\frac{1}{4}\bar D_C l\bar D_A C^{CA}.
\end{multline}
All these considerations can be summarized as follows:

{\em For a metric of the form \eqref{eq:bms4sachsmetric} satisfying the determinant
  condition and with $g_{AB}$ as in \eqref{eq:BMS4expendg}, the general
  solution to Einstein's equations is parametrized by the $2$
  dimensional background metric $\bar\gamma_{AB}(u,x^C)$ satisfying
  \eqref{eq:BMS4dug}, by the mass and angular momentum aspects
  $M(u,x^A),N_A(u,x^B)$ satisfying \eqref{eq:BMS4duM},\eqref{eq:BMS4duNA}, by
  the traceless symmetric news tensor $N_{AB}(u,x^C)$ defined in
  \eqref{eq:BMS4defNAB}, and by the traceless symmetric tensors $D_{AB}(x^C)$,
  $C_{AB}(u_0,r,x^C)$, $R_{AB}(u_0,r,x^C)$.

  For such spacetimes, the only non vanishing components of the
  unphysical Weyl tensor at the boundary are given by \eqref{eq:BMS4Weyl}.
  When logarithmic terms are required to be absent in the metric,
  $D_{AB}(x^C)$ has to satisfy $\bar D_BD^B_A=0$. In the coordinates
  $\zeta,\bar\zeta$ and the parametrization
  $\bar\gamma_{AB}dx^Adx^B=e^{2\tilde\varphi} d\zeta d\bar \zeta$,
  this leads to \eqref{eq:BMS4introd} with $d=d(\zeta)$ and $\bar d=\bar d(\bar
\zeta)$ by also taking \eqref{eq:BMS4defNAB} into account.}

In particular, let us now use the parametrization
$\bar\gamma_{AB}dx^Adx^B=e^{2\tilde\varphi}d\zeta d\bar\zeta$. The
determinant condition then reads $\text{det}\, g_{AB}=-e^{4\tilde
  \varphi}\frac{r^4}{4}$.  Even though we will not use it explicitly
below, let us point out that the determinant condition can be
implemented for instance by choosing the Beltrami representation,
\begin{equation*}
\begin{gathered}
    h=\frac{g_{\zeta\zeta}}{g_{\zeta\bar\zeta}+f},\quad 
  \bar h=\frac{g_{\bar\zeta\bar\zeta}}{g_{\zeta\bar\zeta}+f}, \\
  g_{\zeta\zeta}=\frac{2f h}{1-y},\quad g_{\bar\zeta\bar\zeta}=
\frac{2f \bar h}{1-y}, \quad g_{\zeta\bar\zeta}=\frac{f(1+y)}{1-y},\\
  g^{\zeta\zeta}=-\frac{2 \bar h}{f(1-y)},\quad
  g^{\bar\zeta\bar\zeta}=-\frac{2 h}{f(1-y)},\quad
  g^{\zeta\bar\zeta}=\frac{1+y}{f(1-y)},
\end{gathered}
\end{equation*}
where $f=\sqrt{-{}^{(2)}g}$, $y=h\bar h$, with
$f=\frac{r^2}{2}e^{2\tilde\varphi}$ fixed, while $h=O(r^{-1})=\bar
h$. Alternatively, one can choose
\begin{equation*}
\begin{gathered}
   g_{\zeta\zeta}=f e^{i\alpha}\sinh\rho ,\quad g_{\bar\zeta\bar\zeta}=
f e^{-i\alpha}\sinh\rho, \quad g_{\zeta\bar\zeta}=f \cosh\rho,\\
  g^{\zeta\zeta}=-f^{-1}e^{-i\alpha}\sinh\rho, \quad
  g^{\bar\zeta\bar\zeta}=-f^{-1}e^{i\alpha}\sinh\rho ,\quad
  g^{\zeta\bar\zeta}=f^{-1}\cosh\rho,
\end{gathered}
\end{equation*}
where $\rho=O(r^{-1})$ and $\alpha=O(r^0)$.  

In the parametrization with the conformal factor introduced with
respect to the Riemann sphere, we can write
\begin{equation}
  \label{eq:BMS4compC}
\begin{gathered}
  C_{\zeta\zeta}=e^{2\tilde\varphi} c,\quad
  C_{\bar\zeta\bar\zeta}=e^{2\tilde\varphi} \bar c,\quad
  C_{\zeta\bar\zeta}=0,\\
D_{\zeta\zeta}= d,\quad
  D_{\bar\zeta\bar\zeta}= \bar d,\quad
  D_{\zeta\bar\zeta}=0.
\end{gathered}
\end{equation}
Equations \eqref{eq:BMS4solution2}, \eqref{eq:BMS4solution3} and \eqref{eq:BMS4solution4} read 
\begin{equation}
\begin{split}
  \beta&=-\frac{1}{4}r^{-2}c\bar c -\frac{1}{3}r^{-3}e^{-2\tilde\varphi}
  (d\bar  c+\bar d c) +o(r^{-3-\epsilon}), \\
  U^\zeta&=-\frac{2}{r^{2}}e^{-4\tilde\varphi}\d(e^{2\tilde\varphi}\bar
  c)-\\&\hspace{2cm}-\frac{2}{3r^3}\Big[(\ln r+\frac{1}{3})4e^{-4\tilde\varphi}\d
  \bar d -4e^{-4\tilde\varphi}\bar c
\bar \d(e^{2\tilde\varphi} c)+ \cN^\zeta\Big]+o(r^{-3-\epsilon}),\\
  \frac{V}{r}&=-2r\d_u\tilde\varphi + 4e^{-2\tilde\varphi}\d\bar\d
  \tilde\varphi + r^{-1}2M +o(r^{-1-\epsilon}).
\end{split}
\end{equation}
and the evolution equations become
\begin{equation}
\begin{gathered}
  \d_u (e^{3\tilde\varphi} M)=\d_u \Big(e^{\tilde\varphi}\big[\d^2\bar
  c+\bar \d^2 c+2\d\tilde\varphi \d\bar c+2\bar\d\tilde\varphi\bar \d
  c+2\d^2\tilde\varphi\bar c+2\bar\d^2\tilde\varphi
  c\big]\Big)\\
  -e^{\tilde\varphi}\d_u(e^{\tilde\varphi}c)\d_u(e^{\tilde\varphi}\bar
  c)+2e^{\tilde\varphi}\Big( c\big[\d_u(\bar\d \tilde\varphi)^2
  -\d_u\bar\d^2\tilde\varphi\big] +\bar
  c\big[\d_u(\d\tilde\varphi)^2-\d_u\d^2\tilde\varphi\big]\Big)
  \\+e^{-\tilde\varphi}\Big(-4(\d\bar\d)^2\tilde\varphi+8\big[
  (\d\bar\d \tilde\varphi)^2+\d\tilde\varphi\d\bar\d^2\tilde\varphi+
  \bar\d\tilde\varphi\d^2\bar\d\tilde\varphi -2\bar\d\tilde\varphi
  \d\tilde\varphi\d\bar\d\tilde\varphi\big]\Big),
  \\
  \d_u (e^{2\tilde\varphi} N_{\bar\zeta})=e^{2\tilde\varphi}\Big[\bar
  \d M +\frac{1}{4}\big[(\bar\d \bar c+5\bar c\bar\d)\d_u c-(3
  c\bar\d+7\bar\d c) \d_u\bar c \big] +2\bar\d\tilde\varphi(\bar c\d_u
  c - c\d_u\bar c) \\-\half\d_u\tilde\varphi \bar\d (c\bar
  c)+\bar\d\d_u\tilde\varphi c\bar c \Big]
  +2(\d\d_u\tilde\varphi +\d_u\tilde\varphi\d)\bar d \\
  +\bar \d^3c+2\bar \d^3\tilde\varphi c +4\bar \d^2\tilde\varphi\bar
  \d c-4\bar\d \tilde\varphi\bar\d^2\tilde\varphi c
  -4(\bar\d\tilde\varphi)^2\bar \d c\\
  -\d^2\bar\d\bar c -2(\d\tilde\varphi \d+\d^2\tilde\varphi)\bar\d\bar
  c-2(\d\bar\d\tilde\varphi-\bar
  \d\tilde\varphi\d-2\bar\d\tilde\varphi \d\tilde\varphi ) \d\bar c
  \\-2(\d^2\bar\d\tilde\varphi+2\d\bar\d^2\tilde\varphi
  -2\bar\d\tilde\varphi\d^2\tilde\varphi
  -4\bar\d\tilde\varphi\d\bar\d\tilde\varphi )\bar c .
\end{gathered}
\end{equation}

Let us now set $\tilde\varphi=0$. Note that one can re-introduce an
arbitrary $\tilde\varphi$ through the finite coordinate transformation
generated by $\xi^u=-u\tilde\varphi$, $\xi^A=-\xi^u_{,B}\int ^\infty_r
dr^\prime (e^{2\beta} g^{AB})$, $\xi^r=-\half r(\d_A \xi^A
-2\tilde\varphi -f_{,B} U^B)$. The above relations then simplify to
\begin{equation}
\begin{split}
  \label{eq:30til}
  \beta&=-\frac{1}{4}r^{-2}c\bar c -\frac{1}{3}r^{-3} (d\bar
  c+\bar d c) +o(r^{-3-\epsilon}), \\
  U^\zeta&=-2r^{-2}\d \bar c-\frac{2}{3}r^{-3}\Big[(\ln
  r+\frac{1}{3})4\d\bar d-4 \bar c \bar \d c+ N^\zeta\Big]
  +o(r^{-3-\epsilon}),\\
  \frac{V}{r}&= r^{-1}2 M+o(r^{-1-\epsilon}),\\
  \d_u M&=\big[\d^2\dot{\bar c}+\bar \d^2\dot c\big] -\dot c\dot{\bar c},
  \\
  \d_u N_{\bar\zeta}&=
\bar \d M +\frac{1}{4}\big[(\bar\d \bar
  c+5\bar c\bar\d)\dot c-(3 c\bar\d+7\bar\d c) \dot{\bar c} \big]
+\bar \d^3c-\d^2\bar\d\bar c.
\end{split}
\end{equation}
When defining $\tilde M=M-\bar\d^2 c- \d^2 \bar c$ and $\tilde
N^\zeta=-\frac{1}{12}[2 N^\zeta+7\bar c \bar \d c+3 c\bar \d \bar c]$, the
evolution equations become
\begin{equation}
\begin{split}
  \label{eq:30q}
  \d_u\tilde M&=-\dot c\dot {\bar c},
  \\
  3\d_u \tilde N^\zeta&=-\bar\d \tilde M-2\bar \d^3  c-(\bar\d \bar c+3
  \bar c\bar \d )\dot{ c}.
\end{split}
\end{equation}

\section{Realization of $\mathfrak{bms}_4$ on solution space }
\label{sec:mass-asympt-flat}

In order to compute how $\mathfrak{bms}_4$ is realized on solution
space we need to compute the Lie derivative of the metric on-shell. We
will do so for the extended transformations defined by
\eqref{eq:bms4symasptphi1}-\eqref{eq:bms4symasptphi2} and use $-\delta
\bar\gamma_{AB}=2\omega\bar \gamma_{AB}$. Let
$\tilde\psi=\psi-2\omega$. This gives
\begin{equation}
  \label{eq:BMS4varC}
  -\delta C_{AB}=[f\d_u+\cL_Y -\half(\tilde \psi+fl)] C_{AB}-2\bar D_A\bar D_B f+\bar \Delta f\bar\gamma_{AB},
\end{equation}
where \eqref{eq:BMS4defNAB} should be used to eliminate $\d_u C_{AB}$ in favor
of $N_{AB}$ and
\begin{equation}
  \label{eq:BMS4varD}
  -\delta D_{AB}=\cL_Y D_{AB}, 
\end{equation}
where we have used that 
\begin{equation*}
  \begin{split}
    \bar D_A\bar D_C f C^C_B+\bar D_B\bar D_C f C^C_A-
    \bar\gamma_{AB}\bar D_C\bar D_C f C^{CD}-\bar\Delta f C_{AB}=0,\\
    \bar D_A f \bar D_C C^C_B+\bar D_B f \bar D_C C^C_A+ \bar D_C f
    \bar D_A C^C_B+ \bar D_C f \bar D_B C^C_A-\\- 2\bar D^C f\bar D_C
    C_{AB}-2\bar\gamma_{AB} \bar D_C f\bar D_D C^{CD} =0,
  \end{split}
\end{equation*}
which can be explicitly checked in the parametrization
$\bar\gamma_{AB}dx^Adx^B=e^{2\tilde\varphi} d\zeta d\bar \zeta$ with
$C_{AB}$ defined in \eqref{eq:BMS4compC}.  By taking the time derivative of
\eqref{eq:BMS4varC} and using \eqref{eq:BMS4defNAB}, \eqref{eq:bms4eqf} with $\psi$
replaced by $\tilde \psi$, one finds the transformation law for the
news tensor,
\begin{multline}
  -\delta N_{AB}= [f\d_u + \cL_Y] N_{AB}-(\bar D_A\bar D_B
  \tilde \psi-\half \bar \Delta\tilde \psi \bar\gamma_{AB})\\ 
+\frac{1}{4} (2 f\dot l
  +  f l^2+ \tilde \psi l -4\dot\omega+2
  Y^C\bar D_C l) C_{AB}\\+l (\bar D_A\bar D_B f-\half \bar \Delta
  f\bar\gamma_{AB}) -f(\bar D_A\bar D_B l -\half \bar\Delta l
  \bar\gamma_{AB}).
\end{multline}

We have $g_{uA}=\half \bar D_B C^B_A+\frac{2}{3} r^{-1} \big[(\ln
r+\frac{1}{3}) \bar D_B D^B_A+\frac{1}{4} C_{A}^B \bar D_C
C^{C}_B+N_A\big]+o(r^{-1-\epsilon})$, and by computing $\cL_\xi
g_{uA}$ on-shell, we find to leading oder that  
$-\delta (\bar D_B C^B_A)=[f\d_u+\cL_Y+\half(lf+\tilde\psi)] \bar D_B
C^B_A-\half \d_B (lf+\tilde\psi) C^B_A+\d_C
f(N^C_A+lC^C_A)-\d_A(\bar\Delta f)-\d_A f \bar R$. This is consistent
with \eqref{eq:BMS4varC} by using the generalization of \eqref{eq:BMS4confvect} which reads
\begin{equation}
  \label{eq:35a}
  2\bar D_B\bar  D_C Y_A=\bar  \gamma_{CA}\bar  D_B\psi+\bar \gamma_{AB}
\bar D_C\psi-\bar\gamma_{BC}\bar  D_A \psi+\bar R
  Y_C \bar \gamma_{BA}-\bar R Y_A \bar \gamma_{BC},  
\end{equation}
and implies $\bar \Delta Y^A=-\half \bar RY^A$,$\bar \Delta\psi = -
\bar D_A(\bar R Y^A)$. The logarithmic term gives 
$-\delta (\bar D_B D^B_A)=(f\d_u+\cL_Y+lf+\tilde\psi)\bar D_B D^B_A$,
which is again consistent with \eqref{eq:BMS4varD}, while the $r^{-1}$
terms, when combined with the previous transformations, give
\begin{multline}
  \label{eq:33}
  -\delta N_A=[f\d_u +\cL_Y+\tilde\psi+fl]N_A -\half [ f\bar D_B l
  +\bar D_B \tilde \psi +(\tilde\psi+lf)
  \bar  D_B ] D^B_A\\
  +3\bar D_A f M -\frac{3}{16}\bar D_A f
  N^B_CC^C_B
+\half \bar D_B f N^B_C C^C_A
+\frac{1}{32}(  \bar D_A fl-f\bar D_A l-\bar D_A\tilde\psi)
  (C^B_CC^C_B)\\+ \frac{1}{4} (\bar D_Bf \bar R+\bar D_B \bar\Delta
  f)C^B_A -\frac{3}{4}\bar D_B f(\bar D^B\bar D_C C^C_A-\bar
  D_A\bar D_C C^{BC})
  \\+\half (\bar D_A\bar D_B f
-\frac{1}{2}\bar \Delta f\bar\gamma_{AB} )\bar D_C C^{CB}+\frac{3}{8} \bar
  D_A(\bar D_C\bar D_B f C^{CB}) . 
\end{multline}
Here $\d_u N_A$ should be eliminated by using 
\eqref{eq:BMS4duNA}. 
In the same way, from the order $r^{-1}$ of $\cL_\xi g_{uu}$, we get 
\begin{multline}
  -\delta M =[f\d_u+Y^A\d_A+\frac{3}{2} (\tilde\psi+
  fl)]M\\+\frac{1}{4}\d_u[
  \bar D_C \bar D_B f C^{CB} +2\bar D_B f\bar D_C C^{CB}]
+\frac{1}{4}[\bar D_A f l-f\bar D_A l-\bar D_A \tilde\psi]\bar D_B
C^{BA}
\\
+\frac{1}{4}\d_A
  f(\d^A\bar R-C^{AB}\bar D_B l)+\frac{1}{4}
  l[\bar D_C \bar D_B f C^{CB} +\bar D_B f\bar D_C C^{CB}], 
\end{multline}
where $\d_u M$ should be replaced by its expression from
\eqref{eq:BMS4duM}. 

Let us now discuss these transformations in the parametrization
$\zeta,\bar\zeta$ with $\tilde\varphi=0=\omega$ so that
$\bar\gamma_{AB}dx^Adx^B=d\zeta d\bar\zeta$. From the leading and
subleading orders of $\cL_\xi g_{\zeta\zeta},\cL_\xi g_{\bar \zeta\bar
  \zeta}$, we get
\begin{equation}
  \label{eq:50}
  \begin{gathered}
-\delta c=f\dot c+Y^A\d_A c+(\frac{3}{2} \d  Y-\half  \bar \d 
\bar Y)c-2 \d^2 f  ,\\
-\delta d=Y^A\d_A  d+2\d Y d,
  \end{gathered}
\end{equation}
with $f$ given in \eqref{eq:BMS4fsimple} and 
the complex conjugate relation holding for $\bar c, \bar d$. 
In particular, for the news function we find 
\begin{equation}
  -\delta\dot c=f\ddot c+Y^A\d_A \dot c+2 \d  Y\dot c-
  \d^3  Y  ,
\end{equation}
From the subleading term of
$\cL_\xi g^{r\zeta}$ and the leading term of $\cL_\xi g_{uu}$ and  we get
\begin{multline}
  -\delta \tilde \cN^\zeta=Y^A \d_A \tilde \cN^\zeta
  + (\d Y + 2\bar \d \bar Y )\tilde \cN^\zeta +\frac{1}{3} \d^2 Y \bar d\\
- \bar
  \d f (\tilde M +2
  \bar \d^2  c + \bar c \dot{ c} )-\frac{f}{3}\big[\bar\d \tilde M+2\bar \d^3
   c+(\bar\d  \bar c+3 \bar c\bar \d)\dot{ c}\big], 
\end{multline}
\begin{equation}
  -\delta \tilde M  = -f
   \dot c\dot{\bar c}+  Y^A \d_A \tilde M + \frac{3}{2}
  \psi \tilde M +\bar c \d^3 Y+  c \bar \d^3 \bar Y 
    +  4 \d^2 \bar \d^2 \tilde T. 
\end{equation}

As can be understood by comparing with the $3$ dimensional anti-de
Sitter and flat cases, this computation already contains information
on the central extensions in the surface charge algebra through the
inhomogeneous part of the transformation laws for the fields. Although
we know that $BMS_4$ is free of central extensions in general, in our
case we could have field dependent extensions. Signs of this kind of
extension are present in the variation of $M$ for instance where we
can see a Schwarzian derivative multiplied by the field $c$.

\section{$\mathfrak{bms}_4$ charges}

In chapter \ref{chap:ads3} and  \ref{chap:bms3}, we were able to integrate the expression for
the charges and use equation (\ref{eq:applinearcharges}) to compute
them. The charges of $BMS_4$ are non-integrable; we have to start from expression (\ref{eq:gravsurfacecharge}) with
$h=\delta g$. We mean by $\delta g$ the
variation of the metric $g$ generated by a small variation of the
parameters characterizing it asymptotically:
\begin{equation}
\cX^\Gamma\equiv \{C_{AB},N_{AB},D_{AB},M,N_A\}.
\end{equation}
In this section, we will consider metrics with $\d_u \varphi=0$. Our integration two sphere is
given by $u=u_0$ constant and $r=cst \to \infty$. We will also use the
notation $\int d^2 \Omega^\varphi = \int dx^2 dx^3 \sqrt{\bar \gamma}
= \int_0^\pi d \theta\int_0^{2\pi} d\phi \sin{\theta} e^{2\varphi}$.

\vspace{5mm}

We start by writing
\begin{multline}
\ndelta  \cQ_\xi[h, g]= \frac{1}{16 \pi G}\lim_{r \to \infty} \int d^2
  \Omega^\varphi \ r^2 e^{2 \beta}\Big[\xi^r( D^u h- D_\sigma
  h^{u\sigma} + D^r h^u_r- D^u h^r_r)\\-\xi^u( D^r h-
  D_\sigma h^{r\sigma}- D^r h^u_u+ D^u h^r_u)+\xi^A( D^r
  h^{u}_A- D^u h^{r}_A) +\frac{1}{2}h(
  D^r\xi^u- D^u\xi^r)\\+\half h^{r\sigma}( D^u\xi_\sigma-
  D_\sigma\xi^u)-\half  h^{u\sigma}( D^r\xi_\sigma-
  D_\sigma\xi^r)\Big].\label{eq:intch} 
\end{multline}

We have 
\begin{eqnarray}
 D^u h- D_\sigma
  h^{u\sigma} + D^r h^u_r- D^u h^r_r
 & = & g^{ur}g^{AB}( D_r h_{AB}-D_A h_{rB})\nonumber\\
&=& - e^{-2 \beta} \left( g^{AB} \partial_r h_{AB} -
  k^{AB}h_{AB}+ e^{-2\beta} g^{AB} k_{AB} h_{ru}\right) \nonumber \\
&=& \frac{1}{4r^3} C^{AB} \delta C_{AB} + o(r^{-3-\epsilon})\,,
\end{eqnarray} 
\begin{eqnarray}
  -\left( D^r h- D_\sigma h^{r\sigma}- D^r h^u_u+ D^u h^r_u\right)&=&
  D^A h^r_A- D^r h^A_A\nonumber\\
  &= &g^{ur}g^{AB}( D_A h_{uB}- D_u h_{AB}) + O(r^{-3})\nonumber\\
  &=&g^{ur}\left( g^{AB}\, {}^{(2)}D_B h_{uA} - h_{ur} g^{AB}(l_{AB} +
    k_{AB}\frac{V}{r})\right. \nonumber \\ && \quad - k h_{uu} - g^{AB} \partial_u h_{AB} + g^{AB}
  h_{CA} l^C_B\Big)+ O(r^{-3})\nonumber \\ 
  &=&\frac{1}{r^2} \left(4 \delta M - \frac{1}{2} \bar D_A\bar D_B \delta C^{AB} +
    \frac{1}{2} \delta \partial_u (C^{AB}C_{AB}) \right. \nonumber \\ &&
  \left. - \frac{1}{2} \partial_u
    C_{AB} \delta C^{AB} - C^{AB} \partial_u \delta C_{AB} \right) + 
o(r^{-2-\epsilon})\,,\nonumber\\
\end{eqnarray}
\begin{eqnarray}
  D^r h^u_A- D^u h^r_A&=& (g^{ur})^2\left( \Gamma^C_{rA}h_{uC} - 
\partial_r h_{Au}\right) + g^{ur} g^{rB}
  \left( \Gamma^C_{rB}h_{AC} - \partial_r h_{AB}\right) + O(r^{-3})\nonumber \\
  &=& \frac{1}{2 r} \bar D_B \delta C^B_A + \frac{2}{3r^2} (2 \text{ ln }r -
  \frac{1}{3}) \bar D_B \delta D^B_A  + o(r^{-2-\epsilon}) \nonumber\\ && \quad + \frac{1}{r^2}
  \left( \frac{4}{3}\delta N_A + \frac{1}{3} \delta (C_{AB} \bar D_C C^{BC}) -
    \frac{1}{4} C_{AB} \bar D_C \delta C^{BC}\right)
\end{eqnarray}
\begin{multline}
\half h^{r\sigma}( D^u\xi_\sigma-
  D_\sigma\xi^u)-\half  h^{u\sigma}( D^r\xi_\sigma-
  D_\sigma\xi^r)=\nonumber\\
\frac{1}{2} \left(h^u_u + h^r_r \right)\left(D^u \xi^r - D^r \xi^u \right)+ 
\frac{1}{2} h^r_A\left(D^u \xi^A - D^A \xi^u \right). \nonumber
\end{multline}
\begin{equation}
\frac{1}{2}(h-h^u_u - h^r_r )(
  D^r\xi^u- D^u\xi^r)  = \frac{1}{2}g^{AB}h_{AB}(
  D^r\xi^u- D^u\xi^r) = 0
\end{equation}
\begin{eqnarray}
D^u \xi^A - D^A \xi^u & =& g^{ur} \partial_r \xi^A - g^{AB} \partial_B
\xi^u + (g^{ur} \Gamma^A_{rC} - g^{AB}\Gamma^u_{BC})\xi^C+O(r^{-3})\nonumber \\
&=& \frac{-2}{r} Y^A + \frac{1}{r^2} C^A_CY^C+O(r^{-3})
\end{eqnarray}
\begin{multline}
\frac{1}{2}h^r_A  =  -\frac{1}{4} \bar D_B \delta C^B_A - \frac{1}{3r}
(\text{ln } r + \frac{1}{3}) \bar D_B \delta D^B_A  \\ + \frac{1}{r}
\left(-\frac{1}{3} \delta N_A - \frac{1}{12} \delta (C_{AB} \bar D_C C^{BC})
  + \frac{1}{4} \delta C_{AB} \bar D_C C^{BC} \right)+o(r^{-1-\epsilon})\,.
\end{multline}
Putting everything together, we get
\begin{eqnarray}
   \cQ_\xi[h,\bar g]&=&\frac{1}{16 \pi G}\lim_{r \to \infty} \int
   d^2 \Omega^\varphi \,
\Big[ r \left( Y^A \frac{1}{2 } \bar D_B \delta C^B_A + Y^A
  \frac{1}{2 } \bar D_B \delta C^B_A  \right) \nonumber \\
&& \quad + Y^A \bar D_B \delta D^B_A \left(\frac{4}{3} \text{ ln } r -
  \frac{2}{9} +\frac{2}{3} \text{ ln } r + \frac{2}{9}
\right) -\frac{\psi}{8} C^{AB} \delta C_{AB}\nonumber\\
&& \quad + f \left(4 \delta M - \frac{1}{2} \bar D_A\bar D_B \delta C^{AB} +
\frac{1}{2} \delta \partial_u (C^{AB}C_{AB}) \right. \nonumber \\ &&
\qquad \left. - \frac{1}{2} \partial_u
C_{AB} \delta C^{AB} - C^{AB} \partial_u \delta C_{AB}\right)\nonumber\\ &&
\quad +Y^A \left( \frac{4}{3}\delta N_A + \frac{1}{3} \delta (C_{AB}
  \bar D_C C^{BC}) -
  \frac{1}{4} C_{AB} \bar D_C \delta C^{BC}\right) \nonumber \\ &&
\quad -2 Y^A \left(-\frac{1}{3} \delta N_A - \frac{1}{12} \delta
  (C_{AB} \bar D_C C^{BC})
  + \frac{1}{4} \delta C_{AB} \bar D_C C^{BC} \right) \nonumber \\ &&
\quad -\frac{1}{2} \bar D_A f \bar D_B \delta C^{AB} -\frac{1}{4}
C_{AB} Y^A \bar D_C
\delta C^{BC}\Big]. 
\end{eqnarray}
This can be simplified to
\begin{eqnarray}
   \cQ_\xi[h,\bar g]&=&\frac{1}{16 \pi G}\int
   d^2 \Omega^\varphi \,
\Big[ -\frac{\psi}{8} C^{AB} \delta C_{AB} +Y^A 2\delta N_A
-\frac{1}{2} \bar D_A f \bar D_B \delta C^{AB}\nonumber\\
&& \quad + f \left(4 \delta M - \frac{1}{2} \bar D_A\bar D_B \delta C^{AB} +
\frac{1}{2} \delta \partial_u (C^{AB}C_{AB}) \right. \nonumber \\ &&
\qquad \left. - \frac{1}{2} \partial_u
C_{AB} \delta C^{AB} - C^{AB} \partial_u \delta C_{AB} \right)\Big]\\ &=&\frac{1}{16 \pi G}\delta \int
   d^2 \Omega^\varphi \,
\Big[ -\frac{\psi}{16} C^{AB} C_{AB} + 2Y^A N_A+ 4 f M \Big] \nonumber \\
&&+\frac{1}{16 \pi G}\int
   d^2 \Omega\,
\Big[ \frac{f}{2} \partial_u
C_{AB} \delta C^{AB}\Big]. 
\end{eqnarray}

This result can be rewritten as
\begin{equation}
  \label{eq:BMS4seppcharges}
   \ndelta\cQ_\xi[h,g] = \delta \left(Q_{s}[\cX]\right)+\ndelta
   \Theta_{s}[\cX,\delta\cX]\, ,
 \end{equation}
 where the integrable part of the surface charge one-form is
 given by 
 \begin{equation}
   \label{eq:BMS4charges}
   Q_{s}[\cX]=\frac{1}{16 \pi G}\int
   d^2 \Omega^\varphi\,\Big[ 4 f M+ Y^A \big(2 N_A + 
\frac{1}{16}\partial_A (C^{CB} C_{CB})\big) \Big]\,,
 \end{equation}
and the non-integrable part is due to the news tensor,
\begin{equation}
  \ndelta\Theta_{s}[\cX,\delta\cX]=\frac{1}{16 \pi G}\int
  d^2 \Omega^\varphi\,
  \Big[ \frac{f}{2} N_{AB} \delta C^{AB}\Big]\,.
\end{equation}

The separation into an integrable and a non-integrable part in \ref{eq:BMS4seppcharges} is not
uniquely defined as this equation also holds in terms $Q_s' = Q_s -
N_s$, $\Theta_s' = \Theta_s + \delta N_s$ for some $N_s[\cX]$.

These charges are very similar and should be compared to those
proposed earlier in\cite{Wald:2000ly} in the context of a closely
related, but slightly different approach to asymptotically flat
spacetimes. 

\section{Algebra of charges and extension}

One of the big advantage of the covariant techniques to compute
charges is that they allow us to define an algebra for those
charges. Moreover, when using the equivalence of the Hamiltonian and
the covariant approaches, one can infer that this algebra coincides
with the Dirac bracket $\left\{ Q_{s_1},Q_{s_2}\right\}^*$ of the charges. Unfortunately, in presence of this non-integrable term, the
usual definition of algebra (\ref{eq:asympchargesalgebra}) fails.

In the non integrable case, we propose as a definition
\begin{equation}
\label{eq:bms4defalgebra}
\left\{Q_{s_1},Q_{s_2}  \right\}[\cX] = (-\delta_{s_2}) Q_{s_1}[\cX] + \ndelta
\Theta_{s_2} [\cX,-\delta_{s_1} \cX]\,. 
\end{equation}

\vspace{5mm}

\begin{theorem}
\label{theo:antisym} 
The $BMS_4$ charges (\ref{eq:BMS4charges}) satisfy
\begin{equation}
\left\{Q_{s_1},Q_{s_2}  \right\}[\cX] = Q_{[s_1,s_2]} [\cX]+
K_{s_1,s_2}[\cX], \label{eq:BMS4skew}
\end{equation}
where the field dependent central extension is 
\begin{multline}
K_{s_1,s_2}[\cX]= \frac{1}{32 \pi G}\int
   d^2 \Omega^\varphi\,
\Big[  ( f_1
  \d_A f_2 - f_2 \d_A f_1)\d^A \bar R+\\+C^{BC}  (f_1 \bar D_B \bar
  D_C \psi_2 - f_2 \bar D_B \bar D_C
  \psi_1) \Big] \,, \label{eq:BMS4centralcha}
\end{multline}
\end{theorem}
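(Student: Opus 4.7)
The plan is to evaluate each side of \eqref{eq:BMS4skew} by substituting the explicit formulas derived earlier into \eqref{eq:bms4defalgebra}, and then reorganising the resulting bilinear expression in $(s_1,s_2)$ until the charge of the bracketed parameter is extracted. I work in the setup used to derive $Q_s$, namely $\d_u\varphi=0$ and $\omega=0$, so that $l=0$ and $\tilde\psi=\psi$.

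First I would expand $(-\delta_{s_2})Q_{s_1}[\cX]$ by inserting the variational laws $-\delta M$, $-\delta N_A$ and $-\delta C_{AB}$ of Section \ref{sec:mass-asympt-flat} into the integrand of \eqref{eq:BMS4charges}, treating $s_2$ as the parameter of the transformation and $s_1$ as the parameter of the charge. Whenever the transformation laws produce $\d_u M$ or $\d_u N_A$, I substitute the evolution equations \eqref{eq:BMS4duM} and \eqref{eq:BMS4duNA}, and the news tensor is eliminated via $N_{AB}=\d_u C_{AB}$. In parallel I would compute $\ndelta\Theta_{s_2}[\cX,-\delta_{s_1}\cX]$ by plugging \eqref{eq:BMS4varC} into the integrand $\tfrac{1}{32\pi G}\int d^2\Omega^\varphi\, f_2 N^{AB}(-\delta_{s_1} C_{AB})$. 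This produces a homogeneous part bilinear in the fields plus an inhomogeneous part generated by the $-2\bar D_A\bar D_B f_1+\bar\Delta f_1\bar\gamma_{AB}$ piece of $-\delta C_{AB}$.

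The next step is to rearrange the sum of these two contributions by repeated integration by parts on the round $2$-sphere, using the conformal Killing equation \eqref{eq:35a} and the defining relations \eqref{eq:bms4eqf}, \eqref{eq:32} of $\hat f$ and $\hat Y^A$. Terms linear in $M$ should combine into $4\hat f M$, terms linear in $N_A$ into $2\hat Y^A N_A$, and contributions arising from $C\cdot \bar D\bar D C$ and $C\cdot N$ should reorganise into $\tfrac{1}{16}\hat Y^A\partial_A(C^{CB}C_{CB})$; together these reconstruct $Q_{[s_1,s_2]}[\cX]$ exactly, where the bracket is that of $\mathfrak{bms}_4$. A useful check along the way is that the piece $\tfrac12 f_2 N^{AB}\bar\Delta f_1\bar\gamma_{AB}$ of $\ndelta\Theta_{s_2}$, which is symmetric in $(1\leftrightarrow 2)$, cancels against a mirror term coming from $-\delta_{s_2}Q_{s_1}$, so that what remains is manifestly antisymmetric.

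What should then be left is precisely the vacuum contribution, i.e.\ the terms that survive when $C_{AB}=M=N_A=0$ but $\bar\gamma_{AB}$ is arbitrary. Those terms originate from the last two lines of \eqref{eq:33} proportional to $\d_A f\,\d^A\bar R$ and $\bar D\bar D f \cdot C$, together with the Schwarzian piece $4\d^2\bar\d^2\tilde T$ in $-\delta\tilde M$; their sum, after IBP and use of $\bar\Delta \psi = -\bar D_A(\bar R Y^A)$ and \eqref{eq:35a}, should collapse to the two-term expression \eqref{eq:BMS4centralcha}, manifestly antisymmetric in $(s_1,s_2)$. The main obstacle I expect is bookkeeping: the transformations $-\delta M$ and $-\delta N_A$ each contain many bilinear pieces, and the non-trivial verification is that every $C\cdot C$, $C\cdot N$ and $D\cdot(\cdot)$ contribution either vanishes or is absorbed into $Q_{[s_1,s_2]}$, leaving only the stated purely geometric cocycle.
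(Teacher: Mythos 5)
Your plan follows essentially the same route as the paper's own proof: insert the transformation laws of $M$, $N_A$, $C_{AB}$, $D_{AB}$ into $(-\delta_{s_2})Q_{s_1}$, eliminate $\d_u M$, $\d_u N_A$ via the evolution equations and the news via $N_{AB}=\d_u C_{AB}$, reorganise by field content with integrations by parts and the conformal Killing identities so that the $M$, $N_A$ and $C\cdot C$ pieces reassemble into $Q_{[s_1,s_2]}$, recognise the news-bilinear piece as $-\ndelta\Theta_{s_2}[\cX,-\delta_{s_1}\cX]$ (which cancels against the extra term in the definition of the bracket), and read off the remaining inhomogeneous, purely geometric terms as $K_{s_1,s_2}$. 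One minor correction: the trace piece $\tfrac{1}{2} f_2 N^{AB}\bar\Delta f_1\,\bar\gamma_{AB}$ that you propose to cancel against a mirror term is not symmetric under $1\leftrightarrow 2$ and does not need a partner --- it vanishes identically because $N^{AB}$ is traceless --- which only simplifies your bookkeeping.
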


\begin{proof}
\footnotesize

We will start by computing the usual factor,
\begin{multline}
-\delta_{s_2} Q_{s_1}[\cX]=\frac{1}{16 \pi G}\int
   d^2 \Omega^\varphi\,
\Big[ Y_1^A \left(2 (-\delta_{s_2}) N_A + \frac{1}{16}\partial_A
  (-\delta_{s_2}) (C^{CB} C_{CB})\right)\\+ 4 f_1 (-\delta_{s_2}) M \Big]\,,
\end{multline}
and organize according to the different types of terms that appear:
\begin{itemize}

\item terms containing $M$
\begin{eqnarray}
-\delta_{s_2} Q_{s_1}[\cX]_M&=&\frac{1}{16 \pi G}\int
   d^2 \Omega^\varphi\,
\Big[ Y_1^A 2 (f_2 \partial_A M + 3 \partial_A f_2 M) + 4 f_1
(Y_2^A\d_AM+\frac{3}{2} \psi_2M)\Big]\nonumber\\
&=&\frac{1}{16 \pi G}\int
   d^2 \Omega^\varphi\, 4M
\Big[ - \half \bar D_A ( Y_1^A f_2) + \frac{3}{2} Y^A_1 \partial_A f_2
- \bar D_A (f_1
Y_2^A)+\frac{3}{2} f_1\psi_2)\Big]\nonumber\\
&=&\frac{1}{16 \pi G}\int
   d^2 \Omega^\varphi\, 4M
\Big[ Y^A_1 \partial_A f_2  - \half \psi_1 f_2  -Y_2^A\partial_A f_1
+\frac{1}{2} f_1\psi_2)\Big]\nonumber\\
&=&\frac{1}{16 \pi G}\int
   d^2 \Omega^\varphi\, 4M f_{[s_1,s_2]}\,,
\end{eqnarray}

\item terms containing $N_A$
\begin{eqnarray}
-\delta_{s_2} Q_{s_1}[\cX]_N&=&\frac{1}{16 \pi G}\int
   d^2 \Omega^\varphi\,
\Big[ 2Y_1^A (\cL_{Y_2} + \psi_2) N_A \Big]\nonumber \\
&=&\frac{1}{16 \pi G}\int
   d^2 \Omega^\varphi\,
\Big[ 2Y_1^A (Y_2^B \bar D_B + \psi_2) N_A + 2 Y_1^A \bar D_A Y_2^B N_B \Big]\nonumber \\
&=&\frac{1}{16 \pi G}\int
   d^2 \Omega^\varphi\, 2 N_A
\Big[ -Y_2^B \bar D_B  Y_1^A + Y_1^B \bar D_B Y_2^A\Big]\nonumber \\
&=&\frac{1}{16 \pi G}\int
   d^2 \Omega^\varphi\, 2 N_A Y^A_{[s_1,s_2]}\,,
\end{eqnarray}

\item terms containing $D_{AB}$
\begin{eqnarray}
-\delta_{s_2} Q_{s_1}[\cX]_D&=&\frac{1}{16 \pi G}\int
   d^2 \Omega^\varphi\, 2 Y_1^A 
\Big[  -\half [\bar D_B \psi_2 +\psi_2
  \bar D_B ] D^B_A\Big]\nonumber \\
&=&\frac{1}{16 \pi G}\int
   d^2 \Omega^\varphi\, 2 Y_1^A 
\Big[ -\half \bar D_B(\psi_2 D^B_A)\Big]\nonumber \\
&=&\frac{1}{16 \pi G}\int
   d^2 \Omega^\varphi\, \bar D^B Y_1^A \psi_2 D_{AB}=0\,,
\end{eqnarray}

\item terms containing the news
\begin{eqnarray}
  -\delta_{s_2} Q_{s_1}[\cX]_{news}&=&\frac{1}{16 \pi G}\int
  d^2 \Omega^\varphi\,
  \Big[ 2Y_1^A \left( -\frac{3}{16} \bar D_A f_2
    N^B_CC^C_B
    +\half \bar D_B f_2 N^B_C C^C_A\right) \nonumber \\ && \qquad +2 Y^A_1
  f_2 \left( \frac{1}{16}\d_A\big[N^B_C C^C_B\big]
    -\frac{1}{4} \bar D_AC^C_BN^B_C
    -\frac{1}{4} \bar D_B\big[C^{B}_{C}N^C_{A}-N^B_CC^C_A\big]\right)
  \nonumber 
\\ && \qquad - \psi_1\frac{1}{8}
  C^{AB} f_2 N_{AB}+ 4 f_1 \left(\frac{1}{4} \bar D_B \bar D_C f_2 N^{BC} + \half
    \bar D_B f_2 \bar D_C N^{BC} \right)\nonumber \\ && \qquad+ 4 f_1
  f_2 
\left( -\frac{1}{8} N^A_BN^B_A
    +\frac{1}{4}\bar D_A \bar D_C N^{CA} \right)\Big] \nonumber \\
  &=&\frac{1}{16 \pi G}\int
  d^2 \Omega^\varphi\,\frac{-1}{2} N^{BC} f_2 
  \Big[ f_1 N_{BC} + \cL_{Y_1} C_{BC} - \half \psi_1 C_{BC} -2 \bar D_B \bar D_C
  f_1\Big]\nonumber \\ && + \frac{1}{16 \pi G}\int
  d^2 \Omega^\varphi\,\half N^{C}_B C_{CA} \Big[ Y_1^A \bar D^Bf_2 +
  Y^B_1 \bar D^A f_2 - \bar \gamma^{AB} Y_1^D \bar D_Df_2\Big]\,.
\end{eqnarray}
The second line is zero. This is coming from the following identity
for the symmetrized product of two traceless tensors in 2
dimensions,
\begin{equation}
\half (C^{A}_B K^{B}_C+K^A_B C^B_C) = \half \delta^{A}_{C} C^B_D K^D_B\,,
\end{equation}
and the conformal Killing equation for the $Y^A$. The first line can
be recognized as, 
\begin{eqnarray}
-\delta_{s_2} Q_{s_1}[\cX]_{news}&=&\frac{1}{16 \pi G}\int
   d^2 \Omega^\varphi\,\frac{-1}{2} N^{BC} f_2 
\Big[-\delta_{s_1} C_{BC} \Big]\nonumber \\
&=&- \ndelta \Theta_{s_2} [-\delta_{s_1} \cX,\cX] \,,
\end{eqnarray}

\item the rest
\begin{eqnarray}
  -\delta_{s_2} Q_{s_1}[\cX]_R&=&\frac{1}{16 \pi G}\int
  d^2 \Omega^\varphi\,
  \Big[ 2Y_1^A  \Big( -\frac{1}{32}\bar D_A\psi_2
  C^B_CC^C_B +f_2 \frac{1}{4}C_A^B\d_B
  \bar  R\nonumber\\ &&\qquad -\frac{1}{4}f_2
  \bar D_B \left( \bar D^B \bar D_CC^C_A -\bar D_A 
    \bar D_CC^{BC}\right)\nonumber\\ &&\qquad + \frac{1}{4} ( \bar
  D_Bf_2 \bar R+\bar 
  D_B \bar \Delta
  f_2)C^B_A -\frac{3}{4} \bar D_B f_2(\bar D^B\bar D_C C^C_A-
  \bar D_A \bar D_C C^{BC})
  \nonumber\\ &&\qquad +\half (\bar D_A\bar D_B f_2
  -\frac{1}{2} \bar \Delta f_2\bar \gamma_{AB} ) \bar D_C C^{CB}+\frac{3}{8} 
  \bar D_A(\bar D_C\bar D_B f_2 C^{CB}) \Big) \nonumber \\ && \qquad -\psi_1 \frac{1}{8}
  C^{CB} \left( [\cL_{Y_2} -\half\psi_2] C_{CB}-2 \bar D_C \bar D_B f_2+
    \bar \Delta f_2\bar 
\gamma_{CB}\right)\nonumber \\ && \qquad+ 4 f_1 \left(f_2 \frac{1}{8}
    \bar \Delta  \bar R +\frac{1}{4}\d_A f_2 \d^A \bar R +\frac{1}{8}
    \bar D_C \bar D_B \psi_2 C^{CB} \right) \Big]\nonumber \\ 
  &=&\frac{1}{16 \pi G}\int
  d^2 \Omega^\varphi\,
  \Big[ -Y_1^A \frac{1}{16}\bar D_A\psi_2
  C^B_CC^C_B -\psi_1 \frac{1}{8}
  C^{CB} \left( [\cL_{Y_2} -\half\psi_2] C_{CB}\right) \nonumber\\
  &&\qquad +C^{BC} \Big( \half f_1 \bar D_B \bar D_C \psi_2 +\psi_1
  \frac{1}{4} \bar D_C\bar D_B f_2 + \half f_2 Y_{1B} \d_C \bar R \nonumber\\
  &&\qquad  -\frac{3}{4} \psi_1 \bar D_B \bar D_C f_2 - \bar D_C
  (Y_1^A \bar D_A\bar 
  D_B f_2)
  +\frac{1}{2} \bar D_C ( Y_{1B}\bar \Delta f_2 ) \nonumber\\ &&\qquad
  +\frac{1}{2} Y_{1C}( \bar D_Bf_2\bar R+
  \bar D_B \bar \Delta
  f_2) + \half \bar D_C \bar \Delta (Y_{1B} f_2) - \half \bar D_C \bar D_A \bar D_B (Y_1^A f_2)
  \nonumber\\ &&\qquad -\frac{3}{2} \bar D_C \bar D_A (Y_{1B} \bar D^Af_2) +
  \frac{3}{2} \bar D_C \bar D_A (Y_1^A D_B f_2)\Big) \nonumber\\ &&\qquad+\half ( f_1
  \d_A f_2 - f_2 \d_A f_1)\d^A \bar R\Big]\,.
\end{eqnarray}
Using the commutation rule for covariant derivatives, this gives
\begin{eqnarray}
-\delta_{s_2} Q_{s_1}[\cX]_C&=&\frac{1}{16 \pi G}\int
   d^2 \Omega^\varphi\,
\Big[ -\frac{1}{16}(Y_1^A  \bar D_A\psi_2-Y_2^A  \bar D_A\psi_1)
  C^B_CC^C_B \nonumber\\ && \qquad+\half ( f_1
  \d_A f_2 - f_2 \d_A f_1)\d^A \bar R
   +C^{BC} \Big(\half (f_1 \bar D_B \bar D_C \psi_2 - f_2 \bar D_B \bar D_C
  \psi_1) \nonumber \\ && \qquad
+ \frac{1}{4}f_2 Y_{1B} \d_C \bar R + \half f_2 \bar D_C \bar \Delta Y_{1B}
   + \frac{1}{4}\bar D_Cf_2 Y_{1B} \bar R + \half
  \bar D_C f_2 \bar \Delta Y_{1B}\Big) \Big]\nonumber\\
&=&\frac{1}{16 \pi G}\int
   d^2 \Omega^\varphi\,
\Big[ -\frac{1}{16}\psi_{[s_1,s_2]}
  C^B_CC^C_B +\half ( f_1
  \d_A f_2 - f_2 \d_A f_1)\d^A \bar R\nonumber\\
  &&\qquad +C^{BC} \half (f_1 \bar D_B \bar D_C \psi_2 - f_2 \bar D_B \bar D_C
  \psi_1) \Big]\,,
\end{eqnarray}
where in the last line we have used the identity $\bar \Delta Y^A =
-\half \bar R Y^A$ satisfied by conformal Killing vectors.

\end{itemize}

Summing everything, we obtain
\begin{eqnarray}
-\delta_{s_2} Q_{s_1}[\cX]
&=&\frac{1}{16 \pi G}\int
   d^2 \Omega^\varphi\,
\Big[ -\frac{1}{16}\psi_{[s_1,s_2]}
  C^B_CC^C_B +\half ( f_1
  \d_A f_2 - f_2 \d_A f_1)\d^A \bar R\nonumber\\
  &&\qquad +C^{BC} \half (f_1 \bar D_B \bar D_C \psi_2 - f_2 \bar D_B
  \bar D_C
  \psi_1) +4M f_{[s_1,s_2]}+2 N_A Y^A_{[s_1,s_2]}\Big]\nonumber \\
&& \qquad- \Theta_2 [-\delta_1 \cX,\cX] \nonumber \\
&=& Q_{[s_1,s_2]}- \ndelta \Theta_2 [-\delta_1 \cX,\cX] +K_{s_1,s_2}[\cX]\,,
\end{eqnarray}
with $K_{s_1,s_2}[\cX]$ defined in \eqref{eq:BMS4centralcha}.
\normalsize
\end{proof}

\begin{theorem}
The central extension satisfies the
cocycle condition
\begin{equation}
\label{eq:BMS4cocycle}
K_{[s_1,s_2],s_3} + \delta_{s_3} K_{s_1,s_2} + {\rm cyclic}\ (1,2,3)
=0.
\end{equation}
\end{theorem}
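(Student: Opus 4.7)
The cocycle condition \eqref{eq:BMS4cocycle} should follow from a Jacobi-type identity for the bracket \eqref{eq:bms4defalgebra}, combined with the Jacobi identity of $\mathfrak{bms}_4$ itself and the fact that the modified-Lie-bracket analysis of Section~8 makes the field transformations $-\delta_s$ into a genuine representation of $\mathfrak{bms}_4$, i.e.\ $[-\delta_{s_1},-\delta_{s_2}]\cX=-\delta_{[s_1,s_2]}\cX$ on the solution space coordinates $\cX=\{C_{AB},N_{AB},D_{AB},M,N_A\}$. My plan is to act with $-\delta_{s_3}$ on the identity of Theorem~\ref{theo:antisym},
\begin{equation*}
(-\delta_{s_2})Q_{s_1}[\cX]+\Theta_{s_2}[\cX,-\delta_{s_1}\cX]=Q_{[s_1,s_2]}[\cX]+K_{s_1,s_2}[\cX],
\end{equation*}
then symmetrise cyclically in $(1,2,3)$, and finally re-use Theorem~\ref{theo:antisym} on the pairs $([s_1,s_2],s_3)$ to convert $(-\delta_{s_3})Q_{[s_1,s_2]}$ into $Q_{[[s_1,s_2],s_3]}+K_{[s_1,s_2],s_3}-\Theta_{s_3}[\cX,-\delta_{[s_1,s_2]}\cX]$. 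The Jacobi identity of $\mathfrak{bms}_4$ kills $\sum_{\mathrm{cyc}}Q_{[[s_1,s_2],s_3]}$, and the terms $\sum_{\mathrm{cyc}}K_{[s_1,s_2],s_3}+\sum_{\mathrm{cyc}}\delta_{s_3}K_{s_1,s_2}$ are precisely the combination appearing in \eqref{eq:BMS4cocycle}.

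What must then be shown is that the residual contribution
\begin{equation*}
\sum_{\mathrm{cyc}\,(1,2,3)}\Big[(-\delta_{s_3})(-\delta_{s_2})Q_{s_1}+(-\delta_{s_3})\Theta_{s_2}[\cX,-\delta_{s_1}\cX]+\Theta_{s_3}[\cX,-\delta_{[s_1,s_2]}\cX]\Big]
\end{equation*}
vanishes. The first piece is handled using the representation property: writing $(-\delta_{s_3})(-\delta_{s_2})=\tfrac12\{-\delta_{s_3},-\delta_{s_2}\}+\tfrac12[-\delta_{s_3},-\delta_{s_2}]$, the graded-symmetric part drops out of the cyclic sum, while the graded-antisymmetric part produces precisely $\tfrac12\sum_{\mathrm{cyc}}(-\delta_{[s_2,s_3]})Q_{s_1}$, which combines with the integrable parts of the $\Theta$-contributions when these are in turn rewritten via Theorem~\ref{theo:antisym}.

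The main obstacle is the remaining non-integrable sector. Using the explicit form $\Theta_s[\cX,\delta\cX]=\frac{1}{32\pi G}\int d^2\Omega^{\varphi}\,f\,N_{AB}\,\delta C^{AB}$, together with the transformation laws of $f$, $C_{AB}$ and $N_{AB}$ derived in Section~8, one has to verify the Jacobi-type identity
\begin{equation*}
\sum_{\mathrm{cyc}\,(1,2,3)}\Big[(-\delta_{s_3})\Theta_{s_2}[\cX,-\delta_{s_1}\cX]+\Theta_{s_3}[\cX,-\delta_{[s_1,s_2]}\cX]\Big]=0.
\end{equation*}
Bilinearity of $\Theta$ in its two field arguments, together with the module property $[\delta_{s_2},\delta_{s_3}]C_{AB}=\delta_{[s_2,s_3]}C_{AB}$ and the analogous statement for $N_{AB}$, should reduce this to a total antisymmetrisation of a single tri-linear expression in $s_1,s_2,s_3$, which manifestly vanishes. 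The delicate point is that $f_s$ itself depends on the field $\varphi$ through \eqref{eq:bms4eqf}, so the Leibniz rule for $-\delta_{s_3}$ acting on $\Theta_{s_2}$ produces an extra contribution $\Theta_{(-\delta_{s_3}s_2)}[\cX,-\delta_{s_1}\cX]$; when one recognises $(-\delta_{s_3}s_2)=[s_3,s_2]$ (up to the modification already absorbed in the definition of $\Theta$), this cross term is what cancels the $\Theta_{s_3}[-\delta_{[s_1,s_2]}\cX]$ piece after cyclic summation.

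As a complementary check, one can insert the explicit expression \eqref{eq:BMS4centralcha} for $K_{s_1,s_2}$ into \eqref{eq:BMS4cocycle} and use the transformation laws $-\delta_s\bar R=Y^A\partial_A\bar R+\psi\bar R-2\bar\Delta\omega$ (or simply $Y^A\partial_A\bar R+\psi\bar R$ when $\omega=0$), $-\delta_s\psi_{s'}=\psi_{[s,s']}$, and $-\delta_sC_{AB}$ from \eqref{eq:BMS4varC}. Each of the two groups of terms in $K$ (the one proportional to $\partial^A\bar R$ and the one proportional to $C^{BC}\bar D_B\bar D_C\psi$) then contributes a totally antisymmetric expression in $(s_1,s_2,s_3)$ that vanishes by the Jacobi identity of the Lie bracket on $\mathfrak{bms}_4$. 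This direct computation, though tedious, provides an independent verification of the structural argument above.
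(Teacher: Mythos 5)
Your primary route does not go through. The paper's logic is the reverse of yours: the cocycle condition is established by a direct computation, and only \emph{then} is the Jacobi identity of the bracket \eqref{eq:bms4defalgebra} inferred from \eqref{eq:BMS4skew} and \eqref{eq:BMS4cocycle} (see the discussion following the theorems). Your structural argument effectively tries to derive the cocycle condition from a Jacobi-type identity for the new bracket, which is precisely what is not available at this stage. Concretely, the step that fails is the claim that ``the graded-symmetric part drops out of the cyclic sum'': the operators $-\delta_{s_i}$ are ordinary (even) variations acting on the ordinary functional $Q_{s_1}[\cX]$, and a \emph{cyclic} sum over $(1,2,3)$ does not totally antisymmetrize, so $\sum_{\mathrm{cyc}}\{\delta_{s_2},\delta_{s_3}\}Q_{s_1}$ survives and has no reason to vanish. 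The remaining identity you would need for the non-integrable sector, $\sum_{\mathrm{cyc}}\big[(-\delta_{s_3})\Theta_{s_2}[\cX,-\delta_{s_1}\cX]+\Theta_{s_3}[\cX,-\delta_{[s_1,s_2]}\cX]\big]=0$, is asserted but not established, and is not obviously any easier than the computation it is meant to replace.

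Your ``complementary check'' is the route the paper actually takes, but your description of how the cancellation works is inaccurate on two points. First, in the gauge-fixed setting ($\tilde\varphi=0$, $\omega=0$) the background curvature $\bar R$ and the parameters $f_s,\psi_s$ are field-independent, so $\delta_{s_3}$ acts only on $C^{BC}$ in $K_{s_1,s_2}$; there is no $\delta_s\bar R$ contribution, and the $(f_1\d_Af_2-f_2\d_Af_1)\d^A\bar R$ part of $K$ enters the cocycle condition only through $K_{[s_1,s_2],s_3}$. Second, the two groups of terms in $K$ do \emph{not} separately sum to zero. In the actual computation, the sum $A+B$ coming from the $C^{BC}(f_1\bar D_B\bar D_C\psi_2-f_2\bar D_B\bar D_C\psi_1)$ part collapses, after use of \eqref{eq:35a}, $\bar\Delta Y^A=-\half\bar R Y^A$, $\bar\Delta\psi=-\bar D_A(\bar R Y^A)$ and the trace identity for symmetrized products of traceless $2\times 2$ tensors, to a residue proportional to $(\bar D_Cf_1\,f_2-\bar D_Cf_2\,f_1)\bar D^C\bar R$; it is exactly this residue that cancels against the cyclic sum $C=\sum_{\mathrm{cyc}}(f_{[s_1,s_2]}\d_Af_3-f_3\d_Af_{[s_1,s_2]})\d^A\bar R$ produced by the other part of $K$. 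The vanishing is therefore a cross-cancellation between the two pieces, not two independent applications of the $\mathfrak{bms}_4$ Jacobi identity.
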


\begin{proof}
\footnotesize

Let us treat the two parts of $K_{s_1,s_2}[\cX]$ separately:
\begin{itemize}

\item for the second part $\hat K_{s_1,s_2}= \frac{1}{16 \pi G}\int
   d^2 \Omega^\varphi\,C^{BC} \half (f_1 \bar D_B \bar D_C \psi_2 -
   f_2 \bar D_B \bar D_C
  \psi_1) $, we have
\begin{eqnarray}
  A&=&\int
  d^2 \Omega^\varphi\,\Big[(-\delta_{s_3} C^{BC})  (f_1 \bar D_B \bar
  D_C \psi_2 - f_2 \bar D_B \bar D_C
  \psi_1) +{\rm cyclic}\ (1,2,3) \Big]\nonumber \\
  &=&\int
  d^2 \Omega^\varphi\,\Big[([f_3\d_u+\cL_{Y_3} -\half \psi_3] C_{AB}-2
  \bar D_A\bar D_B f_3+\Delta f_3\bar \gamma_{AB})\nonumber \\ && \qquad
  \qquad \qquad  (f_1 \bar D^B \bar D^A \psi_2 - f_2 \bar D^B \bar D^A
  \psi_1) +{\rm cyclic}\ (1,2,3)\Big]\nonumber \\
  &=&\int
  d^2 \Omega^\varphi\,\Big[-C_{BC} \bar D_A \left((Y_1^A f_2 - Y_2^A
    f_1) \bar D^B \bar D^C \psi_3 \right)\nonumber \\ && 
  +2 C_{BC} \left(\bar D_A Y^B_1 f_2 -\bar D_A Y^B_2 f_1
  \right) \bar D^A
  \bar D^C \psi_3  -\half C_{BC} \left(\psi_1 f_2 -\psi_2 f_1 \right) \bar D^B
  \bar D^C \psi_3\nonumber \\ && 
  +2 \left(\bar D_C f_1 f_2 - \bar D_C f_2 f_1 \right) \left(
    \bar \Delta \bar D^C
    \psi_3 - \half \bar D^C \bar \Delta \psi_3\right)+{\rm cyclic}\ (1,2,3) \Big],
\end{eqnarray}
The second term is given by
\begin{eqnarray}
  B&=&\int
  d^2 \Omega^\varphi\,\Big[C^{BC}  (f_{[s_1,s_2]} \bar D_B \bar D_C
  \psi_3 - f_3 \bar D_B \bar D_C
  \psi_{[s_1,s_2]})+{\rm cyclic}\ (1,2,3) \Big]\nonumber \\
  &=&\int
  d^2 \Omega^\varphi\,C^{BC} \Big[\bar D_A \left( (Y^A_1 f_2 - Y^A_2
    f_1)\bar  D_B \bar D_C
    \psi_3\right)  - \frac{3}{2} \left( \psi_1
    f_2 - \psi_2 f_1\right) \bar D_B \bar D_C \psi_3 \nonumber \\ &&  -
  \left(Y^A_1 f_2 - Y^A_2 f_1 \right) \bar D_A \bar D_B \bar D_C
  \psi_3
- f_3 \bar D_B \bar D_C \left(Y^A_1 \bar D_A \psi_2
    - Y^A_2 \bar D_A \psi_1 \right)+{\rm cyclic}\ (1,2,3) \Big]\nonumber \\
  &=&\int
  d^2 \Omega^\varphi\,C^{BC} \Big[\bar D_A \left( (Y^A_1 f_2 - Y^A_2
    f_1) \bar D_B \bar D_C
    \psi_3\right) - \frac{3}{2} \left( \psi_1
    f_2 - \psi_2 f_1\right) \bar D_B \bar D_C \psi_3 \nonumber \\ && \qquad
  -2 \left( f_1 \bar D_B Y^A_2 - f_2 \bar D_B Y^A_1\right) \bar D_C
  \bar D_A \psi_3+{\rm cyclic}\ (1,2,3)\Big]\,.
\end{eqnarray}
Summing the two, we get
\begin{eqnarray}
A+B & = & \int
   d^2 \Omega^\varphi\,C^{BC} \Big[-2 \left( f_1 (\bar D_B Y^A_2 +
     \bar D^A Y_{2B}) - f_2 (\bar D_B Y^A_1 + \bar D^A
       Y_{1B})\right) \bar D_C \bar D_A \psi_3 \nonumber\\ && - 2 \left( \psi_1
       f_2 - \psi_2 f_1\right) \bar D_B \bar D_C \psi_3 +2 \left(\bar
       D_C f_1 f_2 - \bar D_C f_2 f_1 \right) \left( \bar \Delta \bar D^C
   \psi_3 - \half \bar D^C \bar \Delta \psi_3\right)\nonumber \\
     &&
\qquad \qquad +{\rm cyclic}\ (1,2,3) \Big]\nonumber \\ && \hspace*{-2cm}
=\int
   d^2 \Omega^\varphi\,\Big[ 2 \left(\bar D_C f_1 f_2 - \bar D_C f_2
     f_1 \right) 
\left( \bar \Delta \bar D^C
   \psi_3 - \half \bar D^C \bar \Delta \psi_3\right)+{\rm cyclic}\ (1,2,3) \Big]\,.
\end{eqnarray}
We can then use the following identities $ \bar \Delta \psi = -\bar
D_A(\bar R Y^A)$ and $\Delta \bar D^C \psi = \bar D^C \Delta\psi +
\half \bar R \bar D^C \psi$ to simplify the above to
\begin{eqnarray}
A&+&B  = \int
   d^2 \Omega^\varphi\,\Big[ 2 \left(\bar D_C f_1 f_2 - \bar D_C f_2 f_1 \right) \left(
     -\half \bar D^C (Y_3^A \bar D_A R) - \psi_3\half \bar D^C \bar R
   \right)+{\rm cyclic}\ (1,2,3) 
   \Big]\nonumber \\ \hspace*{-4cm}
&=&\int
   d^2 \Omega^\varphi\,\Big[ 2 \left(\bar D_C f_1 f_2 - \bar D_C f_2 f_1 \right) \left(
     -\half \cL_{Y_3} \bar D^C\bar R - \psi_3\bar D^C \bar R \right)+{\rm cyclic}\ (1,2,3)
   \Big]\,.
\end{eqnarray}

\item for the first part $\tilde K_{s_1,s_2}= \frac{1}{16 \pi G}\int
   d^2 \Omega^\varphi\, \half ( f_1
  \d_A f_2 - f_2 \d_A f_1)\d^A \bar R $, the condition (\ref{eq:BMS4cocycle})
  leads to 
\begin{eqnarray}
C & = & \int d^2 \Omega^\varphi\,\Big[ f_{[s_1,s_2]} \d_A f_3 - f_3
\d_A f_{[s_1,s_2]} \d^A \bar R
+{\rm cyclic}\ (1,2,3)\Big]\nonumber \\ 
&=& \int d^2 \Omega^\varphi\,\Big[ \cL_{Y_1} (f_2 \d_A f_3- f_3 \d_A f_2 ) -
\psi_1 (f_2 \d_A f_3- f_3 \d_A f_2 )\d^A \bar R+
{\rm cyclic}\ (1,2,3)\Big]\nonumber \\ 
&=& \int d^2 \Omega^\varphi\,\Big[ (f_2 \d_A f_3- f_3 \d_A f_2 ) (-\cL_{Y_1} -
2 \psi_1)\d^A \bar R +{\rm cyclic}\ (1,2,3)\Big]\,.
\end{eqnarray}

\end{itemize}

The different contributions then sum up to zero, $A+B+C=0$. 
\normalsize
\end{proof}

An open question is if and in what sense the proposed bracket is
indeed a Dirac bracket. The point we want to make here is first of all that
theorem \ref{theo:antisym} assure the skew-symmetricity of our new
definition (\ref{eq:bms4defalgebra}). Furthermore,
\eqref{eq:BMS4skew} and \eqref{eq:BMS4cocycle} imply the Jacobi
identity for this bracket, provided that the transformation associated with $Q_{[s_1,s_2]} [\cX]+
K_{s_1,s_2}[\cX]$ is just $\delta_{[s_1,s_2]}$ or in other words that
the field dependent central extension does not generate a
transformation.

When defining as before,
$\left\{Q^\prime_{s_1},Q^\prime_{s_2} \right\}^*[\cX] =
(-\delta_{s_2}) Q^\prime_{s_1}[\cX] + \Theta^\prime_{s_2}
[-\delta_{s_1} \cX,\cX]$, one gets
$  \left\{Q^\prime_{s_1},Q^\prime_{s_2}  \right\}^*= Q^\prime_{[s_1,s_2]}+
K^\prime_{s_1,s_2}$,
where 
\begin{equation}
K^\prime_{s_1,s_2}=K_{s_1,s_2}+\delta_{s_2}
N_{s_1}-\delta_{s_1} N_{s_2}+N_{[s_1,s_2]}.
\end{equation}
Note that $\delta_{s_2} N_{s_1}-\delta_{s_1} N_{s_2}+N_{[s_1,s_2]}$ is
a trivial field dependent $2$-cocycle in the sense that it
automatically satisfies the cocycle condition \eqref{eq:BMS4cocycle}.

\section{Conclusion and outlook}
\label{sec:conclusion-outlook}

In this chapter, we have shown that the symmetry algebra of
asymptotically flat $4$ dimensional spacetimes is $\mathfrak{bms}_4$,
an algebra that contains both the Poincar\'e algebra and the non
centrally extended Virasoro algebra in a completely natural way. As a
first non trivial effect, we have computed the detailed transformation
properties of the data characterizing solution space. 

Using a covariant method, we constructed the associated surface
charges. They agree with the usual definitions found in the
literature. In order to define an algebra, we introduced a new
bracket. With this new definition, the algebra of the charges form a
representation of $BMS_4$ up to a general field-dependent
extension. More work is still needed to fully justify the proposal
for this Dirac bracket.

We believe that our understanding of the symmetry structure and its
action on solution space goes some way in getting quantitative control
on ``structure X'' \cite{witten:98xx}, i.e., on a holographic
description of gravity with zero cosmological constant.

In the future, it should be
interesting to analyze in more details the consequences of our results
on local conformal invariance for the non extremal Kerr/CFT
correspondence and for the gravitational S-matrix for instance.

\appendix

\chapter{Surface charges}
\label{app:surfacecharges}

\section{Hamiltonian approach}
\label{app:regge-teit-revis}

\subsection{Regge-Teitelboim revisited}

We present here an adaptation of the original
Hamiltonian method of \cite{Regge:1974kx,Benguria:1977fk}.

Let $\cL_H=a_A \dot z^A-h-\gamma_{\alpha} u^\alpha$, with $h$ a first
class Hamiltonian density and $\gamma_\alpha$ first class constraints
and define $\phi^i=(z^A,u^\alpha)$. Even though it is not so for our
theory, let us first run through the arguments in the case where one
has Darboux coordinates for the symplectic structure, i.e., when
$\sigma_{AB}=\dover{a_B}{z^A}-\dover{a_A}{z^B}$ is the constant
symplectic matrix. The gauge transformation generated by the
smeared constraints is given by
\begin{equation}
\delta_{\varepsilon} z^A = \sigma^{AB} \frac{\delta (
  \varepsilon^\alpha \gamma_\alpha)}{\delta z^B}
\end{equation}
where $\sigma^{AB}$ is the inverse of $\sigma_{AB}$. This
transformation is extended to the Legendre multiplicators in order to leave the action invariant (see e.g. \cite{Henneaux:1992fk}).

We furthermore suppose that we are in a source-free
region of spacetime. In this case one can show that
\begin{equation}
  \label{eq:app107b}
  \delta_\varepsilon z^A\vddl{\cL_H}{z^A}+\delta_\varepsilon
  u^\alpha\vddl{\cL_H}{u^\alpha}= -\d_0\Big(\gamma_\alpha
  \varepsilon^\alpha\Big)-\d_is^i_\varepsilon,
\end{equation}
where $s^i_\varepsilon=s^i_\varepsilon[z,u]$ vanishes when the
Hamiltonian equations of motion, including constraints, are satisfied,
$s^i_\varepsilon\approx 0$. This identity merely expresses the general
fact that the Noether current $s^\mu_\varepsilon$ associated to a
gauge symmetry can be taken to vanish when the equations of motions
hold (see e.g.~\cite{Henneaux:1992fk}, chapter 3),
$s^\mu_\varepsilon\approx 0$, and that the integrand of the generator
is given by (minus) the constraints contracted with the gauge
parameters in the Hamiltonian formalism,
$s^0_\varepsilon=-\gamma_\alpha\varepsilon^\alpha$. An explicit
expression for $s^i_\varepsilon$ in terms of the structure functions
can for instance be found in Appendix D
of~\cite{Barnich:2008uq}. Using integrations by parts, one can write
the variations of the constraints under a change of the canonical
coordinates $z^A$ as an Euler-Lagrange derivative, up to a total
derivative,
 \begin{eqnarray}
  \label{eq:app29b}
\delta_z(\gamma_\alpha\varepsilon^\alpha) =\delta z^A
\vddl{(\gamma_\alpha\varepsilon^\alpha)}{
    z^A}-\d_i k^{i}_{\varepsilon}.
\end{eqnarray}
where $k^{i}_{\varepsilon}=k^{i}_{\varepsilon}[\delta z,z]$ depends
linearly on $\delta z^A$ and its spatial derivatives.  Taking the
time derivative of \eqref{eq:app29b} and using a variation $\delta_\phi$
of \eqref{eq:app107b} to eliminate
$\d_0\delta_z(\gamma_\alpha\varepsilon^\alpha)$, one finds
\begin{equation}
  \label{eq:app118}
  \d_i\Big(\d_0  k^{i}_{\varepsilon}-\delta_\phi
  s^i_\varepsilon\Big)=
  \d_0\Big(\delta z^A\vddl{(\gamma_\alpha\varepsilon^\alpha)}{
    z^A}\Big)+
  \delta_\phi\Big( \delta_\varepsilon z^A\vddl{\cL_H}{z^A}+\delta_\varepsilon
  u^\alpha\vddl{\cL_H}{u^\alpha}\Big). 
\end{equation}
One now takes $\varepsilon^\alpha_s$ to satisfy
$\delta_{\varepsilon_s }z^A_s=0=\delta_{\varepsilon_s}
u^\alpha_s$. Note that in the case of Darboux coordinates, this also
implies that $\vddl{(\gamma_\alpha\varepsilon^\alpha_s)}{
  z^A}=0$. This is where we differ from the original analysis. The
authors of \cite{Regge:1974kx} considered asymptotic symmetries: $\delta_{\varepsilon_s }z^A_s=0=\delta_{\varepsilon_s}
u^\alpha_s$ only as $r$ goes to infinity. We are
only considering exact reducibility parameters (in the case of
gravity, they are the killing vectors). If furthermore $z^A_s,u^\alpha_s$ is a solution of the
Hamiltonian equations of motion and the RHS of \eqref{eq:app118}
vanishes. By using a contracting homotopy with respect to $\delta
\phi^i$ and their spatial derivatives, one deduces that
\begin{equation}
  \label{eq:app119}
  \d_0  k^{i}_{\varepsilon_s}[\delta z,z_s]=(\delta_\phi
  s^i_{\varepsilon_s})[z_s]-\d_j k^{[ij]}_{\varepsilon_s},
\end{equation}
where $k^{[ij]}_{\varepsilon_s}=k^{[ij]}_{\varepsilon_s}[\delta\phi,
\phi_s]$ depends linearly on $\delta \phi^i$ and their spatial
derivatives. Finally, when $\delta z^A_s,\delta u^\alpha_s$ satisfy
the linearized Hamiltonian equations of motion, including constraints,
we find from \eqref{eq:app29b} and \eqref{eq:app119} that
\begin{equation}
  \label{eq:app120}
\d_i k^{i}_{\varepsilon_s}[\delta z_s,z_s]=0,\quad  \d_0
k^{i}_{\varepsilon_s}[\delta z_s,z_s]-\d_j t^{[ij]}_{\varepsilon_s}=0.
\end{equation}
At a fixed time $t=x^0$, consider a closed $2$ dimensional surface
$S$, $\d S=0$, for instance a sphere with radius $r$ and define the
surface charge 1-forms by
\begin{equation}
  \label{eq:121}
  \ndelta \cQ_{\varepsilon_s}[\delta z_s,z_s]=\oint_{S} d^3x_i\,
  k^{i}_{\varepsilon_s}[\delta z_s,z_s], 
\end{equation}
where $d^{3}x_i=\half\epsilon_{ijk}dx^j\wedge dx^k$. The first
relation of \eqref{eq:app120} implies that the surface charge 1-form only
depends on the homology class of the closed surface $S$, 
\begin{equation}
  \label{eq:40}
  \oint_{S_1} d^{3}x_m\,k^{m}_{\varepsilon_s}[\delta z_s,z_s]=
\oint_{S_2} d^{3}x_m\,k^{m}_{\varepsilon_s}[\delta z_s,z_s].
\end{equation}
Here $S_1-S_2=\d\Sigma$, where $\Sigma$ is a three-dimensional volume
at fixed time $t$ containing no sources. For instance, the surface
charge $1$-form does not depend on $r$.
The second relation of \eqref{eq:app120} implies that it is conserved in
time and so does not depend on $t$ either,
\begin{equation}
  \frac{d}{dt}\ndelta \cQ_{\varepsilon_s}[\delta z_s,z_s]=0.
\end{equation}

The objects obtained by this method are 1-forms on solution space. To
build the wanted charges, we still need to integrate them:
\begin{equation}
\label{eq:appasymptcharge4}
Q_{\varepsilon_s}[z_s] = \int_{Z_0}^{Z_s} \ndelta \cQ_{\varepsilon_s}[\delta_\gamma z_s,z_s]
\end{equation}
where the integration is done along a path $\gamma$ in solution space
going from a reference solution $Z_0$ to the solution upon
consideration $Z_s$. The variation $\delta_\gamma z_s$ is the
variation of the fields along the path. There is
an issue with integrability: the integral may be path dependent; the
1-form $\ndelta \cQ $ may not be closed, see
e.g.~\cite{Barnich:2008uq,Iyer:1994uq,Barnich:2003fk} for a
discussion. This explains the notation $\ndelta$.

\subsection{Linear theories}
\label{sec:linear-theories}

In the case of linear theories, the latter problem does not arise and
the whole analysis simplifies. One can replace \eqref{eq:app29b} by
\begin{equation}
  \label{eq:app29c}
\gamma_\alpha\varepsilon^\alpha =z^A
\vddl{(\gamma_\alpha\varepsilon^\alpha)}{
    z^A}-\d_i k^{i}_{\varepsilon}[z],
\end{equation}
where $\delta/\delta z^A$ are the (spatial) Euler-Lagrange derivatives
and $k^{i}_\varepsilon[z]$ depends linearly both on the phase space
variables $z^A$ and their spatial derivatives and on the gauge
parameters. One then uses \eqref{eq:app107b} directly to eliminate
$\d_0(\gamma_\alpha\varepsilon^\alpha)$ from the time derivative of
\eqref{eq:app29c}, to get
\begin{equation}
  \label{eq:108}
  \d_i\Big[\d_0 k^i_\varepsilon-s^i_\varepsilon\Big]=\d_0\Big[z^A
  \vddl{(\gamma_\alpha\varepsilon^\alpha)}{
    z^A}\Big]+\delta_\varepsilon
  z^A\vddl{\cL_H}{z^A}+\delta_\varepsilon
  u^\alpha\vddl{\cL_H}{u^\alpha}.
\end{equation}
For gauge parameters $\varepsilon^\alpha_s$ that satisfy
\begin{equation}
\delta_{\varepsilon_s }z^A=0=\delta_{\varepsilon_s} u^\alpha,\label{eq:app127}
\end{equation}
one then arrives at
\begin{equation}
  \label{eq:122}
\d_i k^{i}_{\varepsilon_s}[z]=-\gamma_\alpha\varepsilon^\alpha,\quad  \d_0
k^{i}_{\varepsilon_s}[z]=s^i_{\varepsilon_s}[z,u]-\d_j
k^{[ij]}_{\varepsilon_s}. 
\end{equation}
For a solution $z^a_s,u^\alpha_s$, the surface charges
\begin{equation}
\cQ_{\varepsilon_s}[z_s]=\oint_{S} d^3x_i\,
  k^{i}_{\varepsilon_s}[z_s],\label{eq:123}
\end{equation}
are again independent of $r$ and $t$. 

When this analysis is applied to the Hamiltonian formulation of
Pauli-Fierz theory, one finds the standard expressions 
\begin{equation}
k^{i}_{\varepsilon}[z] =2\xi_m\pi^{mi}-
  \xi^{\perp}(\delta^{mn}\d^i-\delta^{mi}\d^n) h_{mn}+
  h_{mn}(\delta^{mn}\d^i-\delta^{ni}\d^m)\xi^{\perp},\label{eq:128}
\end{equation}
while the only solutions to \eqref{eq:app127} are $\xi_{\mu
  s}=-\omega_{[\mu\nu]}x^\nu +a_\mu$, for some constants $a_\mu$,
$\omega_{[\mu\nu]}=-\omega_{[\nu\mu]}$. In this context of
flat space, Greek indices take values from $0$ to $3$ with
$\mu=(\perp,i)$. Indices $\mu$ are lowered and raised with
$\eta_{\mu\nu}=\text{diag} \,(-1,1,1,1)$.

\section{Covariant approach}
\label{sec:covariant-approach}
The rest of this appendix is devoted to a quick review of
the covariant approach developed in
\cite{Barnich:2002fk,Barnich:2003fk,Barnich:2008uq}. This approach can be applied
to any gauge theory but we will focus here 
on its application to gravity, with or without cosmological constant.

The starting point is the analysis of the linearized theory described
by $h_{\mu\nu}$ around a background
$g_{\mu\nu}$. It has been shown in \cite{Barnich:2004uq} that the
conserved surface charges are completely classified by the Killing
vectors $\xi^\mu$ of the background metric $g_{\mu\nu}$. Moreover,
they form a representation of the Lie algebra of those Killing
vectors. The explicit expression for the surface charges of the
linearized theory depends only on the Einstein equations of motion and
is given by
\begin{multline}
  \label{eq:gravsurfacecharge}
 \ndelta \cQ_\xi[h, g]= \frac{1}{16
    \pi G}\int_{\partial \Sigma}\,(d^{n-2}x)_{\mu\nu}\, \sqrt{- g}\Big[\xi^\nu D^\mu h
  -\xi^\nu D_\sigma h^{\mu\sigma} +\xi_\sigma D^\nu
  h^{\mu\sigma}
  \\
  +\frac{1}{2}h D^\nu\xi^\mu +\half
  h^{\nu\sigma}(D^\mu\xi_\sigma-D_\sigma\xi^\mu)
  -(\mu\leftrightarrow \nu)\Big]\,,
\end{multline}
where
\[(d^{n-k}x)_{\nu\mu}=\frac{1}{k!(n-k)!}
\epsilon_{\nu\mu\alpha_1\dots\alpha_{n-2}}
dx^{\alpha_1}\wedge\dots\wedge dx^{\alpha_{n-2}},\quad \epsilon_{01\dots
  n-1}=1.\]
The algebra of these charges is given by
\begin{equation}
\left\{\ndelta \cQ_{\xi_1}[h, g],\ndelta \cQ_{\xi_2}[h, g] \right\} = -
\ndelta \cQ_{\xi_2}[ \cL_{\xi_1} h, g] \approx \ndelta \cQ_{[\xi_1,\xi_2]}[h, g].
\end{equation}
The last equality is an equality on the equations of motion.

\vspace{5mm}

In view of these universal properties of the surface charges in the
linearized theory, the authors of
\cite{Barnich:2002fk} used them to
define surface charges in the full theory associated to asymptotic symmetries.
For an asymptotic Killing vector $\xi$, they define a charge $\cQ_\xi$
as
\begin{equation}
\label{eq:appasymptcharge}
\cQ_\xi [g] = \int_{\bar g}^g \ndelta \cQ_\xi[\delta_\gamma g, g]
\end{equation}
where $\ndelta \cQ_\xi$ is evaluated on an asymptotic surface $\delta
\Sigma$. For instance, in the case of $AdS_3$, $\delta \Sigma$ will be
a circle cross-section of the cylinder at $r \to \infty$. The
integration of (\ref{eq:appasymptcharge}) is done along a path
$\gamma$ in solution space going from a reference metric $\bar \gamma$
to the metric upon consideration $g$. The variation $\delta_\gamma g$
is the infinitesimal variation of the metric along the path. As
in (\ref{eq:appasymptcharge4}), there may be a problem of integrability.

In the simplest cases, the charges are ``linear'':
\begin{equation}
\label{eq:applinearity}
\ndelta \cQ_\xi[\delta_\gamma g, g] = \ndelta \cQ_\xi[\delta_\gamma g,
\bar g]
\end{equation}
and the integral can be done easily to give
\begin{equation}
\label{eq:applinearcharges}
\cQ_\xi [g] = \ndelta \cQ_\xi[g - \bar g,
\bar g]
\end{equation}
This happens in chapters \ref{chap:ads3}  and \ref{chap:bms3} with space-times in three dimensions that are asymptotically
$AdS_3$ or asymptotically flat at null infinity.

If the charges are integrable, the authors of \cite{Barnich:2008uq} showed that under some
technical conditions the charges form a representation of the algebra
of asymptotic Killing vectors up to a central extension:
\begin{eqnarray}
\label{eq:asympchargesalgebra}
\left\{Q_{\xi_1}[ g],Q_{\xi_2}[ g] \right\} & = & - \delta_{\xi_1}
Q_{\xi_2}[ g]\\
 & =& -\ndelta \cQ_{\xi_2}[ \cL_{\xi_1} g, g]\\
 &\approx & Q_{[\xi_1, \xi_2]}[ g] + K_{\xi_1,\xi_2}
\end{eqnarray}
where the central extension is given by:
\begin{multline}
  K_{\xi_1,\xi_2}=\frac{1}{16 \pi G}
\int_{\6\Sigma} (d^{n-2}x)_{\nu\mu}
\sqrt{- \5g}
\Big[
-2\5D_\rho\xi^\rho_2\, \5D^\nu\xi^{\mu}_1
+2\5D_\rho\xi^{\rho}_1\, \5D^\nu\xi^{\mu}_2
\\
+4\5D_\rho\xi^{\nu}_2\, \5D^\rho\xi^{\mu}_1
+(\5D^\rho \xi^{\nu}_1+\5D^\nu \xi^{\rho}_1)
(\5D^\mu \xi_{2\rho}+\5D_\rho \xi^\mu_2)
\\
-2\5R^{\mu\nu\rho\sigma}\xi_{2\rho}\xi_{1\sigma}
\Big].
\end{multline}

In addition, the covariant expression for the surface charges described
above coincides on-shell with those of the Hamiltonian formalism
\cite{Barnich:2002fk,Barnich:2008uq}. In this context, it follows from
the analysis of \cite{Brown:1986fk,Brown:1986zr,Regge:1974kx} that the
surface charge is, after gauge fixation, the
canonical generator of the associated asymptotic transformations in the Dirac
bracket.

\bibliography{/Users/Cedric/Documents/Physics/Papers/Cedric.bib}

\end{document}